\def\draft{0}
\def\doubleblind{0}
\documentclass[11pt,letterpaper]{article}

\usepackage{mathpazo}
\usepackage{microtype}

\usepackage[margin=1in]{geometry}
\usepackage[T1]{fontenc}
\usepackage{bbm, bm}
\usepackage{graphicx}
\usepackage{amsmath, amsfonts, amssymb, amsthm, thmtools, mathtools, stmaryrd}
\usepackage{algorithm, algpseudocode}
\usepackage{pifont}
\usepackage{xspace}

\PassOptionsToPackage{hyphens}{url}

\usepackage{color}
\usepackage[dvipsnames]{xcolor} 
\usepackage[pdfstartview=FitH,pdfpagemode=UseNone,colorlinks,linkcolor=NavyBlue,filecolor=blue,citecolor=OliveGreen,urlcolor=NavyBlue]{hyperref}
\usepackage[capitalise,nameinlink]{cleveref}
\usepackage{framed}

\usepackage[style=alphabetic, backend=biber, minalphanames=3, maxalphanames=4, maxbibnames=99, maxcitenames=99]{biblatex}

\crefformat{section}{#2\S#1#3}
\crefformat{subsection}{#2\S#1#3}
\crefformat{subsubsection}{#2\S#1#3}
\Crefformat{section}{#2\S#1#3}
\Crefformat{subsection}{#2\S#1#3}
\Crefformat{subsubsection}{#2\S#1#3}

\declaretheoremstyle[bodyfont=\it,qed=\qedsymbol]{noproofstyle}

\numberwithin{equation}{section}

\newcommand{\excl}{(\ding{72})}

\declaretheorem[numberlike=equation]{observation}

\declaretheorem[name=Observation,numbered=no]{observation*}

\declaretheorem[numberlike=equation]{theorem}

\declaretheorem[name=Theorem,numbered=no]{theorem*}

\declaretheorem[numberlike=equation]{lemma}
\declaretheorem[name=Lemma,numbered=no]{lemma*}

\declaretheorem[numberlike=equation]{corollary}
\declaretheorem[name=Corollary,numbered=no]{corollary*}

\declaretheorem[numberlike=equation]{proposition}
\declaretheorem[name=Proposition,numbered=no]{proposition*}

\declaretheorem[numberlike=equation, refname={claim,claims}]{claim}
\declaretheorem[name=Claim,numbered=no]{claim*}

\declaretheorem[name=Conjecture,numbered=no]{conjecture*}

\declaretheorem[name=Question,numbered=no]{question*}

\declaretheoremstyle[headfont=\normalfont\bfseries,bodyfont=\normalfont,postheadspace=1em,qed=$\lozenge$]{rmkstyle}

\declaretheorem[style=rmkstyle,numberlike=equation]{remark}

\declaretheorem[numberlike=equation,style=rmkstyle]{definition}
\declaretheorem[unnumbered,name=Definition,style=rmkstyle]{definition*}

\declaretheorem[numberlike=equation,style=rmkstyle]{example}
\declaretheorem[unnumbered,name=Example,style=rmkstyle]{example*}

\declaretheorem[unnumbered,name=Notation,style=rmkstyle]{notation*}

\declaretheorem[unnumbered,name=Construction,style=rmkstyle]{construction*}

\newenvironment{fexample}
  {\vspace{-5 pt}\colorlet{shadecolor}{ProcessBlue!08}\begin{snugshade*}\begin{example}}
  {\end{example}\end{snugshade*}\vspace{-5 pt}}

\usepackage{singer-macros}

\newcommand{\Dist}[1]{\Delta(#1)}
\newcommand{\OWU}[1]{\Delta_1(#1)}

\newcommand{\Player}{\mathtt{Player}}

\newcommand{\val}[2]{\mathsf{val}_{#1}(#2)}
\newcommand{\Unif}{\mathsf{Unif}}

\newcommand{\tvd}{\mathsf{tvd}}

\newcommand{\cD}{\mathcal{D}}
\newcommand{\cE}{\mathcal{E}}

\newcommand{\cY}{\mathcal{Y}}
\renewcommand{\supp}{\mathsf{supp}}

\newcommand{\cF}{\mathcal{F}}

\renewcommand{\epsilon}{\varepsilon}
\newcommand{\eps}{\epsilon}

\newcommand{\PHM}[3]{\calM^{#1,#2}_{#3}}
\newcommand{\HG}[3]{\calH^{#1,#2}_{#3}}

\newcommand{\Den}[1]{\mu_{#1}}

\newcommand{\rwt}{\operatorname{rwt}}

\newcommand{\SF}{\mathrm{SF}}
\newcommand{\MaxCSP}[1]{{\rm{\textsc{Max-CSP}}}(#1)}
\newcommand{\ApxMaxCSP}[2]{{\rm{\textsc{Max-CSP}}}_{#1}(#2)}
\newcommand{\GapMaxCSP}[3]{{\rm{\textsc{Max-CSP}}}_{#1,#2}(#3)}
\newcommand{\optCSP}[1]{\mathsf{opt}^{\rm{CSP}}(#1)}
\newcommand{\optLP}[1]{\mathsf{opt}^{\rm{LP}}(#1)}
\newcommand{\alphaLP}[1]{\alpha^{\rm{LP}}_{#1}}

\ifnum\draft=1
\newcommand{\Authornote}[3]{{\sf\small\color{#3}{[#1: #2]}}}
\else
\newcommand{\Authornote}[3]{}
\fi

\title{Optimal Single-Pass Streaming Lower Bounds for Approximating CSPs}

\ifnum\doubleblind=0
\author{Noah G. Singer\thanks{Carnegie Mellon University, Pittsburgh, PA, USA. Email: \url{ngsinger@cs.cmu.edu}.}
\and Madhur Tulsiani\thanks{Toyota Technological Institute, Chicago, IL, USA. Email: \url{madhurt@ttic.edu}. Work supported in part by NSF award CCF-2326685.}
\and Santhoshini Velusamy\thanks{University of Waterloo, ON, Canada. Email: \url{santhoshini.velusamy@uwaterloo.ca}. Work supported in part by NSF award CCF 2348475 when the author was affiliated with Toyota Technological Institute at Chicago.}
}
\fi
\ifnum\doubleblind=1
\author{Anonymous authors}
\fi

\begin{document}

\maketitle

\thispagestyle{empty}

\begin{abstract}
    For an arbitrary family of predicates $\mathcal{F} \subseteq \{0,1\}^{[q]^k}$ and any $\epsilon > 0$,
    we prove a single-pass, linear-space streaming lower bound against the gap promise problem of distinguishing instances of $\textsc{Max-CSP}({\mathcal{F}})$ with at most $\beta+\epsilon$ fraction of satisfiable constraints from instances of with at least $\gamma-\epsilon$ fraction of satisfiable constraints, 
    whenever $\textsc{Max-CSP}({\mathcal{F}})$ admits a $(\gamma,\beta)$-integrality gap instance for the basic LP.
    This subsumes the linear-space lower bound of Chou, Golovnev, Sudan, Velingker, and Velusamy (STOC 2022),
    which applies only to a special subclass of CSPs with linear-algebraic structure.
    (Their result itself generalizes work of Kapralov and Krachun (STOC 2019) for $\textsc{Max-Cut}$.)
    Our approach identifies the right ``analytic'' analogues of previously-used linear-algebraic conditions; 
    this yields substantial simplifications while capturing a much larger class of problems.

    Our lower bound is essentially optimal for single-pass streaming, since:
    (1) All CSPs admit $(1-\epsilon)$-approximations in quasilinear space,
    and (2) sublinear-space streaming algorithms can simulate the LP (on bounded-degree instances),
    giving approximation algorithms when integrality gap instances do \emph{not} exist.

The starting point for our lower bound is a reduction from a "distributional implicit hidden
partition'' problem defined by Fei, Minzer, and Wang (STOC 2026) in the context of multi-pass
streaming. Our result is an analogue of theirs in the single-pass setting, where we obtain a much
stronger (and tight) space lower bound.

\paragraph*{Independent work.}Very recently, similar linear-space lower bounds for CSPs were also obtained by an independent work of
Fei, Minzer, and Wang (arXiv, April 2026). 
Their recent result also implies almost-linear space lower bounds for multi-pass streaming
algorithms with $o(\log n)$ passes.
\end{abstract}

\newpage

\pagenumbering{roman}
\tableofcontents
\clearpage

\newpage
\pagenumbering{arabic}
\setcounter{page}{1}

\section{Introduction}

\newcommand{\ie}{i.e.,\xspace}
\newcommand{\eg}{e.g.,\xspace}
\newcommand{\etal}{et al.\xspace}

Constraint Satisfaction Problems (CSPs) are some of the most fundamental and expressive problems in discrete optimization, which have been a tested for algorithmic techniques and have played a foundational role in the theory of approximation algorithms and inapproximability.
Investigation of the complexity of CSPs in various computational models have not only led to a new understanding of the limits of these computational models, but the finite nature of the templates for CSPs have led to surprising classifications and canonical algorithms.
In this work, we study the complexity of approximating CSPs in the model of single-pass streaming algorithms.

\vspace{-10 pt}
\paragraph{Constraint satisfaction problems.}
Fix $q,k \ge 2 \in \N$ and let $\calV$ be a finite set, elements of which we call \emph{variables} (typically, $\calV = [n] = \{1,\ldots,n\}$).
A \emph{constraint} on $\calV$ is a pair $C = (f,e)$, where $f : [q]^k \to \{0,1\}$ is a function called a \emph{predicate}
and $e = (i_1,\ldots,i_k) \in \calV^k$ is a $k$-tuple of \emph{distinct} variables.
($q$ is called the \emph{alphabet size} and $k$ the \emph{arity}.)
An \emph{assignment} to $\calV$ is a vector $x \in [q]^\calV$,
and an assignment $x$ \emph{satisfies} a constraint $C = (f,e=(i_1,\ldots,i_k))$ if $f(x_{i_1},\ldots,x_{i_k}) = 1$.

A \emph{constraint satisfaction problem (CSP)} is specified by a \emph{predicate family} $\calF \subseteq \{0,1\}^{[q]^k}$,
\ie a set of functions from $[q]^k \to \{0,1\}$.
An \emph{instance} $\Phi$ of $\MaxCSP{\calF}$ on a set of variables $\calV$ is list of constraints $(f,e)$ on $\calV$ with $f \in \calF$.
The \emph{value} of an assignment $x \in [q]^\calV$ is $\val{\Phi}{x} \coloneqq \Exp_{\rC \sim \Phi} [x\text{ satisfies }\rC]$, where the distribution over constraints in $\Phi$ is taken to be uniform.
The \emph{optimum value} of an instance $\Phi$ is $\optCSP{\Phi} \coloneqq \max_{x \in [q]^\calV} \val{\Phi}{x}$.
\begin{fexample}
	For $k=2$ and $q = 2$, and the set $[q]$ identified with $\{0,1\}$, consider the CSPs:
    \begin{itemize}
    \item \textsc{Max-Cut}:  defined using the single predicate $f_{\textsc{Cut}}(x_1,x_2) = x_1 \oplus x_2$
    \item \textsc{Max-DiCut}: defined using the predicate $f_{\textsc{DiCut}}(x_1,x_2) = x_1 \wedge \neg x_2$.
    \item \textsc{Max-2AND}: defined using the \emph{predicate family} \\[3 pt]
    $\calF = \{f_{\textsc{2AND}}^{b_1,b_2}(x_1,x_2) = (b_1 \oplus x_1) \wedge (b_2 \oplus x_2) ~|~ b_1,b_2 \in \{0,1\}\}$.
    \end{itemize}

We note that \textsc{Max-2AND} is an example of a ``CSP with literals" \ie a predicate family closed under negations of variables.
\end{fexample}

There are two natural approximation variants of the problem $\MaxCSP{\calF}$.
The first, denoted $\ApxMaxCSP{\alpha}{\calF}$, is to distinguish between the cases (for a given instance $\Phi$ and threshold $v$) between the cases $\optCSP{\Phi} \geq v$ and $\optCSP{\Phi} < \alpha \cdot v$.
The second problem, denoted $\GapMaxCSP{\beta}{\gamma}{\calF}$, a more fine-grained version of the above and requires distinguishing for a given instance $\Phi$ between the cases $\optCSP{\Phi} \ge \gamma$ and $\optCSP{\Phi} \le \beta$.
We will consider randomized algorithms in this work, for which it suffices to solve the above tasks with probability (say) $2/3$.

\vspace{-10 pt}
\paragraph{Streaming algorithms.}
We consider \emph{single-pass streaming algorithms}
for solving the problems $\ApxMaxCSP{\alpha}{\calF}$ and $\GapMaxCSP{\beta}{\gamma}{\calF}$.
A \emph{single-pass streaming algorithm} for $\ApxMaxCSP{\alpha}{\calF}$ or $\GapMaxCSP{\beta}{\gamma}{\calF}$
is given as input an ordered list (stream) of (not necessarily distinct) constraints $(f,e)$;
this list defines an associated instance of $\MaxCSP{\calF}$, with the associated constraint distribution being uniform on the list.
The stream is presented to the algorithm one constraint at a time,
and the algorithm can only use a bounded amount of memory space to store its current state between each successive pair of constraints.

Formally, \emph{space-$s$ single-pass streaming algorithm} on instances of $\MaxCSP{\calF}$ with $n$ variables and $m$ constraints may be specified by a function $\Gamma: \calC_n \times \{0,1\}^s \to \{0,1\}^s$, where $\calC_n$ denotes the set of all possible constraints on $n$ variables.
The initial state is taken to be $S_0 \gets 0^s \in \{0,1\}^s$, and $j$-th constraint leads to the update $S_j \gets \Gamma(C_j, S_{j-1})$.
We take (say, the first bit of) the final state $S_m$ as the output for the relevant distinguishing problem.
Note that the algorithm is allowed to depend on $n$ and (for the purpose of lower bounds) we also allow it to depend on $m$.
There is no constraint on the time-complexity or randomness used by the algorithm.

\subsection{Prior work on streaming approximability of CSPs}
Streaming approximability of CSPs has been extensively studied over the past decade; we recall some of the most relevant results below.
We refer the reader to the excellent surveys by \textcite{Sud22} and \textcite{Ass23},
and the column~\cite{Sin25} for more detailed accounts.

\vspace{-10 pt}
\paragraph{Approximability in $O(\sqrt{n})$ space.} The first lower bounds against streaming algorithms for CSPs were studied by \textcite{KK15}, and \textcite{KKS15}.
In particular, \cite{KKS15} showed that the $\textsc{Max-Cut}_{1,1/2+\epsilon}$ problem requires $\Omega_\epsilon(\sqrt{n})$ space for every $\epsilon > 0$ (while a random assignment trivially achieves an approximation ratio $1/2$).
Follow-up works~\cite{CGV20,GT19} proved optimal $\Omega(\sqrt{n})$-space lower bounds for $\textsc{Max-Unique-Games}$ and $\textsc{Max-DiCut}$.
We highlight that, unlike $\textsc{Max-Cut}$ and $\textsc{Max-Unique-Games}$,
the $\textsc{Max-DiCut}$ problem turns out to admit nontrivial streaming approximations in $\log n \ll \sqrt n$ space~\cite{GVV17}.
This line of work culminated in the work of~\textcite{CGSV24},
who proved a \emph{dichotomy theorem} stating that for every predicate family $\calF$ and $\gamma > \beta$,
either $\GapMaxCSP{\gamma+\epsilon}{\beta-\epsilon}{\calF}$ admits a $\polylog_\epsilon(n)$-space algorithm for every $\epsilon > 0$ (using the ``bias" of the variables in the CSP instance)
or $\GapMaxCSP{\gamma-\epsilon}{\beta+\epsilon}{\calF}$ requires $\Omega_\epsilon(\sqrt n)$ space for every $\epsilon > 0$.\footnote{
    The lower bound in~\cite{CGSV24} is currently known only to hold against ``sketching'' algorithms, which are a special subclass of streaming algorithms.
    Whether this distinction is only a technical artifact, or whether there are streaming algorithms outperforming sketching algorithms, is an interesting open problem.
}

\vspace{-10 pt}
\paragraph{Approximability in sublinear space.} Beyond $\sqrt n$ space, the picture becomes more complex.
On the algorithmic side, a series of works~\cite{SSSV23-random-ordering,SSSV23-dicut,SSSV25,ABFS26}
established that \emph{better} approximations are attainable for the $\textsc{Max-DiCut}$ problem in $o(n)$ space (in comparison to the $o(\sqrt n)$-space regime).
This opened up the exciting possibility that the right approximation thresholds for many other CSPs may be \emph{different} in the (arguably, more natural) setting of sublinear space, compared to the ones obtained for $O(\sqrt{n})$ space above.
On the hardness side, \textcite{KK19} showed (following up on~\cite{KKSV17}) that the $\textsc{Max-Cut}_{1,1/2+\epsilon}$ problem requires $\Omega_\epsilon(n)$ space for every $\epsilon > 0$,
improving the \cite{KKS15} result (which had an $\Omega_\epsilon(\sqrt n)$-space lower bound).
This result was generalized by~\textcite{CGS+22-linear-space}, who gave $\Omega(n)$-space lower bounds against CSPs satisfying a certain (linear) structural condition called ``width''.
While the work~\cite{CGS+22-linear-space} greatly clarified the technical aspects of proving $\Omega(n)$-space lower bounds (especially in the case $k > 2$), their techniques relied heavily on the presence of such linear structure, leaving open the approximability of many CSPs of interest.
Moreover the``width'' condition itself is quite brittle, and some CSPs which are not ``wide'' were still known to admit lower bounds via reduction from wide CSPs (but were not captured by the above result).

\vspace{-10 pt}
\paragraph{Approximability for multi-pass streaming.} While the above results concern single-pass streaming algorithms, another line of work has also studied the \emph{multi-pass} approximability of CSPs~\cite{BDV18,AKSY20, AN21,CKP+23, KPSY23,SSSV23-random-ordering,KPSY23,SSSV25,FMW25,Vel25}.
Recently, a beautiful result of Fei, Minzer and Wang~\cite{FMW26} obtained a dichotomy theorem for multi-pass streaming using a linear programming relaxation called the ``basic LP''.
They proved that for every predicate family $\calF$, if there exists a $(\gamma,\beta)$-LP integrality gap instance (an instance certifying inapproximability via the basic LP) for $\MaxCSP{\calF}$,
then for every $\epsilon > 0$, the $\GapMaxCSP{\gamma-\epsilon}{\beta+\epsilon}{\calF}$ problem requires $p \cdot s \ge \Omega_\epsilon(n^{1/3})$,
where $p$ is the number of passes and $s$ is the amount of space.
Conversely, if no such instance exists, there is an algorithm for $\GapMaxCSP{\gamma+\epsilon}{\beta-\epsilon}{\calF}$ with $p \cdot s \le \polylog_\epsilon(n)$.
While this result completely characterizes polynomial space approximability in the the multi-pass setting, it does not imply bounds more than $\Omega(n^{1/3})$ for the single-pass setting.

\subsection{Results}
We show that the basic LP also characterizes the sublinear-space approximability of (bounded-degree instances) of all CSPs in the single-pass setting.
In particular, we show that basic LP lower bounds for all CSPs can be converted to similar lower bound on approximation achieved by any algorithm in $o(n)$ space.
Before describing the formal statements of the results, we briefly describe the basic LP relaxation for $\MaxCSP{\calF}$.

\vspace{-10 pt}
\paragraph{The basic linear programming relaxation for a CSP.}
Let $\Phi$ be an instance of $\MaxCSP{\calF}$.
The \emph{basic linear programming (LP) relaxation} of $\Phi$ is formulated in its ``distributional'' version in \Cref{fig:basic-lp} below.
The ``variables'' of this LP are distributions $\calY_C$ over $[q]^k$ for every constraint $C \in \supp(\Phi)$.
(Each such distribution can be encoded as $q^k-1$ standard LP variables.)
An important fact about this LP is that it is a \emph{relaxation} of the underlying CSP instance $\Phi$: we always have $1 \ge \optLP{\Phi} \ge \optCSP{\Phi}$, where $\optLP{\Phi}$ is the maximum feasible value for the LP.
\begin{figure}[h]
\hrule
\vline
\begin{minipage}[t]{0.99\linewidth}
\vspace{-5 pt}
{\small
\begin{align*}
    &\mbox{maximize}\quad ~~ &&\Exp_{\rC = (\rf,\re) \sim \Phi} \bracks*{ \Exp_{\ra \sim \calY_\rC} \rf(\ra) }
    \\
&\mbox{such that}\quad \quad
     &&\Pr_{\ra \sim \calY_C}[\ra_\ell = b] = \Pr_{\ra \sim \calY_{C'}}[\ra_{\ell'} = b], &&&\forall C = (f, e), C' = (f', e') \in \Phi, i_\ell = i'_{\ell'}, b \in [q]. \\
    & &&\calY_C \in \Dist{[q]^k} &&& \forall C \in \Phi.
\end{align*}}
\vspace{-10 pt}
\end{minipage}
\hfill\vline
\hrule
\caption{Distributional view of basic LP.}\label{fig:basic-lp}
\end{figure}

A $(\gamma,\beta)$-LP Integrality Gap instance for $\MaxCSP{\calF}$ is an instance $\Phi$ such that $\optCSP{\Phi} \le \beta$ and $\optLP{\Phi} \ge \gamma$. Our main result is the following:

\begin{restatable}[Linear-space lower bounds against single-pass streaming algorithms for CSPs]{theorem}{mainthm}\label{thm:main}
Let $\calF \subseteq \{0,1\}^{[q]^k}$. For every $1 \ge \gamma > \beta \ge 0$, if there exists a $(\gamma,\beta)$-LP integrality gap instance of $\MaxCSP{\calF}$, then for every $\eps > 0$, any single-pass streaming algorithm for $\GapMaxCSP{\gamma-\epsilon}{\beta+\epsilon}{\calF}$ on $n$ variables requires $\Omega_\epsilon(n)$ space.
Moreover, $\GapMaxCSP{1}{\beta+\epsilon}{\calF}$ also requires space $\Omega_{\eps}(n)$ for the case $\gamma =1$.
\end{restatable}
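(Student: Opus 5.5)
The plan is to reduce from a communication problem, the single-pass analogue of the ``distributional implicit hidden partition'' (DIHP) problem of Fei, Minzer, and Wang. Given the $(\gamma,\beta)$-integrality gap instance $\Psi$ on a constant number of variables $\calV_0$, take an optimal LP solution $\{\calY_C\}_{C\in\Psi}$. Its consistent marginals give, for each variable $v\in\calV_0$, a distribution $\mu_v$ on $[q]$. Build two distributions over instances on $n$ variables.

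In the YES distribution, a hidden assignment $x^*\in[q]^n$ is drawn, and each of the $n$ variables is randomly assigned a ``type'' $v\in\calV_0$ and a value $x^*_i\sim\mu_v$. Each constraint is made by sampling $C=(f,e)\sim\Psi$, sampling $a\sim\calY_C$, and then choosing $k$ fresh random variables whose types match $e$ and whose hidden values equal $a$. The NO distribution is identical except that the requirement to match $a$ is dropped: we only sample variables with the right types, independently according to the marginals.

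Under YES, $x^*$ satisfies about $\optLP{\Psi}\ge\gamma$ of the constraints, by concentration. Under NO, the instance is a random sparse ``lift'' of $\Psi$ with $\Theta(n)$ constraints. A union bound over the $q^n$ assignments, combined with Chernoff bounds on constant-degree random hypergraphs, should show that every assignment has value at most $\optCSP{\Psi}+\epsilon\le\beta+\epsilon$. The key point is that any assignment induces, per type, a distribution over $[q]$, and on a random lift the value concentrates around that of a fractional ``type-wise'' assignment to $\Psi$. That value is at most $\beta$, because a convex combination of product-type assignments is dominated by integral ones. For $\gamma=1$, the support of $\calY_C$ lies inside $f^{-1}(1)$, so YES instances are perfectly satisfiable.

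The streaming lower bound then comes from splitting the stream into $T=T(\epsilon)$ blocks held by $T$ players, each holding $\alpha n$ constraints. This is a one-way communication game in which each player sees its constraints, equivalently a hypermatching together with a label vector that is the hidden assignment restricted to the hyperedges, noised in the NO case. I would prove by hybrid argument that distinguishing requires $\Omega(n)$ communication. This requires an analogue of the Kapralov--Krachun / CGSV22 Fourier-analytic bound. The linear-algebraic ``width'' condition gets replaced by the analytic statement that, conditioned on a message of $s$ bits, the posterior on $x^*$ has small low-level Fourier weight. In that case a random hypermatching of size $\alpha n$ gives labels whose joint distribution is $\epsilon$-close in total variation to the product of marginals. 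The intended tool is a hypercontractivity or level-$\ell$ inequality over $[q]$ with biased measure $\mu$, applied to the posterior density. Because the labels are generated as $a\sim\calY_C$ versus the product of marginals, only Fourier coefficients on sets inside single hyperedges with at least two coordinates matter, and these are controlled.

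The main obstacle is this hybrid and Fourier step in the linear-space regime. Each player's labels depend on $x^*$ through non-linear correlations, so the KK19 argument, which bounds the distribution of the hidden partition's posterior across rounds, must be redone with general ``orthogonal-to-marginals'' characters. I would first establish a single-player lemma: if the posterior density has weight at most $(Cs/\ell)^{\ell}$ at level $\ell$ (by a biased level-$k$ inequality), then the advantage is $O(\sqrt{\alpha s/n})$ plus higher-order terms summed over hyperedge-supported sets. The next step is to show that across the $T$ rounds the posterior retains low Fourier mass, with $\alpha$ chosen small in terms of $T$ so that the blow-up from induction remains bounded. Finally, I would combine these with a triangle inequality over the $T$ hybrids to get $s=\Omega_\epsilon(n)$.
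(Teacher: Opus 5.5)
There is a genuine gap, and it is in the communication game you analyze. Your YES/NO construction and the NO-case ``random lift plus union bound'' argument are essentially what the paper imports from \cite{FMW26}; the trouble starts after that.

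\textbf{The game is mis-specified.} A constraint stream is \emph{not} equivalent to ``a hypermatching together with the hidden assignment restricted to its hyperedges.''

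If $\Player_t$ really holds $x^*$ on $\supp(M_t)$ in the YES case, the game is easy. $\Player_1$ broadcasts the labels of $O(1/\alpha)$ of its vertices. With constant probability one of them lies in a later player's support, and that player checks consistency. So $O_\alpha(\log n)$ bits suffice, and no $\Omega(n)$ bound can hold.

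For the same reason, your remark that only sets with at least two coordinates inside a hyperedge matter is false for this game. After $\Player_1$ speaks, the posterior on $x^*$ has level-one Fourier mass that later players see directly.

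If you instead mean the bare constraint stream, then the hyperedges themselves are chosen according to $x^*$. You then lose the independence of $M_t$ from $X$ on which the whole posterior calculus rests.

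\textbf{The missing idea is the masked padding.}
\begin{itemize}
\item First blow up the alphabet, splitting $[q]$ into blocks of sizes $q\,\calX_v(a)$. Then all LP marginals become uniform and every local distribution becomes one-wise uniform.
\item Then give $\Player_t$ a uniformly random matching $M_t$, independent of $X$, together with $Z_t=\Pi^{\phi_t}_{M_t}X-Y_t$ for $Y_t\sim\calY_t^{\otimes m}$.
\item The CSP constraints are the zero rows of $Z_t$, so the streaming simulation stays local.
\end{itemize}
One-wise uniformity makes each single entry of $Z_t$ uniform and independent of $X$, i.e.\ $\hat{\Den{\calY^{\otimes m}}}$ vanishes off singleton-free frequencies. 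It is this masking that kills the singleton terms of the posterior. The mere agreement of the one-dimensional marginals of $\calY_C$ with the product measure does not do it.

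\textbf{The Fourier tool cannot reach linear space.} A (biased) level-$\ell$ inequality giving squared weight $(Cs/\ell)^\ell$ on the posterior has two problems.
\begin{itemize}
\item It controls the level-$2$ contribution to $\Exp_M\|\Den{\InputShort{\calD}{M}}-1\|_2^2$ only by $O(s^2/n)$. That is the $\sqrt n$ barrier.
\item Converting it to an $\ell_1$ bound at weight $h$ by Cauchy--Schwarz gives roughly $(C\sqrt{sn}/h)^{h}$. This is essentially the square of the needed $(C\sqrt{sn}/h)^{h/2}$.
\end{itemize}
So your single-player estimate $O(\sqrt{\alpha s/n})$ does not follow from your premise.

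The paper instead propagates an $\ell_1$ invariant, $(C,s)$-boundedness, round by round through
$\Den{\PostShort{\calD}{M}{\calB}}=\nu_{M,\calB}\cdot\Den{\calD}\cdot\bigl((\Den{\calY^{\otimes m}}*\Den{\calB})\circ\Pi^\phi_M\bigr)$.
The key new ingredient is a level inequality for the message-fiber density $\Den{\calB}$ on $\Z_q^{m\times k}$, refined by row weight. On singleton-free frequencies of weight $h$ and row weight $\ell\le h/2$, the squared mass is at most $(\zeta b/(h-\ell))^{h-\ell}$. This is proved with a multiplier whose eigenvalue is $\rho^{h-\ell}$ and which is dominated by row-wise Bonami--Beckner noise. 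It is then combined with the count $(\zeta m/\ell)^\ell$, a covering argument around structured centers $V$, and the random-matching bound. You flag ``the posterior retains low Fourier mass'' as the main obstacle but give no mechanism for it, and that is the heart of the proof.

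\textbf{Quantifiers.} Soundness in the NO case needs $T\alpha=\Omega_{q,k}(1/\epsilon^2)$. So $\alpha$ cannot be chosen to shrink with $T$ in a way that keeps $T\alpha$ bounded. The paper fixes $\alpha_0$ independently of $T$ and absorbs the growth of the constants $C_t$ into the message budget $\theta n$.
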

We emphasize that \Cref{thm:main} subsumes all previous known $\Omega_\epsilon(n)$-space streaming lower bounds~\cite{KK19,CGS+22-linear-space},
and is an analogue of (though formally incomparable with) the multi-pass lower bound in \cite{FMW26}.
We also remark that \Cref{thm:main} is essentially an \emph{optimal lower bound} against single-pass streaming algorithms for all CSPs, in the following sense:
\begin{itemize}
\item With respect to space complexity,
    as observed in e.g.~\cite[\S1.1]{CGS+22-linear-space}, algorithms with $\tilde{O}_\epsilon(n)$ space can solve the $\ApxMaxCSP{1-\epsilon}{\calF}$ problem for every predicate family $\calF$ and $\epsilon > 0$.
Hence, one cannot hope to improve the space lower bound in the theorem (for any $\calF$).
\item With respect to approximation factors, when a $(\gamma,\beta)$-integrality gap instance does \emph{not} exist,
    \cite{STV25} observe that for every $\epsilon > 0$ and $D \in \N$, the $\GapMaxCSP{\gamma+\epsilon}{\beta-\epsilon}{\calF}$ problem
    admits an $o_{\epsilon,D}(n)$-space algorithm on instances of maximum degree $D$.
    Removing the bounded-degree assumption remains an interesting open question.\footnote{
        We feel that the maximum degree assumption can likely be removed, as was true for Max-Dicut~\cite{ABFS26}.}
\item    The hard instances we use in the proof of \Cref{thm:main} have \emph{bounded maximum degree},
    and so \Cref{thm:main} already proves a \emph{linear-space dichotomy theorem for bounded-degree streaming CSPs}.
\end{itemize}
Thus, \Cref{thm:main} yields optimal results for problems considered in several previous works on single-pass streaming lower bounds~\cite{KK15,KKS15,KKSV17,GT19,KK19,CGV20,CGS+22-linear-space,CGSV24}.

Translating \Cref{thm:main} to $\ApxMaxCSP{\alpha}{\calF}$-type problems yields the following.
For a predicate family $\calF \subseteq \{0,1\}^{[q]^k}$, let $\alphaLP{\calF}$ be the best possible approximation ratio over all instances $\Phi$ \ie
\[
    \alphaLP{\calF} \coloneqq \inf_{\Phi}
    \frac{\optCSP{\Phi}}{\optLP{\Phi}}. \]
Then \Cref{thm:main} immediately yields the following:
\begin{corollary}
For every $\epsilon > 0$ and every $\calF \subseteq \{0,1\}^{[q]^k}$, any single-pass streaming algorithm for $\ApxMaxCSP{\alphaLP{\calF}+\epsilon}{\calF}$ on $n$ variables requires $\Omega_\epsilon(n)$ space.
\end{corollary}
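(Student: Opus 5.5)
The corollary should follow from \Cref{thm:main} applied to a single integrality gap instance whose ratio is close to $\alphaLP{\calF}$, followed by a standard reduction from the gap problem to the approximation problem.

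\textbf{Choosing the gap instance.} Fix $\epsilon > 0$. By the definition of $\alphaLP{\calF}$ as an infimum, there is an instance $\Phi$ with
\[
\optCSP{\Phi} \le (\alphaLP{\calF} + \delta)\,\optLP{\Phi},
\]
where $\delta = \delta(\epsilon) > 0$ is a small parameter to be fixed later. Set $\gamma \coloneqq \optLP{\Phi}$ and $\beta \coloneqq \optCSP{\Phi}$. Then $\Phi$ is a $(\gamma,\beta)$-LP integrality gap instance.

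We need $\gamma > \beta$. If $\alphaLP{\calF} + \epsilon \ge 1$, the problem $\ApxMaxCSP{\alphaLP{\calF}+\epsilon}{\calF}$ is already at least as hard as exact optimization. One can either handle this case separately, or restrict to $\alphaLP{\calF} < 1$ and choose $\delta$ small so that $\beta < \gamma$.

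We also need $\gamma$ bounded away from $0$, since otherwise the additive slack $\epsilon'$ below would swamp the multiplicative gap. For this we use the fact that $\optLP{\Phi}$ is always at least some positive constant depending only on $\calF$. Concretely, every predicate that is not identically zero has a satisfying tuple, and for each constraint we can put the LP mass on such a tuple. This handles the degenerate case; predicates that are identically zero only scale values uniformly.

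\textbf{Applying the theorem.} Apply \Cref{thm:main} with a slack $\epsilon' > 0$. It gives that $\GapMaxCSP{\gamma-\epsilon'}{\beta+\epsilon'}{\calF}$ requires $\Omega_{\epsilon'}(n)$ space.

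Any algorithm for $\ApxMaxCSP{\alpha}{\calF}$ with $\alpha = \alphaLP{\calF}+\epsilon$, run with threshold $v = \gamma - \epsilon'$, distinguishes two cases:
\begin{itemize}
\item $\optCSP{\Psi} \ge \gamma - \epsilon'$;
\item $\optCSP{\Psi} < \alpha(\gamma - \epsilon')$.
\end{itemize}
So it solves the gap problem as long as
\[
\beta + \epsilon' < \alpha(\gamma - \epsilon').
\]
Since $\beta \le (\alphaLP{\calF} + \delta)\gamma$, it suffices that
\[
(\alphaLP{\calF} + \delta)\gamma + \epsilon' < (\alphaLP{\calF} + \epsilon)(\gamma - \epsilon').
\]
This holds when $\delta = \epsilon/2$ and $\epsilon' \le c\,\epsilon\gamma$ for a small constant $c$. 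Because $\gamma$ is bounded below by a constant depending on $\calF$, $\epsilon'$ depends only on $\epsilon$ and $\calF$, so the $\Omega_{\epsilon'}(n)$ bound becomes $\Omega_\epsilon(n)$.

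\textbf{Main obstacle.} The only delicate point is the degenerate edge cases: $\gamma$ near $0$, and $\alphaLP{\calF} = 1$. The first is handled by the lower bound on $\optLP{\Phi}$ above. In the second case the statement asks about ratios above $1$, so it is vacuous or trivially true depending on convention, and we can dismiss it.
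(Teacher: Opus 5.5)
Your argument is correct and is the intended derivation: the paper gives no separate proof and simply says the corollary follows immediately from \Cref{thm:main}. Your route (take a gap instance with ratio within $\epsilon/2$ of $\alphaLP{\calF}$, run the approximation algorithm with threshold $v=\gamma-\epsilon'$, and treat $\alphaLP{\calF}+\epsilon>1$ as vacuous because the two promise cases then overlap) is exactly how that works.

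One correction: you do not need a uniform lower bound on $\gamma$. $\Phi$ is a single instance fixed as a function of $\epsilon$ and $\calF$, and $\gamma=\optLP{\Phi}>0$ is implicit in the ratio being defined, so $\epsilon'$ may simply depend on $\gamma$. Moreover, your justification for such a bound is wrong as stated: putting each constraint's LP mass on one satisfying tuple generally violates the marginal-consistency constraints. The correct argument uses uniform local distributions, which are always feasible and give LP value at least $q^{-k}$ when no predicate is identically zero.
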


As a special case of our results, we also get new lower bounds against many specific CSPs.
In addition to previous examples for $k=2$, we consider some examples with higer arity below.
\begin{fexample}\label{ex:higher arity}
    Let $k > 2$ but fix $q=2$ (with $[q]$ identified with $\{0,1\}$).
    \begin{itemize}
    \item \textsc{Max-$k$-XOR}: defined by predicates $f_{\textsc{$k$-XOR}}^b(x_1,\ldots,x_k) = x_1 \oplus \cdots \oplus x_k \oplus b$ for $b \in \{0,1\}$.
    \item $\textsc{Max-LTF}^w$: For $w \in \R^k$, the linear threshold function (LTF) predicates are the $2^k$ functions \[ f_{\textsc{LTF}}^{w,b}(x_1,\ldots,x_k) = \begin{cases} 1 &\text{if } (-1)^{x_1+b_1} \cdot w_1 + \cdots + (-1)^{x_k+b_k} \cdot w_k > 0, \\
    0 &\text{if } (-1)^{x_1+b_1} \cdot w_1 + \cdots + (-1)^{x_k+b_k} \cdot w_k < 0. \end{cases}
\]
The LTF is said to be balanced if $\Exp_x[f_{\textsc{LTF}}^{w,b}] = 1/2$ (for all $b \in \{0,1\}^k$).
    \item \textsc{Max-Exactly-$\ell$-Of-$k$}: defined by the predicate $f_{\textsc{Exactly-$\ell$-Of-$k$}}(x) = 1[\wt(x)=\ell]$ ($\wt(\cdot)$ denotes the Hamming weight).
    \end{itemize}
\end{fexample}

\begin{corollary}\label{cor:examples}
    In all of the following, $n$ denotes the number of variables.
    \begin{itemize}
        \item For every $k \in \N$ \emph{odd} and $\epsilon > 0$, any single-pass streaming algorithm for $\textsc{Max-$k$XOR}_{1,1/2+\epsilon}$ requires $\Omega_\epsilon(n)$ space.
        \item If $w$ denotes the weights of the balanced (approximation resistant) LTF constructed by \textcite{Pot19},
        then any single-pass streaming algorithm for $\textsc{Max-LTF}^w_{1,1/2+\eps}$ requires $\Omega_\epsilon(n)$ space.
        \item For all $k \in \N$, $\ell \ne k/2$, and $\epsilon > 0$,
        any single-pass streaming algorithm for \\
        $\textsc{Max-Exactly-$\ell$-Of-$k$}_{1,\rho(\ell,k)+\epsilon}$
        requires $\Omega_\epsilon(n)$ space, where $\rho(\ell,k) \coloneqq \max_{p \in[0,1]} p^\ell(1-p)^{k-\ell}$.
        \item For all $\gamma \in (\tfrac12,1]$ and $\beta = (3\gamma-1)/2$, any single-pass streaming algorithm for $\textsc{Max-DiCut}_{\gamma-\epsilon,\beta+\epsilon}$ requires $\Omega_\epsilon(n)$ space.
          \item For all $\gamma \in (\tfrac12,1]$ and $\beta = (2\gamma+1)/4$,
any single-pass streaming algorithm for $\textsc{Max-2SAT}_{\gamma-\epsilon,\beta+\epsilon}$ requires $\Omega_\epsilon(n)$ space.
    \end{itemize}
\end{corollary}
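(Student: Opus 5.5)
The plan is to derive each item of \Cref{cor:examples} from \Cref{thm:main}. For each problem it suffices to exhibit a $(\gamma,\beta)$-LP integrality gap instance with the stated parameters, \ie an instance $\Phi$ with $\optLP{\Phi} \ge \gamma$ and $\optCSP{\Phi} \le \beta$. Since \Cref{thm:main} permits $\epsilon$ slack on both sides, it is enough to have $\optLP{\Phi}\ge\gamma$ and $\optCSP{\Phi}\le\beta+\epsilon/2$. For $\gamma=1$ I would use the ``moreover'' clause of \Cref{thm:main} to keep perfect completeness.

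For the $\gamma=1$ items I would use a single generic template. Pick a distribution $\mu$ on $[q]^k$ supported on satisfying assignments of the predicate(s), with all marginals equal to some fixed $\nu$. Take a random (or high-girth) instance on $N$ variables whose constraints are random $k$-tuples, each carrying a predicate or literal pattern from the family.

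\emph{Completeness.} Set every $\calY_C=\mu$. This is LP-feasible because all marginals equal $\nu$, and it has value $1$.

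\emph{Soundness.} With $\Theta(N/\epsilon^2)$ random constraints, a Chernoff bound plus a union bound over $q^N$ assignments shows that every assignment $x$ has value within $\epsilon$ of its expectation. That expectation depends only on the empirical distribution $\pi$ of $x$, and equals $\Exp_{a\sim\pi^{\otimes k}}[f(a)]$ up to $o(1)$.

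So $\optCSP{\Phi}\le \max_{\pi}\Exp_{\pi^{\otimes k}}[f]+\epsilon$, and the work reduces to the following per-item checks.

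\textbf{$k$-XOR, $k$ odd.} Use random signs $b$, so that each constraint is satisfied by any product distribution with probability $1/2$ (up to $\epsilon$). Take $\mu$ uniform over the solutions of the given parity. Its marginals are uniform for both values of $b$, which gives $\beta=1/2$.

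\textbf{Balanced LTF of \cite{Pot19}.} Approximation resistance from \cite{Pot19} gives a balanced pairwise-independent (or at least uniform-marginal) distribution supported on $f^{-1}(1)$. Random literals $b$ then give product value $\le 1/2+\epsilon$.

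\textbf{Exactly-$\ell$-of-$k$.} Take $\mu$ uniform on the weight-$\ell$ strings. Every marginal is $\ell/k$, so the LP value is $1$ without literals. The product value is $\Pr[\mathrm{Bin}(k,p)=\ell]$, and a single symmetric predicate admits no literals. I expect this item to be the main obstacle: a naive random instance only gives $\beta=\max_p\binom{k}{\ell}p^\ell(1-p)^{k-\ell}$, not $\rho(\ell,k)$. The first thing I would try is the constructions from the earlier literature (\cite{CGSV24} and related work) that yield $\rho(\ell,k)$. These plant a hidden structure that breaks the symmetry among positions: for instance, each variable is used only in a fixed position class, so an assignment's value is $\prod_j p_j^{[j\le\ell]}(1-p_j)^{[j>\ell]}$ over per-class densities. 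The LP then puts $\mu$ with marginal $1$ on $\ell$ classes and $0$ on the others, which is inconsistent. I would therefore need to interpolate so that LP feasibility is preserved. The condition $\ell\ne k/2$ should enter precisely in matching the marginals $p$ of $\mu$ to the maximizer of $p^\ell(1-p)^{k-\ell}$. I would check this case carefully or cite the known gap instance.

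\textbf{DiCut and 2SAT with $\gamma<1$.} Here the gaps come from known instances, namely the $\sqrt n$-space hard distributions of \cite{CGV20} and \cite{CGSV24}, whose parameters $(\gamma,(3\gamma-1)/2)$ and $(\gamma,(2\gamma+1)/4)$ arise from a mixture of two local distributions. In each case I would build a two-type random instance and set $\calY_C$ to the corresponding local distribution on each type. The LP-feasibility check is that the marginals agree across types (the common bias). Soundness is the product-distribution bound optimized over a one-parameter bias, which should recover the stated $\beta$.
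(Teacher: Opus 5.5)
Your overall reduction matches the paper's: exhibit a basic-LP gap instance and invoke \Cref{thm:main}, using the ``moreover'' clause when $\gamma=1$. Your random-sign construction for \textsc{Max-$k$XOR} ($k\ge 3$ odd) is sound. The LTF item, however, fails as written, because for an LTF no distribution supported on $f^{-1}(1)$ has uniform marginals. On $f^{-1}(1)$ the linear form $L_b(x)=\sum_\ell (-1)^{x_\ell+b_\ell}w_\ell$ is strictly positive, so its expectation is positive. Yet that expectation depends only on the one-wise marginals, and it is $0$ when they are uniform. So there is no balanced pairwise-independent, or even one-wise uniform, $\mu$ to feed into your template. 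Moreover, since $L_{\bar b}=-L_b$, no single marginal vector $\nu$ admits near-satisfying local distributions for both a literal pattern $b$ and its complement. Hence your single-$\nu$ random-literal instance cannot reach LP value near $1$. This is exactly why balanced LTFs escape the Austrin--Mossel criterion. Any certificate of Potechin's approximation resistance must rest on \emph{biased} distributions on $f^{-1}(1)$, so a basic-LP gap has to be built differently (e.g.\ with variables of different biases). The paper does not construct it; it cites the $(1,1/2+\epsilon)$ basic-LP gap from \cite{Pot19,STV25}.

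For \textsc{Max-Exactly-$\ell$-Of-$k$} you have misdiagnosed the obstacle. The typeset $\rho(\ell,k)$ is missing the factor $\binom{k}{\ell}$; in \Cref{sec:comparison} the paper uses $\rho(\calF)=\binom{k}{\ell}\max_p p^\ell(1-p)^{k-\ell}$ (e.g.\ $4/9$ for $2$-of-$3$) and cites $(1,\rho(\calF)+\epsilon)$ gaps. As typeset, the item is false for $1\le \ell\le k-1$. Setting each variable to $1$ independently with probability $p$ satisfies every constraint with probability $\binom{k}{\ell}p^\ell(1-p)^{k-\ell}$. So every instance has value at least $\binom{k}{\ell}\max_p p^\ell(1-p)^{k-\ell}$, and no ``position class'' construction can go lower. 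Your value formula for that construction also counts only one of the $\binom{k}{\ell}$ satisfying patterns. With the intended $\rho$, your naive instance ($\mu$ uniform on weight-$\ell$ strings, random tuples) is already the required gap. The condition $\ell\ne k/2$ plays no role in the proof; it only excludes the wide case already covered by \Cref{thm:cgsvv}.

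For \textsc{Max-DiCut} and \textsc{Max-2SAT}, the stated curves are basic-LP curves; for \textsc{Max-DiCut} the paper takes its curve from the LP analysis of \cite{FMW26}. The $\sqrt n$-space hard distributions of \cite{CGV20,CGSV24} are the wrong source. The $\sqrt n$-hardness of \textsc{Max-DiCut} holds at ratio $4/9$, below its basic-LP ratio $1/2$, so those distributions are not basic-LP gaps. A direct \textsc{Max-DiCut} construction is a weighted disjoint union of two parts:
a random digraph, with all marginals $1/2$, LP value $\ge 1/2$, and optimum $\le 1/4+\epsilon$; and a satisfiable component.
Putting a $\lambda$-fraction of constraints on the first part gives $(1-\lambda/2,\,1-3\lambda/4)$, i.e.\ $\beta=(3\gamma-1)/2$.
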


We also present a more detailed comparison to previous work below.

\subsection{Comparison with the prior work}\label{sec:comparison}
The most general linear-space lower bounds for CSPs in prior work are by \textcite{CGS+22-linear-space}, based on a structural property that they called \emph{width}, which we now describe.
Put simply, a predicate $f : \Z_q^k \to \{0,1\}$ is \emph{wide} if its support contains a translate of the ``diagonal'' line $\{b \cdot \mathbb{1} : b \in \Z_q\}$, where $\mathbb{1} = (1,\ldots,1) \in \Z_q^k$ is the all-$1$'s vector.
More generally, the \emph{width} of $f$ is 
\[
    \omega(f) \coloneqq \max_{y \in \Z_q^k} \braces*{ \Pr_{\rb \in \Z_q} [f(y+b\cdot\mathbb{1}) = 1] }, 
\]
and the width of $\calF \subseteq \{0,1\}^{[q]^k}$ is $\omega(\calF) \coloneqq \min_{f \in \calF} \omega(f)$.
The result of~\cite{CGS+22-linear-space} is the following:

\begin{theorem}[{\cite{CGS+22-linear-space}}]\label{thm:cgsvv}
    For every $\calF \subseteq \{0,1\}^{[q]^k}$ and $\epsilon > 0$,
    every single-pass streaming algorithm for $\GapMaxCSP{\omega(\calF)-\epsilon}{\rho(\calF)+\epsilon}{\calF}$ requires $\Omega_\epsilon(n)$ space, where \[
        \rho(\calF) \coloneqq \lim_{n \to \infty} \quad \inf_{\Phi\text{ instance of }\MaxCSP{\calF}\text{ on }n\text{ variables}} \braces*{ \optCSP{\Phi} }    \]
    is the best possible uniform lower bound on the optimum value of instances of $\MaxCSP{\calF}$.
\end{theorem}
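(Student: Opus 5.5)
The plan is to derive \Cref{thm:cgsvv} as a direct corollary of \Cref{thm:main}. I will exhibit, for every $\epsilon>0$, an $(\omega(\calF),\rho(\calF)+\epsilon/2)$-LP integrality gap instance of $\MaxCSP{\calF}$. Applying \Cref{thm:main} with $\gamma=\omega(\calF)$, $\beta=\rho(\calF)+\epsilon/2$ and error parameter $\epsilon/2$ then gives the $\Omega_\epsilon(n)$ lower bound for $\GapMaxCSP{\omega(\calF)-\epsilon}{\rho(\calF)+\epsilon}{\calF}$. If $\omega(\calF)-\epsilon\le\rho(\calF)+\epsilon$, the two promise cases are not separated and there is nothing to prove, so assume $\omega(\calF)>\rho(\calF)+\epsilon$; then $\gamma>\beta$ as \Cref{thm:main} requires.

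\textbf{The integral side.} By the definition of $\rho(\calF)$ as a limit of infima, there is some $n_0$ and an instance $\Phi$ of $\MaxCSP{\calF}$ on $n_0$ variables with $\optCSP{\Phi}\le\rho(\calF)+\epsilon/2$. I fix this $\Phi$; it serves as the gap instance, and its constraints already use distinct variables within each tuple.

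\textbf{The LP side.} For each predicate $f\in\calF$, fix $y_f\in\Z_q^k$ attaining the maximum in the definition of $\omega(f)$. For each constraint $C=(f,e)$ of $\Phi$, let $\calY_C$ be the distribution of $y_f+\rb\cdot\mathbb{1}$ with $\rb$ uniform in $\Z_q$.
\begin{itemize}
\item \emph{Consistency.} For every coordinate $\ell$, the marginal of $\calY_C$ on coordinate $\ell$ is $(y_f)_\ell+\rb$, which is uniform on $\Z_q$. Hence every single-variable marginal of every $\calY_C$ is uniform, and the consistency constraints of \Cref{fig:basic-lp} hold trivially.
\item \emph{Value.} For each constraint, $\Exp_{\ra\sim\calY_C}f(\ra)=\omega(f)\ge\omega(\calF)$. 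Averaging over constraints gives $\optLP{\Phi}\ge\omega(\calF)$.
\end{itemize}
So $\Phi$ is an $(\omega(\calF),\rho(\calF)+\epsilon/2)$-LP integrality gap instance. When $\omega(\calF)=1$, the ``moreover'' clause of \Cref{thm:main} gives the perfect-completeness version as well.

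\textbf{Main obstacle.} I expect no real difficulty; the only care needed is bookkeeping. This means correctly handling the limit and infimum in $\rho(\calF)$ to obtain a single finite instance with the right value, and splitting $\epsilon$ between the soundness slack of the gap instance and the error parameter of \Cref{thm:main}. The conceptual point is the observation that the ``diagonal line'' in the definition of width is exactly a distribution with uniform marginals. This is what lets the analytic LP condition replace the linear width condition.
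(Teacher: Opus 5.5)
Your proposal is correct and is exactly the derivation the paper sketches in its comparison with prior work: a uniform distribution on a translated diagonal line has uniform (hence consistent) marginals and gives LP value at least $\omega(\calF)$, so an instance with value close to $\rho(\calF)$ is an integrality gap instance and \Cref{thm:main} applies. The paper itself only cites this theorem from \cite{CGS+22-linear-space}, where it is proved directly from the linear width structure, and your $\epsilon$-bookkeeping is fine, since hardness with completeness threshold $\omega(\calF)-\epsilon/2$ transfers to the weaker threshold $\omega(\calF)-\epsilon$.
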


The authors of~\cite{CGS+22-linear-space} use this lower bound to prove linear-space streaming inapproximability for various CSPs,
including $\textsc{Max-$q$Coloring}$, $\textsc{Max-$q$UniqueGames}$, and $\textsc{Max-$q$LessThan}$ (see \cite[\S4.1]{CGS+22-linear-space}).
For instance, $\textsc{Max-$q$Coloring}$ is wide (i.e., $\omega=1$ for every member of the predicate family).

Note that the basic LP framework for approximability is stronger than the width criterion, in the following sense:
every instance $\Phi$ of $\MaxCSP{\calF}$ has $\optLP{\Phi} \ge \omega(\calF)$.
(The argument is just to set the local distribution $\calY_C$ for each constraint $C$ to be uniform over the corresponding translated line;
the marginals are consistent because they are in fact uniform.)
Thus, all linear-space lower bounds from the work of~\cite{CGS+22-linear-space} (i.e., via \Cref{thm:cgsvv})
can alternatively be derived from our \Cref{thm:main}.

\begin{remark}
    For CSPs including $\textsc{Max-$q$UniqueGames}$~\cite{CGS+22-linear-space} do not apply \Cref{thm:cgsvv} directly;
    rather, they identify a subfamily of predicates $\calF' \subseteq \calF$ and observe that the resulting lower bound for
    $\GapMaxCSP{\omega(\calF')-\epsilon}{\rho(\calF')+\epsilon}{\calF'}$ is a lower bound for
    $\GapMaxCSP{\omega(\calF')-\epsilon}{\rho(\calF')+\epsilon}{\calF}$ too.
    This trick is not needed to apply our \Cref{thm:main}, which (unlike \Cref{thm:cgsvv}) is ``monotone'' with respect to the predicate family.
\end{remark}

We also consider below concrete examples of CSPs where our bounds (i.e., \Cref{thm:main}) are substantially stronger than the bounds of~\cite{CGS+22-linear-space} (i.e., \Cref{thm:cgsvv}).
A predicate family $\calF$ is said to be approximation resistant for a family of algorithms, if it is hard for the algorithms to distinguish the two cases in $\GapMaxCSP{1}{\rho(\calF)+\epsilon}{\calF}$ (where $\rho(\calF)$ is a trivial lower bound). 
Note that in the case $q=2$, the width of a predicate $f : \{0,1\}^k \to \{0,1\}$ can only be either $1/2$ or $1$.
Correspondingly, the~\cite{CGS+22-linear-space} criterion (\Cref{thm:main}), applied to any family $\calF$, either shows that for every $\epsilon > 0$, $\GapMaxCSP{1}{\rho(\calF)+\epsilon}{\calF}$ requires $\Omega_\epsilon(n)$ space (approximation resistance),
or it only shows that for every $\epsilon > 0$, $\GapMaxCSP{1/2-\epsilon}{\rho(\calF)+\epsilon}{\calF}$ requires $\Omega_\epsilon(n)$ space.
We argue that this bound is sometimes too crude for many interesting approximation resistant CSPs.
\vspace{-10 pt}
\paragraph{Approximation-resistant CSPs which evade \cite{CGS+22-linear-space}.}
We observe that a predicate $f : \{0,1\}^k \to \{0,1\}$ is wide (i.e., has $\omega(f) = 1$) iff there is a point $x \in \{0,1\}^k$ such that $f(x) = f(\neg x) = 1$;
otherwise, $\omega(f) = \frac12$.
We claim that all the CSPs in \Cref{ex:higher arity} have $\omega(\calF) = \frac12$,
and therefore the \cite{CGS+22-linear-space} condition proves hardness only for the $\GapMaxCSP{1/2-\epsilon}{\rho(\calF)+\epsilon}{\calF}$ problem.
Indeed:
\begin{itemize}
    \item If $k$ is odd and $x$ is a string where the XOR of all bits is $1$,
    then the XOR of all bits in $\neg x$ will be $0$.
    Therefore, the $\textsc{Max-$k$XOR}$ problem has width $1/2$ (for odd $k$).
    \item If $\ell \ne k/2$ and $x$ is a $k$-bit string with Hamming weight $\ell$, then the Hamming weight of $\neg x$ cannot be $\ell$ (indeed, it must be $k-\ell$).
    Hence, the $\textsc{Max-Exactly-$\ell$-Of-$k$}$ problem has width $1/2$ (for $\ell \ne k/2$).
    \item If $w$ is a weight vector for any balanced LTF, then if $\sum_{\ell=1}^k (-1)^{x_\ell+b_\ell}w_\ell > 0$,
    we have
    \[\sum_{\ell=1}^k (-1)^{-x_\ell+b_\ell}w_\ell < 0\, .\]
    Therefore, the $\textsc{Max-LTF}^w$ problem has width $1/2$.
\end{itemize}
Therefore, \Cref{thm:cgsvv} only asserts a linear-space lower bound against $\GapMaxCSP{\frac12-\epsilon}{\rho(\calF)+\epsilon}{\cdot}$.
In the cases of $\textsc{Max-$k$XOR}$ and $\textsc{Max-LTF}^w$ ($w$ balanced), this is trivial \footnote{
     Actually, there is an \emph{ad hoc} reduction from $\textsc{Max-$(2k-1)$XOR}$ to $\textsc{Max-$2k$XOR}$ which does yield approximation resistance of $\textsc{Max-$(2k-1)$XOR}$, since \Cref{thm:cgsvv} does give approximation resistance for even $k$: Create a single 
new dummy variable $v'$ and add it into every constraint.
     to get a new instance $\Phi'$ of $\textsc{Max-$2k$XOR}$
    ($\Phi'$ has exactly the same optimal value as $\Phi$). 
     The other CSPs we have listed here do not appear to be handled by such \emph{ad hoc} arguments.
}
, since already $\rho(\calF) = \frac12$.
In the case of $\textsc{Max-Exactly-$\ell$-Of-$k$}$, one can show $\rho(\calF) = \binom{k}\ell \max_{p \in [0,1]} p^\ell (1-p)^{k-\ell}$.
For instance, $\rho(f_{\textsc{Exactly-$2$-Of-$3$}}) = 4/9$,
and so \cite{CGS+22-linear-space} does give a linear-space lower bound for $\textsc{Max-Exactly-$2$-Of-$3$}_{1/2-\epsilon,4/9+\epsilon}$.
However, all of these problems are known to admit $(1,\rho(\calF)+\eps)$-LP integrality gaps (they are approximation resistant for the basic LP)~\cite{Pot19, STV25},
and therefore we prove hardness for the $\GapMaxCSP{1}{\rho(\calF)+\epsilon}{\calF}$ problem in each case (establishing the first three items in~\Cref{cor:examples}).

\paragraph{Gap problems and approximation curves.}
For other interesting problems, the existing lower bounds of~\cite{KK19,CGS+22-linear-space} do yield the best possible linear space lower bounds in terms of the overall approximation factor,
but, as discussed in \cite[\S1.2.1]{CGS+22-linear-space}, LP-based lower bounds (as in our \Cref{thm:main}, or the lower bounds of~\cite{FMW26}) give ``finer-grained'' lower bounds.
For instance, the $\textsc{Max-DiCut}$ predicate has width $\omega(f_{\textsc{DiCut}}) = 1/2$ (since it only has one satisfying assignment),
and since its trivial approximability is $\rho(f_{\textsc{DiCut}}) = 1/4$,
the \cite{CGS+22-linear-space} result gives a linear-space lower bound for $\textsc{Max-DiCut}_{1/4+\epsilon,1/2-\epsilon}$.\footnote{
    In fact, \cite{CGS+22-linear-space} is not needed; the same lower bound can be achieved simply by taking the lower bound instances for $\textsc{Max-Cut}_{1/2+\epsilon,1}$ from~\cite{KK19} and randomly directing the edges. }
Based on the LP analysis in \cite[\S1.2.1]{FMW26}, our new lower bound (\Cref{thm:main}) gives a linear-space lower bound against $\textsc{Max-DiCut}_{\gamma-\epsilon,(3/2)\gamma-1/2+\epsilon}$ for every $\gamma \in (\frac12,1]$;
for instance, at $\gamma = \frac34$, we get a linear-space lower bound against $\textsc{Max-DiCut}_{3/4-\epsilon,5/8+\epsilon}$.

\subsection{Concurrent and independent work}
Very recently, two beautiful results, independent of the current work, have also made significant progress towards resolving the approximability of CSPs in the streaming model.

The first is a work of \textcite{FMW26-tight} which proves a linear-space lower bound
which also applies to the multi-pass model. 
For the single-pass model, their results are analogous to the ones shown here, yielding a
linear-space lower bound from integrality gaps for the basic LP. 
Moreover, both our results and \cite{FMW26-tight} rely on the Fourier-pseudorandomness framework of
\textcite{KK19}, using the analytic conditions obtained from the integrality gap
instances to obtain Fourier $\ell_1$ bounds on the distributions over assignments maintained by the
streaming algorithm (see proof overview below).
However, the results of \cite{FMW26-tight} also imply a $n \cdot 2^{-O(p)}$ space lower bound for
algorithms with $p = o(\log n)$ passes.
In addition to Fourier bounds above, their work also uses a decomposition of protocols into structured vs. pseudorandom components (following~\cite{FMW25,FMW26}).

The second result, by \textcite{ABF26}, shows that if the
approximation ratio achieved by the basic LP is $\alpha < 1$, then an $\alpha-\eps$ approximation
can be obtained $O(n^{1 - \Omega_{\eps}(1)})$ space in a single pass, \emph{even for instances with unbounded  degree}.
This can be seen as complimentary to our results (linear-space lower bounds for any better
approximations) and proves that the characterization obtained in \cref{thm:main} for single-pass
algorithms is indeed optimal.

\newcommand{\tv}[1]{\| #1 \|_{\mathsf{tvd}}}
\newcommand{\mper}{\,.}
\newcommand{\mcom}{\,,}

\subsection{Overview of proofs and techniques}
We present a brief overview of the techniques used in our proof, and also discuss their connection
to related works on lower bounds for streaming problems. 
Particularly relevant to our work are the recent dichotomy results for \emph{multi-pass} streaming
algorithms by Fei, Minzer and Wang~\cite{FMW26} which prove an $\Omega(n^{1/3}/p)$ space lower bound
for $p$-pass streaming algorithms using LP integrality gaps, and the linear space (single-pass)
lower bounds by Chou \etal~\cite{CGS+22-linear-space} for a \emph{subclass} of $k$-CSPs with certain algebraic structure.
While the setup for the problem we consider is similar to \cite{FMW26}, the structure of our proof significantly extends the framework of Chou \etal~\cite{CGS+22-linear-space}, which itself builds on an earlier result of Kapralov and Krachun~\cite{KK19} for Max-CUT.

\vspace{-5 pt}
\paragraph{Distribution labeled hypergraphs from LP instances.}
Recall that for a predicate family $\calF \subseteq \{f : [q_0]^k \to \{0,1\}\}$, an instance $\Phi$ of $\MaxCSP{\calF}$ on (say) $n_0$ variables is specified by a collection of constraints $C = (f,e)$, where $e = (i_1, \ldots, i_k) \in [n_0]^k$ denotes an ordered $k$-uniform hyperedge on which the predicate $f$ is applied.
The basic LP corresponding to $\Phi$ (see \cref{fig:basic-lp} ) can be seen to search for \emph{local distributions} $\calY_C$ on $[q_0]^k$, for each constraint $C \in \Phi$, such that the any two local distribution sharing a variable agree on the marginal distribution of the variable.

For $0 < \beta < \gamma \leq 1$, a $(\gamma,\beta)$-integrality gap instance $\Phi$ is one where the LP optimum is $\gamma$ but no true assignment to the variables achieves a value higher than $\beta$.
As is often the case for translating LP/SDP lower bounds to other computational models~\cite{Rag08, CLRS16, Lee15, KMR22, GT17, FMW26}, the integrality gap instance can be used as a (constant-sized) ``gadget" or ``template" to produce an infinite family of hard instances for the streaming problem. 
We view the integrality gap instances as ordered $k$-uniform hypergraphs on the vertex set $[n_0]$ with each edge corresponding to a constraint $C$ labeled by the corresponding distribution $\calY_C$.

\vspace{-5 pt}
\paragraph{Reducing to one-wise uniform distributions.} A very useful observation of Fei, Minzer and Wang~\cite{FMW26} is that by suitably increasing the alphabet size $q_0$ to (say) $q$, the marginal distribution of each variable can be thought of as \emph{uniform} on $[q]$ (a similar construction is also used by Yoshida~\cite{Yos11}).
This is because for rational LP solutions with all probabilities multiples of (say) $1/q$, the uniform distribution on $[q]$ can be used to generate any distribution with probabilities $p_1, \ldots, p_{q_0}$: we partition $[q]$ into sets $S_1, \ldots S_{q_0}$ of size $|S_j| = p_j \cdot q$ and use the induced distribution on the sets.
Each $\calY_C$ then gives a distribution $\calY'_C$ on $[q]^k$, where $a \sim \calY_C$ corresponds to a random sample from $S_{a_1} \times \cdots \times S_{a_k}$, with the marginals of all $k$ variables in $\calY'_C$ being uniform on $[q]$. 
We refer to such distributions as \emph{one-wise uniform}, and will only need to work with these for our proof.

Note that this reduction formally \emph{changes} the predicate family to $\calF' = \{ f \circ \phi ~|~  f \in \calF \}$, where $\phi: [q] \to [q_0]$ denotes the partition $\phi(z_{\ell}) = j \Leftrightarrow z_{\ell} \in S_j$.
However, instances of $\MaxCSP{\calF'}$ generated via this reduction encode instances of $\MaxCSP{\calF}$ and a good (streaming) approximation for instances of the former implies an approximation for $\MaxCSP{\calF}$. 
We will suppress this distinction below, and will simply refer to the (new) predicate family as $\calF$ and one-wise uniform distributions as $\calY_C$ for ease of notation, and will only consider these in the discussion below.
We also identify the new alphabet $[q]$ with the set $\Z_q$.
\vspace{-5 pt}
\paragraph{Signal vs. noise problems.}
As in all previous works \cite{KK19, CGSV24, CGS+22-linear-space, FMW26}, we consider an
average-case version of the problem $\GapMaxCSP{\gamma}{\beta}{\calF}$ where the goal is to
distinguish if the random instances are from a $\yes$ distribution corresponding to a hidden signal
(a planted assignment) or from a $\no$ distribution with random noise.
Following \cite{FMW26}, who also considered the same average-case problem, we refer to it
as the ``Distributional Implicit Hidden Partition'' (DIHP) problem. 

To define the relevant distributions, we consider ``literal shifts'' $f_z(x) = f(x-z)$ for predicates $f
\in \calF$ (note that $\calF$ \emph{may not} be closed under literal shifts).
For a fixed (hidden) assignment $X \in \Z_q^n$ and $y \in [q]^k$, note that  if we consider a
constraint $(f_z, e)$ for $z = X_e - y$ (where $X_e$ denotes the restriction of $X$ to $e$) then $f_z(X_e) = f(y)$. 
Consider a random collection $\Lambda$ of such constraints, with each constraint being generated
independently as follows: (i) choose a random $\rC = (\rf,\re_0)\in \Phi$ from the gadget LP integrality gap
instance and a random $\ry \sim \calY_C$,  (ii) for a randomly drawn $\re \in [n]^k$ include the
constraint $(e,f_z)$ with $z = X_e - y$.
Then expected fraction of constraints in $\Lambda$ satisfied by $X$ equals 
\[
\Exp_{(\re,\rf_{X_e - y}) \sim \Lambda}[f_{X_e - y}(X_e)] 
~=~ \Exp_{C \sim \Phi}\Exp_{y \sim
  \calY_C}[f_{y - X_e}(X_e)] 
~=~ \Exp_{C \sim \Phi}\Exp_{y \sim \calY_C}[f(y)] \mcom
\]
which equals the LP value $\gamma$ on the instance $\Phi$.
Moreover, this remains true in expectation over the choice of $\rX$, \emph{even if restrict to the
  subcollection $\Lambda_0$ of constraints where $y = X_e$} (so that the literal shift is $0$ and we remain
within the family $\calF$). This is because we will have $|\Lambda_0| \approx |\Lambda|/q^k$ with
high probability, and the expected fraction of constraints from $\Lambda$ included in $\Lambda_0$
and satisfied by $\rX$ is
\[
\Exp_{(\re,\rf_{X_e - y}) \sim \Lambda}\left[f_{X_e - y}(X_e) \cdot \mathbb{1}_{(\re,\rf_{\ry-X_e}) \in
    \rLambda_0}\right]
~=~ \Exp_{C \sim \Phi}\Exp_{y \sim \calY_C}\left[f(y) \cdot \Exp_{X \sim \Z_q^n} \mathbb{1}_{y = X_e}\right]
~=~ \frac{1}{q^k} \cdot \Exp_{C \sim \Phi}\Exp_{y \sim \calY_C}[f(y) ] 
\]
We consider an average-case version of the problem
$\GapMaxCSP{\gamma}{\beta}{\calF}$, where the goal is to distinguish between the
$\yes$ case of instances generated as above, with a ``planted'' random $X \sim \Unif(\Z_q^n)$
satisfying $\gamma$ fraction of the constraints), and the $\no$ case where each literal shift $z$ is
chosen uniformly from $\Z_q^k$. 
Also as in previous works~\cite{CGSV24, FMW26} and with an eye towards the relevant communication
problem described next, we will think of instances as ``padded'' where we specify the entire collection of
constraints in $\Lambda$, but the actual CSP instance only consists of the constraints in $\Lambda_0$
(which are easily identifiable).
The value of the instances in the $\no$ case can be seen to be equal to that of an assignment for
the LP instance $\Phi$, with with each variable rounded independently according to its marginal
distribution (which is at most $\beta$). 
We will restrict the discussion below to the special case with a \emph{single} one-wise uniform distribution $\calY$, which contains all our conceptual ideas.

Such ``planted vs. random'' problems are also considered in earlier works on lower bounds for
streaming problems (including \cite{KKS15, KK19, CGSV24, CGS+22-linear-space}) and many other
computational models for approximation and refutation of random CSPs (see \cite{KMOW17} for an
excellent summary). 
Of particular relevance to us are the parallels with works on lower bounds for Sum-of-Squares (SoS) relaxations of CSPs~\cite{BCK15, KMOW17}, which similarly extend previous results based on algebraic structure of XOR predicates~\cite{Gri01,Sch08,Tul09} using the analytic structure of \emph{pairwise} uniformity.

\vspace{-5 pt}
\paragraph{Communication games.}
As in the case of most streaming lower bounds \cite{KKS15, KK19, CGSV24, CGS+22-linear-space, FMW26}, we reduce to proving lower bounds for a related communication problem. 
We consider random instances on $n$ variables above as consisting of $T$ partial $k$-uniform hypermatchings (with each edge in $[n]^k$) of size $m = \alpha \cdot n$, for suitable constants $T \gg 1$ and $\alpha \ll 1$. 
This is used to define a communication game with $T$ players, where for each $t \in [T]$, $\Player_t$ receives a pair of random variables $(\rM_t, \rZ_t)$ where the variables $\rM_t$ denote the random hypermatchings, and the variables $\rZ_t \in \Z_q^{m \times k}$ are distributed differently in the $\yes$ and $\no$ distributions:
\begin{itemize}
\item[$\yes$:] For a common $\rX \sim \Unif(\Z_q^n)$ and independent $\rY_t \sim \calY^{\otimes m}$, we have $\rZ_t = \Pi_{\rM_t}\rX - \rY_t$. 
Here, $\Pi_M$ denotes the projection of $\rX$ to the vertices in $M$ and $\calY^{\otimes m}$ denotes the $m$-wise product of the distribution $\calY$ on $\Z_q^k$.
\item[$\no$:] For each $t \in [T]$, we independently sample $\rZ_t \sim \Unif(\Z_q^{m \times k})$.
\end{itemize}
Note that the above is specialized to the case of a single one-wise uniform distribution $\calY$. For the general case, each matching $\rM_t$ is labeled with a randomly chosen edge $e_0$ from the gadget hypergraph from the LP instance $\Phi$ with a corresponding distribution $\calY_{e_0}$, and we take $\rZ_t = \Pi_{\rM_t}\rX - \rY_t$ for $\rY_t \sim \calY_{e_0}^{\otimes m}$. We also take the set of $n$ variables to be $n_0$-partite, where $n_0$ denotes the number of vertices in the gadget hypergraph.

The communication model we consider is that of one-way broadcast communication, where the players speak in sequence $1, \ldots, T$, and their messages are then visible to all other players.
We show a lower bound of $\Omega(n)$ on the total communication for any protocol for distinguishing the above distributions. 
This is the technical contribution of our work, and also where argument diverges from \cite{FMW26}, who prove an $\Omega(n^{1/3})$ communication lower bound in the stronger number-in-hand model relevant for multi-pass streaming (and in fact there is an $O(\sqrt{n})$ \emph{upper bound} in this model). 
Our goal is thus to show a much stronger lower bound in a weaker communication model.

\vspace{-5 pt}
\paragraph{Hybrid argument and posterior distributions.}
Let $\rZ_t^Y$ and $\rZ_t^N$ denote the samples from $\Z_q^{m \times k}$ for the $t$-th player, drawn respectively from the $\yes$ and $\no$ distributions.
Also, let $\rS_1^Y, \ldots, \rS_t^Y$ denote the sequence of messages sent by the first $t$ players when given inputs $(\rM_1,\rZ_1^Y), \ldots, (\rM_1,\rZ_1^Y)$ from the $\yes$ distribution. We denote this sequence as $\rS^Y_{1:t}$ and define $\rS^N_{1:t}$ analogously. 
We also assume that the in addition to their messages, the players can publish their matchings $\rM_t$ (but not the samples $\rZ_t$) ``for free'', which only makes our lower bound stronger.

Our goal is to show that for any protocol (which can be assumed to be deterministic using Yao's principle) with $o(n)$ communication, we must have that $\tv{\rS_T^Y - \rS_T^N} \leq \delta$ for a suitably small $\delta$ (since the $T$-th player can be assumed to simply output $\yes$ or $\no$).
We will instead show the stronger bound that $\tv{(\rM_{1:T}, \rS_{1:T}^Y) - (\rM_{1:T}, \rS_{1:T}^N)} \leq \delta$ where $\rM_{1:T}$ denote the aggregated matchings.
Analogous to \cite{KK19, CGS+22-linear-space}, we proceed via a hybrid argument, showing that for
each $t \in [T]$, when the first $t-1$ messages are drawn from the $\yes$ distribution, we must have 
\[
\tv{\Player_t(\rM_{1:t}, \rS^Y_{1:t-1}, \rZ^Y_t) - \Player_t(\rM_{1:t}, \rS^Y_{1:t-1}, \rZ^N_t)} ~\leq~ {\delta}/{T}  \mper
\]
Consider the case $t=1$. In this case, the distributions of the random variables $\rZ^Y_1 = \Pi_{\rM_1}\rX - \rY_1$ and $\rZ^N_1 \sim \Unif(\Z_q^{m \times k})$ are clearly identical, since we have $\rX \sim \Unif(\Z_q^n)$. 
However, this is no longer true at $t=2$ since the distribution of $\rX$ is no longer uniform \emph{conditioned on the output of ~$\Player_1$}.
The proofs in \cite{KK19} and \cite{CGS+22-linear-space} track the \emph{posterior} distribution of $\rX$ conditioned on the output of players at various steps, and show that as long as the Fourier tails for the (density function of) the posterior distribution are bounded in $\ell_1$ norm, the distributions of $\rZ_t^Y$ and $\rZ_t^N$ are close, even conditioned on outputs of the previous players.
However, they crucially rely on the (linear) structure of the underlying predicates generating the distribution of $\rY_t$ for proving such Fourier bounds. 
While the results of \cite{KK19} are for Max-CUT, the ones in \cite{CGS+22-linear-space} only hold
for predicates $f: \Z_q^k \to \{0,1\}$ where $f^{-1}(1)$ contains a ``line'' $L_y = \{y + b\cdot
\mathbb{1} ~|~ b \in \Z_q\}$, where $\mathbb{1} = (1, \ldots, 1)$.

No such linear structure is available in our case since the goal is to prove lower bounds for \emph{all} CSPs. %
The technical core of our argument is obtaining such Fourier bounds only relying on the (weak) \emph{analytic} property that the distributions $\calY$ generating $\rY_t$ can be taken to be one-wise uniform.
As mentioned earlier, this phenomenon also has a parallel in the literature on SoS lower bounds~\cite{KMOW17}, where the technical challenge was working with the weaker analytic Fourier structure provided by pairwise uniformity (instead of the linear structure of XOR predicates).
\vspace{-5 pt}
\paragraph{Fourier bounds via one-wise uniformity.} 
We now consider the key technical challenge, which arises in understanding the Fourier structure for the posterior distribution. 
For the case $t = 2$, let the output of $\Player_1(\rM_1,\rZ_1)$ be in $\Z_q^r$ (it will be convenient to think of output also as $q$-ary). 
For a fixed matching $M$ (value of $\rM_1$) the outputs of $\Player_1$ partition $\Z_q^{m \times k}$ (the domain of $\rZ_1$). 
Fixing an output then defines a set $\calB \subseteq \Z_q^{m \times k}$, which is typically of size $q^{mk - r}$.
This induces a posterior distribution $\calD_{M, \calB}$ on $\Z_q^n$ with probabilities
\[
\calD_{M,\calB}(X_0) ~=~ \Pr_{\rX \sim \Z_q^n, \rY \sim \calY^{\otimes m}}\left[ \rX = X_0 ~\mid~ \Pi_{M}\rX - \rY \in \calB \right] \mper
\]
For a distribution $\calD$ on $\Z_q^N$, let $\Den{\calD}$ denote its density $q^N \cdot \calD$ (so that $\Exp[\Den{\calD}] = 1$) and let $\Den{\calB}$ be the density of $\Unif(\calB)$. 
The density $\Den{\calD_{M,\calB}}$ can be expressed as $\Den{\calD_{M,\calB}} = \Den{\calB} *
\Den{\calY^{\otimes m}}$. 
The Fourier coefficient for any ``frequency'' $U \in \Z_q^{n}$ is then given by
\[
\hat{\Den{\calD_{M,\calB}}}(U) ~=~ 
\begin{cases}
\hat{\Den{\calB}}(\Pi_{M}U) \cdot \hat{\Den{\calY^{\otimes m}}}(-\Pi_{M}U) &~\text{if}~\supp(U) \subseteq M \\
0 &~\text{otherwise}
\end{cases} \mper
\]
The Fourier conditions needed on the posterior distribution require understanding, for each $h$, the sum $\sum_{\wt(U) = h} |\hat{\Den{\calD_{M,\calB}}}(U)|$.
Writing $V = \Pi_M U \in \Z_q^{m \times k}$ as a matrix with rows $V_1, \ldots, V_m$, the argument of \cite{CGS+22-linear-space}  can be seen as decomposing this sum based on the number of nonzero rows (say) $\ell$ in $V$.
They then study the $\ell_2$ norm of Fourier coefficients for $V$ with exactly $h$ nonzero entries and $\ell$ nonzero rows (together with Cauchy-Schwarz and counting to obtain $\ell_1$ bounds).
Note that this ``$(h,\ell)$-inequality'' is a refinement of the usual level-$h$ inequality, which bounds the Fourier mass of coefficients with $h$ nonzero entries. 
Such inequalities are obtained in ~\cite{CGS+22-linear-space} by using the algebraic structure of predicates defining $\calY$ to eliminate one nonzero entry for each nonzero row.

While we do not have such algebraic structure, the one-wise uniformity of $\calY$ implies that 
\[
\hat{\Den{\calY^{\otimes m}}}(-V) = \prod_{i \in [m]}\hat{\Den{\calY}}(-V_i) \neq 0
\quad \implies \quad
\forall i \in [m], ~~ |\supp(V_i)| \neq 1. \]
Due to the product structure of Fourier coefficients in $\Den{\calD_{M,\calB}}$, it then suffices to bound the Fourier mass of the function $\Den{\calB}: \Z_q^{m \times k} \to \R_{\geq 0}$ (with $\norm{\Den{\calB}}_{\infty} \leq q^r$) on Fourier coefficients corresponding to matrices $V = \Pi_M U$ with ``singleton-free'' rows (since $\hat{\Den{\calY^{\otimes m}}}(-V) = 0$ otherwise).

To prove such a refinement of the level-$h$ inequality, we first recall that usual inequality is proved by applying the Bonami--Beckner operator $T_{\rho}$, which has the same eigenvalue $\rho^h$ for all Fourier coefficients of weight $h$, and then using hypercontractivity to bound $\norm{T_\rho f}_2 \leq \norm{f}_p$ for an appropriate $p$.
To prove our ``singleton-free" inequality, we now design a \emph{custom noise operator} with eigenvalues taking advantage of this singleton-free structure. In particular, consider the linear operator $S_{\rho}$ defined on $\Z_q^k$ with characters $\chi_u(x) = \omega_q^{\langle u,x\rangle}$ as eigenvectors, and eigenvalues
\[
S_{\rho}(\chi_u) ~=~ \lambda_{\rho}(u) \cdot \chi_u 
\qquad \text{where} \qquad 
\lambda_\rho(u) ~=~
\begin{cases}
1 & \text{if}~~u=0 \\
0 & \text{if}~|\supp(u)|=1 \\
\rho^{|\supp(u)|-1} & \text{otherwise}
\end{cases} \mper
\]
One can check that for characters $\chi_V$ corresponding to $V \in \Z_q^{m \times k}$, we have $S_{\rho}^{\otimes m}(\chi_V) = \rho^{h - \ell} \cdot \chi_V$, where $h = \supp(V)$ and $\ell$ is the number of (singleton-free) nonzero rows. 
We can also show that if the Fourier spectrum of $g: \Z_q^{m \times k} \to \R$ is singleton-free, then $\norm{S_\rho^{\otimes m} g}_2 \leq \norm{T_{\rho}^* g}_2$, where $T_{\rho}^*$ is a ``row-wise" Bonami-Beckner operator, resampling inputs corresponding to entire rows in $\Z_q^k$, with probability $1-\rho$.
This comparison also yields a hypercontractive inequality for our new operator, and we use this obtain the required $(h,\ell)$-inequality. 

This argument (described in \cref{sec:singleton-free}) is a novel technical contribution of our work, and may also be useful for other Fourier-analytic applications.
Beyond generalizing the bound in \cite{CGS+22-linear-space}, it also significantly simplifies their proof, avoiding the careful elimination of entries in nonzero rows.

\vspace{-5 pt}
\paragraph{Covering with structured centers.}
While the above argument suffices for analyzing the posterior distribution at $t=2$ (conditioning on the output of $\Player_1$), understanding the Fourier structure at later times presents a slight twist on the above problem. 
If $\calD_{t}$ is the posterior distribution after $\Player_{t}$ speaks, $M$ is the revealed matching for $\Player_{t+1}$ and $\calB \subseteq \Z^{m \times k}$ is a set corresponding to the output of $\Player_{t+1}$, the posterior distribution $\calD_{t+1,M,\calB}$ is given by
\[
\calD_{t+1,M,\calB}(X_0) ~=~ \Pr_{\rX \sim \calD_{t}, \rY \sim \calY^{\otimes m}}\left[ \rX = X_0 ~\mid~ \Pi_{M}\rX - \rY \in \calB \right] \mcom
\]
where we now condition the distribution $\calD_t$ instead of the uniform distribution on $\Z_q^n$.
The new density is now given by $\Den{\calD_{t+1,M,\calB}} = \nu_{M,\calB} \cdot \Den{\calD_t} \cdot (\Den{\calB} * \Den{\calY^{\otimes m}})$, where $\nu_{M,\calB}$ is a normalizing factor, and the Fourier coefficients are given by
\[
\hat{\Den{\calD_{t+1,M,\calB}}}(W) ~=~ 
C_{M,\calB} \cdot \sum_{\substack{U \in \Z_q^n \\ \supp(W-U) \subseteq M}} \hat{\Den{\calD_t}}(U) \cdot \hat{\Den{\calB}}(\Pi_{M}(W-U)) \cdot \hat{\Den{\calY^{\otimes m}}}(-\Pi_{M}(W-U))
\]
Assuming the Fourier $\ell_1$ norm of $\Den{\calD_t}$ can be controlled using an induction hypothesis, understanding the norm for $\Den{t+1,M,\calB}$ now requires bounding, for each $U$, expressions of the form
\[
\sum_{\substack{\wt(W)=h \\ \supp(W-U) \subseteq M} }\left| \hat{\Den{\calB}}(\Pi_{M}(W-U)) \cdot \hat{\Den{\calY^{\otimes m}}}(-\Pi_{M}(W-U)) \right|
\]
While we can again study this expression by using a fine-grained weight inequality in terms of the number of non-zero rows in $\Pi_M (W-U)$, the weight constraint is now on $W$, while the relevant row structure is only available for $\Pi_M(W-U)$.
To handle this mismatch, we need to ``cover'' such a sum for an arbitrary $U$ using a similar sum for ``structured centers'' $U'$,  such that for $W'$ defined by $W'-U' = W-U$, we have that $W'$ itself is singleton-free
 and we can also bound $\wt(W')$ (with smaller values of $h' < h$).
\footnote{A similar covering argument was actually also used in \cite{CGS+22-linear-space}, though it's purpose there was only to reduce the weight $h$. We need to modify this argument to provide the required  structure for $U'$ and $W'$.}
We rely on the properties of the random matching $M$ to bound the number of structured centers used in our covering, leading to the desired inductive Fourier bound.





\section{Preliminaries}

\subsection{General notation}
We use \textbf{boldface} to represent random variables. For an event $E$, we also use $\Ind{E}$ to
denote the indicator for the event.
For a function $g : \Z_q^N \to \C$ and $p \in [1,\infty]$, we use the convention that \[
    \|g\|_p \coloneqq \parens*{ \Exp_{\rX \in \Z_q^N} [|g(\rX)|^p]}^{1/p}. \]
In particular, $\|g\|_1 = \Exp_{\rX \in \Z_q^N}[|g(X)|]$ and $\|g\|_\infty = \max_{X \in \Z_q^N} |g(X)|$.

\subsection{Distributions}

For a finite set $S$, we use $\Dist{S}$ to denote the set of all distributions over $S$.
If $S$ is a finite set and $S^k$ its $k$-fold product set, a distribution $\calD \in \Dist{S^k}$ is \emph{one-wise uniform} if for $(\rs_j)_{j \in [k]} \sim \calD$,
the marginal distribution of $\rs_j$ is uniform on $S$ for every $j \in [k]$.
We use $\OWU{S^k}$ to denote the set of all one-wise uniform distributions.

For a distribution $\cD$, $\cD^{\otimes \ell}$ denotes the $\ell$-wise product distribution
$\cD\times\cdots\times\cD$.
In particular, if $\calD \in \Dist{\Z_q^k}$, we can view $\calD^{\otimes m} \in \Dist{\Z_q^{m \times k}}$ as a distribution on matrices.

Abusing notation, we can view a distribution $\calD \in \Dist{S}$ as a function $\calD : S \to \R$, where $\calD(x)$ represents the probability that a sample $\rx \sim \calD$ satisfies $\rx = x$.
We can then define the \emph{density} function $\Den{\calD} : S \to \R$ via \[
\Den{\calD}(x_0) \coloneqq \calD(x_0) \cdot |S|. \]
One useful property of density functions is:
\begin{proposition}\label{prop:prelim:density reweight}
    Let $\calD \in \Delta(S)$ be a distribution and $\phi : S \to \C$ a function.
    Then \[
    \Exp_{\rx \in S}[\Den{\calD}(\rx) \cdot \phi(\rx)] = \Exp_{\rx \sim \calD}[\phi(\rx)]. \]
\end{proposition}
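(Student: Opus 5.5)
This follows by expanding the uniform expectation and substituting the definition of the density. By definition $\Exp_{\rx \in S}[\cdot]$ denotes the average over $x \in S$ with uniform weight $1/|S|$, and $\Den{\calD}(x) = |S| \cdot \calD(x)$. So the plan is to write the left-hand side as a finite sum, cancel the factors of $|S|$, and recognize the result as the expectation of $\phi$ under $\calD$.

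Concretely, we compute
\[
\Exp_{\rx \in S}[\Den{\calD}(\rx)\,\phi(\rx)] \;=\; \frac{1}{|S|}\sum_{x \in S} |S|\,\calD(x)\,\phi(x) \;=\; \sum_{x \in S} \calD(x)\,\phi(x) \;=\; \Exp_{\rx \sim \calD}[\phi(\rx)],
\]
where the last equality is the definition of expectation under a distribution on a finite set. Since $\phi$ is complex-valued, linearity of the finite sum is all that is needed; no convergence or measurability issues arise.

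There is no real obstacle. The only point requiring care is the normalization convention: $\Exp_{\rx \in S}$ means the uniform average over $S$, matching the paper's convention for $\|\cdot\|_p$, as opposed to an expectation under $\calD$.
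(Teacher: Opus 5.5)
Your proof is correct and is essentially the paper's argument: write the uniform average as a normalized finite sum, substitute $\Den{\calD}(x) = |S|\cdot\calD(x)$, cancel the factor $|S|$, and recognize $\sum_{x\in S}\calD(x)\phi(x)$ as $\Exp_{\rx\sim\calD}[\phi(\rx)]$.
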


\begin{proof}
    We can expand \[
    \Exp_{\rx \in S}[\Den{\calD}(\rx) \cdot \phi(\rx)]
    = |S| \Exp_{\rx \in S}[\calD(\rx) \cdot \phi(\rx)]
    = \sum_{x \in S} \calD(x) \cdot \phi(x)
    = \Exp_{\rx \sim \calD}[\phi(\rx)]. \qedhere \]
\end{proof}

\subsection{Total variation distance}

\begin{proposition}[Data processing inequality]\label{prop:data_processing}
    For random variables $\rX, \rY$ and $\rW$, if $\rW$ is independent of both $\rX$ and $\rY$, and $f$ is a function, then $\|f(\rX,\rW)-f(\rY,\rW)\|_\tvd \le \|\rX-\rY\|_\tvd$.
\end{proposition}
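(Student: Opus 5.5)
The plan is to reduce the statement to the data processing inequality for total variation distance on a finite joint space, by conditioning on $\rW$ and averaging. All random variables here take values in finite sets, so I will work directly with the definition $\|\rA-\rB\|_\tvd = \frac12\sum_a |\Pr[\rA=a]-\Pr[\rB=a]|$. I interpret the hypothesis as saying that the pairs $(\rX,\rW)$ and $(\rY,\rW)$ each have product distributions, with the same marginal law of $\rW$ in both.

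First I compute the joint distributions. By independence, for every $x,w$ we have $\Pr[(\rX,\rW)=(x,w)]=\Pr[\rX=x]\Pr[\rW=w]$, and similarly for $\rY$. Summing over $x$ and $w$ then gives
\[
\|(\rX,\rW)-(\rY,\rW)\|_\tvd = \sum_w \Pr[\rW=w]\cdot\|\rX-\rY\|_\tvd = \|\rX-\rY\|_\tvd .
\]

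Second, I apply a deterministic function. For any function $g$ and random variables $\rA,\rB$ on the same finite set,
\[
\frac12\sum_z \Bigl|\sum_{a:g(a)=z}(\Pr[\rA=a]-\Pr[\rB=a])\Bigr| \le \frac12\sum_a |\Pr[\rA=a]-\Pr[\rB=a]| ,
\]
which follows from the triangle inequality after grouping the values of $a$ into fibers of $g$. Taking $g=f$, $\rA=(\rX,\rW)$ and $\rB=(\rY,\rW)$, and combining with the first step, yields the claim.

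I do not expect any step to be hard. The only point needing care is the independence hypothesis: what is actually used is that $\rW$ is independent of $\rX$ in the first experiment and of $\rY$ in the second, with the same law of $\rW$ in both. If $f$ is meant to be randomized, the same argument applies after treating its internal randomness as part of $\rW$.
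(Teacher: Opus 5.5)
Your argument is correct. The product-measure computation gives $\|(\rX,\rW)-(\rY,\rW)\|_\tvd=\|\rX-\rY\|_\tvd$ exactly, and grouping into the fibers of $f$ with the triangle inequality finishes the proof. The paper states this proposition without proof as a standard fact, and your two-step argument is the standard one; restricting to finite supports loses nothing, since every random variable in the paper is finitely supported.
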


\begin{proposition}[{\cite[Lemma 2.3]{CGS+22-linear-space}}]\label{lem:statistical_test}
	Let $\rX,~\rY,~\rW$ be random variables and let $f$ be a function.
    If there exists $\delta>0$ such that for every fixed $X_0$ in the support of $\rX$, we have
	\[\|f(X_0,\rY)-f(X_0,\rW)\|_\tvd \le \delta\, , \] then the following holds:
	\[\|(\rX,f(\rX,\rY))-(\rX,f(\rX,\rW))\|_{tvd}\le \delta\, .\]
\end{proposition}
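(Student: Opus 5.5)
The plan is to use the coupling (or, equivalently, averaging) characterization of total variation distance and condition on the value of $\rX$. First I write the joint distribution of $(\rX, f(\rX,\rY))$ as a mixture over $X_0 \in \supp(\rX)$, with weight $\Pr[\rX = X_0]$, of the point mass at $X_0$ paired with the law of $f(X_0,\rY)$ conditioned on $\rX = X_0$. I do the same for $(\rX, f(\rX,\rW))$. For the hypothesis to be usable I read the statement as intended: the conditional law of $\rY$ given $\rX=X_0$ is exactly what the expression $f(X_0,\rY)$ refers to, and likewise for $\rW$. The most common case is that $\rY,\rW$ are independent of $\rX$; otherwise $f(X_0,\rY)$ is interpreted under the conditional distribution. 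I will state this interpretation explicitly, since the proposition as phrased leaves it implicit.

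Next I compute the total variation distance directly from the definition $\frac12\sum_{(x,s)}|\Pr[\cdot]-\Pr[\cdot]|$. Both joint distributions have the same first marginal, namely the law of $\rX$. Hence for each pair $(X_0,s)$ the difference of probabilities factors as
\[
\Pr[\rX=X_0]\cdot\bigl(\Pr[f(X_0,\rY)=s]-\Pr[f(X_0,\rW)=s]\bigr).
\]
Summing over $s$ for fixed $X_0$ gives $2\Pr[\rX=X_0]\,\|f(X_0,\rY)-f(X_0,\rW)\|_\tvd$. Summing over $X_0$ then gives
\[
\|(\rX,f(\rX,\rY))-(\rX,f(\rX,\rW))\|_\tvd=\Exp_{\rX}\bigl[\|f(\rX,\rY)-f(\rX,\rW)\|_\tvd\bigr]\le\delta.
\]
The last inequality follows from the pointwise hypothesis, because each term is at most $\delta$ and the weights sum to $1$.

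There is no real obstacle here. The only point needing care is the one above: making precise that the hypothesis refers to the conditional laws given $\rX=X_0$, so that the joint probabilities genuinely factor. If the random variables are not discrete, the same argument goes through with sums replaced by integrals against the law of $\rX$ and regular conditional distributions. In this paper every space is finite, so the discrete computation suffices.
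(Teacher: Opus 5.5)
Your proof is correct. The paper gives no proof of its own here (it imports the statement from \cite{CGS+22-linear-space}), and your argument is the standard one: condition on $\rX$, factor the joint probabilities, and obtain $\|(\rX,f(\rX,\rY))-(\rX,f(\rX,\rW))\|_\tvd=\Exp_{\rX}[\|f(\rX,\rY)-f(\rX,\rW)\|_\tvd]\le\delta$.

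Your insistence on reading $f(X_0,\rY)$ under the conditional law given $\rX=X_0$ is genuinely necessary. With unconditional laws the statement is false: take $\rX$ a uniform bit, $\rY=\rX$, $\rW=1-\rX$, and $f(x,y)=y$. The conditional reading is also exactly how the paper uses the statement in the proof of \Cref{thm:main DIHP}, where $\rZ_t$ is correlated with the earlier messages through $\rX^*$.
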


\begin{proposition}[{\cite[Lemma B.2]{KK19}}]\label{lem:KKsubstitutionlemma}
	Let $\rX^1,\rX^2$ be random variables taking values on the same sample space,
    let $\rZ^1,\rZ^2$ be random variables taking values on the same sample space, and
    let $f$ be a function.
    If $\rZ^2$ is independent of $\rX^1,\rX^2$, then \[
	\|(\rX^1,f(\rX^1,\rZ^1))-(\rX^2,f(\rX^2,\rZ^2)) \|_{\tvd} \le \|(\rX^1,f(\rX^1,\rZ^1))-(\rX^1,f(\rX^1,\rZ^2)) \|_{\tvd} + \|\rX^1-\rX^2\|_{\tvd} \, .
	\]
\end{proposition}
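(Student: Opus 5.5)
The plan is to insert the intermediate random variable $(\rX^1, f(\rX^1,\rZ^2))$ and apply the triangle inequality for total variation distance:
\[
\|(\rX^1,f(\rX^1,\rZ^1))-(\rX^2,f(\rX^2,\rZ^2))\|_{\tvd}
\le \|(\rX^1,f(\rX^1,\rZ^1))-(\rX^1,f(\rX^1,\rZ^2))\|_{\tvd} + \|(\rX^1,f(\rX^1,\rZ^2))-(\rX^2,f(\rX^2,\rZ^2))\|_{\tvd}.
\]
The first term on the right already appears in the claimed bound. So it only remains to show that the second term is at most $\|\rX^1-\rX^2\|_{\tvd}$.

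For the second term, I will invoke the data processing inequality (\Cref{prop:data_processing}). Define the deterministic map $g(x,w) \coloneqq (x, f(x,w))$, and take $\rW \coloneqq \rZ^2$. The second term is then exactly $\|g(\rX^1,\rW)-g(\rX^2,\rW)\|_{\tvd}$.

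The hypothesis of the proposition is that $\rZ^2$ is independent of $\rX^1$ and of $\rX^2$. Hence each of $(\rX^1,\rZ^2)$ and $(\rX^2,\rZ^2)$ is distributed as a product distribution with the same second factor, which is precisely the setting of \Cref{prop:data_processing}. That proposition then gives the bound $\|\rX^1-\rX^2\|_{\tvd}$, and combining it with the triangle inequality above completes the proof.

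There is no real obstacle here. The one point needing care is that total variation distance depends only on marginal laws. Consequently, the intermediate variable $(\rX^1,f(\rX^1,\rZ^2))$ need only be well defined as a distribution, namely the pushforward of $\mathrm{law}(\rX^1)\otimes\mathrm{law}(\rZ^2)$ under $g$. This is guaranteed by the independence of $\rZ^2$ from $\rX^1$, and it is the only place the independence hypothesis is used. No assumption relating $\rZ^1$ to $\rX^1$ is needed.
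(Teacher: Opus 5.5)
Your proof is correct and complete. You use the triangle inequality through the hybrid $(\rX^1,f(\rX^1,\rZ^2))$, then apply \Cref{prop:data_processing} to $g(x,w)=(x,f(x,w))$ with $\rW=\rZ^2$, relying on the independence of $\rZ^2$ from both $\rX^1$ and $\rX^2$; the paper gives no proof of its own and simply cites \cite[Lemma B.2]{KK19}, and your argument is the standard one for this lemma.
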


\subsection{Convolutions}

If $f, g : \Z_q^N \to \C$ are functions, their \emph{convolution} $f * g : \Z_q^N \to \C$ is defined as \[
    (f * g)(z) \coloneqq \Exp_{\rx \in \Z_q^N} [f(\rx) \cdot g(z - \rx)]. \]

We now give two propositions which help interpret convolutions.
The first shows that convolving a function $f$ with a density function can be viewed as ``noising'' $f$;
the second shows that convolving two density functions corresponds to the density function of a sum distribution.

\begin{proposition}\label{prop:convolution with density}
    Let $g : \Z_q^N \to \C$ and $\calD \in \Dist{\Z_q^N}$.
    Then \[
    (g * \Den{\calD})(z) = \Exp_{\rx \sim \calD} [g(z-\rx)]. \]
\end{proposition}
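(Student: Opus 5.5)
The plan is to unfold both definitions and then change variables; no earlier result is needed.

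First, by the definition of convolution we have
\[
(g * \Den{\calD})(z) = \Exp_{\rx \in \Z_q^N}\bracks*{ g(\rx)\cdot \Den{\calD}(z-\rx) }.
\]
Second, substitute $\ry = z - \rx$. For fixed $z$, the map $x \mapsto z-x$ is a bijection of $\Z_q^N$, so $\ry$ is uniform on $\Z_q^N$ whenever $\rx$ is. This turns the expectation into
\[
\Exp_{\ry \in \Z_q^N}\bracks*{ \Den{\calD}(\ry)\cdot g(z-\ry) }.
\]
Third, apply \Cref{prop:prelim:density reweight} with $S = \Z_q^N$ and the test function $\phi(y) \coloneqq g(z-y)$. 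It converts the uniform expectation weighted by $\Den{\calD}$ into $\Exp_{\ry\sim\calD}[g(z-\ry)]$, which is the claimed identity.

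Alternatively, one can write $\Den{\calD}(y) = q^N \calD(y)$ directly. The factor $q^N$ then cancels the normalisation $q^{-N}$ of the uniform expectation, leaving $\sum_y \calD(y)\, g(z-y)$.

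None of these steps is a real obstacle. The only point needing care is to check that the change of variables is measure-preserving, which holds because translation and negation are bijections on the group $\Z_q^N$.
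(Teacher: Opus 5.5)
Your proof is correct and matches the paper's, which simply applies \Cref{prop:prelim:density reweight} with $\phi(x) \coloneqq g(z-x)$. The change of variables $y = z - x$ that you write out (swapping which factor of the convolution carries $\Den{\calD}$) is a step the paper leaves implicit.
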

\begin{proof}
    Apply \Cref{prop:prelim:density reweight} with the function $\phi(x) \coloneqq g(z-x)$.
\end{proof}

\begin{proposition}\label{prop:prelim:density of sum}
Let $\cD_1, \cD_2 \in \Dist{\Z_q^N}$.
Let $\cD\in \Delta(\Z_q^N)$ denote the marginal distribution of $\rx + \ry$
when $\rx \sim \cD_1, \ry \sim \cD_2$ independently, so that for $z \in \Z_q^N$, $\calD(z) = \Pr_{\rx\sim \cD_1, \ry\sim \cD_2}[\rx+\ry = z]$.
Then the density function of $\calD$ is the convolution of the density functions of $\calD_1$ and $\calD_2$: For every $z \in \Z_q^N$, \[
\Den{\calD}(z) = (\mu_{\calD_1} * \mu_{\calD_2})(z). \]
\end{proposition}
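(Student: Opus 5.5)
The plan is to compute $\calD(z)$ directly from its definition and then convert the resulting sum into the normalized-expectation form used in the definition of convolution. First, condition on the value of $\rx$. Since $\rx \sim \calD_1$ and $\ry \sim \calD_2$ are independent, the law of total probability gives
\[
\calD(z) = \sum_{x \in \Z_q^N} \calD_1(x) \cdot \Pr_{\ry \sim \calD_2}[\ry = z - x] = \sum_{x \in \Z_q^N} \calD_1(x)\, \calD_2(z-x).
\]

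Next, multiply both sides by $q^N$ to pass to densities. The left side becomes $\Den{\calD}(z)$. On the right, we write each probability as a density divided by $q^N$, namely $\calD_1(x) = \Den{\calD_1}(x)/q^N$ and $\calD_2(z-x) = \Den{\calD_2}(z-x)/q^N$. This yields
\[
\Den{\calD}(z) = q^N \sum_{x} \frac{\Den{\calD_1}(x)}{q^N}\cdot\frac{\Den{\calD_2}(z-x)}{q^N} = \Exp_{\rx \in \Z_q^N}\bigl[\Den{\calD_1}(\rx)\,\Den{\calD_2}(z-\rx)\bigr].
\]
The last expression is $(\Den{\calD_1} * \Den{\calD_2})(z)$ by the definition of convolution, which proves the claim.

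Alternatively, the same identity follows by applying \Cref{prop:convolution with density} with $g = \Den{\calD_2}$ and $\calD = \calD_1$, after noting that convolution is commutative. That route gives $\Exp_{\rx\sim\calD_1}[\Den{\calD_2}(z-\rx)] = q^N \Pr[\rx+\ry=z]$.

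There is no real obstacle in either route. The only point requiring care is the normalization: there are two factors of $q^{-N}$ from the densities, one of which is absorbed by the $q^N$ on the left side and the other by converting $\sum_x$ into $\Exp_{\rx}$.
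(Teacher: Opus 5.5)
Your proof is correct and is the same argument as the paper's: expand $\calD(z)=\sum_{x}\calD_1(x)\,\calD_2(z-x)$ using independence, then multiply by $q^N$ and convert to densities to obtain the normalized convolution. Your handling of the two factors of $q^{-N}$ is cleaner than the paper's displayed computation, which has typos at exactly that step: it writes $\calD_1(z)$ for $\calD_1(x)$, and the prefactor in its intermediate expression should be $q^{-N}$ rather than $q^{N}$.
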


\begin{proof}
    We have $\calD(z) = \sum_{x,y \in \Z_q^N} \calD_1(x) \cdot \calD_2(y) \cdot \Ind{x + y
    = z} = \sum_{x \in \Z_q^N} \calD_1(z) \cdot \calD_2(z - x)$.
    Correspondingly, \[
    \Den{\calD}(z) = q^N \cdot \sum_{x \in \Z_q^N} \Den{\calD_1}(x) \cdot \Den{\calD_2}(z - x) 
    = \Exp_{\rx \in \Z_q^N} [\Den{\calD_1}(\rx) \cdot \Den{\calD_2}(z - \rx)]
    = (\mu_{\calD_1} * \mu_{\calD_2})(z). \qedhere \] 
\end{proof}

\subsection{Fourier analysis}
In this subsection, we use $N$ for the dimension of a $\Z_q$-vector space (or module, if $q$ is not
a prime power) over which we perform Fourier analysis.
We will use various settings of $N$ within the paper, e.g., $n$, $k$, $nk$, $km$, $nk'$, etc.

\begin{definition}[Fourier characters]\label{def:prelim:fourier character}
For $u, x \in \Z_q^N$, we define the \emph{character function} \[
    \chi_u(x) \coloneqq \omega_q^{\langle u, x \rangle}, \]
where we fix the $q$-th root of unity $\omega_q \coloneqq e^{2\pi i / q}$ and $\langle \cdot, \cdot \rangle$ represents the standard inner product modulo $q$.
We will write $\omega = \omega_q$ when $q$ is clear from context.
\end{definition}

\begin{definition}[Fourier coefficients]\label{def:prelim:fourier coefficient}
    For any function $g : \Z_q^N \to \C$ and $u \in \Z_q^N$, we define the \emph{Fourier coefficient} \[
        \hat{g}(u) \coloneqq \Exp_{\rx \sim \Z_q^N} [g(\rx) \cdot \overline{\chi_u(\rx)}], \]
    where $\overline{\cdot}$ denotes complex conjugation, so that $\overline{\chi_u(\rx)} = \omega^{-\langle u, x \rangle}$.
    We sometimes refer to the vector of $u$ as a \emph{frequency}.
\end{definition}

\begin{proposition}[Fourier coefficients of convolutions, {e.g., \cite[Theorem~8.60]{OD14}}]\label{prop:prelim:convolution}
    For every $f, g : \Z_q^N \to \C$ and $u \in \Z_q^N$, \[
        \hat{f*g}(u) = \hat{f}(u) \cdot \hat{g}(u).
    \]
    Conversely, $\hat{f \cdot g}(u) = q^N (\hat{f} * \hat{g})(u) = \sum_{v \in \Z_q^N} f(v) g(u-v)$.
\end{proposition}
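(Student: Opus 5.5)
The statement to prove is \Cref{prop:prelim:convolution}: for $f,g:\Z_q^N\to\C$ we have $\hat{f*g}(u)=\hat f(u)\hat g(u)$, and conversely $\hat{f\cdot g}(u)=\sum_{v\in\Z_q^N}\hat f(v)\hat g(u-v)$. Here I read the paper's $f(v)g(u-v)$ as the intended $\hat f(v)\hat g(u-v)$. Both identities follow by direct expansion, using only \Cref{def:prelim:fourier coefficient}, the convolution definition, and two elementary facts about characters. The first fact is multiplicativity: $\chi_u(x+y)=\chi_u(x)\chi_u(y)$ and $\chi_u\chi_v=\chi_{u+v}$, both immediate from bilinearity of $\langle\cdot,\cdot\rangle$ modulo $q$. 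The second is orthogonality: $\Exp_{\rx\in\Z_q^N}[\chi_w(\rx)]=\Ind{w=0}$.

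\textbf{Orthogonality.} I would verify this first. Write the expectation as a product over coordinates, $\prod_j \Exp_{\rx_j\in\Z_q}[\omega^{w_j\rx_j}]$. If some $w_j\neq 0$ in $\Z_q$, then $\omega^{w_j}\neq 1$. Shifting $\rx_j$ by $1$ multiplies that factor by $\omega^{w_j}$ and leaves it unchanged, so the factor is $0$. This argument works even when $q$ is not prime. Orthogonality gives Fourier inversion, $g=\sum_u \hat g(u)\chi_u$: substitute the definition of $\hat g$ and sum over $u$ first.

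\textbf{Convolution to product.} Expand
\[\hat{f*g}(u)=\Exp_{\rz}\Exp_{\rx}\bigl[f(\rx)g(\rz-\rx)\overline{\chi_u(\rz)}\bigr].\]
Substitute $\rz=\rx+\ry$; for fixed $\rx$ this is a bijection, so $\ry$ is uniform and independent of $\rx$. By multiplicativity, $\overline{\chi_u(\rx+\ry)}=\overline{\chi_u(\rx)}\,\overline{\chi_u(\ry)}$. The expectation then factors as $\hat f(u)\hat g(u)$.

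\textbf{Product to convolution.} Insert the inversion $g=\sum_w\hat g(w)\chi_w$ into $\hat{f\cdot g}(u)=\Exp_{\rx}[f(\rx)g(\rx)\overline{\chi_u(\rx)}]$. This gives
\[\sum_w \hat g(w)\,\Exp_{\rx}\bigl[f(\rx)\overline{\chi_{u-w}(\rx)}\bigr]=\sum_w \hat g(w)\hat f(u-w).\]
Reindexing with $v=u-w$ yields $\sum_v\hat f(v)\hat g(u-v)$. This equals $q^N(\hat f*\hat g)(u)$, since the convolution of the paper is defined as an average over $q^N$ points.

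\textbf{Main obstacle.} There is no real difficulty here. The only points needing care are the character orthogonality argument over general $\Z_q$ and the normalization factor $q^N$ between the averaged convolution and the summed one. The change of variables is legitimate because translation is a bijection on the finite group.
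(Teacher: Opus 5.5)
Your proof is correct. The paper gives no argument of its own (it simply cites \cite[Theorem~8.60]{OD14}), and your direct expansion---the translation substitution $z=x+y$ for the first identity, and Fourier inversion together with $\chi_w\overline{\chi_u}=\overline{\chi_{u-w}}$ for the second---is the standard textbook proof. You are also right that the final expression in the statement is a typo for $\sum_{v}\hat f(v)\hat g(u-v)$, and your shift argument for orthogonality handles composite $q$, where a nonzero $w_j$ may share a factor with $q$.
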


\begin{proposition}[Expansion in the Fourier basis]\label{prop:prelim:fourier expansion}
    For every $f : \Z_q^N \to \C$ and $x \in \Z_q^N$, we have \[
        f(x) = \sum_{u \in \Z_q^N} \hat{f}(u) \cdot \chi_u(x). \]
\end{proposition}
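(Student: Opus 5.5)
The approach is direct substitution of the inverse Fourier formula into the definition of $\hat{f}(u)$, followed by orthonormality of the characters. Since the space of functions $\Z_q^N \to \C$ has dimension $q^N$, it suffices to show that the $q^N$ characters $\{\chi_u\}_{u \in \Z_q^N}$ are orthonormal with respect to the inner product $\langle f, g\rangle \coloneqq \Exp_{\rx \in \Z_q^N}[f(\rx)\overline{g(\rx)}]$. Then they form a basis, and the coefficient of $\chi_u$ in the expansion of $f$ is $\langle f, \chi_u\rangle$, which is exactly $\hat{f}(u)$ by \Cref{def:prelim:fourier coefficient}.

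The first step is to compute the inner product of two characters. For $u, v \in \Z_q^N$,
\[
\langle \chi_u, \chi_v\rangle = \Exp_{\rx \in \Z_q^N}\bracks*{\omega^{\langle u - v, \rx\rangle}} = \prod_{j=1}^N \Exp_{\rx_j \in \Z_q}\bracks*{\omega^{(u_j - v_j)\rx_j}},
\]
using that the inner product is computed mod $q$ and that $\omega^q = 1$, so that the exponent is well defined. Each factor is a geometric sum $\frac1q\sum_{x=0}^{q-1}\omega^{a x}$ with $a = u_j - v_j \in \Z_q$. If $a = 0$, the factor equals $1$. If $a \neq 0$, then $\omega^a \ne 1$ because $\omega$ is a primitive $q$-th root of unity, and the sum equals $\frac{(\omega^{a})^q - 1}{q(\omega^a - 1)} = 0$. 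This gives $\langle \chi_u,\chi_v\rangle = \Ind{u=v}$. This is also the only place where any care is needed, namely the use of primitivity of $\omega_q = e^{2\pi i/q}$ when $q$ is not prime; the argument above never uses that $\Z_q$ is a field.

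Finally, orthonormality makes the $q^N$ characters linearly independent, so they span the $q^N$-dimensional function space. Writing $f = \sum_v c_v \chi_v$ and pairing both sides with $\chi_u$ gives $c_u = \langle f, \chi_u\rangle = \hat{f}(u)$, which is the claimed identity. Equivalently, one can substitute the definition of $\hat f(u)$ into $\sum_u \hat f(u)\chi_u(x)$ and swap the sums. This yields $\Exp_{\ry}[f(\ry)\sum_u \omega^{\langle u, x-\ry\rangle}] = \Exp_{\ry}[f(\ry)\, q^N \Ind{\ry = x}] = f(x)$, where the inner sum over $u$ is evaluated by the same geometric-sum computation.
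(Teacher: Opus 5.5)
Your proof is correct: orthonormality of the characters via the coordinate-wise geometric sum, followed by reading off the coefficients $\langle f,\chi_u\rangle = \hat f(u)$, is the standard argument, and you rightly use primitivity of $\omega_q$ to handle nonzero differences $u_j - v_j$ that are not units when $q$ is composite. The paper gives no proof of this proposition; it lists it among standard Fourier-analytic preliminaries, alongside its textbook citation to O'Donnell, so your argument is exactly the textbook proof the paper implicitly relies on.
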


\begin{proposition}[Parseval's identity]\label{prop:prelim:parseval}
    For every $g : \Z_q^N \to \C$, \[
    \|g\|_2^2 = \sum_{U \in \Z_q^N} |\hat{g}(U)|^2. \]
\end{proposition}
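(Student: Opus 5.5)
The statement is Parseval's identity on $\Z_q^N$. I will derive it from the orthonormality of the characters under the normalized inner product $\langle f, g\rangle \coloneqq \Exp_{\rx \in \Z_q^N}[f(\rx)\overline{g(\rx)}]$. By this convention $\|g\|_2^2 = \langle g, g\rangle$ and $\hat g(u) = \langle g, \chi_u\rangle$.

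\textbf{Step 1: orthonormality.} I will show $\langle \chi_u, \chi_v\rangle = \Ind{u=v}$. Note that $\chi_u\overline{\chi_v} = \chi_{u-v}$, and $\chi_w$ factors coordinatewise as $\chi_w(x)=\prod_{j=1}^N \omega^{w_j x_j}$. Since the coordinates of a uniform $\rx \in \Z_q^N$ are independent, the expectation of $\chi_w$ factors into $\prod_j \Exp_{\rx_j \in \Z_q}[\omega^{w_j \rx_j}]$.

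Each factor equals $1$ when $w_j = 0$. When $w_j \ne 0 \pmod q$, the number $\omega^{w_j}$ is a $q$-th root of unity different from $1$, so the geometric sum $\sum_{x=0}^{q-1} (\omega^{w_j})^x$ vanishes. This argument works for every integer $q\ge 2$, prime or not, because it only uses $\omega^{w_j}\neq 1$ and $(\omega^{w_j})^q = 1$.

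\textbf{Step 2: expand and apply sesquilinearity.} Using the Fourier expansion (\Cref{prop:prelim:fourier expansion}), write $g = \sum_u \hat g(u)\chi_u$. Then
\[
\|g\|_2^2 = \Big\langle \sum_u \hat g(u)\chi_u, \sum_v \hat g(v)\chi_v\Big\rangle = \sum_{u,v}\hat g(u)\overline{\hat g(v)}\,\langle\chi_u,\chi_v\rangle = \sum_U |\hat g(U)|^2,
\]
where the last equality collapses the double sum to its diagonal using Step 1.

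No step is a real obstacle. The only points needing care are that the vanishing in Step 1 holds for composite $q$, which the root-of-unity argument covers, and that the conjugation conventions match \Cref{def:prelim:fourier coefficient}, which holds since $\hat g(u)=\langle g,\chi_u\rangle$ under the inner product above.
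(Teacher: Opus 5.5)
Your proof is correct. You establish orthonormality of the characters under the normalized inner product, with the geometric-sum argument correctly covering composite $q$, then expand $g$ in the Fourier basis and collapse the double sum to its diagonal; this is the standard argument. The paper states Parseval's identity as a standard fact without proof, so there is no competing route to compare, and yours is the textbook argument it implicitly relies on.
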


\begin{proposition}[Fourier coefficients of (density functions of) distributions]\label{prop:prelim:fourier of density}
    If $\calD \in \Dist{\Z_q^N}$, then \[
    \hat{\Den{\calD}}(u) = \Exp_{\rx \sim \calD} \bracks*{ \overline{\chi_u(\rx)} }. \]
\end{proposition}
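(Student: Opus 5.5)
The statement is the identity $\hat{\Den{\calD}}(u) = \Exp_{\rx \sim \calD}\bracks*{\overline{\chi_u(\rx)}}$. The plan is a one-line unfolding of definitions. By \Cref{def:prelim:fourier coefficient}, we have
\[
\hat{\Den{\calD}}(u) = \Exp_{\rx \in \Z_q^N}\bracks*{\Den{\calD}(\rx)\cdot\overline{\chi_u(\rx)}}.
\]
This is exactly the left-hand side of \Cref{prop:prelim:density reweight}, taking $S = \Z_q^N$ and $\phi(x) \coloneqq \overline{\chi_u(x)} = \omega^{-\langle u,x\rangle}$. That proposition allows arbitrary complex-valued $\phi$, so it applies directly and converts the uniform expectation weighted by the density into $\Exp_{\rx\sim\calD}[\overline{\chi_u(\rx)}]$, which is the claim.

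As a sanity check, one can instead expand by hand using $\Den{\calD}(x) = q^N\calD(x)$:
\[
\Exp_{\rx \in \Z_q^N}\bracks*{q^N \calD(\rx)\,\overline{\chi_u(\rx)}} = \sum_{x\in\Z_q^N}\calD(x)\,\omega^{-\langle u,x\rangle},
\]
which is again the expectation under $\calD$.

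There is no real obstacle here. The only point to be careful about is the sign convention. The conjugate character appears in the definition of the Fourier coefficient, so the result is $\Exp_{\rx\sim\calD}[\omega^{-\langle u,\rx\rangle}]$ rather than $\Exp_{\rx\sim\calD}[\omega^{\langle u,\rx\rangle}]$. This matters later, for instance when the paper writes $\hat{\Den{\calY^{\otimes m}}}(-V)$.
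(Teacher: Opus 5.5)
Your proposal is correct and matches the paper's proof, which likewise just combines \Cref{def:prelim:fourier coefficient} with \Cref{prop:prelim:density reweight} applied to $\phi = \overline{\chi_u}$. Your hand-expansion check and your remark on the sign convention are both accurate.
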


\begin{proof}
    Combine \Cref{prop:prelim:density reweight,def:prelim:fourier coefficient}.
\end{proof}

\begin{proposition}\label{prop:prelim:vanish around circle}
    For every $q \ge 2 \in \N$, \[
    \Exp_{\rb \in \Z_q} [\omega_q^\rb] = 0. \]
\end{proposition}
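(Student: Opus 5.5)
The statement to prove is \Cref{prop:prelim:vanish around circle}, namely $\Exp_{\rb \in \Z_q}[\omega_q^\rb] = 0$ for every integer $q \ge 2$. I will prove it with the standard geometric-series argument.

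Write the expectation as $\frac1q \sum_{b=0}^{q-1} \omega^b$, where $\omega = \omega_q = e^{2\pi i/q}$. This is well defined on $\Z_q$ because $\omega^q = e^{2\pi i} = 1$, so $\omega^b$ depends only on $b \bmod q$.

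Since $q \ge 2$, the angle $2\pi/q$ lies in $(0,\pi]$, so $\omega \neq 1$. The finite geometric series formula then gives
\[
\sum_{b=0}^{q-1} \omega^b = \frac{\omega^q - 1}{\omega - 1} = \frac{1-1}{\omega-1} = 0 .
\]
Dividing by $q$ yields the claim.

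An equivalent shift-invariance argument avoids division. Let $S = \sum_{b \in \Z_q}\omega^b$. Since $b \mapsto b+1$ is a bijection of $\Z_q$, we have $\omega S = S$. Because $\omega \ne 1$, this forces $S=0$.

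The only point needing care is checking $\omega \neq 1$, which is exactly where the hypothesis $q \ge 2$ is used.
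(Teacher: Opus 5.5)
Your proof is correct and is essentially the paper's argument: the geometric-series formula with $\omega^q = 1$. Your explicit check that $\omega \neq 1$ (where $q \ge 2$ is used) fills in a detail the paper leaves implicit, and the shift-invariance variant is a valid equivalent route.
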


\begin{proof}
    The geometric sum formula gives $\sum_{b=0}^{q-1} \omega^b = \frac{1-\omega^q}{1-\omega} = 0$ since $\omega^q = 1$.
\end{proof}

\begin{proposition}[Fourier coefficients of one-wise uniform distributions]\label{prop:prelim:fourier of one-wise}
    Let $\calY \in \OWU{\Z_q^N}$ be a one-wise uniform distribution.
    If $u \in \Z_q^N$ has $|\supp(u)| = 1$, then \[
    \hat{\Den{\calY}}(u) = 0. \]
\end{proposition}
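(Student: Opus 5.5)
The plan is to compute the Fourier coefficient directly from \Cref{prop:prelim:fourier of density} and reduce it to the marginal of a single coordinate. Write $\supp(u) = \{j\}$, so that $u = u_j e_j$ with $u_j \in \Z_q \setminus \{0\}$. Then $\langle u, x\rangle = u_j x_j$ for every $x \in \Z_q^N$, and
\[
\hat{\Den{\calY}}(u) = \Exp_{\rx \sim \calY}\bracks*{\omega^{-u_j \rx_j}}.
\]
This expectation depends only on the marginal distribution of $\rx_j$. Since $\calY$ is one-wise uniform, that marginal is $\Unif(\Z_q)$, so the coefficient equals $\Exp_{\rb \in \Z_q}[\omega^{-u_j \rb}]$.

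The only remaining step is to show that this average vanishes whenever $u_j \neq 0$. This is the one point needing care, because \Cref{prop:prelim:vanish around circle} covers only the multiplier $1$, and multiplication by $u_j$ need not be a bijection of $\Z_q$ when $q$ is not prime.

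To handle it, let $\zeta = \omega^{-u_j}$, which is a $q$-th root of unity. Since $q \nmid u_j$, we have $\zeta \neq 1$. The geometric sum formula then gives
\[
\sum_{b=0}^{q-1} \zeta^b = \frac{1-\zeta^q}{1-\zeta} = 0,
\]
exactly as in the proof of \Cref{prop:prelim:vanish around circle}. Therefore $\hat{\Den{\calY}}(u) = 0$.
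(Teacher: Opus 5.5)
Your proof is correct and follows the same route as the paper: use \Cref{prop:prelim:fourier of density} and one-wise uniformity to reduce the coefficient to $\Exp_{\rb \in \Z_q}[\omega^{-u_j \rb}]$, then show this average vanishes. Your last step is actually more careful than the paper's. The paper replaces $\Exp_{\rb}[\omega^{-u_j \rb}]$ by $\Exp_{\rb}[\omega^{\rb}]$ citing only $u_j \neq 0$, a change of variables that is valid only when $u_j$ is a unit mod $q$, whereas your geometric-sum argument with $\zeta = \omega^{-u_j} \neq 1$ works for every $q$.
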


\begin{proof}
    Suppose $j \in [N]$ is the single nonzero coordinate in $u$.
    Then, \[
        \hat{\Den{\calY}}(u) = \Exp_{\rx \sim \calY} \bracks*{ \overline{\chi_u(\rx) }}
        = \Exp_{\rx \sim \calY} \bracks*{ \omega^{-\langle u, \rx \rangle}}
        = \Exp_{\rx \sim \calY} \bracks*{ \omega^{-u_j \cdot \rx_j}}
        = \Exp_{\rb \in \Z_q} [\omega^{-u_j \cdot \rb}]
        = \Exp_{\rb \in \Z_q} [\omega^{\rb}]
        = 0, \]
    where we used, respectively, \Cref{prop:prelim:density reweight}, the definition of $\chi$, that $u$ is supported only on $j$, one-wise uniformity, the nonzeroness of $u_j$, and \Cref{prop:prelim:vanish around circle}.
\end{proof}

\begin{proposition}[Fourier coefficients of product distributions]\label{prop:prelim:fourier of product}
    Let $\calY \in \Delta(\Z_q^k)$.
    Then for every $U \in \Z_q^{k \times m}$, we have: \[
        \hat{\Den{\calY^{\otimes m}}}(U) = \prod_{j=1}^m \hat{\Den{\calY}}(U_j). \]
\end{proposition}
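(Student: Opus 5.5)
The plan is to reduce everything to \Cref{prop:prelim:fourier of density} and then use independence of the rows. Write $U$ as a matrix with rows $U_1,\ldots,U_m \in \Z_q^k$, and view a sample $\rX \sim \calY^{\otimes m}$ as a matrix whose rows $\rX_1,\ldots,\rX_m$ are i.i.d.\ samples from $\calY$. The statement's $\Z_q^{k\times m}$ should be read as the space of matrices with $m$ rows in $\Z_q^k$, consistent with the convention $\calY^{\otimes m} \in \Dist{\Z_q^{m\times k}}$.

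First, apply \Cref{prop:prelim:fourier of density} with $N = mk$. This gives
\[
\hat{\Den{\calY^{\otimes m}}}(U) = \Exp_{\rX \sim \calY^{\otimes m}}\bracks*{\overline{\chi_U(\rX)}} = \Exp_{\rX \sim \calY^{\otimes m}}\bracks*{\omega^{-\langle U,\rX\rangle}}.
\]

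Second, since the inner product on $\Z_q^{m\times k}$ is entrywise, it splits over rows as
\[
\langle U, X\rangle = \sum_{j=1}^m \langle U_j, X_j\rangle \pmod q.
\]
Hence the character factors:
\[
\omega^{-\langle U,X\rangle} = \prod_{j=1}^m \omega^{-\langle U_j, X_j\rangle}.
\]

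Third, the rows $\rX_j$ are independent, so the expectation of this product equals the product of the expectations:
\[
\hat{\Den{\calY^{\otimes m}}}(U) = \prod_{j=1}^m \Exp_{\rx\sim\calY}\bracks*{\omega^{-\langle U_j,\rx\rangle}}.
\]
Applying \Cref{prop:prelim:fourier of density} again, now with $N = k$, identifies each factor as $\hat{\Den{\calY}}(U_j)$, which completes the argument.

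There is no real obstacle here. The only care needed is bookkeeping: matching the matrix shape of $U$ with the row structure of the product distribution, and noting that the reduction modulo $q$ is harmless because $\omega^q = 1$.
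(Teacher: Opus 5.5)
Your argument is correct, and your reading of $\Z_q^{k\times m}$ as matrices with $m$ rows in $\Z_q^k$ matches how the paper actually uses this fact (e.g., in \Cref{prop:BiU:OwU to SF}). The paper states this proposition without proof as a standard fact. Your derivation (\Cref{prop:prelim:fourier of density}, factoring the character over rows, then using independence of the rows) is exactly the routine verification it leaves implicit.
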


\begin{proposition}[Fourier coefficients of marginals]\label{prop:prelim:fourier of marginal}
    Let $\tilde{N},N \in \N$, $\Pi : \Z_q^{\tilde{N}} \to \Z_q^N$ be ($\Z_q$-)linear and surjective,
    $\tilde{g} : \Z_q^{\tilde{N}} \to \C$ be any function, and $g : \Z_q^N \to \C$ be defined by $g(x) \coloneqq \Exp_{\tilde{\rx} \in \Z_q^{\tilde{N}}} [g(\tilde{\rx}) \mid \Pi(\tilde{\rx}) = x]$.
    Then for every $u \in \Z_q^N$, \[
        \hat{g}(u) =
        \hat{\tilde{g}}(\Pi^\intercal (u)), \]
    where $\Pi^\intercal$ is the (standard) transpose of $\Pi$ (which is adjoint with respect to the standard inner product).
\end{proposition}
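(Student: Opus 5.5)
We prove the statement by writing the Fourier coefficient of $g$ as an average over the fibers of $\Pi$ and then converting it back to a uniform average over $\Z_q^{\tilde N}$. (The definition of $g$ should read $\Exp[\tilde g(\tilde{\rx}) \mid \Pi(\tilde{\rx}) = x]$; this is clearly what is intended.)

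The first step uses surjectivity. Since $\Pi$ is a surjective $\Z_q$-linear map, every fiber $\Pi^{-1}(x)$ is a coset of $\ker \Pi$, so all fibers have the same size $q^{\tilde N}/q^N$. Consequently, if $\tilde{\rx}$ is uniform on $\Z_q^{\tilde N}$, then $\Pi(\tilde{\rx})$ is uniform on $\Z_q^N$. The law of total expectation then gives, for any function $\phi$ on $\Z_q^N$,
\[
\Exp_{\rx \in \Z_q^N}\bigl[g(\rx)\,\phi(\rx)\bigr] = \Exp_{\tilde{\rx} \in \Z_q^{\tilde N}}\bigl[\tilde g(\tilde{\rx})\,\phi(\Pi(\tilde{\rx}))\bigr].
\]
To justify this, condition on the value $x = \Pi(\tilde{\rx})$: the factor $\phi(x)$ is constant on the fiber, and the conditional mean of $\tilde g$ over the fiber is $g(x)$ by definition.

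Next apply this identity with $\phi = \overline{\chi_u}$, as in \Cref{def:prelim:fourier coefficient}. We then need the character identity $\chi_u(\Pi \tilde x) = \omega^{\langle u, \Pi \tilde x\rangle} = \omega^{\langle \Pi^\intercal u, \tilde x\rangle} = \chi_{\Pi^\intercal u}(\tilde x)$. This holds because adjointness $\langle u, \Pi \tilde x\rangle = \langle \Pi^\intercal u, \tilde x\rangle$ is valid modulo $q$: writing $\Pi$ as an integer matrix, both sides equal $u^\intercal \Pi \tilde x \bmod q$. Substituting the character identity gives $\hat g(u) = \Exp_{\tilde{\rx}}[\tilde g(\tilde{\rx})\,\overline{\chi_{\Pi^\intercal u}(\tilde{\rx})}] = \hat{\tilde g}(\Pi^\intercal u)$, which is the claim.

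The only point needing care is the uniformity of fibers. This requires $\Pi$ to be a module homomorphism, so that each nonempty fiber is a translate of the kernel, together with surjectivity, so that every fiber is nonempty. This works equally well when $q$ is not a prime power. Everything else is direct computation.
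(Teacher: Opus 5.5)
Your proposal is correct and follows the paper's proof essentially step for step. The paper also uses adjointness to obtain $\chi_u(\Pi\tilde{x}) = \chi_{\Pi^\intercal u}(\tilde{x})$, and it uses the uniformity of the image of a uniform point under the surjective $\Pi$ to collapse the iterated expectation into a single average over $\Z_q^{\tilde{N}}$. Beyond the paper, you additionally justify the equal-sized fibers (cosets of $\ker \Pi$) and adjointness modulo $q$, and you correct the typo $g \to \tilde{g}$ inside the conditional expectation; the paper leaves these points implicit.
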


\begin{proof}
    Let $u \in \Z_q^N$.
    Note that for every $\tilde{x} \in \Z_q^{\tilde{N}}$, we have $\langle u, \Pi(\tilde{x}) \rangle = \langle \Pi^\intercal(u), \tilde{x} \rangle$ by adjointness, and hence $\chi_u(\Pi(\tilde{x})) = \chi_{\Pi^\intercal(u)}(\tilde{x})$.
    Hence,
    \begin{multline*}
        \hat{g}(u) = \Exp_{\rx \in \Z_q^N} \bracks*{ g(\rx) \cdot \overline{\chi_u(\rx)} }
        = \Exp_{\rx \in \Z_q^N} \bracks*{ \Exp_{\tilde{\rx} \in \Z_q^{\tilde{N}}} [\tilde{g}(\tilde{\rx}) \mid \Pi(\tilde{\rx}) = \rx] \cdot \overline{\chi_u(\rx)} }
        = \Exp_{\tilde{\rx} \in \Z_q^{\tilde{N}}} \bracks*{ \tilde{g}(\tilde{\rx}) \cdot \overline{\chi_{u}(\Pi(\tilde{\rx}))} } \\
        = \Exp_{\tilde{\rx} \in \Z_q^N} \bracks*{ \tilde{g}(\tilde{\rx}) \cdot \overline{\chi_{\Pi^\intercal(u)}(\tilde{\rx})}}
        = \hat{\tilde{g}}(\Pi^\intercal(u)),
    \end{multline*}
    where the third equality uses that for $\Pi : \Z_q^{\tilde{N}} \to \Z_q^N$ surjective, the marginal distribution of $\Pi(\tilde{\rx})$
    for uniform $\tilde{\rx} \in \Z_q^{\tilde{N}}$ is uniform on $\Z_q^N$.
\end{proof}

\begin{proposition}[Fourier coefficients of coordinate-projected functions]\label{prop:prelim:projection}
    Let $N \in \N$, $\calS \subseteq [N]$ a subset of the coordinates,
    $g : \Z_q^\calS \to \C$ be any function,  and $\Pi_\calS : \Z_q^N \to \Z_q^\calS$ the standard projection map onto the coordinates in $\calS$.
    Then for every $u \in \Z_q^N$, \[
        \hat{g \circ \Pi_\calS}(u) = \begin{cases}
            \hat{g}(\Pi_\calS(u)) & \text{if }\supp(u) \subseteq \calS, \\
            0 & \text{otherwise}.
        \end{cases} \]
\end{proposition}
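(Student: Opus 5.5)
The plan is to compute the Fourier coefficient straight from \Cref{def:prelim:fourier coefficient}, splitting the sum over $u$ into the coordinates inside and outside $\calS$. Write $x = (x_\calS, x_{\bar\calS})$ and $u = (u_\calS, u_{\bar\calS})$, where $\bar\calS = [N] \setminus \calS$. Then $\langle u, x\rangle = \langle u_\calS, x_\calS\rangle + \langle u_{\bar\calS}, x_{\bar\calS}\rangle$, so the character factors as $\chi_u(x) = \chi_{u_\calS}(x_\calS)\cdot\chi_{u_{\bar\calS}}(x_{\bar\calS})$. For uniform $\rx \in \Z_q^N$, the blocks $\rx_\calS$ and $\rx_{\bar\calS}$ are independent and uniform on $\Z_q^\calS$ and $\Z_q^{\bar\calS}$. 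Since $(g\circ\Pi_\calS)(\rx) = g(\rx_\calS)$ depends only on the first block, the expectation factors as
\[
\hat{g\circ\Pi_\calS}(u) = \Exp_{\rx_\calS}\bracks*{g(\rx_\calS)\,\overline{\chi_{u_\calS}(\rx_\calS)}} \cdot \Exp_{\rx_{\bar\calS}}\bracks*{\overline{\chi_{u_{\bar\calS}}(\rx_{\bar\calS})}}.
\]
The first factor is exactly $\hat g(\Pi_\calS(u))$.

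Next I evaluate the second factor. If $\supp(u) \subseteq \calS$, then $u_{\bar\calS} = 0$ and the factor is $1$, which gives the first case. Otherwise, pick some $j \in \bar\calS$ with $u_j \neq 0$. The second factor then factors further over the independent uniform coordinates of $\rx_{\bar\calS}$, and it contains the term $\Exp_{\rb\in\Z_q}[\omega^{-u_j\rb}]$. I need this term to vanish, and that requires some care when $q$ is not prime: $\rb \mapsto -u_j \rb$ need not be a bijection of $\Z_q$. It still works, because $\omega^{-u_j}$ is a root of unity $\zeta \neq 1$ with $\zeta^q = 1$, so the geometric-sum argument of \Cref{prop:prelim:vanish around circle} gives $\sum_{b} \zeta^b = (1-\zeta^q)/(1-\zeta) = 0$. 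So the factor is $0$, which gives the second case.

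An alternative is to derive the first case from \Cref{prop:prelim:fourier of marginal}, but the direct computation is simpler and also covers the vanishing case. No step is a real obstacle. The only point needing attention is the non-prime-$q$ root-of-unity remark above, since \Cref{prop:prelim:vanish around circle} as stated only sums $\omega^b$ itself.
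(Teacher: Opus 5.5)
Your proof is correct and takes essentially the same route as the paper. Both arguments split the coordinates into $\calS$ and its complement, read off $\hat g(\Pi_\calS(u))$ from the $\calS$-block, and kill the complement using the vanishing of $\Exp_{\rb}[\omega^{-u_j \rb}]$ for some $j \notin \calS$ with $u_j \neq 0$; the paper phrases this as a conditional expectation given $\rx_\calS$, whereas you factor over independent blocks. Your remark that for composite $q$ the vanishing comes from $\omega^{-u_j}$ being a nontrivial $q$-th root of unity, not from reindexing $\Z_q$, is a correct tightening of a point the paper passes over.
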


\begin{proof}
    Let $u \in \Z_q^N$.
    We write e.g. $u_\calS \coloneqq \Pi_\calS(u)$ for short.
    We have \[
        \hat{g \circ \Pi_\calS}(u) = \Exp_{\rx \in \Z_q^N} [g(\rx_\calS) \cdot \omega^{-\langle u, \rx \rangle}]
        = \Exp_{\overline{\rx} \in \Z_q^\calS} \bracks*{ g(\overline{\rx}) \cdot \Exp_{\rx \in \Z_q^N} [ \omega^{-\langle u, \rx \rangle} \mid \rx_\calS = \overline{\rx}] } \]
    where we used the fact that the marginal distribution of $\rx_\calS$ for $\rx \in \Z_q^N$ is uniform on $\Z_q^\calS$.
    Now fix $\overline{x} \in \Z_q^\calS$; we claim that \[
    \Exp_{\rx \in \Z_q^N} [ \omega^{-\langle u, \rx \rangle} \mid \rx_\calS = \overline{x}]
    = \begin{cases} \omega^{-\langle u_\calS, \overline{x} \rangle} & \text{if } \supp(u) \subseteq \calS \\ 0 & \text{otherwise}, \end{cases} \]
    which obviously proves the proposition.
    Indeed, if $\supp(u) \subseteq \calS$, then $\langle u, x \rangle = \langle u_\calS, x_\calS \rangle$
    and hence $\omega^{-\langle u,x \rangle} = \omega^{-\langle u_\calS,x_\calS \rangle}$ (for every $x \in \Z_q^N$).
    On the other hand, if $\supp(u) \not\subseteq \calS$, then picking some $i \in \supp(u) \setminus \calS$,
    we can write $\langle u, x \rangle = u_i \cdot x_i + \langle u_{[N]\setminus i}, x_{[N]\setminus i} \rangle$, and therefore \[
    \Exp_{\rx \in \Z_q^N} [ \omega^{-\langle u, \rx \rangle} \mid \rx_\calS = \overline{x}]
    = \Exp_{\rb \in \Z_q} \bracks*{ \omega^{-u_i \cdot \rb} \Exp_{\rx \in \Z_q^N} [\omega^{-\langle u_{[N]\setminus\{i\}}, \rx_{[N]\setminus\{i\}} \rangle} \mid \rx_\calS = \overline{x}\text{ and }\rx_j = \rb] } \]
    which again vanishes by \Cref{prop:prelim:vanish around circle} (and the fact that $u_i \ne 0$).
\end{proof}

\begin{remark}
    \Cref{prop:prelim:projection} can be easily generalized to the case where $\Pi : \Z_q^N \to \Z_q^{N'}$ is any surjective linear map,
    but we will not need it in this generality, so choose to keep its current form for simplicity in notation.
\end{remark}



\section{The communication problem and streaming reduction}

We now turn to defining the communication problem used in our lower bound.
This is based on the communication problem defined in~\cite{FMW26}, as modified by~\cite{STV25} for the single-pass setting.

\begin{definition}[Partite hypermatchings]\label{def:phm}
For $n,m,k \in \N$, we define the set of $k$-uniform partite hypermatchings with $m$ edges on a fixed $k$-partite vertex-set $[n] \times [k]$ as: \[
    \PHM{m}{k}{n} \coloneqq \{ M \in [n]^{m \times k} : \text{all entries in each column of }M\text{ are distinct} \}. \]
A $k$-partite hypermatching $M \in \PHM{m}{k}{n}$
gives rise to an injective mapping $\iota_M : [m] \times [k] \to [n] \times [k]$ defined via $\iota_M(j,\ell) \coloneqq (M_{j\ell},\ell)$.
(Indeed, the mapping $M \mapsto \iota_M$ yields a one-to-one correspondence between the set of $k$-partite hypermatchings $\PHM{m}{k}{n}$
and the set of injective mappings $\iota : [m] \times [k] \to [n] \times [k]$ with the property that
$\iota([m] \times \{\ell\}) \subseteq \iota([n] \times \{\ell\})$ for every $\ell \in [k]$.)
If $e = (e_1,\ldots,e_k) \in [n]^k$ is a row of such a matrix, we write $\supp(e) \coloneqq \{(e_\ell, \ell) : \ell \in [k]\} \subset [n]\times [k]$
as the set of vertices participating in the edge $e$.
We similarly write $\supp(M) \coloneqq \bigcup_{j=1}^m \supp(M_j)$, where $M_j$ is the $j$-th row of $M$.
\end{definition}

\begin{definition}[Gadget hypergraphs]\label{def:hg}
For $T,k,k' \in \N$, we define the set of $k$-uniform hypergraphs with $T$ edges on a fixed vertex-set $[k']$ as: \[
    \HG{T}{k}{k'} \coloneqq \{ G \in [k']^{T \times k} : \text{all entries in each row of }G\text{ are distinct} \}. \]
A vector $e \in [k']^k$ where every entry is distinct may equivalently be viewed as an injection $\phi : [k] \to [k']$.
By (slight) abuse of notation, we will also defined $\phi(j,\ell)$ as $(j,\phi(\ell))$ for $(j,\ell) \in [m] \times [k]$.
Thus, we can use such an injection $\phi$ to embed $k$-partite matchings into $k'$-partite graphs.
\end{definition}

\begin{definition}\label{def:correspondence}
    For an injection $\phi : [k] \to [k']$ and a $k$-partite hypermatching $M \in \PHM{m}{k}{n}$,
    we define a corresponding projection operator $\Pi^\phi_M : \Z_q^{n \times k'} \to \Z_q^{m \times k}$
    via $(\Pi^\phi_M X)_{j,\ell} \coloneqq X_{\phi(\iota_M(j,\ell))}$.
    For a matrix $X \in \Z_q^{n \times k'}$, viewed as a labeling on a set of vertices $[n] \times [k']$,
    $\Pi^\phi_M X$ is precisely the \emph{induced} labeling on $[m] \times [k]$.
\end{definition}

We now define the following communication problem, which is a single-pass variant of the problem in~\cite{FMW26}.

\begin{definition}[``Distributional implicit hidden partition'' problem]\label{def:DIHP}
Suppose we have a $k$-uniform hypergraph $G \in \HG{T}{k'}{k}$
and a sequence of one-wise uniform distributions $\calY_1,\ldots,\calY_T \in \OWU{\Z_q^k}$.
Let $m \le n \in \N$.
Consider the following communication problem DIHP.
\begin{itemize}
    \item \emph{Communication structure}:
    There are $T$ players named $\Player_t$ for $t \in [T]$.
    This is a one-way communication game; the players speak in order from $1$ to $T$.
    \item \emph{Input type:}
    Each player $\Player_t$ gets a $k$-partite hypermatching $M_t \in \PHM{m}{k}{n}$ and a noisy signal $Z_t \in \Z_q^{m \times k}$.
    (I.e., each player can be viewed as a function $\Player_t : \PHM{m}{k}{n} \times \Z_q^{m \times k} \times \{0,1\}^s \to \{0,1\}^s$.)
    \item \emph{Input distributions:}
    The goal is to distinguish between the $\yes$ and $\no$ input distributions.
    In both cases, the inputs $\rM_t$ are uniformly random and independent.
    The two cases are defined as follows:
    \begin{itemize}
        \item In the $\no$ case, $\rZ_t$ is also picked uniformly and independently for every $t \in [T]$.
        \item In the $\yes$ case, we sample an underlying hidden assignment $\rX^* \in \Z_q^{n \times k'}$ (uniformly at random).
        Then, for every $t \in [T]$,
        we sample $\rY_t \sim \calY_t^{\otimes m}$ and set $\rZ_t \coloneqq \Pi^{\phi_t}_{\rM_t} \rX^* - \rY_t$. \qedhere
    \end{itemize}
\end{itemize}
\end{definition}

We interpret these distributions in the following way:
In the $\yes$ case, $X$ is $\Z_q$-labeling on $[n] \times [k']$.
\`A la \Cref{def:correspondence}, the $k$-partite hypermatching $M_t$ and the injection $\phi_t : [k] \to [k']$ give a correspondence by which $X$ also ``induces'' a $\Z_q$-labeling on $[m] \times [k]$, namely, the labeling $\Pi_{M_t}^{\phi_t} X$.
The players want to determine whether their inputs $Z_t$ are consistent with a global labeling $X$ ($\yes$ case) or are just independent and random ($\no$ case).
However, in the $\yes$ case, the players do not receive the induced labelings $\Pi_{M_t}^{\phi_t} X$ directly;
instead, they are ``masked'' by randomized noise drawn from the corresponding distribution $\calY_t$.

The goal of the players is to determine whether their input is drawn from the $\yes$ or the $\no$ distribution. As in previous works \cite{CGS+22-linear-space,KK19}, we consider the blackboard communication model, where players communicate via a shared blackboard in a fixed sequential order beginning with $\Player_1$. Without loss of generality, we assume that the last player $\Player_t$ outputs a single bit indicating which distribution their input is drawn from.
In the $i$-th round, $\Player_i$ observes the current contents of the blackboard and writes their message $\rS_i$. We further assume that each $\Player_i$ reveals their hypermatching $\rM_i$ alongside their message $\rS_i$, at no cost.\footnote{Note that this assumption only yields a stronger lower bound.} The communication cost of a protocol is then defined as the worst-case total length (in bits) of all \emph{messages} written on the blackboard, i.e., $\sum_{i \in [T]} |\rS_i|$.

We prove the following linear lower bound on the communication complexity of DIHP in \cref{sec:communication_lb}.

\begin{restatable}[Linear lower bound for DIHP]{theorem}{communicationlb}\label{thm:main DIHP}
	For every $q,k\in\mathbb{N}$ and $\delta\in(0,1/2)$, there exists $\alpha_0>0$ such that for every $\alpha\in(0,\alpha_0]$ and $T\in\mathbb{N}$, there exists $n_0\in\N$ and $\tau\in(0,1)$ such that the following holds. When $n\geq n_0$, the communication complexity of any protocol for DIHP that succeeds with advantage $\delta$ is at least $\tau n$.
\end{restatable}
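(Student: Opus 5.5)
We follow the hybrid-argument template of \cite{KK19,CGS+22-linear-space}. By Yao's principle we fix a deterministic protocol with total communication $s \le \tau n$. Write $\rS^Y_{1:t}$ and $\rS^N_{1:t}$ for the transcripts (messages plus the freely revealed matchings $\rM_{1:t}$) under the two distributions. We aim to show $\|(\rM_{1:T},\rS^Y_{1:T})-(\rM_{1:T},\rS^N_{1:T})\|_{\tvd}\le\delta$. Because each $\rZ^N_t$ is uniform and independent of everything else, \Cref{lem:KKsubstitutionlemma} applied inductively in $t$ reduces this to bounding each hybrid step:
\[
\|(\rM_{1:t},\rS^Y_{1:t-1},\Player_t(\rM_t,\rZ^Y_t,\cdot))-(\rM_{1:t},\rS^Y_{1:t-1},\Player_t(\rM_t,\rZ^N_t,\cdot))\|_{\tvd}\le \delta/T .
\]
By \Cref{lem:statistical_test} it then suffices, for typical fixed prior transcripts, to show that $\rZ^Y_t$ conditioned on them is $\delta/T$-close to uniform. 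Conditioned on the transcript, $\rZ^Y_t=\Pi^{\phi_t}_{\rM_t}\rX-\rY_t$, where $\rX\sim\calD_{t-1}$ is the posterior. The Fourier coefficients of the density of $\rZ^Y_t$ are $\hat{\Den{\calD_{t-1}}}(\Pi^{\top}V)\cdot\hat{\Den{\calY_t^{\otimes m}}}(-V)$. \Cref{prop:prelim:fourier of one-wise,prop:prelim:fourier of product} kill every $V$ having a row of support exactly $1$. The surviving $V$ have every nonzero row of weight at least $2$, so they correspond to frequencies $U$ of weight $h\ge 2$ supported on $\supp(\rM_t)$. Using Cauchy--Schwarz and Parseval, the $\ell_1$/$\ell_2$ distance to uniform is controlled by $\Exp_{\rM_t}\sum_{h\ge2}\sum_{\wt(U)=h}|\hat{\Den{\calD_{t-1}}}(U)|\cdot\Pr[\supp(U)\text{ covered by }\rM_t\text{ in singleton-free rows}]$. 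For a random matching of $\alpha n$ edges, this covering probability is roughly $(O(\alpha)/n)^{h/2}$.

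The inductive invariant will therefore be a level-wise Fourier bound on the posterior: for every $h\ge 1$,
\[
\sum_{\wt(U)=h}|\hat{\Den{\calD_t}}(U)|\le \Bigl(\tfrac{C_t\sqrt{n}\,\max(s,h)}{h}\Bigr)^{h/2},
\]
holding with high probability over $\rM_{1:t}$ and for all but a small-probability set of transcripts. Transcripts whose message set $\calB$ has $\Den{\calB}$ exceeding $q^{s+\log(T/\delta)}$ are discarded as atypical. Plugging this bound into the sum above gives $\sum_h (O(\alpha C_t s/(hn)))^{h/2}$-type terms. With $s\le\tau n$ and $\alpha,\tau$ small relative to $C_T$, this is at most $\delta/T$. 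The condition $h\ge 2$ from one-wise uniformity is exactly what makes the $h=1$ term vanish. Without it, that term would be of order $\sqrt{s/n}$ and uncontrollable.

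The main work is propagating the invariant from $t-1$ to $t$. The update formula from the overview writes $\hat{\Den{\calD_t}}(W)$ as a sum over $U$ with $\supp(W-U)\subseteq\supp(M)$ of $\hat{\Den{\calD_{t-1}}}(U)\cdot\hat{\Den{\calB}}(V)\cdot\hat{\Den{\calY^{\otimes m}}}(-V)$, where $V=\Pi_M(W-U)$. This needs two ingredients.

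\emph{Ingredient 1: the singleton-free $(h,\ell)$-level inequality.} For $g=\Den{\calB}$ with $\|g\|_\infty\le q^{r}$, the Fourier mass on singleton-free $V$ with $h$ nonzero entries in $\ell$ nonzero rows is at most $(O(r)/(h-\ell))^{h-\ell}$. To prove it, introduce the noise operator $S_\rho$ with eigenvalue $0$ on weight-$1$ rows and $\rho^{|\supp|-1}$ otherwise. On singleton-free spectra we will show $\|S_\rho^{\otimes m}g\|_2\le\|T^*_\rho g\|_2$, where $T^*_\rho$ is the row-wise Bonami--Beckner operator. Hypercontractivity of $T^*_\rho$, applied as $\|T^*_\rho g\|_2\le\|g\|_{1+\rho^2}$ and bounded by $\|g\|_\infty$, then yields the inequality after optimizing $\rho$. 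Combining with Cauchy--Schwarz and the count of such $V$ converts this to an $\ell_1$ bound.

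\emph{Ingredient 2: the covering argument.} Here we bound, for each fixed $U$, the sum over $W$ of weight $h$. We replace $U$ by structured centers $U'$ agreeing with $U$ off the matched rows, chosen so that $W'=U'+(W-U)$ is singleton-free and has weight $h'\le h$. We then use the random-matching properties to bound how many centers are needed, with high probability over $\rM_t$.

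I expect Ingredient 2 to be the main obstacle. Choosing the centers so that both the weight and the row structure are controlled, while keeping the combinatorial blow-up within a factor of $C^h$ so the constants $C_t$ grow only with $t\le T$, is the delicate counting step. Finally, we choose $\alpha_0$ depending on $q,k,\delta$, and then $\tau$ and $n_0$ depending on $T$, so that all error terms sum to at most $\delta$.
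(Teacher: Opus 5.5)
Your outline follows the paper's architecture: a hybrid argument over posteriors, a level-wise Fourier $\ell_1$ invariant, one-wise uniformity killing singleton rows, the modified noise operator dominated by row noise, and covering by structured centers. However, the invariant you propagate is missing a component. The symptom is your parameter choice ``$\alpha,\tau$ small relative to $C_T$''.

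The theorem requires $\alpha_0$ to be fixed before $T$, and this is not cosmetic. In \Cref{prop:onewise reduction}, $T$ is chosen after $\alpha$ and must be at least of order $1/\alpha$, because the NO instance needs $\Omega_\epsilon(n)$ constraints for its value to concentrate below $\beta+\epsilon$. Meanwhile $C_T$ grows very fast in $T$, since each step loses at least a factor of order $1/\sqrt{\delta/T}$ from the Markov bound over the matching.

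With only the bound $(C_t\sqrt{n/h})^{h/2}$ above level $s$, the uniformity sum at weight $h\approx k\alpha n$ has terms $(C_t\sqrt{n/h}\cdot O(\alpha)h/n)^{h/2}=(O(C_t\alpha^{3/2}))^{h/2}$. This forces $\alpha\lesssim C_T^{-2/3}$. The same obstruction appears in the inductive step for prior frequencies of weight $u\ge\sqrt{s/n}\,n$.

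The missing ingredient is the invariant $\|\Den{\calD_t}\|_\infty\le q^{O(s)}$. It is cheap: it follows from $|\calB_t|\ge q^{mk-b'}$, which you already enforce, together with $\nu_{M,\calB}\le 1/(1-\delta)$. Via Parseval and counting it gives the $C$-free bound $(eq^2N/h)^{h/2}$ for $h\ge s$ (\Cref{lemma:trivial bound}), which is the second branch of $\U{C}{s}{h}{N}$. Using it for large weights, in \Cref{lemma:boundedness implies uniformity} and in Case 3 of \Cref{lemma:inductive step:casework}, lets $\alpha_0$ depend only on $q,k,k'$. Then only $\tau$ must be small relative to $C_T$.

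Smaller points:

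\emph{Invariant as written.} Read literally, your invariant $(C_t\sqrt n\max(s,h)/h)^{h/2}$ is too weak. At $h=2$ it allows mass $\Theta(C_t s\sqrt n)$, and the $\Theta(\alpha/n)$ covering probability leaves $\Theta(\alpha C_t s/\sqrt n)$, which is unbounded for $s=\tau n$. You want $(C_t\sqrt{n\max(s,h)}/h)^{h/2}$.

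\emph{Covering probability.} At weight $h$ it is $(O(\alpha)h/n)^{h/2}$, not $(O(\alpha)/n)^{h/2}$. The factor $h^{h/2}$ is exactly what the $1/h$ in the invariant cancels.

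\emph{Hypercontractivity.} The form $\|T^*_\rho g\|_2\le\|g\|_{1+\rho^2}$ is false over an alphabet of size $Q=q^k\ge 4$; test a point mass with small $\rho$. Use $\rho=\sqrt{p-1}\,Q^{1/2-1/p}$ as in \Cref{lem:row_hc}, which only affects constants.

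\emph{Role of the random matching.} In the paper the number of structured centers is bounded deterministically, by $q^{k\cdot\rwt(V)}\le q^{k\cdot\wt(V)}$ for the shift $V$. The randomness of the matching enters instead through the joint law of how a prior frequency of weight $u$ meets the matching (\Cref{lemma:inductive step:combinatorial bound}). This must then be traded against the prior's level-$u$ bound in a casework over $u$ (\Cref{lemma:inductive step:casework}). That tradeoff is the actual content of your Ingredient 2, and you have not yet carried it out.
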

\subsection{The streaming reduction}

We now present a reduction from DIHP to \textsc{Max-CSP} due to~\cite{FMW26}, which (as we will see later in this section) can be implemented in the streaming setting.

\begin{definition}[$\MaxCSP{\calF}$ instances from labeled matchings]\label{def:reduction}
Let $k,k',q_0,q \in \N$, $\phi : [k] \to [k']$ be an injection, and $f : [q_0]^k \to \{0,1\}$ be a predicate.

Let $M \in \PHM{m}{k}{n}$ be a $k$-partite hypermatching and $Z \in \Z_q^{m \times k}$.
We define the instance $\Phi_{\phi,f,M,Z}$ of $\MaxCSP{\calF}$ on the variable-set $[n] \times [k']$,
whose constraints are the following:
For each row $j \in [m]$ such that $Z_j = 0$, create a constraint with predicate $f$ and variable sequence $e_j \coloneqq ((M_{j,1},\phi(1)),\ldots,(M_{j,k},\phi(k))) = (\phi(\iota_M(j,\ell)))_{\ell \in [k]}$.
\end{definition}

Note that only variables in $\phi(\supp(M))$ are used in the instance $\Phi_{\phi,f,M,Z}$
and all constraints in this instance use the same predicate ($f$).
Also, $q_0$, which appears in the domain of $f$, may differ from $q$, which appears in the domain of $Z$.

\begin{proposition}[{\cite[Proof of Lemma 3.1]{FMW26}}]\label{prop:onewise reduction}
    Let $q_0,k \in \N$ and $\calF \subseteq \{0,1\}^{[q_0]^k}$.
    Suppose that $\MaxCSP{\calF}$ admits a $(\gamma,\beta)$-integrality gap instance.
    Then there exists $q, k' \in \N$ such that for every $\epsilon > 0$, there exists $\alpha_0 \in (0,1)$ such that the following holds.
    
    For every $\alpha \in (0,\alpha_0)$,
    there exists $T \in \N$, injections $\phi_1,\ldots,\phi_T : [k] \to [k']$ corresponding to a gadget hypergraph $G \in \HG{T}{k'}{k}$,
    one-wise uniform distributions $\calY_1,\ldots,\calY_T \in \OWU{\Z_q^k}$,
    and predicates $f_1,\ldots,f_T : [q_0]^k \to \{0,1\}$ such that the following holds:
    Suppose we sample inputs $(\rM_1,\rZ_1),\ldots,(\rM_T,\rZ_T)$ from DIHP with parameters $G$, $\calY_1,\ldots,\calY_T$, $n$, and $m \coloneqq \alpha n$.
    Then, defining the random CSP instance
    $\rPhi \coloneqq \bigsqcup_{t=1}^T \Phi_{\phi_t,f_t,\rM_t,\rZ_t}$, we have:
    \begin{itemize}
        \item In the $\yes$ case, $\Pr[\optCSP{\rPhi} \ge \gamma-\epsilon] \ge 1-o_n(1)$.
        Further, if $\gamma=1$, then $\optCSP{\Phi} = 1$ deterministically.
        \item In the $\no$ case, $\Pr[\optCSP{\rPhi} \le \beta+\epsilon] \ge 1-o_n(1)$.
    \end{itemize}
\end{proposition}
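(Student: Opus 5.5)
We prove \Cref{prop:onewise reduction} in three steps: discretize and ``uniformize'' the LP solution into one-wise uniform distributions, check that the $\yes$ case inherits the LP value, and check that the $\no$ case inherits the integrality gap.

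\textbf{Construction.} Let $\Phi_0$ be the $(\gamma,\beta)$-integrality gap instance on variables $[k']$, with $T_0$ constraints $C_t=(f_t,e_t)$. Let $\calY^0_t$ be an optimal LP solution, and write $p_v \in \Delta([q_0])$ for the common marginal at variable $v\in[k']$.
\begin{itemize}
\item First perturb the LP solution to be rational with denominator $q$. The LP value is continuous and the consistency constraints are rational, so this costs at most $\epsilon/4$ in value. Since the LP value is at least $\gamma$ while $\optCSP{\Phi_0}\le\beta<\gamma$, the instance is nontrivial, and the perturbation does not affect $\optCSP{\Phi_0}\le\beta$.
\item For each variable $v$, partition $[q]=\Z_q$ into sets $S_{v,1},\ldots,S_{v,q_0}$ with $|S_{v,b}|=p_v(b)\,q$. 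Let $\phi_v:\Z_q\to[q_0]$ be the induced map.
\item Define $\calY'_t$ by sampling $a\sim\calY^0_t$ and then a uniform point of $\prod_\ell S_{e_{t,\ell},a_\ell}$. This distribution is one-wise uniform on $\Z_q^k$.
\item Replace $f_t$ by $f_t\circ(\phi_{e_{t,1}},\ldots,\phi_{e_{t,k}})$. Multiple copies of constraints give the gadget hypergraph $G$ with $T$ edges, and $\phi_t$ is the injection $\ell\mapsto e_{t,\ell}$.
\end{itemize}
Set $\calY_t\coloneqq\calY'_t$ for the DIHP distributions. Because a constraint is created only when $Z_j=0$, the predicate is applied unshifted. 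This matches \Cref{def:reduction}, once that definition is understood with $f$ composed with the $\phi_v$'s.

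\textbf{$\yes$ case.} Use $X^*$ itself as the assignment. For a constraint from row $j$ of player $t$, the condition $Z_j=0$ means $(\Pi^{\phi_t}_{M_t}X^*)_j=Y_{t,j}$. Each row has $\Pr[Z_j=0]=q^{-k}$, and conditioned on $Z_j=0$ the row $Y_{t,j}$ is still distributed as $\calY_t$: by one-wise uniformity $X^*$ restricted to $\supp(M_t)$ is uniform, and it is independent of $Y$.

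It follows that each kept row satisfies its predicate with probability $\Exp_{y\sim\calY_t}[f'_t(y)]$. Averaging over $t$, this is the LP value, at least $\gamma-\epsilon/4$.

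Concentration comes from bounded differences over the $Tm$ rows. Each row's indicator depends on $X^*$ on $k$ coordinates, and the matchings make these dependencies sparse, so we apply McDiarmid to $(X^*,Y,M)$. The number of kept rows is $\approx Tm q^{-k}$ with high probability, and it is $\Theta(n)$ since $T,\alpha$ are constants. Hence $\optCSP{\rPhi}\ge\gamma-\epsilon$ with probability $1-o(1)$.

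If $\gamma=1$, then every $y\in\supp(\calY_t)$ satisfies $f'_t$, so every kept constraint is satisfied deterministically.

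\textbf{$\no$ case.} This is the main obstacle. Here $\rZ_t$ is uniform and independent, so the kept rows form a uniformly random subset of each matching with density $q^{-k}$. The instance $\rPhi$ is then a random ``blow-up'' of $\Phi_0$: each vertex $v$ of $[k']$ becomes a cloud of $n$ vertices, and each gadget edge $t$ contributes a random partial matching of about $\alpha n q^{-k}$ edges between the clouds $\phi_t(1),\ldots,\phi_t(k)$.

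The goal is a union bound over all assignments $x\in\Z_q^{n\times k'}$ (or over $[q_0]$-labels after applying $\phi_v$).
\begin{itemize}
\item Fix $x$ and let $\pi_v$ be its label histogram on cloud $v$. A random edge for gadget $t$ is then satisfied with probability close to $\Exp_{a\sim\otimes_\ell\pi_{e_{t,\ell}}}[f_t(a)]$. The edges of a matching are only nearly independent, which costs $O(m/n)$ in this estimate.
\item Averaging over $t$, this expectation equals the value of the independent-rounding solution with marginals $\pi_v$. That value is a convex combination of integral assignments of $\Phi_0$, hence at most $\optCSP{\Phi_0}\le\beta$.
\item For the concentration step, it suffices that each gadget is repeated $T/T_0$ times. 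This gives $\Omega(Tm)$ independent-ish edges, with tail $\exp(-\Omega(\epsilon^2 T\alpha n q^{-k}))$. It beats the $q^{nk'}$ assignments once $T\ge C\,k'\log q/(\epsilon^2\alpha q^{-k})$.
\end{itemize}
Choosing $T$ this way after fixing $\alpha$ matches the quantifier order in the statement. The remaining care is in the sampling-without-replacement concentration within each matching, which I would handle by a martingale over the edges of each matching. The $o(m/n)$ correlation terms are absorbed by taking $\alpha_0$ small.
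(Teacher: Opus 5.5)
\textbf{The rationalization step.} Your overall route matches the paper's: preimage partitions of $\Z_q$, the lifted one-wise uniform $\calY_t$, repeated copies of each gadget edge, the planted assignment for the yes case, and a random blow-up with concentration and a union bound for the no case. The problem is how you make the LP solution rational. You take an arbitrary optimal LP solution and perturb it to a nearby rational one at a cost of $\epsilon/4$. This makes the common denominator $q$ depend on $\epsilon$, whereas the statement fixes $q$ (and $k'$) \emph{before} quantifying over $\epsilon$. It also breaks the $\gamma=1$ clause. A perturbation that leaves the optimal face can have LP value $1-\epsilon/4<1$, and then some local distribution puts mass on non-satisfying tuples. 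So your claim that every $y\in\supp(\calY_t)$ satisfies the predicate is unjustified, and with it the deterministic conclusion $\optCSP{\rPhi}=1$. That conclusion is exactly what the perfect-completeness part of the main theorem needs; ``value at least $1-\epsilon$ with high probability'' does not suffice there.

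\textbf{The fix.} This is what the paper does: do not perturb at all. The basic LP has rational data, so it has a rational optimal (vertex) solution, whose value is exactly $\optLP{\Phi_0}\ge\gamma$. Its marginals fix $q$ independently of $\epsilon$. When $\gamma=1$, every local distribution is then supported on satisfying tuples, which gives your deterministic argument.

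\textbf{The no-case union bound.} The statement's $\rPhi$ uses the original predicates $f_t:[q_0]^k\to\{0,1\}$. Its optimum therefore ranges over all labelings in $[q_0]^{n\times k'}$, not only labelings of the form $\kappa(X)$ with $X\in\Z_q^{n\times k'}$. Here $\kappa_v$ is the paper's name for your maps $\phi_v$, and these maps need not be onto $[q_0]$ when some marginal lacks full support. Your histogram and convex-combination bound already works for arbitrary $[q_0]$-labelings. So keep the $f_t$ uncomposed: evaluate them on $\kappa(X^*)$ in the yes case, and union bound over all $q_0^{nk'}$ labelings in the no case, rather than over $\Z_q$-assignments of the composed instance.

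\textbf{Two minor corrections.} First, the fact that $Z_j$ is uniform and independent of $Y_{t,j}$ comes from the uniformity of $X^*$, not from one-wise uniformity of $\calY_t$; one-wise uniformity plays no role in this reduction. Second, each row of a uniformly random partite hypermatching has exactly independent uniform endpoints. So the per-edge satisfaction probability in the no case is exact, and sampling without replacement affects only the concentration step, not the expectation.
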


\begin{proof}[Proof sketch]
    Since the coefficients of the basic LP are rational (see \Cref{fig:basic-lp}) and its feasible region is nonempty,
    there is a $(\gamma,\beta)$-integrality gap instance $\Phi_0$ for $\MaxCSP{\calF}$ in which all the local distributions $\calY^{0}_C \in \Dist{[q_0]^k}$ are rational.
    
    Let $\calV$ denote the variable set of this instance.
    For each $v \in \calV$, let $\calX_v \in \Dist{[q_0]}$ denote the marginal distribution on the variable $v$ which is consistent with all the local distributions.
    Take $q \in \N$ so that for every $v \in \calV$ and $a \in [q_0]$, $q \cdot \calX_v(a) \in \N$ (all marginal probabilities are multiples of $1/q$).

    For every variable $v \in \calV$, consider an arbitrary partition $\calP_v$ of $[q]$ into sets $S_1^{(v)}, \ldots, S_{q_0}^{(v)}$ with $|S_{a}^{(v)}| = q \cdot \calX_v(a)$ for all $a \in [q_0]$. 
    Let $\kappa_v : [q] \to [q_0]$ be the function mapping $S_a^{(v)}$ to $a$ for all $a \in [q_0]$, so that sampling $\rb \sim [q]$ uniformly,
    the random variable $\kappa_v(\rb)$ has distribution $\calX_v$.
    Observe that if we sample $\ra \sim \calX_v$ and then sample $\rb \sim \kappa_v^{-1}(\ra)$, the marginal distribution of $\rb$ is uniform on $[q]$.

    Let $C = (f,e)$ be a constraint in the basic LP integrality gap instance $\Phi_0$ (with $e = (i_1,\ldots,i_k)$).
    Let $\calY^0_C \in \Dist{[q_0]^k}$ denote the local distribution corresponding to $C$.
    We define a corresponding distribution $\calY_C \in \Dist{[q]^k}$
    by sampling $\ra = (\ra_1,\ldots,\ra_k) \sim \calY^0_C$
    and then outputting a random sample from $\kappa_{i_1}^{-1}(\ra_1) \times \cdots \times \kappa_{i_k}^{-1}(\ra_k)$.
    $\calY_C$ is one-wise uniform by the observation in the preceding paragraph 
    (and using the fact that for each $\ell \in [k]$, $\ra_{\ell}$ is distributed as $\calX_{i_\ell}$).
    Conversely, sampling $\rb = (\rb_1,\ldots,\rb_k) \sim \calY_C$, the random variable $(\kappa_{i_1}(\rb_1),\ldots,\kappa_{i_k}(\rb_k))$ has distribution $\calY^0_C$.

    We will set $k' \coloneqq |\calV|$ and eventually $T \coloneqq T_0 \cdot K$, where $T_0$ is the number of constraints in the original instance $\Phi_0$ and $K$ is a large enough constant (enough to ensure concentration below).
    The gadget hypergraph $G$ on $k'$ vertices is constructed by adding $K$ copies of $e$ for each constraint $C=(f,e)$ in $\Phi_0$, with all copies labeled with the same (one-wise uniform) distribution $\calY_C$ and corresponding to the same predicate $f$.
        
    \paragraph{The $\yes$ case analysis.}
    Given a signal $X \in \Z_q^{n \times k'}$, define $\kappa(X) \in [q_0]^{n \times k'}$ via applying the functions $\kappa_v$ appropriately,
    i.e., $(\kappa(X))_{i,v} \coloneqq \kappa_v(X_{i,v})$, which we view as an assignment to the CSP instance $\Phi$.
    We argue that in the $\yes$ case, even fixing the matchings $M_1,\ldots,M_T$, we have \[
    \frac{\Exp[\text{\# constraints in }\rPhi\text{ satisfied by }\kappa(\rX)]}
    {\Exp[\text{\# constraints in }\rPhi]} = \gamma, \]
    where the above expectation is over the choice of a random $X$.
    (To carry out the full argument of~\cite{FMW26}, we would then use concentration bounds over the choice of $\rY_1,\ldots,\rY_T$ and $\rX$ to get that the actual ratio is roughly $\gamma$ with high probability.)

    Recall that, for $t \in [T]$ and $j \in [m]$, an injection $\phi_t : [k] \to [k']$,
    a matching $M_t \in \PHM{m}{k}{n}$, and labeling $Z_t \in \Z_q^{m \times k}$, we inject the constraint $(f_t, (M_t)_j)$ into $\Phi$ iff $(Z_t)_j = 0$.
    In the $\yes$ case, $\rZ_t = \Pi^{\phi_t}_{M_t} \rX - \rY_t$, where $\rX \sim \Z_q^{n \times k}$ and $\rY_t \sim \calY_t^{\otimes m}$ independently.
    Hence, in particular, $(\rZ_t)_j = \Pi^{\phi_t}_{M_j} \rX + (\rY_t)_j$, where $\rX \sim \Z_q^{n \times k}$ and $(\rY_t)_j \sim \calY_t$ independently.
    Note that $\Pr[(\rZ_t)_j = 0] = q^{-k}$; this only uses the uniformity of $\rX$, and so is true even conditioned on any fixing of $(Y_t)_j$.
    Hence, by linearity, the expected number of constraints in $\rPhi$ is $q^{-k} \cdot mT$.
    On the other hand,
    \begin{align*}
        &\Pr[f_t(\kappa(\rX)_{(M_t)_{j,1},\phi_t(1)},\ldots,\kappa(\rX)_{(M_t)_{j,k},\phi_t(k)}) = 1 \wedge (\rZ_t)_j = 0]\\
        ={}& \Pr[f_t(\kappa_1(\rX_{(M_t)_{j,1},\phi_t(1)}),\ldots,\kappa_k(\rX_{(M_t)_{j,k},\phi_t(k)})) = 1 \mid (\rZ_t)_j = 0] \cdot \Pr[(\rZ_t)_j = 0].
        \end{align*}
    The second factor is again $q^{-k}$.
    For the first factor, we use that $(X_{(M_t)_{j,1},\phi_t(1)},\ldots,X_{(M_t)_{j,k},\phi_t(k)}) = \Pi^{\phi_t}_{(M_t)_j} X$
    and conditioned on $(\rZ_t)_j = 0$, the marginal distribution of $\Pi^{\phi_t}_{(M_t)_j} X$ is $\calY_C$.
    Hence, the marginal distribution of $\kappa_1(\rX_{(M_t)_{j,1},1}),\ldots,\kappa_k(\rX_{(M_t)_{j,k},k})$ is uniform on $\calY_C^0$,
    so that the first factor is just $\Exp_{\ra \sim \calY^0_C}[f_t(\ra)]$.
    Now summing over (all $K$ copies of) all $T_0$ constraints $C$, we recover the expression $q^{-k} \cdot \sum_{C=(f,e) \in \Phi} \Exp_{\ra \sim \calY^0_C} [f(\ra)]$, which is exactly the (unnormalized) LP objective.

    \paragraph{The $\no$ case analysis.}
    For the $\no$ case, we can use that $\rPhi$ is essentially a random blowup of the base instance $\Phi_0$.
    We can therefore show that the expected value of every fixed assignment to $\rPhi$ is at most $\beta$ 
    (since any such assignment projects down to an assignment to $\Phi_0$),
    then use union+concentration bounds (assuming $\alpha_0$ is sufficiently small and $K$ sufficiently large).
\end{proof}

We are now ready to prove our main theorem, as restated below.
\mainthm*
\begin{proof}
  Consider any $(\gamma,\beta)$-LP integrality gap instance of $\MaxCSP{\cF}$. Suppose for contradiction that there exists an $\epsilon_0>0$ and a streaming algorithm $\calA$ for $\GapMaxCSP{\gamma-\epsilon_0}{\beta+\epsilon_0}{\calF}$ that requires only $o(n)$ space. Using \cref{prop:onewise reduction}, we will derive a $o(n)$-bit communication protocol for DIHP, contradicting \cref{thm:main DIHP}. To apply \cref{prop:onewise reduction}, we set $(q_0,k,\epsilon)_{\cref{prop:onewise reduction}}= (q,k,\epsilon_0)$. For these parameters, consider the DIHP instance produced in \cref{prop:onewise reduction} and the corresponding random CSP instance $\rPhi \coloneqq \bigsqcup_{t=1}^T \Phi_{\phi_t,f_t,\rM_t,\rZ_t}$. Since the reduction described in \cref{def:reduction} is local, each player $\Player_t$ can independently sample $\Phi_{\phi_t,f_t,\rM_t,\rZ_t}$. Starting from $\Player_1$, the players can sequentially run $\calA$ on their local instance and pass the memory state to the next player; specifically, starting from the memory state passed by $\Player_{t-1}$, player $\Player_t$ runs $\calA$ on $\Phi_{\phi_t,f_t,\rM_t,\rZ_t}$. Finally, $\Player_T$ outputs $\yes$ if $\calA$ categorizes the instance as having value at least $\gamma+\epsilon_0$, and $\no$ otherwise. Correctness follows immediately from \cref{prop:onewise reduction}. Since $\calA$ uses only $o(n)$ memory, this yields a $o(n)$-bit communication protocol for DIHP. The case $\gamma=1$ is analogous.
\end{proof}

\section{Lower bound on the communication complexity}\label{sec:communication_lb}

\newcommand{\Post}[5]{\mathrm{Post}_{#2,#3}(#1;#4,#5)}
\newcommand{\PostShort}[3]{\mathrm{Post}(#1;#2,#3)}
\newcommand{\Input}[4]{\mathrm{Input}_{#1,#2}(#3,#4)}
\newcommand{\InputShort}[2]{\mathrm{Input}(#1,#2)}

\newcommand{\rcalB}{\boldsymbol{\calB}}
\newcommand{\rcalD}{\boldsymbol{\calD}}

In this section, we prove the main communication lower bound for DIHP, as restated below.

\communicationlb*

By Yao's minimax principle \cite{Yao77}, it suffices to prove such a lower bound against all \emph{deterministic} protocols for DIHP. 
Namely, a protocol for DIHP can be specified by deterministic message functions $\Player_1,\dots,\Player_T$ so that $S_t= \Player_t(M_{1:t},S_{1:t-1},Z_t)$ denotes the message sent by the $t$-th player on input $(M_{1:t},S_{1:t-1},Z_t)$.
Without loss of generality, the final function $\Player_T$ outputs a simple bit ``$\yes$/$\no$'' indicating the output of the protocol.

\subsection{Informal overview of proof}

To prove \Cref{thm:main DIHP}, we consider the following ``hybrid experiment'': Independently sample a common sequence of random $k$-partite hypermatchings $\rM_{1:T}$, a hidden signal $\rX \in \Z_q^{n \times k'}$, and then sample
$\yes$ inputs $\rZ_{1:T}^Y$ (i.e., $\rZ^Y_t = \Pi^{\phi_t}_{\rM_t}\rX^* - \rY_t$ for hidden signal for  $\rY_t \sim \calY_t^{\otimes k}$) and $\no$ inputs $\rZ_{1:T}^N$ (uniformly random).
Note that $(\rM_{1:T},\rZ_{1:T}^Y)$ have the marginal distribution of a $\yes$ input,
and $(\rM_{1:T},\rZ_{1:T}^N)$ have the marginal distribution of a $\no$ input.
Thus, this experiment is a \emph{coupling} between the $\yes$ and $\no$ input distributions.

In the above experiment, let $\rS_{1:T}^Y$ denote the messages sent in the $\yes$ case (so that $\rS^Y_t = r_t(\rM_{1:t},\rS_{1:t-1}^Y,\rZ^Y_t)$) and $\rS_{1:T}^N$ denote the messages in the $\no$ case
(so that $\rS_t^N = r_t(\rM_{1:t},\rS_{1:t-1}^N,\rZ^N_t)$).
To prove \Cref{thm:main DIHP} we need to show that the distributions of the random variables $\rS^Y_T$ and $\rS^N_T$ are close in total variation distance.
For the induction, we prove the much stronger statement that the distributions of $(\rM_{1:T},\rS_{1:T}^Y)$ and $(\rM_{1:T},\rS_{1:T}^N)$ are close in total variation distance, i.e.,
\[
\|(\rM_{1:T},\rS_{1:T}^Y)-(\rM_{1:T},\rS_{1:T}^N)\|_{tvd}\leq\delta \, .
\]

We prove the above bound via a hybrid lemma analogous to those in \cite{CGS+22-linear-space, KK19}. Roughly speaking, this lemma states that if the first $t-1$ players' inputs are drawn from the $\yes$ distribution, then the $t$-th player's output on a $\yes$ input is distributed very similarly to their output on a $\no$ input, even conditioned on all previous hypermatchings and messages.
Formally, the lemma identifies a \emph{filtration} (nested sequence) of events $\cE_1\supset\cE_2\supset\cdots\supset\cE_T$ in the $\yes$ experiment such that: (i) $\cE_t$ enforces a typicality condition on the $k$-partite hypermatchings received by the first $t$ players and the messages of the first $t-1$ players; and (ii) whenever these random variables take typical values, player $\Player_t$ cannot distinguish whether their input is sampled from the $\yes$ distribution or the $\no$ distribution, assuming all previous players' inputs were drawn from the $\yes$ distribution.

\subsection{Formal statement of hybrid lemma}

The probability space underlying the hybrid lemma (\Cref{lem:hybrid}) will be the following distribution.
Let $\Omega^Y \coloneqq \Omega_{k,q,\alpha,n,T,G,\calY_{1:T}}$ denote the distribution over tuples $(\rX^*,\rM_{1:T},\rZ_{1:T})$ where $\rX^*\sim \Unif(\Z_q^n)$, $\rM_t \in \PHM{m}{k}{n}$ is a $k$-partite hypermatching with $m$ edges, and $\rZ_t \in \Z_q^{m\times k}$ is a noisy signal, as defined in the $\yes$ case of \Cref{def:DIHP} for every $t \in [T]$, i.e.,
for each $t \in [T]$, we sample $\rY_t \sim \calY_t^{\otimes m}$ and set $\rZ_t \coloneqq \Pi^{\phi_t}_{\rM_t} \rX^* - \rY_t$.

These variables along with a deterministic protocol given by $\Player_1,\ldots,\Player_T$ specify additional random variables that are determined by $(\rX^*,\rM_{1:T},\rZ_{1:T})$ including $\rcalD_{1:t}$, $\rS^Y_t$.

\begin{restatable}[Hybrid lemma]{lemma}{hybrid}\label{lem:hybrid}
	For every $q,k\in\N$, there exists $\alpha_0>0$ such that for every $T\in\N$, and $\delta\in(0,1)$, there exists $\theta\in(0,1)$ and $n_0 <\infty$  such that  the following holds for every $n \geq n_0$:
	
    Let $\Pi=(\Player_t)_{t \in [T]}$ be a deterministic protocol for DIHP where $m\le \alpha_0 n$ and each message function $r_t$ outputs a message of at most $\theta n$ bits. Let $(\rX^*,\rM_{1:T},\rZ_{1:T}) \sim \Omega$. 
	Then there exists a sequence of events $\{\cE_t\}_{t\in[T]}$ such that:
    \begin{enumerate}
        \item For $t \in [T]$,  $\cE_t$ only depends on $\rM_{1:t}$ and $\rS^Y_{1:t-1}$ (with $\rS^Y_{1:0}$ denoting an empty set of variables).
        \item For every $t \in [T]$,  $\cE_{t} \Rightarrow \cE_{t-1}$ and $\Pr[\overline{ \cE_{t}}\, |\, \cE_{t-1}]\leq (\delta/(2T))$ (where $\calE_0$ is the trivial event occurring with probability $1$).
        \item For every fixed $M_{1:t}$ and $S^Y_{1:t-1}$ satisfying $\cE_t$, one has
	\begin{equation}\label{eq:hybrid}
		\|\rS^Y_t-\Player_t(M_{1:t},S^Y_{1:t-1},\rU)\|_{\tvd}\leq\delta/(2T),
	\end{equation}
	where $\rU\sim\Unif(\Z_q^{m\times k})$.
    \end{enumerate}
\end{restatable}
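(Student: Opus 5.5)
The plan follows the Kapralov--Krachun / \cite{CGS+22-linear-space} template. We track the posterior distribution $\rcalD_t$ of $\rX^*$ given $(\rM_{1:t},\rS^Y_{1:t})$, and define $\cE_t$ as the event that (a) each of $\rM_1,\ldots,\rM_t$ is ``typical'' (bounded overlaps with earlier matchings and within gadget parts, the usual random-hypermatching properties) and (b) the posterior $\rcalD_{t-1}$ has small Fourier $\ell_1$ mass at every level. Concretely, for each $h \ge 1$ we require
\[
\sum_{U \in \Z_q^{n\times k'},\ \wt(U)=h} |\hat{\Den{\rcalD_{t-1}}}(U)| \le \bigl(c\sqrt{\theta n/h}\bigr)^{h}\cdot (\text{const})^h
\]
for small $h \le c' n$, with a cruder bound for large $h$. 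This $\cE_t$ depends only on $\rM_{1:t}$ and $\rS^Y_{1:t-1}$, and we take it to be nested.

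Item 3 follows once we condition on $\cE_t$. Fix $M_{1:t}$ and $S^Y_{1:t-1}$. Then
\[
\rZ_t = \Pi^{\phi_t}_{M_t}\rX - \rY_t,
\]
where $\rX \sim \calD_{t-1}$. Let $\nu_t$ denote the density of $\rZ_t$. By \Cref{prop:prelim:density of sum}, \Cref{prop:prelim:fourier of marginal} and \Cref{prop:prelim:fourier of product}, its Fourier coefficient at $V$ equals $\hat{\Den{\calD_{t-1}}}(\Pi^\intercal V)\cdot\prod_j \hat{\Den{\calY_t}}(-V_j)$. By \Cref{prop:prelim:fourier of one-wise}, this vanishes unless every row of $V$ is singleton-free.

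Hence $\|\rZ_t - \rU\|_{\tvd} \le \tfrac12 \sum_{V\ne 0} |\hat{\nu_t}(V)|$. This sum is bounded by $\sum_{h\ge 2}\sum_{\wt(U)=h}|\hat{\Den{\calD_{t-1}}}(U)|$ restricted to $U$ supported on $\supp(M_t)$. A random matching covering an $\alpha$ fraction of vertices catches a fixed weight-$h$ set with probability about $\alpha^{h}$, and typicality of $M_t$ turns this into a deterministic bound. Combined with the $\ell_1$ bounds and $\alpha_0$ small, the sum is $O(\alpha\theta)$, hence at most $\delta/(2T)$. Data processing (\Cref{prop:data_processing}) then gives \eqref{eq:hybrid}.

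Item 2 is the inductive step: show that $\Pr[\overline{\cE_{t+1}}\mid\cE_t]$ is small, i.e.\ that the new posterior keeps the Fourier bounds with high probability over $\rM_t$ and $\rS^Y_t$. Write $\calB$ for the preimage of the message $S_t$, of density at most $q^{\theta n}$ except with small probability over rare messages. The update formula from the overview then expresses $\hat{\Den{\calD_t}}(W)$ as a sum over $U$ of $\hat{\Den{\calD_{t-1}}}(U)\,\hat{\Den{\calB}}(\Pi(W-U))\,\hat{\Den{\calY^{\otimes m}}}(\cdot)$, together with a normalization factor close to $1$ (by the item-3 estimate).

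Two ingredients are needed to bound this. The first is the singleton-free $(h,\ell)$ level inequality for $\Den{\calB}$. It is proved with the custom operator $S_\rho$, with eigenvalue $\rho^{h-\ell}$ on singleton-free frequencies, compared in $\ell_2$ to the row-wise Bonami--Beckner operator $T^*_\rho$, and then hypercontractivity. This bounds $\sum_{V:\,h\text{ entries},\,\ell\text{ rows}}|\hat{\Den{\calB}}(V)|^2 \lesssim (C\theta n/(h-\ell))^{h-\ell}$. Cauchy--Schwarz over the roughly $\binom{m}{\ell}\binom{k\ell}{h}$ such $V$ converts this to an $\ell_1$ bound.

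The second ingredient is the covering argument. For each center $U$, the weight constraint is on $W$ while the row structure lives in $W-U$. We cover such sums by structured centers $U'$ with $W'=U'+(W-U)$ singleton-free and of smaller weight, and use typicality of $M_t$ to bound how many structured centers arise.

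Summing the contributions gives the level-$h$ bound for $\calD_t$ by induction on $t$, with constants degrading by a factor depending on $T$. This is why $\theta$ is chosen after $T$, and then $n_0$ is taken large so that the typicality failures are $o(1)$.

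The main obstacle is this inductive Fourier bound, in particular the covering step that reconciles the weight of $W$ with the singleton-free structure of $W-U$ while keeping the combinatorial blow-up compatible with the $(h,\ell)$ inequality.
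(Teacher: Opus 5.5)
Your overall architecture matches the paper: tracking $\rcalD_t$, Fourier level bounds as the invariant, the singleton-free $(h,\ell)$ inequality via the modified noise operator, covering by structured centers, and Markov on $|\rcalB_t|$. However, your event design and your derivation of Item~3 have a genuine gap.

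In Item~3 you drop the singleton-free restriction, keep only $\supp(U)\subseteq\phi_t(\supp(M_t))$, and use that a random matching covers a fixed weight-$h$ set with probability about $\alpha^h$. That cannot give anything small. The level-$2$ allowance in your invariant is linear in $n$, and so is the allowance in any invariant that survives one message. It is attained, e.g.\ when $\calY_1$ is uniform on the diagonal $\{(a,\ldots,a)\}$ and $\Player_1$ reveals $\Theta(\theta n)$ within-edge differences $(Z_1)_{j,1}-(Z_1)_{j,2}=X_u-X_v$, so that $|\hat{\Den{\rcalD_1}}|=1$ at $\Theta(\theta n)$ weight-$2$ frequencies. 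Your estimate then bounds the $h=2$ term only by $\Theta(\alpha^2\theta n)$, not $O(\alpha\theta)$.

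The missing ingredient is that the surviving frequencies must be singleton-free \emph{with respect to $M_t$}. Every hyperedge of $M_t$ meets $\supp(U)$ in $0$ or at least $2$ vertices, so the $h$ vertices are packed into at most $h/2$ hyperedges. This happens with probability at most $(Ch/n)^{h/2}$ (\Cref{lemma:BiU:combinatorial bound}). Multiplying by the level-$h$ bound $(C_1\sqrt{sn}/h)^{h/2}$ gives terms $(C'\sqrt{s/n})^{h/2}$. The sum is small because $s/n$ is small; the smallness comes from $s=\gamma n$ (tied to the message length), not from $\alpha$.

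Second, even with the correct probability, you cannot make Item~3 a deterministic consequence of ``typicality of $M_t$'' in your sense, i.e.\ overlap properties of the matchings alone. The bound is an average over $\rM_t$ for a fixed posterior. Which $M_t$ are bad depends on where $\rcalD_{t-1}$ places its singleton-free mass, and that is determined by the earlier messages.

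The paper's fix is to build uniformity into the event: $\cE_t=\cE_{t-1}\cap\cE_t^{\mathrm{post}}\cap\cE_t^{\mathrm{in}}$, where $\cE_t^{\mathrm{in}}$ is the event that $\Input{\phi_t}{\calY_t}{\rcalD_{t-1}}{\rM_t}$ is $\delta'$-close to uniform. This event depends only on $(\rM_{1:t},\rS^Y_{1:t-1})$. Item~3 then follows immediately from \Cref{clm:posterior_distribution} and data processing. Meanwhile $\Pr[\overline{\cE_t^{\mathrm{in}}}\mid\cdot]$ is controlled by \Cref{lemma:boundedness implies uniformity}, via expectation over $\rM_t$ plus Markov.

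Also, $\cE_t^{\mathrm{post}}$ must track $\|\Den{\rcalD_{t-1}}\|_\infty\le q^{2(t-1)b}$ in addition to boundedness, for two reasons:
\begin{itemize}
\item It is the source of your ``cruder bound for large $h$'', via Parseval (\Cref{lemma:trivial bound}).
\item The normalization $\nu_{M,\calB}\le 1/(1-\delta)$ in the inductive step needs $\ell_\infty$-closeness of the input density to $1$, not a TV bound, since $|\calB|/q^{mk}$ can be as small as $q^{-b'}\ll\delta$. Your Fourier-sum estimate does give $\ell_\infty$-closeness, so state it in that form.
\end{itemize}
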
    

Using \Cref{lem:hybrid}, we can prove \Cref{thm:main DIHP} using a simple inductive argument. The proof is analogous to the proof of Lemma 6.3 in \cite{KK19} and Theorem 3.5 in \cite{CGS+22-linear-space}.
So we skip this proof here and include it in \Cref{proof:thm:main DIHP} for completeness sake.
\Cref{lem:hybrid} is proved in \Cref{sec:hybrid} below.

\subsection{Informal proof of \Cref{lem:hybrid}}

We prove \Cref{lem:hybrid} formally in \Cref{sec:hybrid}. Here, we give a brief overview of the proof along with the definitions and lemmas that we will need.
At a high level, the main idea is to show that for every $t\in [T]$, after the first $t$ rounds of communication, sufficient randomness remains in $\rX^*$ so that $\Player_{t+1}$ cannot determine whether their input is drawn from the $\yes$ or the $\no$ distribution.
As in \cite{CGS+22-linear-space,KK19}, we formally capture the randomness in $\rX^*$ through the so-called \emph{boundedness} of its posterior distribution, conditioned on the hypermatchings and the messages of the first $t$ players. The boundedness condition (stated formally in \Cref{def:bounded distribution}) bounds the $\ell_1$-norm of the Fourier coefficients at each Hamming weight. The subsequent \emph{boundedness implies uniformity} lemma (\Cref{lemma:boundedness implies uniformity}, proved in \Cref{sec:boundedness implies uniformity}) then implies that if the posterior distribution of $\rX^*$ after $t$ rounds of communication is bounded, the total variation distance between $\Player_{t+1}$'s inputs under the $\yes$ and $\no$ distributions is very small.

The next step is to inductively show that if the posterior distribution of $\rX^*$ is bounded after the first $t$ rounds of communication (roughly captured by event $\cE_t$), then it remains bounded after the $(t+1)$-th round. This is the step where we depart significantly from prior work \cite{CGS+22-linear-space,KK19}.
In previous works, the players' inputs in the $\yes$ case were noiseless, i.e., they received the induced labelings $\Pi_{M_t}^{\phi_t} X$ directly.%
\footnote{In \cite{CGS+22-linear-space}, the authors specifically consider a ``shifted noise'' distribution that applies a uniform shift to all coordinates; via a recentering argument, they reduce this to the noiseless case.}
Consequently, the posterior distributions in those works were always uniform over a set, and showing that this set is large and well-structured sufficed to establish boundedness of the posterior distribution.
The analysis in our setting is more intricate because the posterior distribution of $\rX^*$ is no longer uniform over a set. While the strong structural guarantees enjoyed by the posterior sets in previous works do not hold here, we show that weaker properties suffice to carry out the induction.
The inductive step lemma is stated formally as \Cref{lem:inductive_step} and proved in \Cref{sec:inductive_step}. Assuming \Cref{lemma:boundedness implies uniformity,lem:inductive_step}, we formally prove \Cref{lem:hybrid} in \Cref{sec:hybrid}.

\subsection{Definitions and statements of key lemmas}

\begin{definition}[(One-step) posterior distribution]\label{def:posterior distribution}
    Let $n,k,k',m \in \N$, $\phi : [k] \to [k']$ be an injection, and $\calY \in \OWU{\Z_q^k}$ a one-wise uniform distribution.
    For a ``prior'' distribution $\calD \in \Dist{\Z_q^{n \times k'}}$, a $k$-partite hypermatching $M \in \PHM{m}{k}{n}$, and a set $\calB \subseteq \Z_q^{m \times k}$,
    we define the \emph{posterior distribution} $\Post{\calD}{\phi}{\calY}{M}{\calB}$
    as the conditional distribution on $\rX \in \Z_q^{n \times k'}$
    when we sample $\rX \sim \calD$ and $\rY \sim \calY^{\otimes m}$ independently
    and condition on the event that $\Pi^\phi_M \rX - \rY \in \calB$.
    The density function of this distribution is
    \[ \Den{\Post{\cD}{\phi}{\calY}{M}{\calB}}(X_0) = q^{n \cdot k'} \cdot \Pr_{\substack{\rX \sim \calD, \\ \rY \sim \calY^{\otimes m}}}
    \bracks*{ \rX = X_0 \mid \Pi^\phi_M \rX - \rY \in \calB }. \]
    We omit $\phi$ and $\calY$ and write $\PostShort{\calD}{M}{\calB}$ when clear from context.
\end{definition}

\begin{definition}[Input distribution]\label{def:input distribution}
    Let $n,k,k',m \in \N$, $\phi : [k] \to [k']$ an injection, and $\calY \in \OWU{\Z_q^k}$ a one-wise uniform distribution.
    For a distribution $\calD \in \Dist{\Z_q^{n \times k'}}$ and a $k$-partite hypermatching $M \in \PHM{m}{k}n$, we define the \emph{input distribution} $\Input{\phi}{\calY}{\calD}{M}$ as the distribution of $\Pi^\phi_M \rX - \rY$ when $\rX \sim \calD$ and $\rY \sim \calY^{\otimes m}$ independently.
    We omit $\phi$ and $\calY$ and write $\InputShort{\calD}{M}$ when clear from context.
\end{definition}

\newcommand{\U}[4]{\mathrm{U}_{#1,#2,#4}(#3)}

\begin{definition}[Fourier bound function]\label{def:bounded distribution}
A distribution $\cD \in \Dist{\Z_q^N}$ is \emph{$(C,s)$-bounded} if, for every $h \in [N]$, \[
\sum_{\substack{u\in\Z_q^N,\\\wt(u)=h}} \abs*{ \hat{\Den{\calD}}(u) } \le \U{C}{s}{h}{N}, \]
where for $C < \infty$ and $s,h,N \in \N$, we define: \[
	\U{C}{s}{h}{N} \coloneqq \begin{cases}
    1, & \text{if } h=0\\
		\parens*{ \frac{C\sqrt{s N}}{h}}^{h/2}, & \text{if } 1\le h\le s \\
		\min\braces*{ \parens*{ \frac{C\sqrt{N}}{\sqrt{h}}}^{h/2}, \parens*{ \frac{eq^2 N}{h}}^{h/2} }, & \text{if } h>s.
	\end{cases} \qedhere \]
\end{definition}
\begin{observation}[Monotonicity of boundedness with respect to $C$]\label{obs:boundedness monotonicity}
    The function $\U{C}{s}{h}{N}$ increases monotonically with respect to the parameter $s$.
    Hence, if a distribution $\cD \in \Dist{\Z_q^N}$ is $(C,s)$-bounded, then for every $C'\ge C$, $\calD$ is also $(C',s)$-bounded.
\end{observation}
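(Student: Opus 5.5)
The plan is to check directly from \Cref{def:bounded distribution} that, for fixed $s,h,N$, the quantity $\U{C}{s}{h}{N}$ is nondecreasing as a function of $C$; this is the monotonicity the consequence actually uses, since the ``hence'' clause varies $C$ with $s$ fixed, so I read ``$s$'' in the first sentence as $C$. The second sentence then follows by chaining inequalities. The argument is a case analysis on $h$, matching the three branches of the definition. No earlier result is needed beyond the definition itself.

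First, for $h=0$ the bound is identically $1$, so it is trivially nondecreasing in $C$. For $1 \le h \le s$, the bound is $\bigl(C\sqrt{sN}/h\bigr)^{h/2}$. Its base is nonnegative and linear in $C$ (assuming $C \ge 0$, as is implicit), and it is raised to the positive power $h/2$. Since $x \mapsto x^{h/2}$ is increasing on $[0,\infty)$, the bound is nondecreasing in $C$. For $h > s$, the bound is the minimum of $\bigl(C\sqrt{N}/\sqrt{h}\bigr)^{h/2}$, which is nondecreasing in $C$ by the same reasoning, and $\bigl(eq^2N/h\bigr)^{h/2}$, which does not depend on $C$. A pointwise minimum of nondecreasing functions is nondecreasing, so this case holds as well.

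Finally, suppose $\calD$ is $(C,s)$-bounded and $C' \ge C$. For every $h \in [N]$,
\[
\sum_{\wt(u)=h} \abs*{\hat{\Den{\calD}}(u)} \le \U{C}{s}{h}{N} \le \U{C'}{s}{h}{N},
\]
so $\calD$ is $(C',s)$-bounded. I do not expect any real obstacle. The only points needing care are the sign convention $C \ge 0$, so that the bases are nonnegative, and handling the $\min$ in the $h>s$ branch correctly.
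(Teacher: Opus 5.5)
Your proof is correct, and it is the direct check against the definition of $(C,s)$-boundedness that the paper leaves implicit, since it states the observation without proof. Reading the ``$s$'' in the first sentence as $C$ is right, because the observation's title and its ``hence'' clause both concern $C$. The literal claim, monotonicity in $s$, also happens to hold: at $s=h$ the first branch equals $(C\sqrt{N/h})^{h/2}$, which dominates the minimum in the last branch. But monotonicity in $s$ would not give the stated consequence, so your reading is the one needed.
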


\begin{restatable}[Boundedness implies uniformity]{lemma}{boundedness}
\label{lemma:boundedness implies uniformity}
    For every $k,q,k' \ge 2 \in \N$ ($k' \ge k$), there exists $\alpha_0 > 0$ such that for every $\delta \in (0,1/2)$ and $C < \infty$, there exists $\tau > 0$ such that the following holds.

    For every injection $\phi : [k] \to [k']$, 
    one-wise uniform distribution $\cY\in \OWU{\Z_q^k}$,
    $n \in \N$ sufficiently large,
    $\calD\in \Dist{\Z_q^{n \times k'}}$ a $(C,s)$-bounded distribution satisfying $\|\Den{\calD}\|_\infty \le q^b$ and $4\log_q(3/\delta) \le b \le s \le \tau n$,
    $m \le \alpha_0 n$, \[
    \Pr_{\rM \in \PHM{m}{k}{n}} \bracks*{ \|\Den{\InputShort{\calD}{\rM}} - 1\|_\infty \le \delta }~\ge~1-\delta, \]
    where $\rM \in \PHM{m}{k}{n}$ is a randomly sampled matching.
\end{restatable}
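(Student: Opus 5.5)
We bound $\|\Den{\InputShort{\calD}{M}}-1\|_\infty$ through the Fourier expansion of the input density. By \Cref{prop:prelim:density of sum}, the input $\Pi^\phi_M\rX-\rY$ has density equal to the convolution of the density of $\Pi^\phi_M\rX$ with that of $-\rY$. The density of $\Pi^\phi_M\rX$ is the marginal of $\Den{\calD}$ under the surjective linear map $\Pi^\phi_M$, so \Cref{prop:prelim:fourier of marginal} gives its Fourier coefficient at $V\in\Z_q^{m\times k}$ as $\hat{\Den{\calD}}((\Pi^\phi_M)^\intercal V)$. Combining this with \Cref{prop:prelim:convolution,prop:prelim:fourier of product}:
\[
\hat{\Den{\InputShort{\calD}{M}}}(V)=\hat{\Den{\calD}}\bigl((\Pi^\phi_M)^\intercal V\bigr)\cdot\prod_{j\in[m]}\hat{\Den{\calY}}(-V_j).
\]
Here $(\Pi^\phi_M)^\intercal V$ places the entries of $V$ on the vertices $\phi(\iota_M(j,\ell))$, so it has the same Hamming weight $h=\wt(V)$ as $V$. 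By \Cref{prop:prelim:fourier of one-wise}, the product vanishes unless every nonzero row of $V$ has at least two nonzero entries. Each factor has modulus at most $1$. Using \Cref{prop:prelim:fourier expansion} and $\hat\mu(0)=1$,
\[
\|\Den{\InputShort{\calD}{M}}-1\|_\infty\le\sum_{h\ge2}\ \sum_{\substack{V\neq0,\ \wt(V)=h\\ \text{singleton-free rows}}}\bigl|\hat{\Den{\calD}}((\Pi^\phi_M)^\intercal V)\bigr|.
\]
It therefore suffices to bound the expectation over $\rM$ of the right-hand side by $\delta^2$ and then apply Markov's inequality.

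Swap the order of summation. For each $u\in\Z_q^{n\times k'}$ with $\wt(u)=h\ge2$, the term $|\hat{\Den{\calD}}(u)|$ is counted only if $u=(\Pi^\phi_\rM)^\intercal V$ for some singleton-free $V$. This requires that $\supp(u)\subseteq\phi(\supp(\rM))$, so in particular $\supp(u)$ lies in the $k$ columns $\phi([k])$, and that $\supp(u)$ is a union of "row-pieces" of $\rM$ each of size at least $2$. Such a $V$ is unique when it exists. Hence
\[
\Exp_\rM[\cdots]\le\sum_{h\ge2}\Bigl(\max_{\wt(u)=h}\Pr_\rM[u\text{ is covered with singleton-free rows}]\Bigr)\cdot\U{C}{s}{h}{nk'}.
\]
We bound this probability by a counting argument. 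The $h$ support vertices must be grouped into at most $h/2$ hyperedges of $\rM$. Choose a partition of $\supp(u)$ into groups of size at least $2$; there are at most $h^{h}$ such partitions, and the bound $(O(h))^{h/2}$ can be arranged. For a fixed group of size $g$, the probability that a random partite hypermatching with $m=\alpha n$ edges contains a single edge through those $g$ prescribed vertices is at most about $m/n^{g}\le\alpha\, n^{1-g}$. Across groups these events are nearly independent while $h\le\tau n$; this independence is what needs care. The total probability is thus at most
\[
(O(h))^{h/2}\prod_{\text{groups}}\alpha\, n^{1-g}\le (O(h))^{h/2}(\alpha/n)^{h/2},
\]
since $\sum(g-1)\ge h/2$ and the product is maximized with pairs, giving $n^{-h/2}\alpha^{h/2}$.

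Multiply by the boundedness bound. For $h\le s$, $\U{C}{s}{h}{nk'}=(C\sqrt{s nk'}/h)^{h/2}$, and the term is at most
\[
\bigl(O(1)\cdot C\alpha\sqrt{s k'/n}\bigr)^{h/2}\le\bigl(O(C\alpha\sqrt{k'\tau})\bigr)^{h/2}.
\]
For $h>s$, use $(C\sqrt{nk'/h})^{h/2}$, which gives $(O(C\alpha\sqrt{k'h/n}))^{h/2}$; this is small while $h\le\tau' n$. For still larger $h$, the trivial Fourier bound fails, so we instead use $\|\Den{\calD}\|_\infty\le q^b$. By Parseval, $\sum_u|\hat{\Den{\calD}}(u)|^2\le q^b$. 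Combined with Cauchy--Schwarz against the number of coverable $u$ of weight $h$ (at most $q^h$ times the number of singleton-free sub-supports of $\rM$, about $2^{km}$), this contributes at most $q^{b/2}$ times an exponentially small factor, once $h$ is a constant fraction of $n$ and $\alpha$ is small. With $\alpha_0$ small relative to $q,k$ and $\tau$ small relative to $C,k',\delta$, every series is geometric with ratio below $\delta^2/10$. Summing gives $\Exp_\rM\le\delta^2$, and Markov's inequality completes the proof.

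The main obstacle is the counting for $h$ between $s$ and linear in $n$, and the large-$h$ tail: making the grouping count $(O(h))^{h/2}$ rigorous, handling dependence between groups, and controlling the regime $h=\Theta(n)$ without the $\min$ clause. For that regime, I would first try the $(eq^2N/h)^{h/2}$ branch of $\U{C}{s}{h}{N}$ together with the probability bound $(\alpha h/n)^{h/2}$, showing their product $(O(\alpha q^2))^{h/2}$ decays geometrically for small $\alpha_0$. This avoids Parseval altogether, and I expect this is the intended route, with $b\ge4\log_q(3/\delta)$ used only in the companion inductive lemma.
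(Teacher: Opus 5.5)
This is essentially the paper's proof: the singleton-free Fourier bound on $\|\Den{\InputShort{\calD}{M}}-1\|_\infty$, the bound $(O_k(\alpha h/n))^{h/2}$ on the probability that $\rM$ covers a weight-$h$ support singleton-freely, summation against $\U{C}{s}{h}{nk'}$, and Markov. The paper simply cites \cite[Lemma~5.18]{CGS+22-linear-space} for the summation step. Your final choice of the $(eq^2N/h)^{h/2}$ branch for large $h$ is the right one. However, the ratio of that series depends only on $\alpha_0,q,k,k'$, so the tail is small because it starts at $h>s\ge4\log_q(3/\delta)$, or at $h\ge\tau' n$ with $n$ large, and not because the ratio is below $\delta^2$.

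The one step to repair is your count. For $k\ge3$, partitions of $\supp(u)$ into blocks of size at least $2$ are not $(O(h))^{h/2}$ in number; triples alone give about $h^{2h/3}$. Instead, weight the at most $\binom{h}{r}r^{h-r}$ partitions with $r$ blocks by their exact probability $\le m^r(n-h)^{-h}$. Since the columns of $\rM$ are independent injections, no approximate independence is needed. Then use $h\le km$ to recover $(O_k(\alpha h/n))^{h/2}$. Alternatively, as the paper does, fix $M$ and compare $\sum_{d\le h/2}\binom{m}{d}\binom{dk}{h}$ with $\binom{n}{h}$.
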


Observe that the parameter $b$ in \Cref{lemma:boundedness implies uniformity} can be eliminated, if the inequality is replaced with  $4\log_q \max\{3/\delta,\|\Den{\calD}\|_\infty\} \le s \le \tau n$.

\begin{observation}\label{obs:boundedness implies uniformity}
    For $\calZ \in \Delta(\Z_q^{m \times k})$,
    the condition $\|\Den{\calZ} - 1\|_\infty \le \delta$ is equivalent to that for every $Z_0 \in \Z_q^{m \times k}$, it holds that $1-\delta \le \Den{\calZ}(Z_0) \le 1+\delta$.
    In this case, we also have $\norm{ \calZ - \Unif(\Z_q^{m \times k}) }_{\mathrm{tvd}} \le \delta$,
    and for every function $f : \Z_q^{m \times k} \to \R$,
    $\abs*{ \Exp_{\rZ \sim \calZ}[f(\rZ)] - \Exp_{\rZ\in\Z_q^{m \times k}}[f(\rZ)] } \le \delta \|f\|_1$.
\end{observation}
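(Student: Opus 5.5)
The plan is to unpack the definitions directly, using only \Cref{prop:prelim:density reweight} and the normalization $\Den{\calZ}(Z_0) = q^{mk}\,\calZ(Z_0)$. Write $N \coloneqq mk$ and $\calU \coloneqq \Unif(\Z_q^{m\times k})$. Note that $\Den{\calU} \equiv 1$.

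\emph{The equivalence.} By definition, $\|\Den{\calZ}-1\|_\infty = \max_{Z_0} |\Den{\calZ}(Z_0)-1|$. So $\|\Den{\calZ}-1\|_\infty \le \delta$ holds exactly when $-\delta \le \Den{\calZ}(Z_0)-1 \le \delta$ for every $Z_0$, which is the pointwise statement $1-\delta \le \Den{\calZ}(Z_0) \le 1+\delta$. Nothing more is needed here.

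\emph{Total variation.} Expand the total variation distance and factor out $q^{-N}$:
\[
\|\calZ-\calU\|_{\mathrm{tvd}} = \tfrac12 \sum_{Z_0} \bigl|\calZ(Z_0) - q^{-N}\bigr| = \tfrac12\, \Exp_{\rZ \in \Z_q^{m\times k}} \bigl|\Den{\calZ}(\rZ) - 1\bigr| .
\]
Each term in the last expectation is at most $\delta$, so the distance is at most $\delta/2 \le \delta$.

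\emph{Expectations.} Apply \Cref{prop:prelim:density reweight} with $\phi = f$. This gives $\Exp_{\rZ\sim\calZ}[f(\rZ)] = \Exp_{\rZ\in\Z_q^{m\times k}}[\Den{\calZ}(\rZ) f(\rZ)]$. Subtracting the uniform average yields
\[
\Exp_{\rZ\sim\calZ}[f(\rZ)] - \Exp_{\rZ}[f(\rZ)] = \Exp_{\rZ}\bigl[(\Den{\calZ}(\rZ)-1) f(\rZ)\bigr].
\]
Move the absolute value inside and bound $|\Den{\calZ}-1|$ pointwise by $\delta$. This gives at most $\delta\,\Exp_{\rZ}|f(\rZ)| = \delta\|f\|_1$, using the paper's normalized convention for $\|\cdot\|_1$.

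There is no real obstacle in this argument. The only point needing care is the normalization: $\|\cdot\|_1$ is an average rather than a sum, and the density is scaled by $q^{N}$. These two conventions are exactly what make the constant come out as $\delta$ with no dependence on $N$.
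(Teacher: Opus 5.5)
Your proof is correct and is exactly the direct unpacking of definitions the paper intends; the paper states this as an observation and gives no proof. Your total variation bound of $\delta/2$ under the $\tfrac12$-normalized convention gives $\le \delta$ under either normalization. Likewise, your expectation bound holds whether $\|f\|_1$ is read as an average or as a sum.
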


\begin{restatable}[Inductive step]{lemma}{inductivestep}\label{lem:inductive_step}
    For every $k,q,k' \in \N$ there exist $\alpha_0 > 0$ and $C_0 < \infty$ such that for every $C \ge C_0$ and $\delta \in (0,1/2)$ there exist $\sigma \in (0,1)$ and $C'' > 0$ such that the following holds.

    For every injection $\phi : [k] \to [k']$ and one-wise uniform distribution $\calY \in \OWU{\Z_q^k}$,
    for every $n,b,b',s,s',m \in \N$ satisfying $m \le \alpha_0 n$, $0 < b,b',s
    < \sigma n$ and $b + b' + \log_q(1-1/\delta) \le s$ and every $(C,s)$-bounded distribution $\calD \in \Dist{\Z_q^{n \times k'}}$ satisfying $\|\Den{\calD}\|_\infty \le q^b$, we have that \[
    \Pr_{\rM\in\PHM{m}{k}{n}} \bracks*{
    \begin{aligned}
    &\forall \calB \subset \Z_q^{m \times k} \text{ such that } |\calB| \ge q^{m\cdot k-b'} , \\
    &\hspace{0.5in} \PostShort{\calD}{\rM}{\calB}\text{ is }(C'',s)\text{-bounded and } \|\Den{\PostShort{\calD}{\rM}{\calB}}\|_\infty \le \frac1{1-\delta} \cdot q^{b+b'}
    \end{aligned}}
    \ge 1-5\delta\, \]
    where $\rM \in \PHM{m}{k}{n}$ is a randomly sampled $k$-partite hypermatching.
\end{restatable}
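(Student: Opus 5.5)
Our plan is to prove the two conclusions separately: the $\ell_\infty$ bound on the posterior density, and then its Fourier boundedness. The latter follows the outline in the introduction: the product structure of $\Den{\calY^{\otimes m}}$, a singleton-free level inequality, and a covering argument with structured centers.

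\textbf{Step 1: the $\ell_\infty$ bound and the density formula.}
By Bayes' rule and \Cref{prop:convolution with density},
\[
\Den{\PostShort{\calD}{M}{\calB}}(X_0) = \Den{\calD}(X_0)\cdot \frac{g_{\calB}(\Pi^\phi_M X_0)}{\Pr_{\rZ\sim\InputShort{\calD}{M}}[\rZ\in\calB]},
\qquad
g_\calB(w) \coloneqq \Pr_{\rY\sim\calY^{\otimes m}}[w-\rY\in\calB] \le 1 .
\]
Since $s\ge b+\log_q(3/\delta)$-type conditions hold (up to the slightly garbled hypothesis), we will apply \Cref{lemma:boundedness implies uniformity} to $\calD$ with an appropriate $\delta$. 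This shows that with probability $\ge 1-\delta$ over $\rM$, the input distribution has density within $1\pm\delta$ of uniform. For such $M$ the denominator is at least $(1-\delta)|\calB|q^{-mk}\ge (1-\delta)q^{-b'}$ simultaneously for all $\calB$, which gives the $\ell_\infty$ bound $\frac{1}{1-\delta}q^{b+b'}$. Write $\nu_\calB$ for the reciprocal of the denominator, so $\nu_\calB \le q^{b'}/(1-\delta)$.

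\textbf{Step 2: Fourier expansion.}
Write $g_\calB = \Den{\calB}*\Den{\calY^{\otimes m}}\cdot(|\calB|/q^{mk})$, up to a sign flip on $\calY$. By \Cref{prop:prelim:projection,prop:prelim:convolution}, each coefficient $\hat{\Den{\mathrm{Post}}}(W)$ is $\nu_\calB$ times a sum over $U$ with $\supp(W-U)$ inside $\phi(\supp M)$. The summand is $\hat{\Den{\calD}}(U)$ times $\hat{\mathbf 1_\calB}(V)$ times $\prod_j\hat{\Den\calY}(-V_j)$, where $V=\Pi^\phi_M(W-U)$ viewed in $\Z_q^{m\times k}$. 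By \Cref{prop:prelim:fourier of one-wise,prop:prelim:fourier of product}, only $V$ whose nonzero rows all have support at least $2$ contribute, and each factor has modulus at most $1$. Thus it suffices to bound, for each $h$,
\[
\sum_{U}|\hat{\Den\calD}(U)|\cdot\nu_\calB\!\!\sum_{\substack{V \text{ singleton-free}\\ \wt(U+\Pi^{\intercal}V)=h}}|\hat{\mathbf 1_\calB}(V)| .
\]

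\textbf{Step 3: the singleton-free $(h,\ell)$ level inequality.}
We will prove that for $V$ with $h_0$ nonzero entries in $\ell$ nonzero rows, all of size $\ge2$,
\[
\sum |\hat{\mathbf 1_\calB}(V)|^2 \le (|\calB|q^{-mk})^2\,(Cb'/(h_0-\ell))^{h_0-\ell}.
\]
The proof uses the custom operator $S_\rho$, the comparison $\|S_\rho^{\otimes m}g\|_2\le\|T^*_\rho g\|_2$ on singleton-free spectra, and row-wise hypercontractivity, optimizing over $\rho$. Cauchy--Schwarz, together with counting the singleton-free $V$ of each shape, then gives $\ell_1$ bounds. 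Since $h_0\ge 2\ell$, we have $h_0-\ell\ge h_0/2$, and this matches the exponent $h/2$ in $\U{C}{s}{h}{nk'}$.

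\textbf{Step 4: covering by structured centers (main obstacle).}
The weight constraint is on $W=U+\Pi^\intercal V$, not on $V$, so the bound must be transferred. For a fixed $U$ with $\wt(U)=h_1$, we group the $V$ by how they interact with $\supp(U)\cap\phi(\supp M)$. Replacing $U$ by a structured center $U'$ (agreeing with $W$ outside the touched edges) makes $W'$ singleton-free with $\wt(W')\le h$. We then bound the number of centers using the randomness of $\rM$: a random partite hypermatching with $m\le\alpha_0 n$ hits a fixed weight-$h_1$ set in few edges, which holds with high probability via a union bound over levels, using the boundedness of $\calD$. Summing $\U{C}{s}{h_1}{}\times(\text{level-}(h-h_1)\text{ bound})$ and absorbing $q^{b'}$ via $b'\le s$ should give a convolution-type estimate bounded by $\U{C''}{s}{h}{}$ for large $C''$. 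For $h>s$ we use the trivial bound $\binom{nk'}{h}q^h$. Failure probabilities add up to $\le5\delta$. The combinatorics of Step 4, especially keeping constants uniform in $h$, is where we expect the real difficulty.
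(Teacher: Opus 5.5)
\textbf{Steps 1--3 match the paper; the gap is Step 4.} Your Steps 1--3 follow the paper's route: \Cref{lemma:inductive step:posterior form}, the bound $\nu_{M,\calB}\le 1/(1-\delta)$ via \Cref{lemma:boundedness implies uniformity}, the restriction to singleton-free $\calB$-frequencies, and the $(h,\ell)$ inequality via the modified multiplier dominated by row noise. Step~4 is where the lemma's real content lies, and the accounting you sketch there is wrong.

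Write $U$ for the $\calD$-frequency and $V\in\SF$ for the $\calB$-frequency, with $W=U+(\Pi^\phi_M)^\intercal V$, $|\supp(W)|=h$ and $u=|\supp(U)|$.
\begin{itemize}
\item Nothing forces $u\le h$, because entries of $U$ inside $\phi(\supp(M))$ can be cancelled by $V$.
\item Split the support of $\Pi^\phi_M U$ into $\kappa$ entries that sit alone in their row and $\eta$ entries in rows containing at least two. A lone entry always leaves a nonzero row in $\Pi^\phi_M W$, since $V$ is singleton-free. The $\eta$ entries, however, can be cancelled at no cost.
\item After re-centering (zeroing the offending rows by a correction supported on the nonzero rows of $\Pi^\phi_M U$), the $\calB$-frequency sits at level at most $h-u+\eta$, not $h-u$. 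This re-centering gives at most $q^{ku}$ centers, deterministically; no randomness is needed to count centers.
\end{itemize}
So your convolution $\sum_{h_1}\mathrm{U}_{C,s}(h_1)\cdot(\text{level } h-h_1)$ is only the $\eta=0$ slice, and it says nothing about $u>h$.

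\textbf{The missing estimate.} What pays for these $\eta$ extra levels, and for $u>h$, is a bound on the random matching: for fixed $U$ of weight $u$, $\Pr_{\mathbf{M}}[\kappa,\eta]\le\alpha^{\kappa}C^{u}(u/n)^{\eta/2}$ (\Cref{lemma:inductive step:combinatorial bound}).
\begin{itemize}
\item For $u\le s$, multiply $\U{C}{s}{u}{n}\approx(\sqrt{sn}/u)^{u/2}$, this probability, and the level bound $(\sqrt{sm}/(h-u+\eta))^{(h-u+\eta)/2}$. The powers of $s$ and $n$ combine to $(sn)^{h/4}(s/n)^{\eta/4}$.
\item Since $\eta\ge u-h$, the factor $(s/n)^{\eta/4}\le\sigma^{(u-h)/4}$ absorbs the $q^{ku}C^u$ losses. \Cref{lemma:inductive step:casework} carries this out over all ranges of $u$.
\end{itemize}
Your claim that ``a random matching hits a fixed set in few edges w.h.p.'' is not this statement. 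A random matching meets a fixed $u$-set in about $\alpha u$ edges; what must be rare is an edge meeting the set at least twice. Also, a per-$U$ high-probability statement cannot be made simultaneous over the Fourier support of $\calD$, which may be exponentially large. The paper instead bounds the expectation over $\mathbf{M}$ of the $|\hat{\Den{\calD}}(U)|$-weighted sum (\Cref{lemma:inductive step:in expectation}). It then applies Markov at each level $h$ with slack $\delta^{-h}$, so the failure probabilities sum geometrically.

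\textbf{Smaller points.}
\begin{itemize}
\item Apply the level inequality with $\|\Den{\calB}\|_\infty\le q^{s}$ rather than $q^{b'}$. It is only proved for $h-\ell<\theta b$ with $q^b\ge\|g\|_\infty$, and $b'$ may be far below the levels $h\le s$ you need.
\item No factor $q^{b'}$ may survive: the factor $|\calB|q^{-mk}$ in your Step~3 bound cancels $\nu_\calB$ up to $1/(1-\delta)$. A genuine $q^{b'}$ could not be absorbed at low levels.
\item For $h>s$, a raw count $\binom{nk'}{h}q^h$ is too weak. You need the Parseval/Cauchy--Schwarz bound $(eq^2nk'/h)^{h/2}$ (\Cref{lemma:trivial bound}), which uses $\|\Den{\PostShort{\calD}{M}{\calB}}\|_\infty\le q^s\le q^h$ from Step~1.
\item For $s<h<\sigma n$, the other branch $(C''\sqrt{nk'/h})^{h/2}$ of the minimum must still come from the main estimate, not from the trivial bound.
\end{itemize}
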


\subsection{Hybrid lemma: Proof of \Cref{lem:hybrid}}\label{sec:hybrid}

In this section, we formally prove \Cref{lem:hybrid}.

\begin{proof}[Proof of \Cref{lem:hybrid}]
Let $\Player_1,\dots,\Player_T$ be the message functions corresponding to some deterministic protocol for DIHP.
Recall that the input of $\Player_t$ (in addition to the prior message) is a $k$-partite hypermatching $\rM_t$ and a noisy signal $\rZ_t$.
(We also have fixed $\phi_t, \cY_t$, which are, respectively the injective map and the one-wise uniform distribution corresponding to $\Player_t$.)

\paragraph{Posterior set and distribution.}

We define some auxiliary random variables determined by the randomness of $(\rM_{1:t},\rS^Y_{1:t})$.
For $t \in [T]$, we define:
\begin{align*}
\rcalB_t &\coloneqq \{Z \in \Z_q^{m\times k} : \rS^Y_t = \Player_t(\rM_{1:t},\rS^Y_{1:t-1},Z)\}
\intertext{and, recursively:}
\rcalD_t &\coloneqq \Post{\rcalD_{t-1}}{\phi_t}{\calY_t}{\rM_t}{\rcalB_t}.
\end{align*}
where $\rcalD_0 = \calD_0 = \Unif\{\Z_q^{n\times k'}\}$.
$\rcalB_t$ denotes the set of all inputs $Z_t$ that would
result in the message $\rS^Y_t$ 
(conditioned on the previous messages $\rS^Y_1,\dots,\rS^Y_{t-1}$ and the hypermatchings $\rM_1,\dots,\rM_t$),
while $\rcalD_t$ denotes the posterior distribution on $\rX^*$ conditioned on $\rM_{1:t}$, $\rS^Y_{1:t}$ in the following formal sense:
\begin{claim}\label{clm:posterior_distribution}
    For every $0\le t\le T$, $X_0 \in \Z_q^{n\times k'}$, and fixed $M_{1:t}$, $S^Y_{1:t}$, we have \[
    \Pr_\Omega[\rX^*=X_0 \mid M_{1:t}, S^Y_{1:t}] = \calD_t(X_0) \]
    (here $M_{1:t}$ and $S^Y_{1:t}$ are both tuples of length $t$,
    so that $M_{1:0}, S^Y_{1:0}$ denote empty tuples).
	In particular, for every fixed $M_{1:t}$ and $S_{1:t-1}^Y$, the marginal distribution of $\rS^Y_t$ is the same as the marginal distribution of  $\Player_t(M_{1:t},S_{1:t-1}^Y,\rZ)$, where $\rZ=\Pi^{\phi_t}_{M_t} \rX - \rY_t$ with $\rX\sim\calD_{t-1}$ and $\rY_t \sim \cY_t^{\otimes m}$.
\end{claim}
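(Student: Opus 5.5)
We argue by induction on $t$, proving the first statement (the posterior formula) and then deducing the second from it. The base case $t=0$ is immediate: with no conditioning, $\rX^*$ is uniform on $\Z_q^{n\times k'}$, which is $\calD_0$.

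For the inductive step, fix $M_{1:t}$ and $S^Y_{1:t}$, and assume the claim holds for $t-1$, i.e.\ $\Pr[\rX^*=X_0\mid M_{1:t-1},S^Y_{1:t-1}]=\calD_{t-1}(X_0)$. We use three independence facts under $\Omega$:
\begin{itemize}
\item $\rM_t$ is independent of $(\rX^*,\rM_{1:t-1},\rZ_{1:t-1})$.
\item $\rY_t$ is independent of $\rM_{1:t}$, $\rX^*$, and $\rY_{1:t-1}$.
\item $\rS^Y_{1:t-1}$ is a deterministic function of $(\rM_{1:t-1},\rZ_{1:t-1})$.
\end{itemize}
From the first and third facts, additionally conditioning on $\rM_t=M_t$ leaves the posterior of $\rX^*$ equal to $\calD_{t-1}$. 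From the second and third facts, under this conditioning $\rY_t\sim\calY_t^{\otimes m}$ independently of $\rX^*$. The reason is that $\rY_t$ is independent of $(\rX^*,\rM_{1:t},\rY_{1:t-1})$, and $(\rZ_{1:t-1},\rS^Y_{1:t-1})$ is a function of these variables.

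Next, given the fixed prefix $(M_{1:t},S^Y_{1:t-1})$, the event $\rS^Y_t=S^Y_t$ is exactly the event $\rZ_t\in\calB_t$. Here $\calB_t$ is the fixed set determined by the prefix and the deterministic message function $\Player_t$, and $\rZ_t=\Pi^{\phi_t}_{M_t}\rX^*-\rY_t$. So conditioning on $S^Y_t$ means sampling $\rX\sim\calD_{t-1}$ and an independent $\rY\sim\calY_t^{\otimes m}$, then conditioning on $\Pi^{\phi_t}_{M_t}\rX-\rY\in\calB_t$. By \Cref{def:posterior distribution}, this is $\Post{\calD_{t-1}}{\phi_t}{\calY_t}{M_t}{\calB_t}=\calD_t$. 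This completes the induction. We may assume the conditioning events have positive probability; otherwise there is nothing to prove.

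For the ``in particular'' statement, apply the above with $t-1$ in place of $t$. After additionally fixing $M_t$, the pair $(\rX^*,\rY_t)$ is distributed as $\calD_{t-1}\times\calY_t^{\otimes m}$. Hence $\rZ_t$ has the law of $\Pi^{\phi_t}_{M_t}\rX-\rY_t$, and $\rS^Y_t=\Player_t(M_{1:t},S^Y_{1:t-1},\rZ_t)$ has the stated marginal distribution.

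The only delicate step is the conditional independence bookkeeping: checking that conditioning on earlier messages does not correlate $\rY_t$ with $\rX^*$ or with $\rM_t$. This is routine, because the earlier messages depend only on $\rX^*$, $\rM_{1:t-1}$, and $\rY_{1:t-1}$.
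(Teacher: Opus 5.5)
Your proposal is correct and follows essentially the same route as the paper. Both argue by induction on $t$: use the hypothesis to replace the uniform prior conditioned on the first $t-1$ messages by $\calD_{t-1}$, rewrite the event $\rS^Y_t = S^Y_t$ as $\Pi^{\phi_t}_{M_t}\rX - \rY_t \in \calB_t$, and invoke the definition of $\mathrm{Post}$. Your explicit independence bookkeeping for $\rM_t$ and $\rY_t$, and your derivation of the ``in particular'' statement, fill in steps the paper leaves implicit.
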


\begin{proof}
  We will prove by induction. The base case when $t=0$ is trivially satisfied since $\cD_0$ is the uniform distribution.
  Inductively, we have:
   \begin{align*}
  &\Pr_\Omega [\rX^*=X_0 \mid M_{1:t}, S^Y_{1:t}] \\
  &= \Pr_{\substack{\rX\in\Z_q^{n\times k'}, \\ \forall t'\in[t],~\rY_{t'}\sim \cY_{t'}^{\otimes m}}}[\rX=X_0 \mid  \forall t'\in [t], ~S^Y_{t'} = \Player_{t'}(M_{1:t'},S^Y_{1:t'-1},\Pi^{\phi_{t'}}_{M_{t'}}\rX - \rY_{t'})] \\
  &= \Pr_{\substack{\rX\sim  \calD_{t-1}, \\ \rY_{t}\sim \cY_{t}^{\otimes m}}}[\rX=X_0 \mid S^Y_{t} = \Player_{t}(M_{1:t},S^Y_{1:t-1},\Pi^{\phi_{t}}_{M_{t}}\rX - \rY_{t})] \tag{induction hypothesis} \\
    &= \Pr_{\substack{\rX\sim  \calD_{t-1}, \\ \rY_{t}\sim \cY_{t}^{\otimes m}}}[\rX=X_0 \mid \Pi^{\phi_{t}}_{M_{t}}\rX - \rY_{t}\in \calB_t]
    = \calD_t(X_0)\, . \qedhere
\end{align*}
\end{proof}

\paragraph{Parameters.} The required parameters that we need to specify for \Cref{lem:hybrid} are $\alpha_0, \theta,n_0$, and $\delta'$. In addition to setting these parameters below, we also set the following parameters required in the proof: (i) $\{C_t\}_{0\le t\le T}$ and $s$ to quantify the boundedness of the posterior distributions, (ii) $s^*$ to upper bound the length of each message, and (iii) $b$ to upper bound the $\ell_{\infty}$ norms of the density functions of the posterior distributions, which is required to apply \Cref{lemma:boundedness implies uniformity,lem:inductive_step}.

Given $k,q$, and $k'$, let $\alpha_{0,1}$ be the $\alpha_0(k,q,k')$ parameter from \Cref{lemma:boundedness implies uniformity} and $\alpha_{0,2}$ be the $\alpha_0(k,q,k')$ parameter from \Cref{lem:inductive_step}. We set $\alpha_0$ to be $\min\{\alpha_{0,1},\alpha_{0,2}\}$. Let $C_0$ be the $C_0(k,q,k')$ parameter from \Cref{lem:inductive_step}.
Now, given $T$ and $\delta$, we need to specify $\tau>0$, $n_0<\infty$, and $\delta'$.
Along the way, we also specify the intermediary parameters that we mentioned above. Observe that \Cref{lem:inductive_step} takes as input parameters $k,q,k'\delta$, and $C \geq C_0$, and specifies parameters $C'' = C''(k,q,k',\delta,C)$ and $\sigma=\sigma(k,q,k',\delta,C)$ for which the lemma holds. Observe also that \Cref{lemma:boundedness implies uniformity} takes as input parameters $k,q,k'\delta$, and $C \geq C_0$, and specifies $\tau=\tau(k,q,k',\delta,C)$ for which the lemma holds.
We set $\delta'\coloneqq \delta/(30T)$.
We define $\{C_t\}_{0\le t\le T}$ recursively as $C_t \coloneqq \max\{C_0,C''(k,q,k',\delta',C_{t-1})\}$.
For $t\in [T]$, let $\sigma_t \coloneqq  \sigma(k,q,k',\delta',C_t)$, $\tau_t \coloneqq  \tau(k,q,k',\delta',C_t)$, and $\gamma \coloneqq  \min\{\{\sigma_t\}_{t\in [T]},\{\tau_t\}_{t\in [T]}\}$.
We set $s\coloneqq \gamma n$, $b\coloneqq s/2T$, $b'\coloneqq b$, $\theta\coloneqq \gamma/8T$, and $n_0\coloneqq (1/\gamma) \cdot \max\{4\log_q(3/\delta'),8T\log_q(1/(1-\delta'))\}$. Let $m,n\in \mathbb{N}$ be such that $n\ge n_0, m\le \alpha_0 n$.
We set $s^*\coloneqq \theta n$.

\paragraph{Events.}
\newcommand{\ePost}{\mathrm{post}}
\newcommand{\eIn}{\mathrm{in}}

We define the events $\cE_1,\dots,\cE_T$ (in the probability space $\Omega^Y$) as follows.
For $t\in [T]$, we define $\cE_t^\ePost$ to be the event that the distribution $\rcalD_{t-1}$ is $(C_t,s)$-bounded and $\|\Den{\rcalD_{t-1}}\|_\infty\le q^{2(t-1)b}$;
$\cE_t^\eIn$ to be the event that $\|\Input{\phi_t}{\calY_t}{\rcalD_{t-1}}{\rM_t} - \Unif(\Z_q^{m\times k})\|_\tvd\le \delta'$;
and $\cE_t \coloneqq \cE_{t-1}\cap \cE_t^\ePost\cap \cE_t^\eIn$, where $\calE_0$ is the trivial event occuring with probability $1$.

\paragraph{Condition 1.} It immediately follows from the above definitions that for all $t\in [T]$, $\cE_t$ is fully determined by $\rM_{1:t}$ and $\rS^Y_{1:t-1}$.
(Further, $\calE^\ePost_t$ itself is determined only by $\rM_{1:t-1}$ and $\rS^Y_{1:t-1}$.)

\paragraph{Condition 2.}
It follows also from the definitions that $\cE_{t}\implies \cE_{t-1}$, for all $t \in [T]$.
We upper bound the failure probabilities of the events using induction.
For the base case, we first prove that $\Pr[\cE_1]= 1$. Since $\cD_0$ is the uniform distribution, $\hat{\Den{\cD_0}(u)}=1$ if $u=0^{n\times k'}$, and $\hat{\Den{\cD_0}(u)}=0$ otherwise.
Therefore, $\cD_0$ is trivially $(C_0,s)$-bounded. 
We also have $\|\Den{\cD_0}\|_\infty\le 1$.
For every $M\in  \PHM{m}{k}{n}$ and injection $\phi:[k]\to [k']$, $\Pi^\phi_M \rX$ is uniformly distributed in $\Z_q^{m \times k}$, since $\rX\sim \cD_0$.
Therefore, $\Pi^\phi_M \rX - \rY$ is also uniformly distributed for $\rY\sim \cD$ for any distribution $\cD\in \Delta(\Z_q^{m \times k})$.
Thus,  $\cE_1$ holds with probability $1$. 
 
We now prove that for any $t\in [T-1]$, $\Pr[\overline{\cE_{t+1}}\mid \cE_{t}]\le \delta/(2T)$.
Specifically, we will show 
\begin{align}
&\Pr\bracks*{ \overline{\cE_{t+1}^\eIn}\mid \cE_{t+1}^\ePost \wedge \cE_t }\le \delta' \,, \label{eqn:err1}  \\
&\Pr\bracks*{ \overline{\cE_{t+1}^\ePost}\mid \cE_t^\ePost \wedge \calE_{t-1} } \le 7\delta'\, .\label{eqn:err2}
\end{align}
We proceed modulo these. 
Noting that \[
    \Pr\bracks*{ \calE_{t+1} \mid \calE_t }
    = \Pr\bracks*{ \calE^\ePost_{t+1} \wedge \calE^\eIn_{t+1} \mid \calE_t }
    = \Pr\bracks*{ \calE^\eIn_{t+1} \mid \calE^\ePost_{t+1} \wedge \calE_t } \cdot \Pr\bracks*{ \calE^\ePost_{t+1} \mid \calE_t }, \]
the elementary inequality $1-ab \le (1-a) + (1-b)$ (for $a, b \le 1$) implies
\begin{equation}\label{eq:elem}
    \Pr\bracks*{ \overline{\cE_{t+1}} \mid \cE_t } \le \Pr\bracks*{ \overline{\cE_{t+1}^\eIn}\mid \calE^\ePost_{t+1} \wedge \cE_t } + \Pr\bracks*{ \overline{\cE_{t+1}^\ePost}\mid \cE_t }.
\end{equation}
We can upper-bound the second term on the RHS of \Cref{eq:elem} by (using that $\calE_t \implies \calE_{t-1}$):
\begin{multline*}
    \Pr\bracks*{ \overline{\cE_{t+1}^\ePost}\mid \cE_t \wedge \calE_{t-1} }
    = \frac{\Pr\bracks*{ \overline{\cE_{t+1}^\ePost} \wedge \cE_t \mid \calE_{t-1} }}{\Pr[\calE_t \mid \calE_{t-1}]}
    \le \frac{\Pr\bracks*{ \overline{\cE_{t+1}^\ePost} \wedge \calE_t^\ePost  \mid \calE_{t-1}}}{\Pr[\calE_t \mid \calE_{t-1}]} \\
    = \frac{\Pr\bracks*{ \overline{\cE_{t+1}^\ePost} \mid \calE_t^\ePost \wedge \calE_{t-1} } \cdot \Pr\bracks*{ \calE_t^\ePost \mid \calE_{t-1} }}{\Pr[\calE_t \mid \calE_{t-1}]}
    \le \frac{\Pr\bracks*{ \overline{\cE_{t+1}^\ePost} \mid \calE_t^\ePost \wedge \calE_{t-1} }}{\Pr[\calE_t \mid \calE_{t-1}]}
\end{multline*}
where the first inequality used $\calE_t \implies \calE_t^\ePost$.
Now, we conclude using \Cref{eq:elem,eqn:err1,eqn:err2}:
\[
\Pr\bracks*{ \overline{\cE_{t+1}} \mid \cE_t } \le \delta' + (7\delta')/(1 - \delta)
\le 15\delta' \, \le \delta/(2T), \]
proving Condition 2.

\paragraph{Proof of \Cref{eqn:err1}.}
We prove that for every fixing of $M_{1:t},S^Y_{1:t}$ such that $\cE_{t+1}^\ePost$ and $\cE_t$ hold, the event $\cE_{t+1}^\eIn$ is likely.
The probability is over the choice of $\rM_{t+1}$, the one additional variable needed to determine $\calE_{t+1}^\eIn$;
the conditional distribution of $\rM_{t+1}$ is still uniformly random.

By definition of $\cE_{t+1}^\ePost$, we have that $\calD_t$ is $(C_t,s)$-bounded and $\|\Den{\calD_t}\|_\infty\le q^{2tb}$.
Therefore, applying the boundedness implies uniformity lemma (\Cref{lemma:boundedness implies uniformity}) with parameters $(\delta,C,\tau,\cD,b)_{\rm{\Cref{lemma:boundedness implies uniformity}}}=(\delta',C_t,\gamma,\calD_t,2tb)$ and \Cref{obs:boundedness implies uniformity}, we get \[
\Pr_{\rM_{t+1}} \bracks*{\|\Input{\phi_t}{\calY_t}{\calD_t}{\rM_{t+1}} - \Unif(\Z_q^{m\times k})\|_\tvd\le \delta'}\ge 1 -\delta'. \]

\paragraph{Proof of \Cref{eqn:err2}.}
We prove that for every fixing of $M_{1:t-1},S^Y_{1:t-1}$ such that $\calE^\ePost_t$ and $\calE_{t-1}$ holds,
the event $\calE^\ePost_{t+1}$ is likely.
The probability is over the choice of $\rM_t,\rX^*,\rZ_t$, and the conditional marginal distribution of $\rM_t$ is still uniformly random.

By definition of $\calE^\ePost_t$, $\calD_{t-1}$ is $(C_{t-1},s)$ bounded and $\|\Den{\calD_{t-1}}\|_\infty \le q^{2(t-1)b}$.
Applying the inductive step (\Cref{lem:inductive_step}) with parameters $(C,\delta,\sigma,C'',\cD,b)_{\rm{\Cref{lem:inductive_step}}} = (C_{t-1},\delta',\gamma,C_t,\calD_{t-1},2(t-1)b)$, we have \[
    \Pr_{\rM_t} \bracks*{
    \begin{aligned}
    &\forall \calB \subset \Z_q^{m \times k} \text{ such that } |\calB| \ge q^{m \cdot k-b'} , \\
    &\hspace{0.5in} \PostShort{\calD_{t-1}}{\rM_t}{\calB}\text{ is }(C_t,s)\text{-bounded and } \|\Den{\PostShort{\calD_{t-1}}{\rM_t}{\calB}}\|_\infty \le q^{2tb}
    \end{aligned}} 
    \ge 1-5\delta'\, , \]
where we used the fact for our choice of parameters $b,b'$, and $\delta'$, $q^{2(t-1)b+b'+\log_q(1/(1-\delta'))} \le q^{2tb}$.

Applying the boundedness implies uniformity lemma (\Cref{lemma:boundedness implies uniformity}) with the choice of parameters $(\delta,C,\tau,\cD,b)_{\rm{\Cref{lemma:boundedness implies uniformity}}}=(\delta',C_{t-1},\gamma,\calD_{t-1},2(t-1)b)$,
we have \[
\Pr_{\rM_t} \bracks*{ \|\Den{\InputShort{\calD_{t-1}}{\rM_t}} -1 \|_\infty \le \delta'}~\ge~1-\delta'. \]
Therefore, with probability at least $1-6\delta'$ (over the randomness of $\rM_t$), we have
\begin{equation}\label{eqn:good matching}
    \Pr_{\rM_t} \bracks*{
    \begin{aligned}
    &\forall \calB \subset \Z_q^{m \times k} \text{ such that } |\calB| \ge q^{m \cdot k-b'} , \\
    &\hspace{0.5in} \PostShort{\calD_{t-1}}{\rM_t}{\calB}\text{ is }(C_t,s)\text{-bounded and } \|\Den{\PostShort{\calD_{t-1}}{\rM_t}{\calB}}\|_\infty \le q^{2tb} \\
    &\text{and } \|\Den{\InputShort{\calD_{t-1}}{\rM_t}} -1\|_\infty \le \delta'
    \end{aligned}} 
    \ge 1-6\delta'.
\end{equation}

Now \emph{fix} such an $M_t$.
By \Cref{clm:input fiber is large} below, with probability at least $1-\delta'$, we have $|\rcalB_t| \ge q^{m \cdot k - b'}$, which when applied to \Cref{eqn:good matching} gives that $\rcalD_t$ is $(C_t,s)$-bounded and $\|\Den{\rcalD_t}\|_\infty  \le q^{2tb}$.
Thus, $\calE^\ePost_t$ holds, as desired.

\begin{claim}\label{clm:input fiber is large}
    For any $t\in [T]$, consider any fixed $M_{1:t}$ and $S^Y_{1:t-1}$ such that $\calD_{t-1}$ is $(C_{t-1},s)$-bounded and $\|\Den{\InputShort{\calD_{t-1}}{M_t}} - 1\|_\infty \le \delta'$.
    Then: \[
        \Pr[|\rcalB_t|\ge q^{m \cdot k-b'} \mid M_{1:t}, S^Y_{1:t-1}] \ge 1-\delta'. \]
\end{claim}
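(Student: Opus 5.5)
Given the fixed $M_{1:t}$ and $S^Y_{1:t-1}$, the sets $\calB$ arising as $\rcalB_t$ are exactly the fibers of the deterministic map $Z \mapsto \Player_t(M_{1:t},S^Y_{1:t-1},Z)$. They form a partition of $\Z_q^{m\times k}$ into at most $2^{s^*}$ parts, where $s^* = \theta n$ bounds the message length. By \Cref{clm:posterior_distribution}, conditioned on $M_{1:t},S^Y_{1:t-1}$, the input $\rZ_t$ has distribution $\calZ \coloneqq \InputShort{\calD_{t-1}}{M_t}$. Moreover, $\rcalB_t$ is exactly the part of this partition containing $\rZ_t$. The plan is a standard ``small fibers carry little mass'' counting argument.

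\textbf{Step 1 (pointwise density bound).} By hypothesis $\|\Den{\calZ}-1\|_\infty \le \delta'$. Hence every $Z_0$ has $\calZ(Z_0) \le (1+\delta') q^{-mk}$, and so $\Pr[\rZ_t \in \calB] \le (1+\delta')|\calB|/q^{mk}$ for every set $\calB$.

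\textbf{Step 2 (union over small fibers).} Call a fiber small if $|\calB| < q^{mk-b'}$. There are at most $2^{s^*}$ small fibers. Summing Step 1 over them gives
\[
\Pr\bracks*{|\rcalB_t| < q^{mk-b'} \mid M_{1:t},S^Y_{1:t-1}} \le (1+\delta')\, 2^{s^*} q^{-b'} \le 2\, q^{\theta n - b'} .
\]

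\textbf{Step 3 (parameters).} With the choices $b' = b = \gamma n/(2T)$ and $\theta = \gamma/(8T)$, we have $\theta n - b' \le -\tfrac{3}{4}b' = -\tfrac{3\gamma n}{8T}$. Since $n \ge n_0$, we get $2q^{-3\gamma n/(8T)} \le \delta'$. This follows from $n_0 \ge 4\log_q(3/\delta')/\gamma$ after a modest adjustment of constants; if needed, I would enlarge $n_0$ slightly, which is harmless.

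The only delicate point is the bookkeeping in Step 3: checking that $\theta$ is small enough relative to $b'$, and that $n_0$ absorbs the factor $2/\delta'$. The boundedness hypothesis on $\calD_{t-1}$ is not used beyond what already gave the near-uniformity of $\calZ$.
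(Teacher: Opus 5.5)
Your proposal is correct and is essentially the paper's argument. The paper uses the same pointwise density bound to get $\Exp[1/|\rcalB_t|]\le(1+\delta')\,2^{s^*}q^{-mk}$ and then applies Markov's inequality, which gives exactly your bound $(1+\delta')\,2^{s^*}q^{-b'}$ on the total mass of the small fibers.

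You are also right to flag the bookkeeping. One needs $q^{b'-s^*}\ge(1+\delta')/\delta'$, i.e.\ $n_0$ of order $T\log_q(1/\delta')/\gamma$, and this is harmless since $n_0$ may depend on $T$ and $\delta$.
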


\begin{proof}
For fixed $M_{1:t}, S^Y_{1:t-1}$, the message $\rS^Y_t = \Player_t(M_{1:t},S^Y_{1:t-1},\rZ_t)$ partitions the space $\Z_q^{m \times k}$ into at most $2^{s^*}$ sets.
Let us denote this partition by $\calP$.
We now argue that in expectation, $\tfrac{1}{|\rcalB_t|}$ is small and then apply Markov's inequality to get the desired result.
We have
\begin{align*}
  \Exp_{\substack{\rX^*\sim  \calD_{t-1}, \\ \rY_{t}\sim \cY_{t}^{\otimes m}}}\bracks*{\tfrac{1}{|\rcalB_t|}} &= \sum_{P\in \calP} \tfrac{1}{|P|} \cdot \Pr_{\substack{\rX^*\sim  \calD_{t-1}, \\ \rY_{t}\sim \cY_{t}^{\otimes m}}}\bracks*{\Pi^{\phi_t}_{M_t}\rX^* - \rY_t \in P}  \\
  &\le \sum_{P\in \calP} \tfrac{1}{|P|} \cdot |P| ~ (1+\delta') ~ q^{-m\cdot k} \\
  &\le (1+\delta')~ q^{-m \cdot k} ~2^{s^*} \,. 
\end{align*}
Therefore, applying Markov's inequality:
\[ \Pr_{\substack{\rX^*\sim  \calD_{t-1}, \\ \rY_{t}\sim \cY_{t}^{\otimes m}}} [
|\rcalB_t| \ge (1-\delta') q^{m\cdot k} 2^{-s^*} (1+\delta')^{-1}] \ge 1-\delta'\,. \]
Finally, we use that $(1-\delta') q^{m\cdot k} 2^{-s^*} (1+\delta')^{-1} \ge q^{m\cdot k-b'}$ from the choice of $s^*$.
\end{proof}

\paragraph{Condition 3.} For every fixed $M_{1:t}$ and $S^Y_{1:t-1}$ satisfying $\cE_t$, we have that (by definition of $\calE_t^\eIn$ and \Cref{obs:boundedness implies uniformity}): \[
\| \Input{\phi_t}{\calY_t}{\calD_{t-1}}{M_t} - \Unif(\Z_q^{m\times k}) \|_\tvd \le \delta'\, .\] 
Applying the data-processing inequality (\Cref{prop:data_processing}) and \Cref{clm:posterior_distribution}, we conclude that \[
\|\rS^Y_t - \Player_t(M_{1:t},S^Y_{1:t-1},\rU)\|_\tvd \le \delta' \, , \]
as desired.
\end{proof}

\section{Boundedness implies uniformity: Proof of \Cref{lemma:boundedness implies uniformity}}\label{sec:boundedness implies uniformity}

In this section, we prove \Cref{lemma:boundedness implies uniformity}, restated as follows:

\boundedness*

In order to prove the lemma, we begin with some definitions.

\begin{definition}[Singleton-freeness]\label{def:singleton-free}
    We say a set $S \subseteq [m] \times [k]$ is \emph{singleton-free} if for every $j \in [m]$, $|S \cap (\{j\} \times [k])| \ne 1$.
    (If we view $S$ equivalently as a matrix in $\{0,1\}^{m \times k}$,
    then $S$ is singleton-free if no row has weight exactly 1.)
    We let $\SF^{m,k} \coloneqq \{S \subseteq [m] \times [k] : S\text{ singleton-free}\}$;
    when $m$ and $k$ are clear from context, we write $\SF = \SF^{m,k}$.
\end{definition}

\begin{proof}
    Follows immediately from \Cref{prop:prelim:fourier of product,prop:prelim:fourier of one-wise}.
\end{proof}

The following simple proposition then forms the heart of our analysis:

\begin{proposition}\label{prop:BiU:OwU to SF}
    Let $k, m \in \N$, $\calY \in \OWU{\Z_q^k}$ be a one-wise uniform distribution, and $U \in \Z_q^{m \times k}$.
    If $\supp(U) \not\in \SF^{m,k}$ (i.e., there exists $j \in [m]$ with $|\supp(U_j)| = 1$, where $U_j$ is the $j$-th row of $U$), then $\hat{\Den{\calY^{\otimes m}}}(U) = 0$.
\end{proposition}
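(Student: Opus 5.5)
The plan is to factor the Fourier coefficient of the product distribution row by row and show that the row with exactly one nonzero entry contributes a zero factor.

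First, apply \Cref{prop:prelim:fourier of product}, viewing $\calY^{\otimes m}$ as a distribution on $\Z_q^{m \times k}$ whose rows are independent samples from $\calY$. This gives the factorization
\[
\hat{\Den{\calY^{\otimes m}}}(U) = \prod_{j=1}^m \hat{\Den{\calY}}(U_j).
\]
If one wants to see this directly, it follows from \Cref{prop:prelim:fourier of density}. The character $\overline{\chi_U(\rY)} = \omega^{-\langle U, \rY\rangle}$ splits as $\prod_j \omega^{-\langle U_j, \rY_j\rangle}$, and the rows $\rY_j$ are independent. So the expectation of the product is the product of the expectations.

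Second, by hypothesis $\supp(U) \notin \SF^{m,k}$, so there is a row index $j$ with $|\supp(U_j)| = 1$. Since $\calY \in \OWU{\Z_q^k}$, \Cref{prop:prelim:fourier of one-wise} (applied with $N = k$) gives $\hat{\Den{\calY}}(U_j) = 0$. The product above therefore vanishes.

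There is no real obstacle here. The only care needed is matching indexing conventions: \Cref{prop:prelim:fourier of product} is written for $U \in \Z_q^{k\times m}$, but the intended reading is rows $U_j \in \Z_q^k$ indexed by $j \in [m]$, consistent with $\calY^{\otimes m} \in \Dist{\Z_q^{m\times k}}$.
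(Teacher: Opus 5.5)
Your proposal is correct and matches the paper's proof, which simply cites the product-factorization proposition (\Cref{prop:prelim:fourier of product}) and the one-wise vanishing proposition (\Cref{prop:prelim:fourier of one-wise}) exactly as you do. Your direct justification of the factorization via independence of rows, and your remark about the $k\times m$ versus $m\times k$ indexing, are both appropriate.
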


Recall the definition of the embedding $\iota_M : [m] \times [k] \to [n] \times [k]$ corresponding to a matching $M$ (\Cref{def:phm}).
In particular,
\begin{equation}\label{eq:BiU:Valid}
\iota_M(\SF) = \{U \subseteq [n] \times [k]: U \subseteq \supp(M) \text{ and }\forall j \in [m],\ |U \cap \supp(M_j))| \ne 1\}.
\end{equation}
For an injection $\phi : [k] \to [k']$, we also consider the set $\phi(\iota_M(\SF)) = \{(j,\phi(\ell)) : (j,\ell) \in \calS\}$ (as in \Cref{def:hg}).

\subsection{Bounding $\infty$-distance from uniform via $1$-norm of ``singleton-free'' Fourier mass}

We first prove the following upper-bound on the $\infty$-distance from uniform of $\InputShort{\calD}{M}$'s density function
in terms of the ($1$-norm) Fourier mass of $\Den{\calD}$ on ``singleton-free'' coefficients.
This is the key reason we need to consider singleton-free coefficients in this paper.

\begin{lemma}\label{lemma:BiU:bound from valid}
    For every $q, m, k, n, k' \in \N$, injection $\phi : [k] \to [k']$, 
    $k$-partite hypermatching $M \in \PHM{m}{k}{n}$, and one-wise uniform distribution $\calY \in \OWU{\Z_q^k}$, we have: \[
    \|\Den{\InputShort{\calD}{M}} - 1\|_\infty 
    \le \sum_{\substack{U \ne 0 \in \Z_q^{n \times n'}, \\ \supp(U)\in \phi(\iota_M(\SF))}} \abs*{ \hat{\Den{\calD}}(U) } \]
    where, as in \Cref{lemma:boundedness implies uniformity}, $\InputShort{\calD}{M}$ is the distribution of $\Pi^\phi_M \rX - \rY$ for $\rX \sim \calD$ and $\rY \sim \calY^{\otimes m}$,
    and $\iota_M(\SF)$ is as in \Cref{eq:BiU:Valid}.
\end{lemma}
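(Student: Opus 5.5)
The plan is to compute the Fourier transform of $\Den{\InputShort{\calD}{M}}$ exactly and then bound the $\infty$-norm of $\Den{\InputShort{\calD}{M}} - 1$ by the $\ell_1$ norm of its nonconstant Fourier coefficients. By \Cref{prop:prelim:fourier expansion}, for every $Z_0$,
\[
\abs*{\Den{\InputShort{\calD}{M}}(Z_0) - 1} \le \sum_{V \ne 0} \abs*{\hat{\Den{\InputShort{\calD}{M}}}(V)}.
\]
This holds because the zero coefficient equals $1$, since we are working with the density of a probability distribution. So it suffices to identify each coefficient $\hat{\Den{\InputShort{\calD}{M}}}(V)$ for $V \in \Z_q^{m\times k}$.

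First, $\InputShort{\calD}{M}$ is the distribution of $\Pi^\phi_M \rX + (-\rY)$, a sum of independent random variables. By \Cref{prop:prelim:density of sum} its density is the convolution of the density of $\Pi^\phi_M\rX$ with the density of $-\rY$. By \Cref{prop:prelim:convolution} the Fourier coefficient at $V$ is then a product of two factors.

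The second factor is $\hat{\Den{\calY^{\otimes m}}}(-V)$. Using \Cref{prop:prelim:fourier of density}, this coefficient has modulus at most $1$. By \Cref{prop:BiU:OwU to SF}, it vanishes unless $\supp(-V)=\supp(V)\in\SF$.

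The first factor is the Fourier coefficient of the density of $\Pi^\phi_M\rX$ at $V$. Directly from \Cref{prop:prelim:fourier of density},
\[
\Exp_{\rX\sim\calD}\bracks*{\omega^{-\langle V,\Pi^\phi_M\rX\rangle}} = \Exp_{\rX\sim\calD}\bracks*{\omega^{-\langle (\Pi^\phi_M)^\intercal V,\rX\rangle}} = \hat{\Den{\calD}}\parens*{(\Pi^\phi_M)^\intercal V}.
\]
Here $(\Pi^\phi_M)^\intercal V\in\Z_q^{n\times k'}$ is the matrix placing $V_{j,\ell}$ at position $\phi(\iota_M(j,\ell))$ and zeros elsewhere. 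Alternatively, one could invoke \Cref{prop:prelim:fourier of marginal}, noting that $\Pi^\phi_M$ is a coordinate projection and is therefore surjective.

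Since $\phi\circ\iota_M$ is injective, the map $V\mapsto U=(\Pi^\phi_M)^\intercal V$ is injective. Its image is exactly the set of $U$ with $\supp(U)\subseteq \phi(\iota_M([m]\times[k]))$, and under this map $\supp(U)=\phi(\iota_M(\supp V))$. Also, $V\neq 0$ if and only if $U\ne 0$.

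Combining these facts,
\[
\sum_{V\ne0}\abs*{\hat{\Den{\InputShort{\calD}{M}}}(V)} \le \sum_{V\ne 0,\ \supp V\in\SF}\abs*{\hat{\Den{\calD}}\parens*{(\Pi^\phi_M)^\intercal V}} = \sum_{U\ne0,\ \supp U\in\phi(\iota_M(\SF))}\abs*{\hat{\Den{\calD}}(U)},
\]
which is the claimed bound.

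The only step needing care is the bookkeeping of the re-indexing $V\leftrightarrow U$. In particular, I need to check that the support condition transfers exactly through $\phi\circ\iota_M$, which holds by injectivity. I also need to fix the sign conventions: $\supp(-V)=\supp(V)$, and the coefficient of the density of $-\rY$ at $V$ equals $\hat{\Den{\calY^{\otimes m}}}(-V)$. Everything else is a direct application of the listed propositions.
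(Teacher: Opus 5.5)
Your proposal is correct and follows essentially the same route as the paper: the input density is written as the convolution of the density of $\Pi^\phi_M X$ with that of $-\calY^{\otimes m}$, $|\Den{\InputShort{\calD}{M}}-1|$ is bounded by the $\ell_1$ mass of the nonzero Fourier coefficients, non-singleton-free frequencies are killed by one-wise uniformity (the other frequencies are handled with the trivial bound $|\hat{\Den{\calY^{\otimes m}}}|\le 1$), and the sum is reindexed via $(\Pi^\phi_M)^\intercal$. The only differences are cosmetic: you compute the marginal's coefficient directly from \Cref{prop:prelim:fourier of density} instead of citing \Cref{prop:prelim:fourier of marginal}, and you spell out, via injectivity of $\phi\circ\iota_M$, the final set identity that the paper dismisses as ``unpacking definitions.''
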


\begin{proof}
    Let $\calP_M \in \Dist{\Z_q^{m \times k}}$ denote the distribution of the projection $\Pi^\phi_M \rX$ for $\rX \sim \calD$.
    Hence $\InputShort{\calD}{M}$ is the distribution of sum of independent samples from $\calP_M$ and $-\calY^{\otimes m}$.
    Consequently, by \Cref{prop:prelim:density of sum}, we have $\Den{\InputShort{\calD}{M}} = \Den{\calP_M} * \Den{-\calY^{\otimes m}}$
    and therefore by \Cref{prop:prelim:convolution,prop:prelim:fourier expansion}, for every $Z_0 \in \Z_q^{m \times k}$, we have: \[
    \Den{\InputShort{\calD}{M}}(Z_0) = \sum_{U \in \Z_q^{m \times k}} \hat{\Den{\calP_M} * \Den{-\calY^{\otimes m}}}(U) \cdot \chi_U(Z_0)
    = \sum_{U \in \Z_q^{m \times k}} \hat{\Den{\calP_M}}(U) \cdot \hat{-\Den{\calY^{\otimes m}}}(U) \cdot \chi_U(Z_0). \]
    Hence, using the triangle inequality, that the zero Fourier coefficient is $1$ for a density function, and $|\chi_U(Z_0)| = 1$: \[
    |\Den{\InputShort{\calD}{M}}(Z_0) - 1| \le \sum_{U\ne0\in \Z_q^{m \times k}} |\hat{\Den{\calP_M}}(U)| \cdot |\hat{\Den{-\calY^{\otimes m}}}(U)|. \]
    The quantity on the RHS is independent of $Z_0$ and thus gives an upper-bound on $\|\Den{\InputShort{\calD}{M}}-1\|_\infty$.

    By \Cref{prop:prelim:fourier of marginal} and the definition of $\calP_M$, we know that for every $U \in \Z_q^{m \times k}$, \[
    \hat{\Den{\calP_M}}(U) = \hat{\Den{\calD}}((\Pi^\phi_M)^\intercal(U)). \]
    Consequently, using the trivial estimate $|\hat{\Den{-\calY^{\otimes m}}}(U)| \le 1$, we get by \Cref{prop:BiU:OwU to SF}: \[
    \|\Den{\InputShort{\calD}{M}} - 1\|_\infty \le \sum_{\substack{U \ne 0 \in \Z_q^{m \times k}, \\ \supp(U) \in \SF} }
    |\hat{\Den{\calD}}((\Pi^\phi_M)^\intercal(U))|. \]
    Finally, we observe that $\{(\Pi^\phi_M)^\intercal U : U \ne 0 \in \Z_q^{m \times k}\text{ and }\supp(U) \in \SF\} = \{U \ne 0 \in \Z_q^{n \times k'} : \supp(U) \in \phi(\iota_M(\SF))\}$ (just by unpacking definitions).
\end{proof}

\subsection{Combinatorial bound}

We next prove an upper-bound on the probability that a fixed set $U$ is in $\iota_\rM(\SF)$ when $\rM$ is sampled randomly in terms of the cardinality $|U|$ of $U$.

\begin{lemma}\label{lemma:BiU:combinatorial bound}
    For every $k \in \N$, there exists $C < \infty$ and $\alpha_0 > 0$ such that the following holds.
    Let $n \in \N$, $m \le \alpha_0 n \in \N$,
    $U \subset [n] \times [k]$, and $h \coloneqq |U|$.
    Then for every $2 \le h \le km \in \N$, sampling a random $k$-partite hypermatching $\rM \in \PHM{m}{k}{n}$, we have \[
    \Pr[U \in \iota_\rM(\SF)]
    \le \parens*{ \frac{C h}n }^{h/2}. \]
    (The probability vanishes at $h = 1$ and $h > km$.)
\end{lemma}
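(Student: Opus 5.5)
The plan is a union bound over the possible ways $M$ can cover $U$, arranged so that the factor $n^{-1}$ paid per vertex of $U$ outweighs the factor $m\le\alpha_0 n$ paid per edge used.

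First, unpack the event $U \in \iota_\rM(\SF)$ using \Cref{eq:BiU:Valid}. It says that $U \subseteq \supp(\rM)$ and that every edge $\rM_j$ meets $U$ in either $0$ or at least $2$ vertices. So the event forces a partition $\calP=\{P_1,\ldots,P_r\}$ of $U$ into the nonempty sets $U\cap\supp(\rM_j)$, together with distinct rows $j_1,\ldots,j_r\in[m]$, with the following properties:
\begin{itemize}
\item every block has $|P_i|\ge 2$, so $r\le h/2$;
\item every block contains at most one vertex from each part $[n]\times\{\ell\}$;
\item for each vertex $(v,\ell)\in P_i$ we have $\rM_{j_i,\ell}=v$.
\end{itemize}
If $h=1$ no such partition exists, and if $h>km$ then $U$ cannot fit in $\supp(\rM)$. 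In both cases the probability is $0$, which gives the parenthetical remark.

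Next, fix $\calP$ and the rows $j_1,\ldots,j_r$ and bound the probability of the third condition. The columns of a uniform $\rM\in\PHM{m}{k}{n}$ are independent uniform injections $[m]\to[n]$. Within one column, the event prescribes the images of at most $h$ distinct rows, each at a distinct vertex. That has probability $\prod_{a<h_\ell}(n-a)^{-1}$, where $h_\ell$ is the number of vertices of $U$ in part $\ell$. Choose $\alpha_0\le 1/(2k)$, so that $h\le km\le n/2$. Multiplying over columns then bounds the probability by $(2/n)^h$. There are at most $m^r\le(\alpha_0 n)^r$ choices of rows.

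For the partitions, I count those with exactly $r$ blocks, all of size at least $2$, by $\binom{h}{r} r^{h-r}$: pick one representative per block, then assign each remaining element to a block. Summing everything gives
\[
\Pr[U\in\iota_\rM(\SF)] \le \sum_{r=1}^{\lfloor h/2\rfloor} \binom{h}{r} r^{h-r} (\alpha_0 n)^r \Big(\frac{2}{n}\Big)^h
\le \sum_{r} 2^h\,2^h\, \alpha_0^r \Big(\frac{h}{2}\Big)^{h-r} n^{r-h}
\le \sum_r 4^h \Big(\frac{h}{n}\Big)^{h-r}.
\]
Since $h-r\ge h/2$ and $h/n\le 1$, each term is at most $4^h (h/n)^{h/2}$. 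There are at most $h\le 2^h$ terms, so the total is at most $(C h/n)^{h/2}$ with, say, $C=64$.

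The main obstacle is the combinatorial count of partitions. A naive bound such as $S(h,r)\le r^h/r!$ is far too large (about $h^h$). The fix is to keep the per-row factor $n^{r}$ paired with the count and use that the exponent $h-r$ of $h/n$ is at least $h/2$. The rest of the argument, including the injection estimate inside a column, only needs $n\ge 2h$, which is what the choice $\alpha_0\le 1/(2k)$ guarantees.
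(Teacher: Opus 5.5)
Your proof is correct, and it organizes the count differently from the paper.

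\textbf{The paper's route.} It first does a symmetry reduction. It fixes the ``diagonal'' matching whose $j$-th row is $(j,\ldots,j)$, and instead samples $U$ uniformly among sets with the same column profile $(h_\ell)_\ell$. It then counts favorable sets in two steps:
\begin{itemize}
\item choose the set $D$ of touched rows, with $|D|=d\le h/2$, in at most $\binom{m}{d}$ ways;
\item place $U$ among the $dk$ vertices of those rows, in at most $\binom{dk}{h}$ ways.
\end{itemize}
It divides by $\prod_\ell\binom{n}{h_\ell}\ge\binom{n}{h}\ge(n/h)^h$. This needs a two-case estimate $\binom{m}{d}\le(4km/h)^{h/2}$ and yields $C=32k^3\alpha_0$.

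\textbf{Your route.} You keep $U$ fixed and union-bound over cover configurations: a partition of $U$ into blocks of size at least $2$, together with distinct rows for the blocks. You then use independence of the columns and the exact injection probability $\prod_{a<h_\ell}(n-a)^{-1}\le(2/n)^{h_\ell}$.

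\textbf{Comparison.} Both arguments make the same first-moment tradeoff: about $m^{d}$ choices of rows against a cost of roughly $n^{-1}$ per vertex of $U$, with $d\le h/2$.
\begin{itemize}
\item Your version skips the symmetrization and the case split. It gives a $k$-independent constant $C=64$ once $\alpha_0\le 1/(2k)$.
\item You discard the $\alpha_0^r$ gain that the paper keeps inside $C$. This is harmless, since the later application only needs some finite constant.
\item The paper's symmetric-counting template is what it reuses for the finer $(\kappa,\eta)$ bound in the inductive step.
\end{itemize}

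\textbf{A correction to your closing remark.} The ``naive'' bound $S(h,r)\le r^h/r!$ is not too weak. Since $r!\ge(r/e)^r$, it gives $r^h/r!\le e^r r^{h-r}$. That matches your $\binom{h}{r}r^{h-r}\le 2^h r^{h-r}$ up to a $C^h$ factor, so there is no real obstacle at that step.
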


\begin{proof}
    Let $\calV_\ell \coloneqq [n] \times \{\ell\}$ so that $[n] \times [k] = \bigsqcup_{\ell=1}^k \calV_\ell$.
    Let $h_\ell \coloneqq |U \cap \calV_\ell|$.
    
    \paragraph*{Step 1: Symmetry reduction.}
    Define $\calU \coloneqq \{ U' \subset [n] \times [k] : |U' \cap \calV_\ell| = h_\ell \}$.
    To calculate $\Pr[U \in \iota_\rM(\SF)]$,
    by column-wise permutation symmetry, it is equivalent to instead fix one matching, such as \[
    M \coloneqq \begin{bmatrix} 1 & 1 & \cdots & 1 \\ 2 & 2 & \cdots & 2 \\ \vdots & \vdots & \ddots & \vdots \\ m & m & \cdots & m \end{bmatrix}, \]
    sample $\rU$ uniformly at random from the set $\calU \coloneqq \{U \subset [n] \times [k] : \forall \ell \in [k],\ |U \cap \calV_\ell| = h_\ell\}$,
    and calculate $\Pr[\rU\in \iota_M(\SF)]$.
    That is, defining $\calN \coloneqq \calU \cap \iota_M(\SF)$, we want to calculate $|\calN|/|\calU|$.

    \paragraph*{Step 2: Counting and partitioning.}
    For $U \in \calN$, we define $D(U) \coloneqq \{ j \in [m] : U \cap \supp(M_j) \ne \emptyset \}$.
    Note that if $U \in \calN$ and $j \in D(U)$ then $|U \cap \supp(M_j)| \ge 2$ by definition of $\calN$.
    Hence, for $d(u) \coloneqq |D(U)|$, we have $d(U) \le |U|/2 = h/2$.

    For a fixed size $d$, there are $\binom{m}d$ sets $D\subseteq [m]$ of size $d$,
    and for a fixed such set $D$, there are at most $\binom{dk}{h}$ possible sets $U \in \calN$ with $D(U) = U$
    (since $U$ must be fully supported on the rows in $D$).
    Hence we have: \[
    |\calN| \le \sum_{d=1}^{h/2} \binom{\alpha n}{d} \binom{dk}{h}. \]
    Conversely, we have \[
    |\calU| = \prod_{\ell=1}^k \binom{n}{h_\ell} \ge \binom{n}h. \]

    \paragraph{Step 3: Estimation.}
    Note that for every $h \in [1,km]$ and $d \le h/2$, we have $\binom{m}d \le \parens*{ \frac{4km}h }^{h/2}$, since by cases:
    \begin{align*}
    h \in [1, m] &\implies \binom{m}{d} \le \parens*{ \frac{em}{d}}^{d} \le \parens*{ \frac{em}{h/2}}^{h/2} \le \parens*{ \frac{4k m}{h}}^{h/2}, \\
    h \in [m, km] &\implies \binom{m}{d} \le 2^{m} \le 4^{h/2} \le \parens*{ \frac{4km}{h} }^{h/2}.
    \end{align*}
    For the second inequality on the first line, we used that $(C/x)^x$ is an increasing function of $x$ on the interval $x \in (0,C/e)$.
    We also have, by binomial estimates, $\binom{dk}{h} \le \parens*{ \frac{edk}{h} }^h \le (2k)^h$ (using that $ed \le eh/2 \le 2h$) and $\binom{n}{h} \ge \parens*{ \frac{n}{h} }^h$.
    Using $m \le \alpha_0 n$ and the trivial bound $h/2 \le 2^{h/2}$, we altogether get a bound of \[
    \frac{|\calN|}{|\calU|} \le \frac{h}2 \cdot \frac{ \parens*{\frac{4 k \alpha_0 n}{h}}^{h/2} \cdot (2k)^h}{\parens*{ \frac{n}h }^h }
    = 2^{h/2} \cdot \frac{ \parens*{\frac{4 k \alpha_0 n}{h}}^{h/2} \cdot (4k^2)^{h/2}}{\parens*{ \frac{n^2}{h^2} }^{h/2} } 
    = \parens*{ \frac{32 k^3 \alpha h}{n}}^{h/2}, \]
    as desired (setting $C \coloneqq 32k^3 \alpha_0$).
\end{proof}

\subsection{Finishing the proof}

We now use the following fact from \cite[Lemma~5.18]{CGS+22-linear-space}:

\begin{lemma}\label{lem:bound from CGSVV}
    For every $k,q \ge 2 \in \N$, $C_1,C_2 < \infty$ and $\delta \in (0,1)$,
    there exists $\tau > 0$ such that the following holds.
    
    For all $n,s \in \N$ satisfying $4\log_q(3/\delta) \le s \le \tau n$, we have: \[
    \sum_{h=2}^{km} \U{C_1}{s}{h}{n} \cdot \parens*{ \frac{C_2 h}n }^{h/2} \le \delta^2.
    \]
\end{lemma}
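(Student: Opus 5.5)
The plan is to split the sum at $h = s$, use the branch of $\U{C_1}{s}{h}{n}$ that matches each range, and reduce every term to $\beta^{h/2}$ for a base $\beta$ that we can make small by choosing $\tau$. Throughout, $N = n$, so $\U{C_1}{s}{h}{n}$ is exactly as in \Cref{def:bounded distribution}.

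\emph{Range $2 \le h \le s$.} Here the $\sqrt{s}$ factor cancels nicely:
\[
\U{C_1}{s}{h}{n}\cdot\parens*{\tfrac{C_2h}{n}}^{h/2}
= \parens*{\tfrac{C_1\sqrt{sn}}{h}\cdot\tfrac{C_2h}{n}}^{h/2}
= \parens*{C_1C_2\sqrt{s/n}}^{h/2}
\le \parens*{C_1C_2\sqrt{\tau}}^{h/2}.
\]
Choose $\tau$ so that $C_1C_2\sqrt\tau \le \delta^4/16$. The sum over $h \ge 2$ is then geometric, bounded by about twice its first term $C_1C_2\sqrt\tau$, hence at most $\delta^2/4$.

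\emph{Range $h > s$.} Use the first branch of the minimum:
\[
\parens*{\tfrac{C_1\sqrt n}{\sqrt h}}^{h/2}\parens*{\tfrac{C_2h}{n}}^{h/2} = \parens*{C_1C_2\sqrt{h/n}}^{h/2}.
\]
For $s < h \le \epsilon n$, with $\epsilon$ chosen so that $C_1C_2\sqrt\epsilon \le q^{-4}$, each term is at most $q^{-2h}$. Summing from $h = s+1$ gives at most $2q^{-2s}$. The hypothesis $s \ge 4\log_q(3/\delta)$ gives $q^{-2s} \le (\delta/3)^8$, which is far below $\delta^2/4$. This is precisely where the lower bound on $s$ enters, and it is why the estimate is written with $\log_q$.

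\emph{Range $h > \epsilon n$ (main obstacle).} Neither branch of the minimum is automatically good here. The first gives base $C_1C_2\sqrt{h/n}$, which is only bounded by $C_1C_2\sqrt{k\alpha}$ since $h \le km$ and $m \le \alpha n$. The second gives base $eq^2C_2$, which may exceed $1$.

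What I would try first is to use that in every application $C_2 = 32k^3\alpha_0$ and $h \le km \le k\alpha_0 n$. This is the real content of citing \cite[Lemma~5.18]{CGS+22-linear-space}, whose statement carries such a smallness/range restriction. Concretely, I would arrange the quantifiers so that $\epsilon$ may be taken to be $k\alpha_0$. Then this range is empty, or at least the base $C_1C_2\sqrt{k\alpha_0}$ is below $q^{-4}$, and the previous paragraph covers all $h$.

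If the constants must be handled in the stated order ($C_1$ chosen after $\alpha_0$), I would use the second branch instead. With the true $C_2 \propto \alpha_0$, its base is $eq^2C_2 \lesssim q^2k^3\alpha_0$. Choosing $\alpha_0 \le 1/(e q^6 \cdot 32k^3)$ makes this base at most $q^{-4}$ uniformly in $C_1$. The tail $h > \epsilon n \ge s$ is then again at most $2q^{-2s}$.

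Adding the three ranges gives a total of at most $\delta^2/4 + 4(\delta/3)^8 \le \delta^2$.
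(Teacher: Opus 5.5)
Your computation is correct, and it takes a different route from the paper, which gives no argument here and simply imports the estimate from \cite[Lemma~5.18]{CGS+22-linear-space}. You verify it directly:
\begin{itemize}
\item for $h\le s$ the $\sqrt{s}$ cancels and each term is $(C_1C_2\sqrt{s/n})^{h/2}$;
\item for $h>s$ each term equals $\bigl(\min\{C_1C_2\sqrt{h/n},\,eq^2C_2\}\bigr)^{h/2}$.
\end{itemize}
What the direct route buys is that it shows the obstacle you flag is real. As literally stated, with arbitrary $C_2$ and no constraint on $m$, the lemma is false: the $h>s$ terms do not involve $\tau$, and $s$ may be as small as $4\log_q(3/\delta)$. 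For instance, with $C_1=C_2=1$ and $m=n$ (allowed in $\PHM{m}{k}{n}$), the term $h=n$ equals $1>\delta^2$. Even with $m\le\alpha_0 n$ for a fixed $\alpha_0$, the term $h=km$ is at least $1$ once $C_2\ge 1/(eq^2)$ and $C_1\ge 1/(C_2\sqrt{k\alpha_0})$.

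So an extra hypothesis is necessary, and your two repairs are the natural ones: either $m\le\alpha n$ with $\alpha$ small in terms of $C_1,C_2$, or $eq^2C_2<1$. Only the second fits the quantifier order of \Cref{lemma:boundedness implies uniformity}, where $\alpha_0$ is fixed before $C$. It holds there because $C_2$ comes from \Cref{lemma:BiU:combinatorial bound} and is proportional to $\alpha_0$. In the actual application the lemma is invoked with $N=k'n$ and effective constant $32k^3k'\alpha_0$, so you need $\alpha_0\le 1/(32eq^6k^3k')$ rather than $1/(32eq^6k^3)$.

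Three minor points.
\begin{itemize}
\item Under $eq^2C_2\le q^{-4}$ your middle range is superfluous. The second branch bounds every term with $h>s$ by $q^{-2h}$ at once, giving a tail of at most $2q^{-2s}\le 2(\delta/3)^8$.
\item The role of $s\ge 4\log_q(3/\delta)$ is exactly to turn a $\delta$-independent base (such as $eq^2C_2$) into a tail that is small in $\delta$.
\item A proof cannot appeal to what holds ``in every application''. You should add the hypothesis on $C_2$ (or on $m/n$) to the statement of the lemma itself.
\end{itemize}
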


\begin{proof}[Proof of \Cref{lemma:boundedness implies uniformity}]
    Using Markov's inequality, it suffices to prove that
    \begin{equation}\label{eq:bounded implies uniform:expectation}
    \Exp_\rM \bracks*{ \|\Den{\InputShort{\calD}{\rM}}  - 1\|_\infty  } \le \delta^2.
    \end{equation}
    
    By \Cref{lemma:BiU:bound from valid} and linearity of expectation, we have: 
\begin{align*}
    \Exp_\rM \bracks*{ \|\Den{\InputShort{\calD}{\rM}} -1\|_\infty  }
    &~\le~ \Exp_\rM \bracks*{ \sum_{\substack{U \ne 0 \in \Z_q^{n \times k'}, \atop \supp(U) \in \phi(\iota_\rM(\SF))}} \abs*{ \hat{\Den{\calD}}(U) } } \\
    &~=~ \sum_{U \ne 0 \in \Z_q^{n \times k'}} \abs*{ \hat{\Den{\calD}}(U) } \cdot \Pr_\rM [\supp(U) \in \phi(\iota_\rM(\SF))].
\end{align*}
    
    Now, consider a fixed $U \ne 0 \in \Z_q^{n \times k'}$ with size $h \coloneqq |U|$.
    We know that $\Pr_\rM [\supp(U) \in \phi(\iota_\rM(\SF))] = 0$ if $h < 2$ or $h > km$.
    On the other hand, 
    if $2 \le h \le km$,
    then we claim that $\Pr_\rM[\supp(U) \in \phi(\iota_\rM(\SF))] \le \parens*{ \frac{C_2 h}n }^{h/2}$
    (where $C_2$ is the constant coming from \Cref{lemma:BiU:combinatorial bound}).
    Indeed, we observe that if $\supp(U) \not\subseteq \phi([n] \times [k])$ then if $\Pr_\rM[\supp(U) \in \phi(\iota_\rM(\SF))] = 0$,
    while if $\supp(U) \subseteq \phi([n] \times [k])$  then $\Pr_\rM[\supp(U) \in \phi(\iota_\rM(\SF))] = \Pr_\rM[\phi^{-1}(\supp(U)) \in \iota_\rM(\SF)] \le \parens*{ \frac{C_2 h}n }^{h/2}$ by \Cref{lemma:BiU:combinatorial bound}.
    Altogether, these calculations give
    \[
    \Exp_\rM[\|\Den{\InputShort{\calD}{\rM}}-1\|_\infty] \le \sum_{h=2}^{k m} \parens*{ \sum_{\substack{U \ne 0 \in \Z_q^{n \times k'}, \\
    \wt(U) = h}  } |\hat{\Den{\calD}}(U)| } \cdot \parens*{ \frac{C_2 h}n }^{h/2}
    \le \sum_{h=2}^{k m} \U{C_1}{s}{h}{k'n} \cdot \parens*{ \frac{(C_2 k') h}{k'n} }^{h/2}, \]
    which is then bounded by the foregoing lemma (\cref{lem:bound from CGSVV}).
\end{proof}

\section{A ``singleton-free'' level inequality}\label{sec:singleton-free}
In this section, we derive upper bounds on the $\ell_1$ norm of a certain subset of Fourier coefficients over Hamming balls
--- centered at the origin at first, and then alter at arbitrary centers in \cref{sec:singleton-free arbitrary} below.
These bounds are later used in the proof of \cref{lem:inductive_step} in \cref{sec:inductive_step}. We begin with some definitions.

The \emph{weight} of a frequency matrix $U \in \Z_q^{m \times k}$ is $\wt(U) \coloneqq |\supp(U)|$.
As a convenient abbreviation, we define the notation below for sets of ``singleton-free'' frequency matrices (where the set $\SF^{m,k}$ is as in \Cref{def:singleton-free}.)
\begin{align*}
\calU_q^{m,k} &~\coloneqq~ \{U \in \Z_q^{m,k} : \supp(U) \in \SF^{m,k}\} \\
\calU_q^{m,k}(h) &~\coloneqq~ \{U \in \calU_q^{m,k} : \wt(U) = h\}
\end{align*}

In particular, $\calU_q^{m,k} = \{U \in \Z_q^{m \times k} : \forall j \in [m],\ |\supp(U_j)| \ne 1\}$, where $U_j$ is the $j$-th row of $U$.
The goal of this section is to prove the following ``refined'' level inequality bounding the $\ell_1$-norm of singleton-free Fourier coefficients:

\begin{theorem}\label{thm:mass on UH}
For every $q, k \in \N$ and $\theta < \infty$, there exists $\zeta < \infty$ and $\epsilon_0 > 0$ such that the following holds.
For every $m, b \in \N$, $g : \Z_q^{m \times k} \to \C$ with $\|g\|_1 = 1$, and $h \in \N$ with $h < \theta b$ and $\log_q(\|g\|_\infty) \le b \le \epsilon_0 km$, we have: \[ 
\sum_{U \in \calU_q^{m,k}(h)} |\hat{g}(U)| \le \parens*{ \frac{\zeta \sqrt{b m}}{h} }^{h/2}. \]
\end{theorem}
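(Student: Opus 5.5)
Following the outline in the introduction, I would bound the $\ell_2$ mass of singleton-free coefficients level by level with a custom noise operator and hypercontractivity, then convert to $\ell_1$ by Cauchy--Schwarz.

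\emph{Decomposition.} For $U \in \calU_q^{m,k}(h)$, let $\ell(U)$ be its number of nonzero rows. Singleton-freeness forces every nonzero row to have weight at least $2$, so $\ell \le h/2$ and hence $h-\ell \ge h/2$. I would split the sum according to $\ell \in [1,h/2]$. The number of $U \in \calU_q^{m,k}(h)$ with exactly $\ell$ nonzero rows is at most $\binom{m}{\ell}\binom{k\ell}{h}(q-1)^h \le (em/\ell)^\ell (2kq)^h$. By Cauchy--Schwarz,
\[
\sum_{U:\,\ell(U)=\ell} |\hat g(U)| \le \sqrt{\#\{U\}} \cdot \Bigl(\sum_{U:\,\ell(U)=\ell} |\hat g(U)|^2\Bigr)^{1/2}.
\]
This reduces the statement to an ``$(h,\ell)$-level inequality'' for the squared mass.

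\emph{Noise operator.} Define $S_\rho$ on $\Z_q^k$ as in the overview: $\lambda(0)=1$, $\lambda(u)=0$ if $|\supp(u)|=1$, and $\lambda(u)=\rho^{|\supp u|-1}$ otherwise. Then $S_\rho^{\otimes m}\chi_U = \rho^{h-\ell}\chi_U$ for $U$ singleton-free, so
\[
\sum_{\ell(U)=\ell,\ \wt(U)=h} |\hat g(U)|^2 \le \rho^{-2(h-\ell)}\,\|S_\rho^{\otimes m} g\|_2^2 .
\]
I would next compare $S_\rho$ to a row-wise Bonami--Beckner operator $T^*_{\rho'}$, which rerandomizes each whole row of $\Z_q^k$ independently with probability $1-\rho'$ and has eigenvalue $\rho'$ on every nonzero row character. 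Since $\lambda(u)^2 \le \rho^2$ for nonzero rows, we have $\|S_\rho^{\otimes m} g\|_2 \le \|T^*_{\rho} g\|_2$, because both operators are diagonal in the character basis with eigenvalues multiplied rowwise. Note that this even holds for general $g$, so no singleton-free projection is needed. $T^*_\rho$ is the standard noise operator on the product space $(\Z_q^k)^m$, so the general hypercontractivity theorem for finite product spaces gives $\|T^*_\rho g\|_2 \le \|g\|_p$ for $p = 1+\rho^2 c_{q,k}$, where $c_{q,k}>0$ depends on the minimum atom $q^{-k}$. Finally, $\|g\|_p \le \|g\|_\infty^{1-1/p}\|g\|_1^{1/p} \le q^{b(p-1)/p} \le q^{b\rho^2 c_{q,k}}$.

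\emph{Optimization.} Combining the steps,
\[
\sum_{\ell(U)=\ell} |\hat g(U)| \le (em/\ell)^{\ell/2}(2kq)^{h/2}\,\rho^{-(h-\ell)}\, q^{b\rho^2 c}.
\]
Choose $\rho^2 \approx (h-\ell)/b$; this is at most $1$ when $\epsilon_0$ and $\theta$-dependence are handled, with the case $h \ge b$ treated by taking $\rho$ a constant. This gives $(Cb/(h-\ell))^{(h-\ell)/2}(Cm/\ell)^{\ell/2}$ up to $C^h$. Using $h-\ell \ge h/2$ and $\ell \le h/2$, the product $(b/(h-\ell))^{(h-\ell)/2}(m/\ell)^{\ell/2}$ is maximized around a balance point. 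I would bound it by $(\zeta\sqrt{bm}/h)^{h/2}$ via the AM--GM-type inequality $x^{a}y^{c}\le \dots$: set $h-\ell = h/2+r$ and $\ell = h/2-r$. Then it equals $(b/(h/2+r))^{h/4+r/2}(m/(h/2-r))^{h/4-r/2}$, and since $b \lesssim m$ (as $b\le \epsilon_0 km$), shifting weight toward the $b$ factor only loses constant factors. Summing over at most $h/2$ values of $\ell$ costs another $2^{h}$ factor, which is absorbed into $\zeta$.

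\emph{Main obstacle.} I expect the main difficulty to be this final bookkeeping: verifying uniformly over $\ell$ that the bound is at most $(\zeta\sqrt{bm}/h)^{h/2}$. That requires using $b \le \epsilon_0 km$ to dominate the $m/\ell$ terms, and treating the regime $h$ comparable to $b$ (allowed up to $\theta b$) where $\rho$ cannot be tuned. The hypercontractive constant itself is standard.
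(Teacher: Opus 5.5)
Your proposal is correct and follows essentially the same route as the paper: split by row weight $\ell$, bound the squared mass on each $(h,\ell)$ slice by dominating the modified multiplier $\lambda_\rho$ with the row noise operator $T^*_\rho$ and applying $(p,2)$-hypercontractivity on $(\Z_q^k)^m$, then use Cauchy--Schwarz against a count of $\calU_q^{m,k}(h,\ell)$. The differences are cosmetic. The paper sets $p = 1+(h-\ell)/(\theta b)$ directly, which keeps $p<2$ for all $h<\theta b$ and removes the need for your separate large-$h$ case. It also sums over $\ell \in [h/k,h/2]$ as a geometric series dominated by the $\ell = h/2$ term, using $b \le \epsilon_0 km$ to make the ratio at least $2$, where you instead pay a factor $h/2$.
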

Writing $U_j \in \Z_q^k$ for row~$j$ of a matrix $U \in \Z_q^{m \times k}$, we 
define the \emph{row weight} of $U \in \Z_q^{m \times k}$ as the number of non-zero rows 
$\rwt(U) \coloneqq |\{j : U_j \ne 0\}|$ (i.e., the number of nonzero rows).
We also denote by $\calU_q^{m,k}(h,\ell)$, the set of set of singleton-free frequencies with weight $h$ and row-weight $\ell$
\[
\calU_q^{m,k}(h,\ell) ~\coloneqq~ \{U \in \calU_q^{m,k}(h) : \rwt(U) = \ell\}.
\]
We will prove the above theorem by bounding the $\ell_2$ Fourier mass for the sets $\calU_q^{m,k}(h,\ell)$, together with a counting argument and Cauchy-Schwarz to obtain $\ell_1$ bounds.

\subsection{Bounding the squared Fourier mass on $\calU_q^{m,k}(h,\ell)$}
We now prove the following bound on the $\ell_2$ Fourier mass of the set $\calU_q^{m,k}(h,\ell)$.
\begin{lemma}\label{thm:squared mass on UHL}
For every $q, k \in \N$ and $\theta > 0$, there exists $\zeta > 0$ such that the following holds.
For every $m,b \in \N$, $g : \Z_q^{m \times k} \to \C$ with $\|g\|_1 = 1$, and $h > \ell \in \N$ with $h - \ell < \theta b$ and $\log_q(\|g\|_\infty) \le b \le \epsilon_0 km$, we have: \[ 
\sum_{U \in \calU_q^{m,k}(h,\ell)} |\hat{g}(U)|^2 \le \parens*{ \frac{\zeta b}{h - \ell}}^{h-\ell}. \]

\end{lemma}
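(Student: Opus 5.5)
We follow the route sketched in the overview: a custom noise operator $S_\rho$ on $\Z_q^k$ that kills singletons, compared against a row-wise Bonami--Beckner operator, and then hypercontractivity. Define $S_\rho$ on functions $\Z_q^k \to \C$ by $S_\rho \chi_u = \lambda_\rho(u)\chi_u$, where $\lambda_\rho(0)=1$, $\lambda_\rho(u)=0$ if $|\supp(u)|=1$, and $\lambda_\rho(u)=\rho^{|\supp(u)|-1}$ otherwise. For every $U \in \calU_q^{m,k}(h,\ell)$ the tensor power gives $S_\rho^{\otimes m}\chi_U = \rho^{h-\ell}\chi_U$, since each nonzero row contributes $\rho^{|\supp(U_j)|-1}$. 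Hence
\[
\rho^{2(h-\ell)} \sum_{U \in \calU_q^{m,k}(h,\ell)} |\hat g(U)|^2 \le \|S_\rho^{\otimes m} g\|_2^2 .
\]

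The key step is to show $\|S_\rho^{\otimes m} g\|_2 \le \|T^*_{\rho'} g\|_2$ for some $\rho'=\rho'(\rho,k)$. Here $T^*_{\rho'}$ is the row-wise noise operator: each row of the input is independently kept with probability $\rho'$ and rerandomized otherwise, so it acts on $\chi_U$ with eigenvalue $\rho'^{\rwt(U)}$. Since both operators are diagonal in the Fourier basis and $\lambda_\rho(u) \le \rho^{(|\supp(u)|-1)}$, a naive comparison only gives $|\lambda_\rho(u)|\le 1$ per row. So we do not compare pointwise. Instead we redistribute as follows.

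\begin{itemize}
\item For singleton-free rows, $|\supp(u)|\ge 2$ implies $|\supp(u)|-1 \ge |\supp(u)|/2$.
\item So for $U \in \calU_q^{m,k}(h,\ell)$ we have $h-\ell \ge h/2$ and $\ell \le h/2$.
\item Choosing $\rho' = \rho^{1/2}$ (or any fixed power) gives $\rho^{h-\ell} \le \rho'^{\,\ell}$ on the singleton-free part.
\end{itemize}

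Simpler still, we can work with $g$ projected onto singleton-free frequencies, or just bound the target sum directly. This gives
\[
\rho^{2(h-\ell)}\sum_{\calU(h,\ell)}|\hat g(U)|^2 \le \rho'^{2\ell}\cdot(\ldots),
\]
but that is the wrong direction. The cleaner path is to apply hypercontractivity to $T^*$ viewed as the ordinary Bonami--Beckner operator on the product space $(\Z_q^k)^m$ with alphabet $q^k$. For $p = 1+\rho'^2$ (up to the usual alphabet-dependent constant, $\rho' \le c_{q,k}\sqrt{p-1}$) this gives $\|T^*_{\rho'} f\|_2 \le \|f\|_p$.

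Concretely, I would run the argument in the following order.

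\begin{enumerate}
\item Use $S_\rho = S_\rho \circ P_{\SF}$ and show $S_\rho$ is dominated by $T^*$ composed with a contraction. Here $P_{\SF}$ projects each row onto non-singleton frequencies, and $S_\rho$ factors per row as $A \cdot T^{\text{row}}_{\rho^{1/2}}$, where $A$ has eigenvalues $\rho^{|\supp u|-1}/\rho^{1/2}$ on row-frequencies with $|\supp u| \ge 2$. Those eigenvalues are $\le 1$ because $|\supp u|-1 \ge 1/2$, so $A$ is an $L_2$ contraction.
\item This yields $\|S_\rho^{\otimes m} g\|_2 \le \|T^{*}_{\rho^{1/2}} g\|_2 \le \|g\|_p$ with $p-1 \asymp_{q,k}\rho$.
\item Bound $\|g\|_p \le \|g\|_\infty^{1-1/p}\|g\|_1^{1/p} \le q^{b(p-1)/p}$.
\item Combining, $\sum_{\calU(h,\ell)} |\hat g|^2 \le \rho^{-2(h-\ell)} q^{2b(p-1)}$. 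We optimize with $\rho \approx (h-\ell)/b$; the constraint $h-\ell<\theta b$ keeps $\rho$ bounded, absorbed into $\zeta$. This gives $(e^{O(1)}b/(h-\ell))^{h-\ell}$, i.e.\ $(\zeta b/(h-\ell))^{h-\ell}$. The bound $b \le \epsilon_0 km$ is only needed to keep $p$ in range, or is harmless.
\end{enumerate}

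The main obstacle is step 1: the per-row factorization requires checking that multiplying by $A$ does not increase $L_2$ norm. That holds because $A$ is Fourier-diagonal with eigenvalues of modulus $\le 1$ (including $0$ on singletons and $1$ on $u=0$). We also need the hypercontractive constant for the general alphabet $q^k$ (non-uniform-in-$q$ Bonami), which depends only on $q,k$ and hence only affects $\zeta$. The exponent bookkeeping works because $\rho^{1/2}$ per row yields $\rho^{\ell/2}\le\rho^{(h-\ell)}$-type slack, so I would double-check that the chosen power yields exactly $(h-\ell)$ in the exponent, adjusting $\rho$'s scaling by a constant factor if necessary.
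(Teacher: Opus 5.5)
Your proposal has a genuine gap, and it starts with your claim that a pointwise comparison fails. It does not. For $u \ne 0$ with $|\supp(u)| \ge 2$ we have $|\supp(u)|-1 \ge 1$, so $\lambda_\rho(u) = \rho^{|\supp(u)|-1} \le \rho = \lambda^*_\rho(u)$, and on singletons $\lambda_\rho(u)=0$. Hence $\|S_\rho^{\otimes m} g\|_2 \le \|T^*_\rho g\|_2$ holds with the \emph{same} parameter $\rho$, and this is exactly what the paper uses.

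By routing through $T^*_{\rho^{1/2}}$ instead, you lose a square root, and that breaks the exponent. Row hypercontractivity ties the noise parameter $\rho'$ to $p$ via $\rho'^2 \asymp_{q,k} p-1$ (precisely $\rho' = \sqrt{p-1}\,Q^{1/2-1/p}$ with $Q = q^k$).
\begin{itemize}
\item In the paper $\rho' = \rho$, so the extraction factor is $\rho^{-2(h-\ell)} \asymp (p-1)^{-(h-\ell)}$. Choosing $p-1 = (h-\ell)/(\theta b)$ gives $(\theta q^k b/(h-\ell))^{h-\ell}$, times $q^{2b(p-1)} \le q^{2(h-\ell)/\theta}$.
\item In your version $\rho' = \rho^{1/2}$, so $\rho \asymp p-1$ and the factor becomes $\rho^{-2(h-\ell)} \asymp (p-1)^{-2(h-\ell)}$.
\end{itemize}

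Your step 4 then contains an arithmetic slip. With $\rho \approx (h-\ell)/b$ you get $\rho^{-2(h-\ell)} = (b/(h-\ell))^{2(h-\ell)}$, not $(b/(h-\ell))^{h-\ell}$. Rescaling $\rho$ by a constant cannot fix this:
\begin{itemize}
\item Minimizing $\rho^{-2(h-\ell)}\, q^{O(b\rho)}$ over $\rho$ gives at best $(O(b/(h-\ell)))^{2(h-\ell)}$, the square of the claimed bound.
\item Choosing $\rho \approx \sqrt{(h-\ell)/b}$ to force the right exponent makes $q^{2b(p-1)} = q^{\Theta(\sqrt{b(h-\ell)})}$. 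This is not $2^{O(h-\ell)}$ when $h-\ell \ll b$.
\end{itemize}
The squared bound is also not merely cosmetic. In the proof of \Cref{thm:mass on UH}, the $\ell = h/2$ term would pick up an extra factor of roughly $(b/h)^{h/4}$.

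The fix is to drop the $A \cdot T^{\text{row}}_{\rho^{1/2}}$ factorization and use the direct domination $\lambda_\rho \le \lambda^*_\rho$. With that change, your remaining steps (hypercontractivity, $\|g\|_p \le q^{b(1-1/p)}$, and the choice $p - 1 \asymp (h-\ell)/b$) go through as in the paper.
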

We will prove the lemma using hypercontractivity for a noisy version of $g$. We consider two different noise operators below.

\subsubsection{The row noise operator and row hypercontractivity}

\begin{definition}
    A \emph{multiplier operator} is an operator on $L^2(\Z_q^{m \times k})$ (the vector space of functions $\Z_q^{m \times k} \to \C$)
    which is diagonalizable in the Fourier basis.
    That is, letting $\Lambda : \Z_q^{m \times k} \to \C$ be a \emph{multiplier function},
    the corresponding \emph{multiplier operator}, denoted $T_\Lambda : L^2(\Z_q^{m \times k}) \to L^2(\Z_q^{m \times k})$
    is defined on the Fourier basis via $T_\Lambda \chi_U \coloneqq \Lambda(U) \chi_U$ for every frequency $U \in \Z_q^{m \times k}$.
    Hence, for arbitrary $g : \Z_q^{m \times k} \to \C$, we have $\hat{T_\Lambda g}(U) = \Lambda(U) \cdot \hat{g}(U)$.
    Equivalently, $T_\Lambda g = g * \Lambda'$, where $*$ denotes convolution and $\Lambda'$ is the inverse Fourier transform of $\Lambda$.
\end{definition}
 
\begin{definition}[Row noise]
For $\rho \in [0,1]$, we define:
\begin{itemize}
    \item $\calD^*_\rho \in \Dist{\Z_q^k}$, is the distribution  which outputs $0$ with probability $\rho$ and a uniformly random element of $\Z_q^k$ with probability $1-\rho$.
    \item The Fourier transform $\lambda^*_\rho \coloneqq \hat{\Den{\calD^*_\rho}}$, which we view as a multiplier function.
    We can calculate $\lambda^*_\rho(u) = \begin{cases} 1 & \text{if }u = 0, \\ \rho & \text{if }u \ne 0 \end{cases}$.
    \item For $U \in \Z_q^{m \times k}$, $\Lambda^* \coloneqq \hat{\Den{(\calD_\rho^*)^{\otimes m}}}$.
    (Equivalently, $\Lambda^*(U) = \prod_{j=1}^m \lambda^*_\rho(U_j) = \rho^{\rwt(U)}$ 
    where $\rwt(U) = |\{i : U_i \neq 0\}|$ is the row weight.)
    \item The corresponding \emph{row noise operator} $T^*_\rho \coloneqq T_{\lambda^*_\rho}$, which acts by convolution with $\Den{\calD^*_\rho}$.
    That is, \[
    (T^*_\rho g)(X) = \Exp_{\rY \sim (\calD_\rho^*)^{\otimes m}}[g(X - \rY)]. \qedhere \]
\end{itemize}
\end{definition}

Viewing each row as a single coordinate supported on $q^k$ different values, we get the following hypercontractive inequality:

\begin{lemma}[Row hypercontractive inequality, e.g.,~{\cite{OD14}}]\label{lem:row_hc}
Let $Q \coloneqq q^k$.
For any $1 < p < 2$ and $g : \Z_q^{m \times k} \to \C$, it holds that
\begin{align*}
\|T^*_\rho g\|_2 \le \|g\|_p
&&\text{where}&&
\rho \coloneqq \sqrt{p-1} \cdot Q^{1/2-1/p}.
\end{align*}
\end{lemma}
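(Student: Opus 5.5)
Lemma~\ref{lem:row_hc} is the standard hypercontractive inequality on a product space with $m$ coordinates, each uniform over an alphabet of size $Q = q^k$. The plan is to identify $\Z_q^{m \times k}$ with $[Q]^m$ by viewing each row $U_j \in \Z_q^k$ (respectively each input row $X_j$) as a single symbol. The key observation is that $T^*_\rho$ is exactly the usual Bonami--Beckner noise operator in this coordinatization. Indeed, $(T^*_\rho g)(X) = \Exp[g(X-\rY)]$ with independent rows $\rY_j \sim \calD^*_\rho$, and $X_j - \rY_j$ equals $X_j$ with probability $\rho$ and is uniform on $\Z_q^k$ otherwise. So each coordinate is independently kept with probability $\rho$ and rerandomized uniformly with probability $1-\rho$, which is precisely the definition of $T_\rho$ on $L^2([Q]^m)$ with the uniform measure.

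Once this identification is made, I would invoke the general hypercontractivity theorem for finite product spaces (\cite{OD14}, Chapter~10). For uniform measure on a set of size $Q$, the smallest atom probability is $\lambda = 1/Q$. The theorem then says that $T_\rho$ is $(p,2)$-hypercontractive, i.e.\ $\|T_\rho g\|_2 \le \|g\|_p$ for $1<p<2$, whenever $\rho \le \sqrt{p-1}\,\lambda^{1/p - 1/2} = \sqrt{p-1}\, Q^{1/2-1/p}$. This is the dual form of the $(2,p')$ statement $\|T_\rho g\|_{p'} \le \|g\|_2$ with $\rho \le \sqrt{1/(p'-1)}\,\lambda^{1/2-1/p'}$, and the two forms are related by self-adjointness of $T_\rho$ and Hölder duality with $1/p+1/p'=1$. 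The theorem is stated in \cite{OD14} for real-valued functions, so complex $g$ need a short extra argument. If the citation does not cover them directly, I would use the fact that $T_\rho$ is a positive (averaging) operator: then $|T_\rho g| \le T_\rho |g|$ pointwise, and it suffices to prove the inequality for $|g| \ge 0$.

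The hard step is not conceptual; it is matching the exact constant and direction in \cite{OD14}. The single-coordinate $(p,2)$ bound tensorizes over the $m$ independent rows by the standard Minkowski-integral induction, and the row product structure of $(\calD^*_\rho)^{\otimes m}$ is exactly what makes that induction apply. The remaining care is to confirm that the $\lambda$-dependent constant with $\lambda = Q^{-1}$ gives precisely $\sqrt{p-1}\,Q^{1/2-1/p}$, as opposed to a weaker but equally usable variant.
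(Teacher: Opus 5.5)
Your proposal is correct and follows the same route as the paper, which simply cites the general product-space hypercontractivity theorem of \cite{OD14} after treating each row as one coordinate over an alphabet of size $Q = q^k$ (minimum atom probability $1/Q$); your constant check, obtaining $\sqrt{p-1}\,Q^{1/2-1/p}$ from the $(2,p')$ form by duality, is right. Your extra step for complex-valued $g$ via $|T^*_\rho g| \le T^*_\rho |g|$ is a detail the paper leaves implicit but does need, since the lemma is later applied to the phase-shifted complex function $g'$ in \Cref{cor2:mass on UH}.
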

 
\subsubsection{A modified noise operator}
 
\begin{definition}
For $\rho \in [0,1]$, we define:
\begin{itemize}
\item The \emph{modified noise multiplier} (for a single row) \[ \lambda_\rho(u) \coloneqq \begin{cases}
1 & \text{if } u = 0, \\
0 & \text{if } |\supp(u)| = 1, \\
\rho^{|\supp(u)|-1} & \text{otherwise}.
\end{cases}
\]
\item The corresponding modified noise multiplier for a matrix:
$\Lambda_\rho(U) \coloneqq \prod_{j=1}^m \lambda_\rho(U_j)$.
\item The corresponding
\emph{modified noise operator} $T_\rho \coloneqq T_{\Lambda_\rho}$. \qedhere
\end{itemize}
\end{definition}

\begin{claim}\label{claim:extraction}
For every $U \in \calU_q^{m,k}(h,\ell)$ and $\rho \in [0,1]$,
$\Lambda_\rho(U) = \rho^{h-\ell}.$
\end{claim}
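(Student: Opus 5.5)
The plan is to compute $\Lambda_\rho(U)$ directly from its product definition, row by row, using the singleton-free structure of $U$. Since $\Lambda_\rho(U) = \prod_{j=1}^m \lambda_\rho(U_j)$, it suffices to evaluate each factor $\lambda_\rho(U_j)$ and then multiply.

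Fix $U \in \calU_q^{m,k}(h,\ell)$ and partition the rows into zero rows and nonzero rows. For a zero row $U_j = 0$, the definition gives $\lambda_\rho(U_j) = 1$, so these rows contribute nothing to the product. For a nonzero row, singleton-freeness (\Cref{def:singleton-free}) rules out $|\supp(U_j)| = 1$. Hence $|\supp(U_j)| \ge 2$, which places us in the third case of the definition, and we get $\lambda_\rho(U_j) = \rho^{|\supp(U_j)|-1}$.

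Let $J \coloneqq \{j : U_j \ne 0\}$, so that $|J| = \rwt(U) = \ell$. Then
\[
\Lambda_\rho(U) = \prod_{j \in J} \rho^{|\supp(U_j)|-1} = \rho^{\sum_{j \in J} |\supp(U_j)| - |J|}.
\]
The sum $\sum_{j\in J}|\supp(U_j)|$ equals $\wt(U) = h$, because zero rows contribute no support. The exponent is therefore $h-\ell$, which proves the claim.

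The only point that needs care is the convention $\rho^0 = 1$ when $\rho = 0$, which can arise when $h = \ell$. This case is in fact impossible: singleton-freeness forces $h \ge 2\ell$, so $h = \ell$ would require $\ell = 0$, i.e.\ $U = 0$, for which both sides equal $1$. Apart from this, there is no real obstacle; the whole argument is a direct unpacking of definitions.
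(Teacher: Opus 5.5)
Your proof is correct and follows the same route as the paper: evaluate $\lambda_\rho$ row by row, use singleton-freeness to force every nonzero row to have support at least $2$, and sum the exponents $|\supp(U_j)|-1$ over the $\ell$ nonzero rows to get $h-\ell$. Working directly with the product, rather than taking $\log_\rho$ as the paper does, is slightly cleaner, since it also covers the endpoints $\rho \in \{0,1\}$ without further comment.
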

\begin{proof}
Taking the logarithm base $\rho$ of both sides, on the LHS, rows with support $0$ contribute $0$,
and a row $j$ with support at least $2$ contributes $|\supp(U_j)|-1$.
We then calculate $\sum_{j : |\supp(U_j)| \ge 2} (|\supp(U_j)|-1) = h-\ell$.
\end{proof}

\begin{claim}
For every $u \in \Z_q^k$ and $\rho \in [0,1]$, it holds that $\lambda_\rho(u) \le \lambda^*_\rho(u)$.
\end{claim}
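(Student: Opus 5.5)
The plan is a direct case analysis on $s \coloneqq |\supp(u)|$, comparing the definitions of $\lambda_\rho$ and $\lambda^*_\rho$ value by value. Both multipliers are real and nonnegative for $\rho \in [0,1]$, so the comparison is a comparison of real numbers.

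\emph{Case $s = 0$, i.e., $u = 0$.} By definition $\lambda_\rho(0) = 1 = \lambda^*_\rho(0)$, so equality holds.

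\emph{Case $s = 1$.} Here $\lambda_\rho(u) = 0$, while $\lambda^*_\rho(u) = \rho \ge 0$, so the inequality holds.

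\emph{Case $s \ge 2$.} Here $\lambda_\rho(u) = \rho^{s-1}$ and $\lambda^*_\rho(u) = \rho$. Since $\rho \in [0,1]$ and $s - 1 \ge 1$, we have $\rho^{s-1} \le \rho$. If $\rho = 0$, both sides are $0$, because $s - 1 \ge 1$ means the exponent never produces the convention $0^0$.

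No step is a genuine obstacle; the only care point is the boundary value $\rho = 0$, which is handled as above. I also note, for later use, that the same comparison lifts to matrices by multiplying over rows: $0 \le \Lambda_\rho(U) \le \Lambda^*(U) = \rho^{\rwt(U)}$ for every $U \in \Z_q^{m \times k}$. This is presumably what is needed to compare $\|T_\rho g\|_2$ with $\|T^*_\rho g\|_2$ via Parseval (\Cref{prop:prelim:parseval}).
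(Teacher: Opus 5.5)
Your proof is correct and matches the paper's argument. The paper splits into $u = 0$ (equality) and $u \ne 0$, where $\lambda^*_\rho(u) = \rho$ and $\lambda_\rho(u) \in \{0, \rho^{|\supp(u)|-1}\}$ is at most $\rho$ by inspection; this is your $s = 1$ and $s \ge 2$ cases combined. Your remark that nonnegativity lifts the comparison to $0 \le \Lambda_\rho(U) \le \rho^{\rwt(U)}$ is exactly what the subsequent Parseval comparison of $\|T_\rho g\|_2$ and $\|T^*_\rho g\|_2$ uses.
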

\begin{proof}
    
If $u = 0$, we have $\lambda_\rho(u) = \lambda^*_\rho(u) = 1$.
Otherwise, $\lambda_\rho(u) \le \rho$ by inspection, whereas $\lambda_\rho(u) = \rho$.
\end{proof}

\begin{claim}\label{claim:domination}
For every $g : \Z_q^{m \times k} \to \C$ and $\rho \in [0,1]$, it holds that 
$\|T_\rho g\|_2 \le \|T^*_\rho g\|_2$.
\end{claim}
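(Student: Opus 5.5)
The plan is to compare the two operators coefficient by coefficient in the Fourier basis, using Parseval's identity (\Cref{prop:prelim:parseval}). Both $T_\rho$ and $T^*_\rho$ are multiplier operators. So for every frequency $U \in \Z_q^{m \times k}$ we have
\[
\hat{T_\rho g}(U) = \Lambda_\rho(U)\,\hat g(U)
\qquad\text{and}\qquad
\hat{T^*_\rho g}(U) = \Lambda^*(U)\,\hat g(U),
\]
where $\Lambda^*(U) = \prod_{j=1}^m \lambda^*_\rho(U_j)$. Parseval then reduces the claim to a pointwise comparison of multipliers:
\[
\|T_\rho g\|_2^2 = \sum_{U} |\Lambda_\rho(U)|^2 |\hat g(U)|^2
\le \sum_{U} |\Lambda^*(U)|^2 |\hat g(U)|^2 = \|T^*_\rho g\|_2^2 .
\]
For this it suffices that $0 \le \Lambda_\rho(U) \le \Lambda^*(U)$ for every $U$.

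To get this multiplier comparison, I will use the product structure row by row. The preceding claim gives $\lambda_\rho(u) \le \lambda^*_\rho(u)$ for each row $u \in \Z_q^k$. Both single-row multipliers are nonnegative for $\rho \in [0,1]$: $\lambda_\rho(u)$ takes the values $1$, $0$, or $\rho^{|\supp(u)|-1}\ge 0$, while $\lambda^*_\rho(u) \in \{1,\rho\}$.

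Multiplying these $m$ row inequalities between nonnegative numbers gives $0 \le \Lambda_\rho(U) = \prod_j \lambda_\rho(U_j) \le \prod_j \lambda^*_\rho(U_j) = \Lambda^*(U)$. Squaring preserves this inequality because both sides are nonnegative. Summing against $|\hat g(U)|^2$ and taking square roots then finishes the proof.

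There is no real obstacle here. The only point that needs care is nonnegativity: the inequality $\lambda_\rho \le \lambda^*_\rho$ alone would not let us take products or compare squares, so I will state explicitly that all multipliers lie in $[0,1]$. I also note in passing that the preceding claim's proof has a typo, saying $\lambda_\rho(u)=\rho$ where it means $\lambda^*_\rho(u)=\rho$. Its content, that $\lambda_\rho(u) \le \rho^{1} = \lambda^*_\rho(u)$ for $u \neq 0$ since $|\supp(u)|-1 \ge 1$ or $\lambda_\rho(u)=0$, is exactly what is used.
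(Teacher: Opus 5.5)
Your proposal is correct and follows the paper's proof. The paper also applies Parseval to both multiplier operators and compares $\Lambda_\rho(U)^2 \le \Lambda^*(U)^2$ pointwise; it asserts that comparison directly, whereas you derive it row by row from the preceding claim using the nonnegativity of all the single-row multipliers. You are also right that the preceding claim's proof contains a typo and should read $\lambda^*_\rho(u)=\rho$.
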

\begin{proof}
We calculate \[
\|T_\rho g\|_2^2 = \sum_{U \in \Z_q^{m \times k}} (\Lambda_\rho(U))^2|\hat{g}(U)|^2 
\le \sum_{U \in \Z_q^{m \times k}} (\Lambda^*_\rho(U))^2|\hat{g}(U)|^2 = \|T^*_\rho g\|_2^2. \qedhere \]
\end{proof}

\begin{claim}\label{claim:chain}
    For every $g : \Z_q^{m \times k} \to \C$ and $r \in [0,1]$,
    it holds that $\sum_{U\in\calU_q^{m,k}(h,\ell)} |\hat{g}(U)|^2 \le \rho^{-2(h-\ell)} \|T^*_\rho g\|_2^2$.
\end{claim}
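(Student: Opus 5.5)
The plan is to compare the squared Fourier mass on $\calU_q^{m,k}(h,\ell)$ directly with the full Fourier expansion of $T_\rho g$, and then dominate $T_\rho g$ by the row noise operator. (The statement writes $r \in [0,1]$ but the bound is phrased in terms of $\rho$; we read it as: for every $\rho \in (0,1]$.) The argument is a Parseval computation followed by the domination claim.

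\emph{Step 1: extract the factor $\rho^{-2(h-\ell)}$.} For every $U \in \calU_q^{m,k}(h,\ell)$, \Cref{claim:extraction} gives $\Lambda_\rho(U) = \rho^{h-\ell}$. Since $\hat{T_\rho g}(U) = \Lambda_\rho(U)\hat{g}(U)$, this yields $|\hat{g}(U)|^2 = \rho^{-2(h-\ell)} |\hat{T_\rho g}(U)|^2$.

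\emph{Step 2: sum and enlarge.} Summing over $U \in \calU_q^{m,k}(h,\ell)$ and adding the nonnegative terms for all remaining frequencies $U \in \Z_q^{m \times k}$ gives
\[
\sum_{U\in\calU_q^{m,k}(h,\ell)} |\hat{g}(U)|^2 \le \rho^{-2(h-\ell)} \sum_{U \in \Z_q^{m\times k}} |\hat{T_\rho g}(U)|^2 .
\]
By Parseval (\Cref{prop:prelim:parseval}), the right-hand side equals $\rho^{-2(h-\ell)}\|T_\rho g\|_2^2$.

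\emph{Step 3: dominate by row noise.} Finally, \Cref{claim:domination} bounds $\|T_\rho g\|_2^2 \le \|T^*_\rho g\|_2^2$, which completes the claim.

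There is no real obstacle here. The only subtlety is the edge case $\rho = 0$, where $\rho^{-2(h-\ell)}$ is undefined; we either exclude it or interpret the right-hand side as $+\infty$. The substantive work is deferred to the next step, where \Cref{lem:row_hc} converts $\|T^*_\rho g\|_2$ into $\|g\|_p$.
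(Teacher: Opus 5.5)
Your proof is correct and is essentially the paper's argument. The paper likewise rewrites $\rho^{2(h-\ell)}|\hat g(U)|^2$ as $\Lambda_\rho(U)^2|\hat g(U)|^2$ via \Cref{claim:extraction}, enlarges the sum to all of $\Z_q^{m\times k}$ to obtain $\|T_\rho g\|_2^2$ by Parseval, and finishes with \Cref{claim:domination}; the only cosmetic difference is that it multiplies through by $\rho^{2(h-\ell)}$ rather than dividing, and your reading of the typo $r\in[0,1]$ as $\rho\in(0,1]$ is the right one.
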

\begin{proof}
    By \Cref{claim:extraction,claim:domination}, we have:
\begin{multline*}
\rho^{2(h-\ell)}\sum_{U \in \calU_q^{m,k}(h,\ell)} |\hat{g}(U)|^2
= \sum_{U \in \calU_q^{m,k}(h,\ell)} \rho^{2(h-\ell)}|\hat{g}(U)|^2
= \sum_{U \in \calU_q^{m,k}(h,\ell)} (\Lambda_\rho(U))^2|\hat{g}(U)|^2 \\
\le \sum_{U \in \Z_q^{m \times k}} (\Lambda_\rho(U))^2|\hat{g}(U)|^2
= \|T_\rho  g\|_2^2
\le \|T^*_\rho g\|_2^2.\qedhere
\end{multline*}
\end{proof}

\subsubsection{Proof of \Cref{thm:squared mass on UHL}}

We are now equipped to prove \Cref{thm:squared mass on UHL}.
 
\begin{proof}[Proof of \Cref{thm:squared mass on UHL}]
We proceed in two steps.
 
\medskip
\noindent\textbf{Step 1: Row hypercontractivity.}
Let $p \in (1,2)$.
By \Cref{lem:row_hc}, \[
\|T^*_\rho g\|_2^2 \le \|g\|_p^2. \]
Since $\|g\|_1 = 1$ and $\|g\|_\infty \le q^b$: \[
\|g\|_p^p = \Exp[|g|^p] = \Exp[|g| \cdot |g|^{p-1}] \le \|g\|_\infty^{p-1} \Exp[|g|] = q^{b(p-1)}. \]
Hence $\|g\|_p^2 \le q^{2b(1-1/p)}$.
 
\medskip
\noindent\textbf{Step 2: Setting parameters.}
Set $Q \coloneqq q^k$ and $p \coloneqq 1 + \frac{h-\ell}{\theta b}$.
This is less than $2$ since $h - \ell < \theta b$.
Combining \Cref{claim:chain} and Step~1:
\begin{equation*}\label{eq:master}
\sum_{U \in \calU_q^{m,k}(h,\ell)} |\hat{g}(U)|^2 \le \underbrace{\rho^{-2(h-\ell)}}_{\eqqcolon W_1} \cdot \underbrace{q^{2b(1-1/p)}}_{\eqqcolon W_2}.
\end{equation*}
We bound the terms separately as: \[
W_1 = \parens*{ \sqrt{p-1} \cdot Q^{1/2-1/p} }^{-2(h-\ell)}
= \parens*{ \frac{\theta b}{h-\ell} \cdot Q^{2/p-1} }^{h-\ell} \le \parens*{ \frac{(\theta q^k)b}{h-\ell} }^{h-\ell}, \]
where we used $p \ge 1 \implies Q^{2/p-1} \le Q$ and $Q = q^k$,
and \[
W_2 = q^{2b(1-1/p)} \le q^{2b(p-1)} = (q^{2/\theta})^{h-\ell}, \]
where the inequality used that $1-1/p \le p-1$ for $p > 0$.
This gives the desideratum with $\zeta \coloneqq \theta q^{k+2/\theta}$.
\end{proof}

\subsection{Bounding the Fourier mass on $\calU_q^{m,k}(h)$}

We now return to proving \Cref{thm:mass on UH}.
This first requires the following lemma:

\begin{lemma}\label{lemma:size of valid}
    For every $q, k \in \N$, there exists $\zeta > 0$ such that the following holds.
    For every $m \in \N$ and $h > \ell \in \N$, we have: \[ 
    |\calU_q^{m,k}(h,\ell)| 
    \le \parens*{ \frac{\zeta m}{\ell}}^{\ell}. \]
\end{lemma}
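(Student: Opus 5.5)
Counting is direct: a matrix $U \in \calU_q^{m,k}(h,\ell)$ is determined by (i) the set $R \subseteq [m]$ of its $\ell$ nonzero rows, and (ii) for each $j \in R$, a nonzero row vector $U_j \in \Z_q^k$ with $|\supp(U_j)| \ge 2$. We will ignore the constraint $\wt(U) = h$ (and the singleton-free condition beyond nonzeroness) and simply overcount. This gives
\[
|\calU_q^{m,k}(h,\ell)| \le \binom{m}{\ell} \cdot (q^k)^{\ell},
\]
since each nonzero row is one of at most $q^k$ vectors.

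Next we use the standard estimate $\binom{m}{\ell} \le \parens*{ \frac{em}{\ell} }^{\ell}$, valid for $1 \le \ell \le m$. Combining,
\[
|\calU_q^{m,k}(h,\ell)| \le \parens*{ \frac{e q^k m}{\ell} }^{\ell},
\]
so the claim holds with $\zeta \coloneqq e q^k$.

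Edge cases. If $\ell > m$, the set is empty, so the bound holds trivially. If $\ell = 0$, then $U = 0$, so $h = 0$; this contradicts $h > \ell$ (and is in any case excluded when $\ell \in \N$ is taken positive). So we may assume $1 \le \ell \le m$.

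There is no real obstacle. The only point to check is that discarding the weight constraint $h$ costs nothing in the form of the bound: the target bound depends only on $\ell$, and the number of choices per row is a constant depending only on $q$ and $k$, which is absorbed into $\zeta$.
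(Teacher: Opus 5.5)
Your proposal is correct and is essentially the paper's proof: both overcount by choosing the $\ell$ nonzero rows and then a vector for each, and both apply $\binom{m}{\ell} \le (em/\ell)^\ell$. The only difference is that the paper counts $q^k-1$ nonzero vectors per row, giving $\zeta = e(q^k-1)$, whereas your cruder $q^k$ gives $\zeta = eq^k$. Your explicit treatment of the edge cases $\ell = 0$ and $\ell > m$ is a small addition the paper leaves implicit.
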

 
\begin{proof}
We severely overcount and give an upper bound on the set $\{U \in \Z_q^{m \times k} : \rwt(U) = \ell\}$.
Each $U \in \calU_q^{m,k}(h,\ell)$ is uniquely determined by specifying, for every nonzero row $j$ (i.e., $U_j \ne 0$), the row vector $U_j$.
It therefore suffices to count the number of ways to choose a subset of nonzero rows and then assign these rows to (nonzero) row vectors.
There are $\binom{m}{\ell}$ ways to choose a set of $\ell$ rows to be nonzero.
Each row then can be assigned to at most $q^k-1$ vectors.
Hence, altogether we get a bound of \[
\binom{m}{\ell} \cdot (q^k-1)^\ell \le \parens*{ \frac{em}{\ell} }^\ell \cdot (q^k-1)^\ell = \parens*{ \frac{\zeta m}{\ell} }^\ell \]
for $\zeta \coloneqq e(q^k-1)$, as desired.
\end{proof}

We can now prove \Cref{thm:mass on UH}.

\begin{proof}[Proof of \Cref{thm:mass on UH}]
    Let $\zeta_1,\zeta_2 > 0$ be the two constants from \Cref{thm:squared mass on UHL,lemma:size of valid}, respectively.
    By Cauchy-Schwarz, and since $h/k \le \rwt(U) \le h/2$ for every $U \in \calU_q^{m,k}(h)$, we have: \[
    \sum_{U \in \calU_q^{m,k}(h)} |\hat{g}(U)|
    = \sum_{\ell=h/k}^{h/2} \sum_{U \in \calU_q^{m,k}(h,\ell)} |\hat{g}(U)| \\
    \le \sum_{\ell=h/k}^{h/2} \parens*{\underbrace{|\calU_q^{m,k}(h,\ell)| \cdot \sum_{U \in \calU_q^{m,k}(h,\ell)} |\hat{g}(U)|^2}_{\eqqcolon S_\ell} }^{1/2}. \]
    We can bound each individual $S_\ell$ via \Cref{thm:squared mass on UHL,lemma:size of valid}:
    \begin{align*}
    S_\ell ~=~ |\calU_q^{m,k}(h,\ell)| \cdot \sum_{U \in \calU_q^{m,k}(h,\ell)} |\hat{g}(U)|^2
    &~\le~  \parens*{ \frac{\zeta_1 b}{h-\ell} }^{h - \ell} \cdot \parens*{ \frac{\zeta_2 m}\ell }^\ell \\
    &~\le~ \parens*{ \frac{2\zeta_1 b}{h} }^{h - \ell} \cdot \parens*{ \frac{\zeta_2 m}\ell }^\ell 
    ~=~ \parens*{ \frac{2\zeta_1 b}{h} }^h \cdot \parens*{ \frac{\zeta_2 h m}{2\zeta_1 b \ell} }^\ell \\
    &~\le~ \parens*{ \frac{2\zeta_1 b}{h} }^h \cdot \parens*{ \frac{\zeta_2 k m}{2\zeta_1 b} }^\ell,
    \end{align*}
    where the last two inequalities use $\ell \le h/2$ and $\ell \ge h/k$, respectively.
    Hence \begin{align*}
    \sum_{U \in \calU_q^{m,k}(h)} |\hat{g}(U)|
    ~\le~ \sum_{\ell=h/k}^{h/2} \sqrt{S_\ell}
    &~\le~ \parens*{ \frac{2 \zeta_1 b }h }^{h/2} \cdot \sum_{\ell=h/k}^{h/2} \parens*{ \sqrt{\frac{\zeta_2 k m}{2\zeta_1 b}} }^\ell \\
    &~\le~ \parens*{ \frac{2 \zeta_1 b }h }^{h/2} \cdot 2 \parens*{ \sqrt{\frac{\zeta_2 k m}{2\zeta_1 b}} }^{h/2}
    ~=~ 2 \parens*{ \frac{\sqrt{2 \zeta_1\zeta_2 k b m}}h }^{h/2},
    \end{align*}
    where the final inequality sets $\epsilon_0 \coloneqq \frac{\zeta_2}{8\zeta_1}$, forcing $\sqrt{\frac{\zeta_2 k m}{2\zeta_1 b}} \ge 2$ (using that $b \le \epsilon_0 km$).\footnote{
        Note that if $X \ge 2$ then $\sum_{j=0}^H X^j = \frac{X^{j+1} - 1}{X - 1} \le \frac{X^{j+1}}{X/2} = 2X$, where we used $X-1 \ge X/2$.}
\end{proof}

\subsection{``Singleton-free'' level inequality around arbitrary centers}\label{sec:singleton-free arbitrary}

In this section, we ``boost'' the theorem which we stated at the beginning of this section (\Cref{thm:mass on UH})
to give bounds for singleton-free matrices in balls around arbitrary points.

We begin with the following simple corollary of \Cref{thm:mass on UH}:

\begin{corollary}\label{cor2:mass on UH}
For every $q, k \in \N$ and $\theta < \infty$, there exists $\zeta < \infty$ and $\epsilon_0 > 0$ such that the following holds.
For every $m, b \in \N$, $g : \Z_q^{m \times k} \to \C$ with $\|g\|_1 = 1$, $V \in \Z_q^{m \times k}$, $h \in \N$ with $h < \theta b$ and $\log_q(\|g\|_\infty) \le b \le \epsilon_0 km$, we have: \[ 
\sum_{U-V \in \calU_q^{m,k}(h)} |\hat{g}(U)| \le \U{C}{s}{h}{km}. \]
\end{corollary}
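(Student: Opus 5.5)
The plan is to reduce the shifted sum to the unshifted one in \Cref{thm:mass on UH} by modulating $g$ with a character. Given $V \in \Z_q^{m \times k}$, define $g_V \coloneqq g \cdot \overline{\chi_V}$. Then $|g_V| = |g|$ pointwise, so $\|g_V\|_1 = \|g\|_1 = 1$ and $\|g_V\|_\infty = \|g\|_\infty \le q^b$. Hence $g_V$ satisfies exactly the same hypotheses as $g$.

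Next compute the Fourier coefficients. Directly from \Cref{def:prelim:fourier coefficient},
\[
\hat{g_V}(W) = \Exp_{\rx}\bracks*{ g(\rx)\,\overline{\chi_V(\rx)}\,\overline{\chi_W(\rx)} } = \hat{g}(W+V).
\]
Substituting $W = U - V$ therefore gives
\[
\sum_{U-V \in \calU_q^{m,k}(h)} |\hat{g}(U)| = \sum_{W \in \calU_q^{m,k}(h)} |\hat{g_V}(W)|.
\]
Applying \Cref{thm:mass on UH} to $g_V$, with the same $\theta$ and hence the same $\zeta$ and $\epsilon_0$, bounds this by $\parens*{\zeta\sqrt{bm}/h}^{h/2}$.

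It remains to match this with the right-hand side $\U{C}{s}{h}{km}$. The displayed statement leaves $C$ and $s$ implicit, so I read them as $s = b$ and $C = \zeta$, possibly enlarged by a constant depending on $k$. Since $h < \theta b$, there are two regimes.
\begin{itemize}
\item If $h \le s = b$, the bound $\parens*{\zeta\sqrt{bm}/h}^{h/2}$ is at most $\parens*{C\sqrt{s\cdot km}/h}^{h/2}$ whenever $C \ge \zeta$.
\item If $b < h < \theta b$, we need the other branch of $\U{C}{s}{h}{km}$. First, $\sqrt{b} \le \sqrt{h}$ gives $\sqrt{bm}/h \le \sqrt{m}/\sqrt{h}$, so $C\sqrt{km}/\sqrt h$ dominates. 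Second, the term $\parens*{eq^2km/h}^{h/2}$ must also dominate, which requires comparing it with $\parens*{\zeta\sqrt{bm}/h}^{h/2}$.
\end{itemize}

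For that second comparison I would either enlarge $\zeta$ or use the trivial bound. Since $\|g\|_1 = 1$, each $|\hat g(U)| \le 1$. The number of $U$ with $U - V$ of weight $h$ is at most $\binom{km}{h}(q-1)^h \le (eqkm/h)^h$. This is the source of the $eq^2km/h$-type term, and I would compare the two after taking square roots. The only real obstacle is this bookkeeping of which constant plays the role of $C$ in each regime; the structural content is entirely the modulation trick. If the trivial-count branch does not match cleanly, I would restrict the claim to $C \coloneqq \max\{\zeta, \zeta\sqrt{\theta}\}$ with $s = b$ and note that only the first two branches are needed downstream.
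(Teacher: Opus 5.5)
Your modulation step is exactly the paper's proof: pass to $g_V \coloneqq g\cdot\overline{\chi_V}$, which has the same $\|\cdot\|_1$ and $\|\cdot\|_\infty$ and satisfies $\hat{g_V}(W)=\hat g(W+V)$, then apply \Cref{thm:mass on UH}. The paper stops at the bound $(\zeta\sqrt{bm}/h)^{h/2}$, and the $\U{C}{s}{h}{km}$ in the displayed statement is evidently a typo: \Cref{cor:mass on UH} uses this corollary with precisely $(\zeta\sqrt{bm}/h)^{h/2}$. So your extra matching step is unnecessary.

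If you do want the $\U{\zeta}{b}{h}{km}$ form, your sketch for the last branch is wrong. Bounding $|\hat g(U)|\le 1$ and multiplying by $\binom{km}{h}(q-1)^h$ gives $(eqkm/h)^{h}$, not $(eq^2km/h)^{h/2}$. The square root comes from Cauchy--Schwarz plus Parseval, as in \Cref{lemma:trivial bound}, which applies to $g_V$ because $b<h$. More simply, that branch follows directly from $\zeta\sqrt{bm}\le eq^2km$, which holds once you shrink $\epsilon_0$ to at most $e^2q^4k/\zeta^2$.
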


\begin{proof}
    Define $g' : \Z_q^{m \times k} \to \C$ via $g'(U) \coloneqq g(U) \cdot \overline{\chi_V(U)}$,
    so that $\hat{g'}(U) = \Exp_{\rX \in \Z_q^{m \times k}} [g'(\rX) \cdot \overline{\chi_V(\rX)}]
    = \Exp_{\rX \in \Z_q^{m \times k}} [g(\rX) \cdot \overline{\chi_{U+V}(\rX)}]
    = \hat{g}(U+V)$.
    Hence, \[
    \sum_{U-V \in \calU_q^{m,k}(h)} |\hat{g}(U)|
    = \sum_{U \in \calU_q^{m,k}(h)} |\hat{g}(U+V)|
    = \sum_{U \in \calU_q^{m,k}(h)} |\hat{g'}(U)|, \]
    at which point we can apply \Cref{thm:mass on UH} (since $g'$ is just a phase shift of $g$, we have $\|g\|_1 = \|g'\|_1$ and $\|g\|_\infty = \|g'\|_\infty$).
\end{proof}

\begin{lemma}\label{cor:mass on UH}
For every $q, k \in \N$ and $\theta < \infty$, there exists $\zeta < \infty$ and $\epsilon_0 > 0$ such that the following holds.
For every $m \in \N$, $g : \Z_q^{m \times k} \to \C$ with $\|g\|_1 = 1$, $V \in \Z_q^{m \times k}$, and $h \in \N$ with $h < \theta b$, we have: \[ 
\sum_{\substack{U \in \SF, \\ \wt(U-V)=h}} |\hat{g}(U)| \le q^{k \cdot \rwt(V)} \cdot \parens*{ \frac{\zeta \sqrt{b m}}{h-\kappa} }^{(h-\kappa)/2}, \]
where $\log_q(\|g\|_\infty) \le b \le \epsilon_0 km$ and $\kappa \coloneqq |\{j \in [m] : |\supp(V_j)| = 1\}|$.
\end{lemma}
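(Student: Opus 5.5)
The plan is to reduce to \Cref{cor2:mass on UH} by enumerating what $U$ looks like on the rows where $V$ is nonzero, and then to pay for this enumeration with the factor $q^{k\cdot\rwt(V)}$.

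\textbf{Step 1: split by the rows of $V$.} Let $R \coloneqq \{j \in [m] : V_j \ne 0\}$, so that $|R| = \rwt(V)$. For a matrix $A \in \Z_q^{m\times k}$ supported on the rows in $R$, I group the frequencies $U$ in the sum according to the restriction $U_R$, writing $U_R = A$. There are at most $q^{k|R|} = q^{k\cdot\rwt(V)}$ such $A$.

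Now fix $A$ and consider any $U$ with $U_R = A$ that appears in the sum. On the rows $j \notin R$ we have $V_j = 0$. Hence $(U-A)_j = U_j = (U-V)_j$ on these rows, and $U-A$ vanishes on $R$. Since $U \in \SF$, the matrix $U - A$ lies in $\calU_q^{m,k}(h')$, where
\[
h' \coloneqq \wt(U-A) = h - \wt(A - V_R)
\]
is determined by $A$ alone. So the contribution of a fixed $A$ is at most $\sum_{U - A \in \calU_q^{m,k}(h')} |\hat g(U)|$. By \Cref{cor2:mass on UH} (which is \Cref{thm:mass on UH} applied after a phase shift by $\chi_A$), this is at most $\bigl(\zeta\sqrt{bm}/h'\bigr)^{h'/2}$, with the convention that the bound is $1$ when $h'=0$. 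The hypothesis $h' \le h < \theta b$ is inherited from the statement.

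\textbf{Step 2: the key observation that $h' \le h-\kappa$.} Take a row $j \in R$ with $|\supp(V_j)| = 1$. Because $U \in \SF$, the row $U_j$ cannot equal $V_j$, so $(U-V)_j \ne 0$. Hence every such singleton row of $V$ contributes at least $1$ to $\wt(A - V_R)$. This gives $\wt(A - V_R) \ge \kappa$, and therefore $h' \le h - \kappa$. If some $A$ is incompatible with every admissible $U$, it contributes nothing and can be discarded.

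\textbf{Step 3: monotonicity in $h'$.} This is the only delicate point. I need that $x \mapsto (c/x)^{x/2}$, with $c = \zeta\sqrt{bm}$, is nondecreasing on $[0, h-\kappa]$. It is increasing for $x \le c/e$, so it suffices to have $\theta b \le \zeta\sqrt{bm}/e$. This holds because $b \le \epsilon_0 km$ gives $\sqrt{bm} \ge b/\sqrt{\epsilon_0 k}$. After enlarging $\zeta$ if necessary (depending only on $q,k,\theta$), we get $c/e \ge \theta b > h$. Consequently each $A$ contributes at most $\bigl(\zeta\sqrt{bm}/(h-\kappa)\bigr)^{(h-\kappa)/2}$, and the $h'=0$ case is also covered since the bound is then at least $1$.

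\textbf{Conclusion.} Summing over the at most $q^{k\cdot\rwt(V)}$ choices of $A$ yields the claimed inequality. The step I expect to require the most care is Step 3, namely checking that the parameter regime really places all relevant $h'$ in the increasing range of $x \mapsto (c/x)^{x/2}$. Everything else is bookkeeping of supports.
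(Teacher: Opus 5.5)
Your proposal is correct and follows essentially the same route as the paper: you enumerate at most $q^{k\cdot\rwt(V)}$ recentering corrections supported on the nonzero rows of $V$, observe that each singleton row of $V$ forces a weight drop of at least one, and then apply \Cref{cor2:mass on UH} followed by monotonicity in the weight. The differences are cosmetic. The paper zeroes out only the offending rows via its map $Z(U)$, whereas you take the full restriction $U_R$ as the center, which makes the singleton-freeness of $U-A$ immediate. Your Step~3 also makes explicit the monotonicity of $x\mapsto(\zeta\sqrt{bm}/x)^{x/2}$ (after enlarging $\zeta$) that the paper's final inequality uses tacitly.
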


\begin{proof}
Fix $V \in \Z_q^{m \times k}$.
For $U \in \Z_q^{m \times k}$, define a matrix $Z(U) \in Z_q^{m \times k}$ row-wise,
via \[
(Z(U))_j \coloneqq -(U+V)_j \cdot \Ind{|\supp(V_j)| = 1 \vee |\supp((U+V)_j)|=1}. \]
We have a few basic facts:
\begin{itemize}
\item For every $U \in \Z_q^{m \times k}$, $U + V + Z(U) \in \SF$ by definition.
\item If $U \in \SF$, then for every $j \in [m]$, $(Z(U))_j \ne 0 \implies V_j \ne 0$.
Indeed, by contrapositive, if $V_j = 0$, then $(U+V)_j = U_j$, and hence $|\supp((U+V)_j)| = |\supp(U_j)| \ne 1$ since $U \in \SF$.
\item $\wt(Z(U)) \ge \kappa$.
\item If $\wt(U+V) = h$, then $\wt(U+V+Z(U)) = h-\wt(Z(U))$.
\end{itemize}

Let $\calZ \coloneqq \{Z \in \Z_q^{m \times k} : \forall j \in [m], Z_j \ne 0 \implies V_j \ne 0\}$
denote the set of frequencies supported only on rows in $V$'s support.
By the second item above, we have $Z(U) \in \calZ$ for every $U \in \SF$.
Also, note that $|\calZ| \le q^{k \cdot \rwt(V)}$.

We can now bound, using the preceding \Cref{cor2:mass on UH}:
\begin{align*}
\sum_{\substack{U \in \SF, \\ \wt(U+V) = h}} |\hat{g}(U)|
&~=~ \sum_{Z\in\calZ} \sum_{\substack{U \in \SF, 
Z(U) = Z, \atop
\wt(U+V+Z) = h-\wt(Z)}} |\hat{g}(U)| \\
&~\le~ \sum_{Z\in\calZ} \sum_{\substack{U+V+Z\in\SF,
Z(U) = Z, \atop
\wt(U+V+Z) = h-\wt(Z)}} |\hat{g}(U)| \\
&~\le~ \sum_{Z\in\calZ} \sum_{\substack{U+V+Z \in \SF, \atop \wt(U+V+Z) = h-\wt(Z)}} |\hat{g}(U)| \\
&~\le~ \sum_{Z \in \calZ} \parens*{ \frac{\zeta \sqrt{b m}}{h-\wt(Z)} }^{(h-\wt(Z))/2} \\
&~\le~ \sum_{Z \in \calZ} \parens*{ \frac{\zeta \sqrt{b m}}{h-\kappa} }^{(h-\kappa)/2}
~\le~ q^{k\cdot\rwt(V)} \cdot \parens*{ \frac{\zeta \sqrt{b m}}{h-\kappa} }^{(h-\kappa)/2}.
\end{align*}
\end{proof}

\section{The inductive argument: Proof of \Cref{lem:inductive_step}}\label{sec:inductive_step}

Finally, in this section, we prove \Cref{lem:inductive_step}, restated as follows:

\inductivestep*

\subsection{An expression for the density function}

We first give an analytic expression for the density function of the distribution $\PostShort{\calD}{M}{\calB}$ in terms of the density functions of $\calD$, $\calY^{\otimes m}$, and $\calB$ (note that $\Pr_{\rZ \sim \Z_q^{m \times k}}[\rZ \in \calB] = |\calB|/q^{m \cdot k}$ below).

\begin{lemma}\label{lemma:inductive step:posterior form}
    Let $\calD \in \Dist{\Z_q^{n \times k'}}$ be a distribution,
    $\phi : [k] \to [k']$ be an injection,
    $\calY \in \Dist{\Z_q^k}$ a distribution,
    $M \in \PHM{m}{k}{n}$ a $k$-partite hypermatching,
    and $\calB \subseteq \Z_q^{m \times k}$ a set.
    Then for every $X_0 \in \Z_q^{n \times k'}$: \[
    \Den{\PostShort{\calD}{M}{\calB}}(X_0) =  
    \Den{\calD}(X_0)
    \cdot (\Den{\calY^{\otimes m}} * \Den{\calB})(\Pi^\phi_M (X_0))
    \cdot \nu_{M,\calB}, \]
    where \[
    \nu_{M,\calB} \coloneqq \frac{ \Pr_{\rZ \sim \Z_q^{m \times k}}[\rZ \in \calB ]}
    {\Pr_{\rX \sim \calD, \rY \sim \calY^{\otimes m}}[\Pi^\phi_M \rX-\rY \in \calB ]} \]
    is a normalization factor which does not depend on $X_0$.
\end{lemma}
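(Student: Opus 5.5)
The plan is to compute the posterior probability directly with Bayes' rule and then recognize the likelihood term as a convolution.

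Fix $X_0 \in \Z_q^{n \times k'}$. By \Cref{def:posterior distribution},
\[
\Den{\PostShort{\calD}{M}{\calB}}(X_0) = q^{nk'} \cdot \frac{\calD(X_0) \cdot \Pr_{\rY \sim \calY^{\otimes m}}[\Pi^\phi_M X_0 - \rY \in \calB]}{\Pr_{\rX \sim \calD, \rY \sim \calY^{\otimes m}}[\Pi^\phi_M \rX - \rY \in \calB]},
\]
which uses the independence of $\rX$ and $\rY$. The factor $q^{nk'}\calD(X_0)$ is exactly $\Den{\calD}(X_0)$, and the denominator is already the denominator of $\nu_{M,\calB}$. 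It therefore remains to rewrite the likelihood term $\Pr_{\rY}[z - \rY \in \calB]$, with $z \coloneqq \Pi^\phi_M X_0$.

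Write $\mathbb 1_\calB$ for the indicator of $\calB$. Then $\Den{\calB} = \mathbb 1_\calB \cdot q^{mk}/|\calB| = \mathbb 1_\calB / \Pr_{\rZ}[\rZ \in \calB]$. By \Cref{prop:convolution with density},
\[
(\Den{\calB} * \Den{\calY^{\otimes m}})(z) = \Exp_{\rY \sim \calY^{\otimes m}}[\Den{\calB}(z - \rY)] = \frac{\Pr_{\rY}[z - \rY \in \calB]}{\Pr_{\rZ \sim \Z_q^{m \times k}}[\rZ \in \calB]}.
\]
Convolution is commutative, which follows from the substitution $x \mapsto z - x$ in the definition, so this also equals $(\Den{\calY^{\otimes m}} * \Den{\calB})(z)$. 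Solving for the likelihood gives
\[
\Pr_{\rY}[z - \rY \in \calB] = \Pr_{\rZ}[\rZ \in \calB] \cdot (\Den{\calY^{\otimes m}} * \Den{\calB})(z).
\]
Substituting this into the Bayes expression produces the factor $\Pr_\rZ[\rZ\in\calB]$ in the numerator, which together with the denominator is exactly $\nu_{M,\calB}$. Neither depends on $X_0$.

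There is no real obstacle here; the only care needed is with sign conventions, namely matching $z - \rY$ with the convolution definition $(f*g)(z) = \Exp_{\rx}[f(\rx)g(z - \rx)]$. One must also assume the conditioning event has positive probability, i.e.\ $\calB$ is reachable, so that both the posterior and $\nu_{M,\calB}$ are well defined.
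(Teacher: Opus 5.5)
Your proof is correct and follows the paper's route: apply Bayes' rule, identify the likelihood $\Pr_{\rY}[\Pi^\phi_M X_0 - \rY \in \calB]$ as a rescaled convolution, and absorb the constants into $\nu_{M,\calB}$. The only difference is cosmetic. You identify the convolution via \Cref{prop:convolution with density} together with commutativity, whereas the paper rewrites the likelihood divided by $|\calB|$ as $\Pr[\rB + \rY = \Pi^\phi_M X_0]$ for $\rB \sim \Unif(\calB)$ and reads it off as the density of a sum (\Cref{prop:prelim:density of sum}).
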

\begin{remark}
    Though we will not need this fact, it is easy to check that $(\Den{\calY^{\otimes m}} * \Den{\calB})(\Pi^\phi_M (X_0))
    = \Den{\PostShort{\calU}{M}{\calB}}(X_0)$,
    where $\calU \coloneqq \Unif{\Z_q^{n \times k'}}$ is the \emph{uniform} distribution on $\Z_q^{n \times k'}$.
    That is, $\PostShort{\calU}{M}{\calB}$ is the posterior distribution of $\rX$ when sampling $\rX$ from the \emph{uniform} prior ($\calU$) conditioned on $\Pi^\phi_M \rX - \rY \in \calB$ for $\rY \sim \calY^{\otimes m}$.)
    Thus, \Cref{lemma:inductive step:posterior form} states that (up to normalization) the posterior on $\rX$ with respect to the prior $\calD$
    is the product of the prior $\calD$ and the posterior with respect to the uniform prior.
\end{remark}
\begin{proof}
    Let $X_0 \in \Z_q^{n \times k'}$.
    By Bayes' rule,
    \begin{multline*}
        \Pr_{\rX \sim \calD, \rY \sim \calY^{\otimes m}}[\rX = X_0 \mid \Pi^\phi_M \rX - \rY \in \calB] \\
        = \Pr_{\rY \sim \calY^{\otimes m}}[\Pi^\phi_M X_0 - \rY \in \calB] \cdot \Pr_{\rX \sim \calD}[\rX = X_0] \cdot \parens*{ \Pr_{\rX \sim \calD, \rY \sim \calY^{\otimes m}}[\Pi^\phi_M \rX - \rY \in \calB] }^{-1}.
    \end{multline*}
    Consequently, \[
        \Den{\PostShort{\calD}{M}{\calB}}(X_0) = \Pr_{\rY \sim \calY^{\otimes m}}[\Pi^\phi_M X_0 - \rY \in \calB] \cdot \Den{\calD}(X_0)  \cdot \nu_{M,\calB} \cdot \frac{q^{m \cdot k}}{|\calB|}.
    \]
    Finally, \[ \frac{\Pr_{\rY \sim \calY^{\otimes m}}[\Pi^\phi_M X_0 - \rY \in \calB]}{|\calB|}
    = \Pr_{\rB \sim \calB, \rY \sim \calY^{\otimes m}}[\rB + \rY = \Pi^\phi_M X_0] \] 
    and therefore
    \begin{align*}
        \Den{\PostShort{\calD}{M}{\calB}}(X_0)
        &= \Pr_{\rB \sim \calB, \rY \sim \calY^{\otimes m}}[\rB + \rY = \Pi^\phi_M X_0] \cdot \Den{\calD}(X_0)  \cdot \nu_{M,\calB} \cdot q^{m \cdot k} \\
        &= (\Den{\calB} * \Den{\calY^{\otimes m}}) (\Pi^\phi_M X_0) \cdot \Den{\calD}(X_0) \cdot \nu_{M,\calB}. \qedhere
    \end{align*}
\end{proof}

\Cref{lemma:inductive step:posterior form} lets us derive a useful upper bound on the $1$-norm of Fourier coefficients in a ball around the origin in a posterior distribution:

\begin{corollary}\label{cor:inductive step:upper bound}
    In the setup of \Cref{lemma:inductive step:posterior form}, we have \[
    \sum_{\substack{W \in \Z_q^{n \times k'}, \\ \wt(W) = h}} \abs{ \hat{\Den{\PostShort{\calD}{M}{\calB}}}(W)}
    \le 
    \nu_{M,\calB} \cdot \sum_{V \in \Z_q^{n \times k'}} \abs{\hat{\Den{\calD}}(U)}
    \sum_{\substack{U \in \SF, \\ \wt(U+\Pi^\phi_M U) = h - (\wt(V) - \wt(\Pi^\phi_M V))}} \abs{\hat{\Den{\calB}}(W)}. \]
\end{corollary}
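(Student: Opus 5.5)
The plan is to read the (slightly garbled) indices in the displayed bound as follows. The outer sum runs over $V \in \Z_q^{n \times k'}$ with weight $|\hat{\Den{\calD}}(V)|$. The inner sum runs over $U \in \calU_q^{m,k}$, i.e.\ $U \in \Z_q^{m\times k}$ with $\supp(U) \in \SF$, subject to $\wt(U + \Pi^\phi_M V) = h - (\wt(V) - \wt(\Pi^\phi_M V))$, with summand $|\hat{\Den{\calB}}(U)|$. The proof is then a direct Fourier computation starting from \Cref{lemma:inductive step:posterior form}.

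\textbf{Step 1: write the posterior density as a product.} Set $g \coloneqq \Den{\calY^{\otimes m}} * \Den{\calB} : \Z_q^{m \times k} \to \R$. By \Cref{lemma:inductive step:posterior form},
\[
\Den{\PostShort{\calD}{M}{\calB}} = \nu_{M,\calB} \cdot \Den{\calD} \cdot (g \circ \Pi^\phi_M).
\]
The converse part of \Cref{prop:prelim:convolution} says the Fourier transform of a product is the convolution (sum form) of the transforms. Hence for every $W \in \Z_q^{n\times k'}$,
\[
\hat{\Den{\PostShort{\calD}{M}{\calB}}}(W) = \nu_{M,\calB} \sum_{V} \hat{\Den{\calD}}(V) \cdot \hat{g\circ\Pi^\phi_M}(W - V).
\]

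\textbf{Step 2: compute the transform of $g \circ \Pi^\phi_M$.} The map $\Pi^\phi_M$ is a coordinate projection onto the set $\phi(\iota_M([m]\times[k]))$, followed by a relabeling bijection onto $[m]\times[k]$. So \Cref{prop:prelim:projection} applies, after composing with the bijection. It gives that $\hat{g\circ\Pi^\phi_M}(W')$ vanishes unless $W' = (\Pi^\phi_M)^\intercal U$ for some (unique) $U \in \Z_q^{m\times k}$, in which case it equals $\hat g(U)$.

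Next, by \Cref{prop:prelim:convolution}, $\hat g(U) = \hat{\Den{\calY^{\otimes m}}}(U)\,\hat{\Den{\calB}}(U)$. By \Cref{prop:BiU:OwU to SF} this vanishes unless $U$ is singleton-free, and $|\hat{\Den{\calY^{\otimes m}}}(U)| \le 1$ always. So only $W = V + (\Pi^\phi_M)^\intercal U$ with $U$ singleton-free contribute, and each contributes at most $|\hat{\Den{\calD}}(V)|\,|\hat{\Den{\calB}}(U)|$ in absolute value.

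\textbf{Step 3: sum over $\wt(W)=h$ and swap the order of summation.} Apply the triangle inequality to the sum over $W$ with $\wt(W)=h$, then exchange the sums over $W$ and $V$. For fixed $V$, the map $U \mapsto V + (\Pi^\phi_M)^\intercal U$ is injective because $(\Pi^\phi_M)^\intercal$ is injective. So each pair $(V,U)$ is counted at most once.

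It remains to translate the constraint $\wt(W) = h$ into a constraint on $U$. Outside the matched coordinates $\phi(\iota_M([m]\times[k]))$, $W$ agrees with $V$. On those coordinates, $W$ reads (after relabeling) as $\Pi^\phi_M V + U$. Therefore
\[
\wt(W) = \bigl(\wt(V) - \wt(\Pi^\phi_M V)\bigr) + \wt(U + \Pi^\phi_M V),
\]
which is exactly the constraint in the inner sum. Collecting terms gives the stated bound.

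The only step needing care is this weight bookkeeping, together with the identification of $(\Pi^\phi_M)^\intercal$ as an injective coordinate embedding. Both follow by unpacking \Cref{def:correspondence}, so I expect no real obstacle.
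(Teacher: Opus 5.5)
Your proof is correct and follows the paper's argument essentially step for step. You expand the posterior density via \Cref{lemma:inductive step:posterior form}, use the product/convolution duality together with \Cref{prop:prelim:projection} to restrict to $W-V$ supported on the matched coordinates, discard non-singleton-free terms via \Cref{prop:BiU:OwU to SF} and $|\hat{\Den{\calY^{\otimes m}}}|\le 1$, and then swap sums and substitute $U=\Pi^\phi_M(W-V)$ with the same weight bookkeeping. Your reading of the garbled indices, namely $\hat{\Den{\calD}}(V)$, $\wt(U+\Pi^\phi_M V)$ and $\hat{\Den{\calB}}(U)$, is exactly what the paper's proof establishes.
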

Note that $\wt(V)-\wt(\Pi^\phi_M V)$ equals the size of the support of $V$ \emph{outside} of $\phi(\supp(M))$.

\begin{proof}
    We apply \Cref{lemma:inductive step:posterior form,prop:prelim:convolution} to get:
\begin{align*}
\hat{\Den{\PostShort{\calD}{M}{\calB}}}(W)        &=  
    \nu_{M,\calB} \cdot \sum_{V \in \Z_q^{n \times k'}} \hat{\Den{\calD}}(V)
    \cdot \hat{((\Den{\calY^{\otimes m}} * \Den{\calB}) \circ \Pi^\phi_M)} (W-V) \\
\intertext{and then \Cref{prop:prelim:convolution,prop:prelim:projection} to get:}
    &=  
    \nu_{M,\calB} \cdot \sum_{\substack{V \in \Z_q^{n \times k'}, \\ \supp(W-V) \subseteq \phi(\supp(M))}} \hat{\Den{\calD}}(V)
    \cdot \hat{\Den{\calY^{\otimes m}}}(\Pi^\phi_M(W-V))  \cdot \hat{\Den{\calB}}(\Pi^\phi_M(W-V)).
\end{align*}
Hence by \Cref{prop:prelim:fourier of one-wise}, the triangle inequality, and upper bounding $|\hat{\Den{\calY^{\otimes m}}}(\Pi^\phi_M(W-V))|$ by $1$, \[
|\hat{\Den{\PostShort{\calD}{M}{\calB}}}(W) | 
\le \nu_{M,\calB} \cdot \sum_{\substack{V \in \Z_q^{n \times k'}, \\ \supp(W-V) \subseteq \phi(\supp(M)), \\ \Pi^\phi_M (W-V) \in \SF}}
\abs{\hat{\Den{\calD}}(V)} \cdot \abs{\hat{\Den{\calB}}(\Pi^\phi_M(W-V))}.
\]
Hence 
\begin{align*}
    \sum_{\substack{W \in \Z_q^{n \times k'}, \\ \wt(W) = h}} \abs{ \hat{\Den{\PostShort{\calD}{M}{\calB}}}(W)}
    &\le 
    \nu_{M,\calB} \cdot \sum_{V \in \Z_q^{n \times k'}} \abs{\hat{\Den{\calD}}(V)}
    \sum_{\substack{W \in \Z_q^{n \times k'}, \\ \wt(W) = h, \\ \supp(W-V) \subseteq \phi(\supp(M)), \\ \Pi^\phi_M(W-V) \in \SF}} \abs{\hat{\Den{\calB}}(\Pi^\phi_M(W-V))} \\
    \intertext{and substituting $U \coloneqq \Pi^\phi_M(W-V)$,}
    &\le 
    \nu_{M,\calB} \cdot \sum_{V \in \Z_q^{n \times k'}} \abs{\hat{\Den{\calD}}(V)}
    \sum_{\substack{U \in \SF, \\ \wt(U+\Pi^\phi_M V) = h - (\wt(V) - \wt(\Pi^\phi_M V))}} \abs{\hat{\Den{\calB}}(U)},
\end{align*}
where we used that for $W \in \Z_q^{n \times k'}$ with $\supp(W-V) \subseteq \phi(\supp(M))$,
we have $\wt(W) = h \iff \wt(\Pi^\phi_M W) = h - (\wt(V) - \wt(\Pi^\phi_M V))$
(because $\supp(W-V) \subseteq \phi(\supp(M))$, 
implies $W$ and $V$ agree on all coordinates outside $\phi(\supp(M))$,
and thus $\wt(W) - \wt(\Pi^\phi_M W) = \wt(V) - \wt(\Pi^\phi_M V)$.)
\end{proof}

\subsection{The inductive step ``in expectation''}

\Cref{cor:inductive step:upper bound} gives a natural approach to proving \Cref{lem:inductive_step}:
We formulate and prove an ``in-expectation'' version of the lemma, which then will imply the lemma by Markov's inequality.

\begin{lemma}[Inductive step in expectation]\label{lemma:inductive step:in expectation}
    For every $q,k,k' \in \N$, there exist $\alpha_0 > 0$ and $C_0 < \infty$ such that for every $C > C_0$,
    there exist $\sigma \in (0,1)$ and $C' < \infty$ such that the following holds.
    
    For every injection $\phi : [k] \to [k']$, one-wise uniform distribution $\calY \in \OWU{k}$,
    and every $n,m,s,h \in \N$ satisfying $m \le \alpha_0 n$ and $1 \le h \le s \le \sigma n$, and every $\calD \in \Dist{\Z_q^{n \times k'}}$ which is $(C,s)$-bounded: \[
    \sum_{V \in \Z_q^{n \times k'}} |\hat{\Den{\calD}}(V)| \cdot \Exp_{\rM\in\PHM{m}{k}{n}} \bracks*{ \max_{\substack{g : \Z_q^{m \times k} \to \C, \\
    \|g\|_1 = 1, \\
    \|g\|_\infty \le q^s}} \sum_{\substack{U \in \SF, \\ \wt(U+\Pi^\phi_\rM V) = h - (\wt(V) - \wt(\Pi^\phi_\rM V))}} |\hat{g}(U)| }
    \le \parens*{ \frac{C' \sqrt{s n k'}}{h} }^{h/2}. \]
\end{lemma}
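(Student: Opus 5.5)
The plan is to fix a frequency $V$ of the prior, bound the inner expectation over $\rM$ by a function of $\wt(V)$ alone, and then sum against the $(C,s)$-boundedness of $\calD$ level by level.

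\textbf{Decomposition of a fixed $V$.} For fixed $V$ with $\wt(V)=h_0$ and a fixed matching $M$, write $a \coloneqq \wt(\Pi^\phi_M V)$ for the part of $V$ on $\phi(\supp(M))$. The remaining $h_0-a$ coordinates of $V$ lie outside $\phi(\supp(M))$, so the inner sum is over $U\in\SF$ with $\wt(U+\Pi^\phi_M V)=h-(h_0-a)$. Set $V'\coloneqq -\Pi^\phi_M V$. Then \Cref{cor:mass on UH} with $b=s$ bounds the inner maximum by
\[
q^{k\,\rwt(V')}\Big(\tfrac{\zeta\sqrt{sm}}{h-h_0+a-\kappa}\Big)^{(h-h_0+a-\kappa)/2},
\]
where $\kappa$ counts the rows of $\Pi^\phi_M V$ with exactly one nonzero entry. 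We need $h<\theta s$, which holds since $h\le s$; we also need $s\le\epsilon_0 km$. If $m$ is too small for that, I would fall back on the trivial bound $\|\hat g\|_1\le q^{km/2}\|g\|_2$, or shrink $\sigma$; this parameter bookkeeping will need care. Two cases follow.
\begin{itemize}
\item If $h_0>h$, the weight constraint forces $a\ge h_0-h$.
\item If $h_0\le h$, the level $h':=h-h_0+a-\kappa$ gains from the rows of $V$ that the matching hits with at least two entries. Each such row contributes at least $2$ to $a$ and only $1$ to $\rwt$, while singleton rows contribute $1$ to $\kappa$ and cancel.
\end{itemize}

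\textbf{Averaging over $\rM$.} For a fixed $V$ of weight $h_0$, I estimate the probability that the random matching realizes a given pattern: $a$ hit coordinates, split into $\kappa$ singleton rows and $r$ rows with at least two hits. This is in the style of \Cref{lemma:BiU:combinatorial bound}.
\begin{itemize}
\item A single coordinate is hit with probability $m/n\le\alpha_0$.
\item A pair of coordinates of $V$ landing in the same edge costs about $k/n$ per extra coincidence.
\end{itemize}
So the probability is roughly $\binom{h_0}{a}\alpha_0^{a-r}\,(O(h_0)/n)^{a-2r}$ up to constants. The factor $q^{k\,\rwt(V')}\le q^{k(\kappa+r)}$ is absorbed by choosing $\alpha_0$ small, since each hit row costs at least a factor $\alpha_0$. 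Multiplying the probability by the bound from \Cref{cor:mass on UH} and summing over patterns should give, for each $V$ of weight $h_0$, a bound of the form
\[
\sum_{j}\Big(\tfrac{O(h_0)}{n}\Big)^{j/2}\cdot \Big(\tfrac{\zeta\sqrt{sm}}{h-h_0+j}\Big)^{(h-h_0+j)/2}\cdot c^{\,h_0},
\]
where $j$ is the effective number of coincidences.

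\textbf{Summing over levels.} Next I sum over $V$ using $\sum_{\wt(V)=h_0}|\hat{\Den{\calD}}(V)|\le \U{C}{s}{h_0}{nk'}$. The claim to verify is the product-type inequality
\[
\Big(\tfrac{C\sqrt{snk'}}{h_0}\Big)^{h_0/2}\Big(\tfrac{\zeta\sqrt{sm}}{h-h_0}\Big)^{(h-h_0)/2}\lesssim \Big(\tfrac{C'\sqrt{snk'}}{h}\Big)^{h/2},
\]
with the factor $c^{h_0}$ absorbed into $C'$, plus geometric decay in $j$ and in $h_0$ beyond $h$. The inequality holds because $m\le n$ and the function $x\mapsto(A/x)^{x/2}$ is log-concave, via a binomial or Hölder-type bound, $(h_0)^{h_0}(h-h_0)^{h-h_0}\ge h^h 2^{-h}$. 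The regime $h_0>h$ is handled by the extra factor $(O(h_0)/n)^{(h_0-h)/2}$. Against the $\U{}{}{h_0}{}$ growth $(C\sqrt{sn}/h_0)^{h_0/2}$, this gives terms like $(C\sqrt{s h_0^2/n}/h_0)^{\cdots}=(C\sqrt{s/n})^{\cdots}$, which are small since $s\le\sigma n$. For $h_0>s$ the second branch of $\U{}{}{}{}$ is used with the same logic.

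\textbf{Main obstacle.} I expect the hardest step to be the combinatorial averaging. The issue is that singleton rows of $\Pi^\phi_M V$ give no level reduction through \Cref{cor:mass on UH}, because $\kappa$ cancels the gain. The decay therefore has to come entirely from the probability $\alpha_0^{\kappa}$ that those coordinates are hit. I would first try to charge each hit coordinate a factor $\alpha_0$ and each same-row coincidence a factor $O(h_0/n)$. Then I would check that $q^{k\,\rwt}$ and $\binom{h_0}{a}\le 2^{h_0}$ are dominated once $\alpha_0\le q^{-2k}/100$ and $C_0$ is large.
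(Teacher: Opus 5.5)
Your route is the paper's route. You bound the inner maximum for each fixed $(V,M)$ by \Cref{cor:mass on UH}, so singleton rows of $\Pi^\phi_M V$ cancel and the effective level is $h-\mathrm{wt}(V)+\eta$, where $\eta$ counts coordinates of $V$ in edges hit at least twice. You then average over $M$ with a hit-pattern probability (\Cref{lemma:inductive step:combinatorial bound}) and sum against $(C,s)$-boundedness (\Cref{lemma:inductive step:casework}). Charging $q^{k\,\rwt(\Pi^\phi_M V)}$ to one factor $\alpha_0$ per hit edge is a fine alternative to the paper's crude $q^{k\,\mathrm{wt}(V)}$.

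The averaging step, however, is wrong as you state it. Your estimate $\binom{h_0}{a}\alpha_0^{a-r}(O(h_0)/n)^{a-2r}$, and the rule ``$\alpha_0$ per hit coordinate, $O(h_0/n)$ per coincidence'', are not upper bounds.
\begin{itemize}
\item A coordinate that shares an edge with another hit coordinate is covered automatically. It costs about $1/n$ for the specific coincidence, with no extra factor $\alpha_0$.
\item A singleton hit costs about $\alpha_0$ and carries no $h_0/n$ factor. For $r=0$ your formula gives $\alpha_0^\kappa(O(h_0)/n)^\kappa$, far below the truth.
\end{itemize}
The correct bound is of order $c^{h_0}\alpha_0^{\kappa+d}(O(h_0)/n)^{\eta-d}$, with $d$ the number of multiply-hit edges. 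Since $\eta/k\le d\le\eta/2$, this is at most $c^{h_0}\alpha_0^{\kappa+\eta/k}(O(h_0)/n)^{\eta/2}$.

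This matters because the per-$V$ form you then sum, $c^{h_0}(O(h_0)/n)^{j/2}(\cdots)$, is too weak when $h_0>s$ is a constant fraction of $n$. That case does occur, since $h_0$ can be as large as $h+k\alpha_0 n$.
\begin{itemize}
\item Because $\alpha_0$ is fixed before $C$, for large $C$ the operative branch of $\U{C}{s}{h_0}{nk'}$ there is $(eq^2nk'/h_0)^{h_0/2}$. The first branch only yields decay when $h_0/n\lesssim C^{-2}$.
\item With $j\ge h_0-h$, this branch cancels exactly against $(h_0/n)^{j/2}$. What remains is roughly $(c')^{h_0}(n/h_0)^{h/2}$ with $c'>1$ depending on $q,k,k'$, which is exponentially large in $n$.
\end{itemize}
So ``the same logic'' does not go through for $h_0>s$. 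On top of the $\alpha_0$'s you spend on $q^{k\,\rwt}$, you must keep a factor $\alpha_0^{d}\le\alpha_0^{(h_0-h)/k}$. You must also choose $\alpha_0$ small in terms of $q,k,k'$ alone, so that this factor beats $(c')^{h_0}$. In the range $h_0\le\sqrt{\sigma}\,n$, the first branch together with small $\sigma$ suffices.

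The paper's write-up is loose at the same place. Step 4 of the proof of \Cref{lemma:inductive step:combinatorial bound} discards $\alpha^d$. Yet Case 3 in the proof of \Cref{lemma:induction step:bound reduced} uses a factor $\alpha_0^{(u-h)/2}$ that the stated bound $\alpha^\kappa C^u(u/n)^{\eta/2}$ does not provide. Retaining $\alpha^d$ repairs both.

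Your concern about the hypothesis $s\le\epsilon_0 km$ of \Cref{cor:mass on UH} turns out to be harmless, but shrinking $\sigma$ is not the fix, since $m$ is only bounded above. Without that hypothesis, the geometric sum in the proof of \Cref{thm:mass on UH} still gives $(O(s)/h')^{h'/2}$. Since $s\le\sqrt{\sigma}\sqrt{sn}$, this is all the target $\sqrt{snk'}$ needs.
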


Before proving this lemma, we show how it will imply \Cref{lem:inductive_step}.
To do so, we will require the following simple analogue of \cite[Lemma 6.3]{CGS+22-linear-space},
which is a trivial upper-bound on the $1$-norm mass in high-Hamming weight balls:

\begin{lemma}\label{lemma:trivial bound}
    For every $q,N \in \N$ and $g : \Z_q^N \to \C$ with $\|g\|_1 = 1$ and $\log_q(\|g\|_\infty) \le h$, we have \[
    \sum_{\substack{U \in \Z_q^N, \\ \wt(U) = h}} |\hat{g}(U)| \le \parens*{ \frac{q^2e N}{h} }^{h/2}.
    \]
\end{lemma}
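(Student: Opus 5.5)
The plan is to compare $\ell_1$ with $\ell_2$ via Cauchy--Schwarz, then bound $\ell_2$ by Parseval together with the $\ell_\infty$ bound. First, the number of frequencies $U\in\Z_q^N$ with $\wt(U)=h$ is
\[
\binom{N}{h}(q-1)^h \le \parens*{\frac{e(q-1)N}{h}}^h .
\]
By Cauchy--Schwarz,
\[
\sum_{\wt(U)=h}|\hat g(U)| \le \parens*{\binom{N}{h}(q-1)^h}^{1/2}\parens*{\sum_{U}|\hat g(U)|^2}^{1/2}.
\]

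Next, I bound the total squared mass using Parseval (\Cref{prop:prelim:parseval}) and Hölder:
\[
\sum_U|\hat g(U)|^2=\|g\|_2^2=\Exp[|g|^2]\le \|g\|_\infty\,\|g\|_1\le q^{h}.
\]
This uses $\|g\|_1=1$ and $\|g\|_\infty\le q^h$, which is exactly the hypothesis $\log_q(\|g\|_\infty)\le h$.

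Combining the two bounds gives
\[
\sum_{\wt(U)=h}|\hat g(U)| \le \parens*{\frac{e(q-1)N}{h}}^{h/2} q^{h/2}=\parens*{\frac{e(q-1)qN}{h}}^{h/2}\le\parens*{\frac{q^2eN}{h}}^{h/2},
\]
which is the claim. Nothing here should be a real obstacle. The only care needed is in the binomial estimate $\binom Nh\le (eN/h)^h$ and in noting that the $\ell_\infty$ hypothesis is used through $\|g\|_2^2\le\|g\|_\infty\|g\|_1$, without any hypercontractivity. This is why the bound is only useful in the regime $h>s$, where it appears as the second term of the minimum in $\U{C}{s}{h}{N}$.
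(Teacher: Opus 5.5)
Your proposal is correct and follows the paper's proof essentially step for step: Parseval combined with $\|g\|_2^2 \le \|g\|_1\|g\|_\infty \le q^h$, then Cauchy--Schwarz over the $(q-1)^h\binom{N}{h}$ frequencies of weight $h$, and finally the estimate $\binom{N}{h}\le (eN/h)^h$, absorbing $(q-1)q \le q^2$.
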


\begin{proof}
    By Parseval's identity (\Cref{prop:prelim:parseval}) and the assumptions, we have:
    \begin{equation}\label{eq:trivial 2 bound}
    \sum_{\substack{U \in \Z_q^N, \\ \wt(U) = h}} |\hat{g}(U)|^2
    \le \sum_{U \in \Z_q^N} |\hat{g}(U)|^2 = \|g\|_2^2 
    \le \|g\|_1 \cdot \|g\|_\infty \le q^h.
    \end{equation}
    Also, there are (exactly) $(q-1)^h \cdot \binom{N}h$ frequencies $U \in \Z_q^N$ with $\wt(U) = h$.
    By Cauchy-Schwarz, \[
    \sum_{\substack{U \in \Z_q^N, \\ \wt(U) = h}} |\hat{g}(U)|
    \le \sqrt{(q-1)^h \binom{N}h \sum_{\substack{U \in \Z_q^N, \\ \wt(U) = h}} |\hat{g}(U)|^2}
    \le \sqrt{(q-1)^h \parens*{ \frac{eN}h }^h \cdot q^h}
    \le \parens*{ \frac{eq^2 N}h }^{h/2},
    \]
    where the second inequality uses the standard binomial estimate and \Cref{eq:trivial 2 bound}.
\end{proof}

We can now give a proof of \Cref{lem:inductive_step}.

\begin{proof}[Proof of \Cref{lem:inductive_step} (modulo \Cref{lemma:inductive step:in expectation})]
    Let $\alpha_0$ and $C_0$ be as defined in \Cref{lemma:inductive step:in expectation}.
    Given parameters $C$ and $\delta$, we then invoke \Cref{lemma:inductive step:in expectation} and let $\sigma$ and $C'$ denote the resulting values.

    Let $\calE(M)$ denote the event that the matching $M \in \PHM{m}{k}{n}$ is such that $\|\mu_{\InputShort{\calD}{M}} -1 \|_\infty > \delta$.
    For $h \in [\sigma n]$, let $\calF_h(M)$ denote the event that \[
    \frac1{1-\delta} \sum_{V \in \Z_q^{n \times k'}} |\hat{\Den{\calD}}(V)| \cdot \parens*{ \max_{\substack{g : \Z_q^{m \times k} \to \C, \\
    \|g\|_1 = 1, \\
    \|g\|_\infty \le q^s}} \sum_{\substack{U \in \SF, \\ \wt(U+\Pi^\phi_M V) = h - (\wt(V) - \wt(\Pi^\phi_M V))}} |\hat{g}(U)| }
    > \frac1{\delta^h}  \parens*{ \frac{C' \sqrt{s n k'}}{\max\{h,s\}} }^{h/2}. \]
    \paragraph*{Sufficiency of the events.}
    We first claim that whenever $\overline{\calE(M) \cup \bigcup_{h=1}^{\sigma n} \calF_h(M)}$ holds,
    then the desideratum holds, i.e., for every $\calB \subseteq \Z_q^{m \times k}$ such that $|\calB| \ge q^{n-b'}$,
    we have that $\PostShort{\calD}{M}{\calB}$ is $(C',s)$-bounded and $\|\Den{\PostShort{\calD}{M}{\calB}} - 1\|_\infty < q^{s}$.

    We first observe that 
    $\nu_{M,\calB}^{-1} = \frac{\Pr_{\rZ \sim \InputShort{\calD}{M}}[\rZ \in \calB]}{\Pr_{\rZ \in \Z_q^{m \times k}}[\rZ \in \calB]}$,
    which is at least $1-\delta$ by \Cref{obs:boundedness implies uniformity} and the definition of $\overline{\calE(M)}$.
    Hence, $\nu_{M,\calB} \le \frac1{1-\delta}$.
    
    Now, \Cref{lemma:inductive step:posterior form} and \cref{prop:convolution with density} give that 
    $\|\Den{\PostShort{\calD}{M}{\calB}}\|_\infty \le \|\Den{\calD}\|_\infty \cdot \|\Den{\calB}\|_\infty \cdot \nu_{M,\calB}$.
    We have $\|\Den{\calD}\|_\infty \le q^b$ and $\|\Den{\calB}\|_\infty \le q^{b'}$ (by assumption),
    and $\nu_{M,\calB} \le \frac1{1-\delta}$,
    and therefore the product is at most $q^{b+b'+\log_q(1/1-\delta')}\le q^s$, by our assumption.

    We now choose a constant $C'' < \infty$ such that for every $h \in [nk']$, we can guarantee the inequality $\sum_{W \in \Z_q^{n \times k'}, \wt(W) = h} |\hat{\Den{\PostShort{\calD}{M}{\calB}}}(W)| \le \U{C''}{s}{h}{nk'}$.
    We do so using a few cases on $h$:
    \begin{itemize}
        \item If $h \le s$, by \Cref{cor:inductive step:upper bound} and the inequality on $\nu_{M,B}$, since $\overline{\calF_h(M)}$ holds,
        it is the case that \[
        \sum_{W \in \Z_q^{n\times k'}, \wt(W) = h} |\hat{\Den{\PostShort{\calD}{M}{\calB}}}(W)| \le \frac1{\delta^h}\parens*{ \frac{C'\sqrt{snk'}}{h} }^{h/2}
        = \parens*{ \frac{(C'/\sqrt{\delta})\sqrt{snk'}}{h} }^{h/2}. \]
        Since $h \le s$, the RHS is at most $\U{C''}{s}{h}{nk'}$ as long as $C'' \ge C'/\sqrt{\delta}$.
        \item If $s \le h \le \sigma n$, the same argument gives $\sum_{W \in \Z_q^{n\times k'}, \wt(W) = h} |\hat{\Den{\PostShort{\calD}{M}{\calB}}}(W)| \le \parens*{ \frac{(C'/\sqrt{\delta})\sqrt{nk'}}{\sqrt{h}} }^{h/2}$.
        By \Cref{lemma:trivial bound} , we also have that $\sum_{W \in \Z_q^{n\times k'}, \wt(W) = h}  |\hat{\Den{\PostShort{\calD}{M}{\calB}}}(W)| \le (eq^2nk'/h)^{h/2}$.
        (This uses $\|\Den{\PostShort{\calD}{M}{\calB}}\|_\infty \le q^s$ as already proved above.)
        Hence, for $C'' \ge C'/\sqrt{\delta}$ we have, 
\[\sum_{W \in \Z_q^{n\times k'}, \wt(W) = h} |\hat{\Den{\PostShort{\calD}{M}{\calB}}}(W)| \le \U{C''}{s}{h}{nk'}. \]

        \item Finally, if $\sigma n \le h \le nk'$, we still have $\sum_{W \in \Z_q^{n\times k'}, \wt(W) = h} |\hat{\Den{\PostShort{\calD}{M}{\calB}}}(W)| \le (eq^2nk'/h)^{h/2}$.
        To guarantee that $\sum_{W \in \Z_q^{n\times k'}, \wt(W) = h} |\hat{\Den{\PostShort{\calD}{M}{\calB}}}(W)| \le \parens*{ \frac{C''\sqrt{nk'}}{\sqrt{h}} }^{h/2}$,
        it therefore suffices to show that $\frac{C''\sqrt{nk'}}{\sqrt{h}}^{h/2} \ge (eq^2nk'/h)^{h/2}$.
        This is achieved by setting $C'' \ge q^2e/\sqrt{ \sigma}$,
        so that we have $h \ge \sigma nk' \implies C''\sqrt{h} \ge q^2e\sqrt{nk'} \implies C''\sqrt{nk'}/\sqrt{h} \ge q^2enk'/h$.
    \end{itemize}
    Hence, if we set $C'' \coloneqq \max\{C'/\sqrt{\delta},q^2e/\sqrt{\sigma}\}$, we get that in all cases we have
\[    
\sum_{W \in \Z_q^{n\times k'}, \wt(W) = h} |\hat{\Den{\PostShort{\calD}{M}{\calB}}}(W)| \le \U{C''}{s}{h}{nk'}. 
\]

    \paragraph*{Bounding the probability over events.}
    Since $\calD$ is $(C,s)$-bounded, the boundedness implies uniformity lemma (\Cref{lemma:boundedness implies uniformity})
    posits directly that $\Pr[\calE(\rM)] \le \delta$.
    Further, applying Markov's inequality to \Cref{lemma:inductive step:in expectation}, we conclude that $\Pr[\calF_h(\rM)] \le \frac{\delta^h}{1-\delta}$.
    Thus a union bound gives $\Pr[\calF_h(\rM)] \le \frac1{1-\delta} \sum_{h=1}^\infty \delta^h \le 4\delta$.
\end{proof}

\subsection{Combinatorial bound}

Consider the set $\calV \coloneqq [n] \times [k]$ as a set of $k\cdot n$ vertices partitioned into $k$ parts $\calV_\ell \coloneqq [n] \times \{j\}$, each of size $n$.
If $U \subseteq \calV$ is a set of vertices, and $M \in \PHM{m}{k}{n}$, we define the following:
\begin{itemize}
	\item Sets of edges: $K(M,U) \coloneqq \{j \in [m] : |e_j \cap U| = 1\}$ and $D \coloneqq \{j \in [m] : |e_j \cap U| \geq 2\}$, where $e_j \subseteq \calV$ is the $j$-th edge of $M$.
	\item Size parameters: $\kappa(M,U) \coloneqq |K|$, $d(M,U) \coloneqq |D(M,U)|$, and $\eta(M,U) \coloneqq |U \cap \bigcup_{j \in D} e_j|$.
\end{itemize}

We suppress the dependence on $M$ and $U$ when clear from context.
Note that $\eta/k \le d \le \eta/2$ and $d \le m = \alpha n \le n/k$, so $\eta \le kd \le n$.
In particular $d \le n/k < n$.

\begin{proposition}\label{prop:estimate:(x/y)^y}
    For every $x, y > 0$, it holds that $(x/y)^y \le e^x$.
\end{proposition}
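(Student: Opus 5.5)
The plan is to take logarithms and reduce everything to the elementary inequality $\ln z \le z - 1 \le z$ for $z > 0$. Since both sides are positive, $(x/y)^y \le e^x$ is equivalent to
\[
y \ln(x/y) \le x .
\]
I would substitute $z \coloneqq x/y > 0$, so that $x = yz$. Dividing by $y > 0$, the claim becomes
\[
\ln z \le z .
\]

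This follows from the standard bound $\ln z \le z - 1$, valid for every $z > 0$. To justify that bound, note that $\psi(z) \coloneqq z - 1 - \ln z$ has derivative $1 - 1/z$. So $\psi$ decreases on $(0,1]$ and increases on $[1,\infty)$, and its minimum is $\psi(1) = 0$.

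Multiplying $\ln z \le z - 1 < z$ back through by $y$ and exponentiating gives the proposition. In fact it gives the slightly stronger bound $(x/y)^y \le e^{x-y}$.

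There is no real obstacle; the only care needed is to keep $y > 0$, so that dividing and multiplying by $y$ preserves the direction of the inequality.
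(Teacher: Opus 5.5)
Your proof is correct. Substituting $z = x/y$ reduces the claim to $\ln z \le z$, which follows from $\ln z \le z-1$. The paper states this proposition without proof, as an elementary estimate, and your argument is the standard justification. Your strengthening $(x/y)^y \le e^{x-y}$ is also valid, though the paper's later applications do not need it.
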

\begin{proposition}\label{prop:estimate:linear to exp}
    For every $x \in \R$, it holds that $x \le (e^{1/e})^x$.
\end{proposition}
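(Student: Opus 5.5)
The statement to prove is \Cref{prop:estimate:linear to exp}: for every $x \in \R$, $x \le (e^{1/e})^x = e^{x/e}$. The plan is to reduce this to the elementary tangent-line inequality $e^t \ge 1 + t$ (equivalently, $e^{t-1} \ge t$), applied at a suitably rescaled point.

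First, dispose of nonpositive $x$: if $x \le 0$, then $x \le 0 < e^{x/e}$, so there is nothing to prove. For $x > 0$, set $t \coloneqq x/e$. The desired inequality $x \le e^{x/e}$ becomes $e t \le e^{t}$, i.e., $t \le e^{t-1}$.

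The inequality $t \le e^{t-1}$ holds for all real $t$ by convexity of the exponential: the line $y = t$ is the tangent to $y = e^{t-1}$ at $t = 1$, and a convex function lies above its tangent lines. Equivalently, apply $e^u \ge 1+u$ with $u = t-1$. Undoing the substitution gives $x \le e^{x/e} = (e^{1/e})^x$.

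An alternative route, closer in spirit to \Cref{prop:estimate:(x/y)^y}, is to take logarithms for $x>0$ and show $\ln x \le x/e$. The function $x/e - \ln x$ is minimized at $x = e$, where it equals $0$. I do not expect any step to be a real obstacle; the only care needed is handling $x \le 0$ separately, so that logarithms or rescalings never involve nonpositive numbers.
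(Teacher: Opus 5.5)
Your proof is correct: substituting $t = x/e$ turns the claim into $t \le e^{t-1}$, which is the tangent-line inequality $e^{u} \ge 1+u$ at $u = t-1$. The paper states this proposition without proof, as a standard elementary estimate, so your argument supplies the omitted justification. The separate treatment of $x \le 0$ is harmless but unnecessary, because dividing by $e>0$ makes the reduction valid for every real $x$.
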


We now define the following function:
\begin{equation}
    p_{C,\alpha}(n,u,\kappa,\eta) 
    \coloneqq \alpha^\kappa \cdot C^u \cdot \parens*{ \frac{u}n }^{\eta/2}.
\end{equation}
We note that this bound looks drastically simpler than the comparable bound in \cite[Lemma 6.24]{CGS+22-linear-space}.

\begin{lemma}\label{lemma:inductive step:combinatorial bound}
For every $k \in \N$ there exists a constant $C < \infty$ such that the following holds.
Let $U \subseteq \calV = [n] \times [k]$ be a fixed set of $|U|=u$ marked vertices.
For a random matching $\rM \in \PHM{m}{k}{n}$ with $m = \alpha n$ hyperedges, we have: \[
\Pr[\kappa(U,\rM) = \kappa,\ \eta(U,\rM) = \eta] \le p_{C,\alpha}(n,u,\kappa,\eta) \]
assuming $k+\eta \le u$, and $0$ otherwise.
\end{lemma}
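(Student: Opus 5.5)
We follow the template of \Cref{lemma:BiU:combinatorial bound}: use column-wise permutation symmetry to fix the matching and randomize the marked set. Concretely, fix $M$ to be the matching whose $j$-th edge is $\{(j,1),\ldots,(j,k)\}$. Sample $\rU$ uniformly from $\calU \coloneqq \{U' : |U' \cap \calV_\ell| = u_\ell \ \forall \ell\}$, where $u_\ell \coloneqq |U \cap \calV_\ell|$. The target probability is then $|\calN_{\kappa,\eta}|/|\calU|$, where $\calN_{\kappa,\eta}$ consists of those $U' \in \calU$ with $\kappa(M,U') = \kappa$ and $\eta(M,U') = \eta$. The denominator satisfies $|\calU| = \prod_\ell \binom{n}{u_\ell}$.

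\textbf{Counting the numerator.} Rather than using the crude bound $\binom nu$ for $|\calU|$, we count $\calN_{\kappa,\eta}$ column by column, since that factorization is exact. An element of $\calN_{\kappa,\eta}$ is determined by the following choices.
\begin{itemize}
\item Choose the set $K \subseteq [m]$ of $\kappa$ singleton rows, in $\binom{m}{\kappa} \le (em/\kappa)^\kappa$ ways, and choose which column each singleton uses, in $k^\kappa$ ways.
\item Choose the set $D$ of $d \le \eta/2$ heavy rows, in $\binom{m}{d}$ ways, and the $\eta$ marked cells inside $D \times [k]$, in at most $\binom{dk}{\eta} \le (2k)^\eta$ ways.
\item Choose the remaining $u - \kappa - \eta$ marked vertices outside $\supp(M)$. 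In column $\ell$ these are $u_\ell - \kappa_\ell - \eta_\ell$ vertices drawn from $n-m$ vertices, in at most $\binom{n}{u_\ell - \kappa_\ell - \eta_\ell}$ ways.
\end{itemize}

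\textbf{Estimating the ratio.} The key step is to compare $\prod_\ell \binom{n}{u_\ell-\kappa_\ell-\eta_\ell}$ with $\prod_\ell \binom{n}{u_\ell}$. Each column loses $r_\ell \coloneqq \kappa_\ell+\eta_\ell$ from its count, and we use
\[
\binom{n}{a-r}\Big/\binom{n}{a} \le \parens*{\frac{a}{n-a}}^{r} \le \parens*{\frac{2u}{n}}^{r}
\]
(assuming $u \le n/2$; otherwise the claimed bound is trivial after enlarging $C$). This shows the ratio is at most $(2u/n)^{\kappa+\eta}$. Multiplying by the choice factors above, the singleton contribution is at most
\[
(ekm/\kappa)^\kappa (2u/n)^\kappa = (2ek\alpha u/\kappa)^\kappa \le \alpha^\kappa (2ek)^\kappa e^{u},
\]
where the last step uses \Cref{prop:estimate:(x/y)^y} to bound $(u/\kappa)^\kappa \le e^u$. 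This gives the $\alpha^\kappa C^u$ part of $p_{C,\alpha}$.

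\textbf{Heavy rows and the main obstacle.} The heavy-row contribution is at most
\[
\binom{m}{d}(2k)^\eta (2u/n)^\eta.
\]
Here we only need to recover $(u/n)^{\eta/2}$, so the surplus $(u/n)^{\eta/2}$ must absorb $\binom md \le (em/d)^d$. Since $d \le \eta/2$, this amounts to showing
\[
(em/d)^d (u/n)^{\eta/2} \le (e\alpha u/d)^{d} \cdot (u/n)^{\eta/2-d} \le C^u,
\]
using $u/n \le 1$ and again \Cref{prop:estimate:(x/y)^y} to get $(u/d)^d \le e^u$. Summing over the at most $\eta/2 \le u$ possible values of $d$, and over the column distributions $(\kappa_\ell)$ and $(\eta_\ell)$ (at most $(u+1)^{2k}$ choices), costs only a $C^u$ factor via \Cref{prop:estimate:linear to exp}.

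The step I expect to be the main obstacle is this heavy-row estimate. The difficulty is to keep the exponent of $u/n$ at $\eta/2$ while $\binom md$ would naively cost $n^d$. The resolution is that $d \le \eta/2$, so the $(u/n)^\eta$ gain pays for $(m/d)^d \le (n/d)^d$ and still leaves $(u/n)^{\eta/2}$. The feasibility condition $k+\eta \le u$, together with $u \ge \kappa+\eta$, ensures the counts are nonzero only in the claimed regime.
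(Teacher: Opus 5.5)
Your proposal is correct and follows essentially the paper's argument: the same symmetry reduction to a fixed matching, the same count via the singleton edges $K$, the heavy edges $D$ and the leftover vertices, and a column-wise binomial ratio giving $(u/n)^{\kappa+\eta}$ up to constants; your way of absorbing $\binom{m}{d}$ — spending $(u/n)^{d}$ of the surplus, i.e.\ $\binom{m}{d}(u/n)^{d}\le (e\alpha u/d)^{d}\le e^{O(u)}$ — is a cleaner substitute for the paper's monotonicity-in-$d$ computation. The one step you assert without proof, that the case $u>n/2$ is trivial, does hold, but since $C$ may not depend on $\alpha$ it needs $\kappa\le m=\alpha n$, which gives $\alpha^{\kappa}\ge\alpha^{\alpha n}\ge e^{-n/e}$; you can avoid the case split entirely, as the paper does, by noting that the leftover vertices in column $\ell$ lie among $n-m\le n-r_\ell$ vertices and using $\binom{n-r}{a-r}/\binom{n}{a}\le (a/n)^{r}$.
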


\begin{proof}
We follow and adapt (simplify) the proof of \cite[Lemma~6.24]{CGS+22-linear-space}.

\paragraph*{Step 1: Symmetry reduction.}
For $j \in [k]$, let $u_\ell \coloneqq |U \cap \calV_\ell|$ denote the support of $U$ on the $j$-th part.
To calculate $\Pr[\kappa(U,\rM) = \kappa,\ \eta(U,\rM) = \eta]$,
by column-wise permutation symmetry, it is equivalent to instead fix one matching, such as \[
M \coloneqq \begin{bmatrix} 1 & 1 & \cdots & 1 \\ 2 & 2 & \cdots & 2 \\ \vdots & \vdots & \ddots & \vdots \\ m & m & \cdots & m \end{bmatrix}, \]
sample $\rU$ uniformly at random from the set $\calU \coloneqq \{U \subseteq \calV : \forall j \in [k],\ |U \cap \calV_j| = u_\ell\}$,
and calculate $\Pr[\kappa(\rU,M) = \kappa,\ \eta(\rU,M) = \eta]$.

\paragraph*{Step 2: Partitioning by $d$.}
For fixed $d \in [\eta/k,\eta/2]$, let $\calN(d) \coloneqq \{U \in \calU : \kappa(U,M) = \kappa,\ d(U,M) = d,\ \eta(U,M) = \eta\}$.
Then for $\rU \sim \calU$, \[
\Pr[\kappa(\rU,M) = \kappa,\ \eta(\rU,M) = \eta] \le \sum_{d=\eta/k}^{\eta/2} \frac{|\calN(d)|}{\prod_{j=1}^k \binom{n}{u_\ell}}. \]

\paragraph*{Step 3: Counting $|\calN(d)|$.}
We now upper-bound $|\calN(d)|$.

To specify an element in $U \in \calN(d)$, it suffices to specify the edge-sets $K=K(U,M) \subseteq [m]$ and $D=D(U,M) \subseteq [m]$,
and then the vertex-sets $U \cap \bigcup_{j \in K} e_j$, $U \cap \bigcup_{j \in D} e_j$, and $U \cap \bigcup_{j \in [m]\setminus(K \cup D)} e_j$.

Pessimistically, there are $\binom{m}{\kappa}$ choices for $K$ and $\binom{m}{d}$ choices for $D$.
Once these have been specified, there are now $k^\kappa$ choices for $U \cap \bigcup_{j \in K} e_j$
(once $K$ has been fixed, $U \cap \bigcup_{j \in K} e_j$ contains exactly one vertex per edge in $K$ and $|K|=\kappa$)
and (pessimistically) $\binom{kd}{\eta}$ choices for $U \cap \bigcup_{j \in D} e_j$ (since $|U \cap \bigcup_{j \in D} e_j| = \eta$ and $\abs{\bigcup_{j \in D} e_j} = kd$).
Once these have all been fixed, it remains to pick $U \cap \bigcup_{j \in [m]\setminus(K \cup D)} e_j$, and we do so by parts;
i.e., we specify the intersections $U \cap \calV_\ell \cap \bigcup_{j \in [m]\setminus(K \cup D)} e_j$ with each part.
Since we must have $|U \cap V_\ell| = u_\ell$, letting $\mu_\ell \coloneqq |U \cap \calV_\ell \cap \bigcup_{j \in (K \cup D)} e_j|$ denote the number of
already-selected vertices in column $\ell$, we conclude that $|U \cap \calV_\ell \cap \bigcup_{j \in [m]\setminus(K \cup D)} e_j| = u_\ell - \mu_\ell$.
Since $U \cap \calV_\ell \cap \bigcup_{j \in [m]\setminus(K \cup D)} e_j \subseteq \calV_\ell \cap \bigcup_{j \in [m]\setminus(K \cup D)} e_j$ and
$|\calV_\ell \cap \bigcup_{j \in [m]\setminus(K \cup D)} e_j| = n-(\kappa+d)$, we deduce: \[
|\calN(d)| \le \binom{m}{\kappa} \cdot \binom{m}{d} \cdot k^\kappa \cdot \binom{kd}{\eta} \cdot \prod_{\ell=1}^k \binom{n-(\kappa+d)}{u_\ell-\mu_\ell}. \]

\paragraph*{Step 4: Applying estimates and collecting like terms.}
Note that for every $\ell$, we have $\mu_\ell \le \kappa+d$, hence $\binom{n-(\kappa+d)}{u_\ell-\mu_\ell} \le \binom{n-\mu_\ell}{u_\ell-\mu_\ell}$.
We can therefore apply $\frac{\binom{a-c}{b-c}}{\binom{a}{b}} \le \parens*{ \frac{b}{a} }^c$, as well as $\binom{a}{b} \le (ea/b)^b$ to the remaining binomials, to get: \[
\frac{|\calN(d)|}{\prod_{\ell=1}^k \binom{n}{u_\ell}} \le \parens*{ \frac{e\alpha n}{\kappa} }^\kappa \cdot k^\kappa \cdot \parens*{ \frac{e\alpha n}{d} }^d \cdot \parens*{ \frac{ekd}{\eta} }^\eta \cdot \parens*{ \frac{u}{n} }^{\kappa+\eta}. \]
Regrouping terms and using that $\alpha^d \le 1$ since $\alpha \le 1$ and $d \ge 0$ gives: \[
\frac{|\calN(d)|}{\prod_{\ell=1}^k \binom{n}{u_\ell}} \le (ek)^{\kappa+\eta} \cdot \alpha^\kappa \cdot \parens*{ \frac{u}{\kappa} }^\kappa
    \cdot \parens*{ \frac{u}{n \eta} }^\eta \cdot \parens*{ \frac{e n}{d} }^d \cdot d^\eta. \]
Writing $(u/(n\eta))^\eta = (u/(\eta/2))^{\eta/2} \cdot (u/(2\eta))^{\eta/2} \cdot (1/n)^\eta$,
we can now apply \Cref{prop:estimate:(x/y)^y} and use $\kappa+\eta \le u$, giving: \[
\frac{|\calN(d)|}{\prod_{\ell=1}^k \binom{n}{u_\ell}} \le (e^{3u} \cdot k^u) \cdot \alpha^\kappa \cdot \parens*{ \frac{u}{2\eta} }^{\eta/2} \cdot \parens*{ \frac{1}{n} }^\eta \cdot \parens*{ \frac{e n}{d} }^d \cdot d^\eta. \]

\paragraph*{Step 5: Summing over $d$.}
Consider the function $g(t) \coloneqq \ln((e n)^t \cdot t^{\eta-t}) = t \ln(e n) + (\eta - t) \ln t$.
Analytically, we have $\frac{d}{dt} g(t) = \ln(e n) + \eta/t - (\ln t + 1) = \ln(n) - \ln(t) + \eta/t \ge \ln(n) - \ln(t)$.
In particular, $g$ is increasing on the interval $(0,n]$.
This gives, using that $\eta \le u$: \[
	\sum_{d=\eta/k}^{\eta/2} \parens*{ \frac{e n}{d} }^d \cdot d^\eta \le \eta/2 \cdot \parens*{ \frac{e n}{\eta/2} }^{\eta/2} \cdot (\eta/2)^\eta
		\le (u \cdot e^{\eta/2} \cdot 2^{-\eta/2}) \cdot (n \cdot \eta)^{\eta/2}. \]
Combining and regrouping gives: \[
\sum_{d=\eta/k}^{\eta/2} \frac{|\calN(d)|}{\prod_{\ell=1}^k \binom{n}{u_\ell}} \le (e^{3u} \cdot k^u \cdot u \cdot e^{\eta/2} \cdot 2^{-\eta/2}) \cdot \alpha^{\kappa} 
\cdot \parens*{ \frac{u}n }^{\eta/2}. \]
We can now finish by setting $C \coloneqq e^{3+1/e+1/2} \cdot k/\sqrt{2}$, where we used that $u \le (e^{1/e})^u$ and $\eta \le u$.
\end{proof}

\subsection{The inductive step in expectation: Proving \Cref{lemma:inductive step:in expectation}}

We finally turn to proving \Cref{lemma:inductive step:in expectation}.
To do so, we require the following analytic lemma, which follows from the proof of~\cite[Lemma 6.24]{CGS+22-linear-space};
we give a dramatically simplified proof in \Cref{sec:casework} below.

\begin{restatable}{lemma}{casework}
\label{lemma:inductive step:casework}
    For every $q,k \in \N$, there exists $\alpha_0 \in (0,1/k)$ such that the following holds.
    For every $C_1,C_2,C_3,C_4 < \infty$, there exists $\epsilon_0 > 0$ and $C_5 < \infty$ such that
    for every $s,n,h,u,\kappa,\eta \in \N$ with $\kappa \le h \le s \le \epsilon_0 n$, and $\kappa+\eta \le u \le \min\{n,h+\eta\}$,
    we have \[
    \U{C_1}{s}{u}{n}  \cdot C_2^{\kappa+\eta} \cdot p_{C_3,\alpha_0}(n,u,\kappa,\eta)\cdot 
    \parens*{ \frac{C_4 \sqrt{sn}}{h+\eta-u} }^{(h-\eta+u)/2}
    \le \parens*{ \frac{C_5 \sqrt{sn}}h }^{h/2}. \]
\end{restatable}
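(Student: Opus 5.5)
The plan is to take logarithms and compare exponents term by term. I set $x \coloneqq \sqrt{sn}$ and $r \coloneqq h+\eta-u$. The hypothesis $u \le h+\eta$ gives $r \ge 0$, and the exponent in the statement becomes $(h-\eta+u)/2 = h - r/2$. I will treat $r \ge 1$ as the main case and read $r=0$ by the usual convention $(\cdot/0)^{(\cdot)}$ as vacuous (equivalently, replace $r$ by $\max\{r,1\}$). I expect to recheck this boundary case against how the lemma is applied in the proof of \Cref{lemma:inductive step:in expectation}.

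First I substitute $p_{C_3,\alpha_0}(n,u,\kappa,\eta) = \alpha_0^\kappa C_3^u (u/n)^{\eta/2}$. All purely exponential factors ($C_2^{\kappa+\eta}$, $C_3^u$, and later $e^{O(u)}$ terms) are absorbed into a single $C^{u}$, using $\kappa+\eta\le u$. I will then split on the definition of $\U{C_1}{s}{u}{n}$:
\begin{itemize}
\item if $u \le s$, use $\U{C_1}{s}{u}{n} = (C_1 x/u)^{u/2}$;
\item if $u > s$, use $(C_1\sqrt{n/u})^{u/2}$ or $(eq^2n/u)^{u/2}$, whichever is more convenient.
\end{itemize}
In each case the left-hand side has the form $C^{u}\alpha_0^\kappa$ times powers of $x$, $n$, $u$, $\eta$ and $r$. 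I then compare the resulting exponents of $x$ (equivalently, of $s$ and $n$ separately) with the target exponent $h/2$.

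The key bookkeeping step is to write $u = h+\eta-r$ and expand each factor as $h/2$ plus corrections in $\eta$, $r$ and $\kappa$. The corrections are absorbed using three facts:
\begin{itemize}
\item $(u/n)^{\eta/2}$ supplies negative powers of $n$;
\item $u \le \min\{n, h+\eta\}$ and $h \le s \le \epsilon_0 n$, so ratios like $s/n$ can be made as small as needed by shrinking $\epsilon_0$;
\item \Cref{prop:estimate:(x/y)^y} turns every factor of the form $(a/b)^b$ with small $b$ (for example $(h/r)^{r}$ and $(u/\eta)^{\eta}$) into $e^{O(u)}$.
\end{itemize}
After this, every surviving term should be $(\text{const}\cdot s/n)^{\Omega(\cdot)}$, or of the form $(\text{const}/\,\cdot\,)^{\cdot}$ with a large base in the denominator, or $C^{O(h)}$. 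Since $u \le h+\eta$ and $\eta \le u$, this lets me pick $C_5$ depending only on $C_1,\dots,C_4,q,k$.

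The main obstacle is the power count of $n$ and $s$ when $u$ is much larger than $h$, i.e., when $\eta$ is large. The factors $\U{C_1}{s}{u}{n}$ and the last factor together contribute roughly $n^{(2u+h-\eta)/4}$. Against this, $p$ only gives $n^{-\eta/2}$, and absorbing the leftover needs either $u \le h+\eta$ with $\eta$ comparable to $u$, or the extra factor $\alpha_0^\kappa$ (fixing $\alpha_0 < 1/k$ small depending only on $q,k$). I will handle this first by checking the extreme regimes: $\eta = 0$ (so $u \le h$ and $r = h-u$), and $\kappa = 0$ with $u = h+\eta$ (so $r = 0$). In these two regimes I will verify that the literal exponent $(h-\eta+u)/2$ still yields total $x$-exponent at most $h/2$ once $s\le\epsilon_0 n$ is used. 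If it does not, I will revisit the use of $(u/n)^{\eta/2}$, possibly improving it via $\eta \ge 2d$ from \Cref{lemma:inductive step:combinatorial bound}. Once both extremes are settled, the intermediate cases should follow by log-convexity in $(\eta,r)$.
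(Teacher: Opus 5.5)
Your plan is built on the literal exponent $(h-\eta+u)/2=h-r/2$. With that exponent the inequality is false, so the check you defer would fail and no fallback can repair it. In your own test regime $\eta=0$ (with $u\le s$), the left side carries $(\sqrt{sn})^{u/2+(h+u)/2}=(\sqrt{sn})^{h/2+u}$, against $(\sqrt{sn})^{h/2}$ on the right. Concretely, $\kappa=\eta=1$, $u=h=s=2$ satisfies every hypothesis. It gives $\Theta(n^{1/2})\cdot\Theta(n^{-1/2})\cdot\Theta(n^{3/4})$ on the left (from $\mathrm{U}$, $p$ and the last factor) versus $\Theta(n^{1/2})$ on the right, so the literal statement has no proof at all.

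The exponent is a typo for $(h+\eta-u)/2$, for three reasons:
\begin{itemize}
\item it matches the denominator;
\item it is the form $\bigl(C\sqrt{sn}/(h-(u-\eta))\bigr)^{(h-(u-\eta))/2}$ in which this factor actually arises in the proof of \Cref{lemma:inductive step:in expectation};
\item it makes $r=0$ harmless (the factor is $1$), rather than a positive power of $C_4\sqrt{sn}/0$.
\end{itemize}
With $h'\coloneqq h+\eta-u\in[0,h]$ (your $r$), the paper bounds $h^{h/2}/(h')^{h'/2}\le e^{h/2}h^{(u-\eta)/2}$ and reduces the claim to
\[
Q\coloneqq \mathrm{U}_{C_1,s,n}(u)\cdot C^{u}\cdot\frac{u^{\eta/2}\,h^{(u-\eta)/2}}{n^{(u+\eta)/4}\,s^{(u-\eta)/4}}\le C_{\mathrm{RHS}}^{\,h},
\]
times a power of $\alpha_0$ discussed below.

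The rest of your plan lacks the mechanism that makes the bound true. The reduction to two extreme regimes by ``log-convexity in $(\eta,r)$'' is not valid: the logarithm of the left side contains concave pieces such as $-\tfrac u2\log u$ and $-\tfrac r2\log r$, and $\mathrm{U}$ is piecewise, so the worst case need not sit at your extremes. The paper instead splits on $u$ into five ranges: $u\le h$; $h<u\le s$; $s<u\le 16s$; $16s<u\le\sqrt{\epsilon}n$; $u>\sqrt{\epsilon}n$, where $\epsilon=s/n$. In each range it uses the matching branch of $\mathrm{U}$. The key inequality is $\eta\ge u-h$, which follows from $u\le h+\eta$. For example, for $h<u\le s$ one gets
\[
Q\le C^u(s/n)^{\eta/4}(h/u)^{(u-\eta)/2}\le(C\epsilon_0^{1/4})^u\epsilon_0^{-h/4},
\]
so the $s/n$-decay pays for $C^u$. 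The next two ranges are handled similarly after $(u/s)^{u/4}\le 2^u$, respectively $(u/s)^{u/4}\le\epsilon^{-u/8}$.

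Your idea of paying for $C^u$ with $\alpha_0^\kappa$ when $u\gg h$ fails, because $\kappa$ can be tiny while $\eta\approx u$. For $u>\sqrt{\epsilon}n$, the branch $\mathrm{U}\le(eq^2n/u)^{u/2}$ only yields $Q\le C^u\bigl(h/(\sqrt{\epsilon}u)\bigr)^{(u-\eta)/2}$, which does not decay when $\eta\approx u$. Indeed, take $u=n$, $\eta=n-1$, $\kappa=1$, $h=s=2$; this meets every hypothesis. With the corrected exponent the left side is $\alpha_0\bigl(\min\{C_1,eq^2\}^{1/2}C_2C_3\bigr)^n(C_4\sqrt{sn})^{1/2}$. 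This is exponentially larger than the right side $C_5\sqrt{sn}/2$ as soon as $\min\{C_1,eq^2\}^{1/2}C_2C_3>1$. So, with $p_{C_3,\alpha_0}$ as defined, the lemma is false in this corner.

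The paper's last case in fact uses a factor $\alpha_0^{\eta/2}\le\alpha_0^{(u-h)/2}$, i.e.\ decay in $\eta$ rather than in $\kappa$, with $\alpha_0$ chosen small in terms of $q$ and $C_2C_3$. You need some such factor, e.g.\ $\alpha_0^{\eta/k}$. It is available if you keep the factor $\alpha^d$ (with $d\ge\eta/k$) that \Cref{lemma:inductive step:combinatorial bound} discards.
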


Finally, we give:

\begin{proof}[Proof of \Cref{lemma:inductive step:in expectation}]
    By \Cref{cor:mass on UH} with $\theta \coloneqq 1.1$ (so that $h < \theta s$), there exists $C_4 < \infty$ such that
    for every injection $\phi : [k] \to [k']$, fixed $M \in \PHM{m}{k}{n}$, and $V \in \Z_q^{n \times k'}$ (equivalently, $\Pi^\phi_M V\in \Z_q^{m \times k}$), and $g : \Z_q^{m \times k}\to \C$ satisfying $\|g\|_1 = 1$ and $\|g\|_\infty \le q^s$: 
    \begin{align*}
        &\sum_{\substack{U \in \SF, \\ \wt(U+\Pi^\phi_M V) = h - (\wt(V) - \wt(\Pi^\phi_M V))}} |\hat{g}(U)| \\&\le q^{k \cdot \wt(V)} \cdot \parens*{ \frac{C_4 \sqrt{s m}}{h-(\wt(V) - \wt(\Pi^\phi_M V))-\kappa(M,\Pi^\phi_M V))} }^{(h - (\wt(V) - \wt(\Pi^\phi_M V))-\kappa(M,\Pi^\phi_M V))/2}\\
  & = q^{k \cdot \wt(V)} \cdot \parens*{ \frac{C_4 \sqrt{s m}}{h-(\wt(V) - \eta(M,\Pi^\phi_M V))} }^{(h - (\wt(V) - \eta(M,\Pi^\phi_M V)))/2} \, ,
    \end{align*}
    
    \[
    \]
    where we used the pessimistic upper bounds  $\rwt(\Pi^\phi_M V) \le \wt(V)$ and the equality 
    $\wt(\Pi^\phi_M V) = \kappa(M,\Pi^\phi_M V) + \eta(M,\Pi^\phi_M V)$.
    We now take a maximum over $g$, sample $\rM$ randomly, and sum over $V$.
    This gives:
    \begin{align*}
    &\sum_{V \in \Z_q^{n \times k'}} |\hat{\Den{\calD}}(V)| \cdot \Exp_\rM \bracks*{ \max_{\substack{g : \Z_q^{m \times k} \to \C, \\ \|g\|_1 = 1, \\ \|g\|_\infty \le q^s}} \sum_{\substack{U \in \SF, \\ \wt(U+\Pi^\phi_M V) = h - (\wt(V) - \wt(\Pi^\phi_M V))}} |\hat{g}(U)| } \\
    \le{}& \sum_{V \in \Z_q^{n \times k'}} |\hat{\Den{\calD}}(V)|
    \cdot q^{k \cdot \wt(V)} \cdot \Exp_\rM \bracks*{ \parens*{ \frac{C_4 \sqrt{s m}}{h-(\wt(V) - \eta(\rM,\Pi^\phi_\rM V))} }^{(h - (\wt(V) - \eta(M,\Pi^\phi_\rM V)))/2} } \\
\intertext{Partitioning the sum over $V$ by $\wt(V)$, and the expectation over $\rM$ by $\kappa(M,\Pi^\phi_M V)$ and $\eta(M,\Pi^\phi_M V)$:}
    \le{}& \sum_{u=0}^{nk'} \sum_{\substack{V \in \Z_q^{n \times k'}, \\ \wt(V) = u}} |\hat{\Den{\calD}}(V)|
    \cdot q^{ku} \cdot
    \sum_{\substack{\kappa,\eta \in \N, \\ \kappa+\eta \le u \le h+\eta}}
    \parens*{ \frac{C_4 \sqrt{s m}}{h-(u - \eta)} }^{(h - (u - \eta))/2}
    \cdot \Pr_\rM [\kappa(\rM,\Pi^\phi_\rM V)=\kappa,\ \eta(\rM,\Pi^\phi_\rM V)=\eta] \\
\intertext{By the $(C_1,s)$-boundedness of $\calD$ and \Cref{lemma:inductive step:combinatorial bound} (letting $C_3$ denote the constant from the latter), and using $m\le nk$:}
    \le{}& \sum_{u=0}^{nk'} 
    \sum_{\substack{\kappa,\eta \in \N, \\ \kappa+\eta \le u \le h+\eta}} \U{C_1}{s}{u}{n} 
    \cdot q^{ku} \cdot \parens*{ \frac{C_4\sqrt{k} \sqrt{s n}}{h-(u - \eta)} }^{(h - (u - \eta))/2}
    \cdot p_{C_3,\alpha}(n,u,\kappa,\eta).
\intertext{Defining $C_2 \coloneqq 2q^k$ and applying the foregoing \Cref{lemma:inductive step:casework}:}
    \le{}& \sum_{u=0}^{nk'} 
    \sum_{\substack{\kappa,\eta \in \N, \\ \kappa+\eta \le u \le h+\eta}} 2^{-u} \parens*{ \frac{C_5 \sqrt{sn}}{h} }^{h/2}
    \le \sum_{u=0}^{nk'} u^2 \cdot 2^{-u} \parens*{ \frac{C_5 \sqrt{sn}}{h} }^{h/2}
    \le \parens*{ \frac{169 C_5 \sqrt{sn}}{h} }^{h/2}
    \end{align*}
    where for the final inequality we used that $\sum_{u=0}^{nk'} (u+1)^2 \cdot 2^{-u} \le \sum_{u=0}^\infty (u+1)^2 \cdot 2^{-u} \le 1 + \int_0^\infty (u+1)^2 \cdot 2^{-u} \;du < 13$ and $h \ge 1$.
\end{proof}

\subsection*{AI Disclosure}
We used Claude to assist with the proofs in \Cref{sec:singleton-free}, though all content in the paper is written by the authors. The authors verified the correctness and originality of all content including references.

\printbibliography

\appendix
\section{Proof of \cref{thm:main DIHP}}\label{proof:thm:main DIHP}
 
 We inductively show that for every $t\in [T]$,

    \[
	\|(\rM_{1:t},\rS^Y_{1:t})-(\rM_{1:t},\rS^N_{1:t})\|_\tvd\leq (t/T) \delta\,  \tag{Induction hypothesis}
	\] 
    where $\cE_0$ is the trivial event that is always true.

	First, we prove the base case $t=1$. Recalling that $\rS_0^Y=\rS_0^N=\bot$, we have
	\begin{align*}
		\|(\rM_1,\rS_1^Y)-(\rM_1,\rS_1^N)\|_{\tvd,\cE_1}&=\|(\rM_1,\rS_1^Y)-(\rM_1,\Player_1(\rM_1,\rS_0^N,\rU))\|_{\tvd,\cE_1}\\
		&=\|(\rM_1,\rS_1^Y)-(\rM_1,\Player_1(\rM_1,\rS_0^Y,\rU))\|_{\tvd,\cE_1} \, ,
	\end{align*}
    where $\rU\sim \Unif(\Z_q^{\alpha n\times k})$ and $\|\cdot\|_{\tvd,\cE}$ denotes the total variation distance, conditioned on event $\cE$.
	Observe that for every fixed $M_1$ and $S^Y_0$ satisfying $\cE_1$, we have
    $\|\rS_1^Y-\Player_1(M_1,S_0^Y,\rU)\|_\tvd\le \delta/(2T)$. It follows from data-processing inequality (\cref{prop:data_processing}) that 
    \[\|(\rM_1,\rS_1^Y)-(\rM_1,\Player_1(\rM_1,\rS_0^Y,\rU))\|_{\tvd,\cE_1} \le \delta/2T .\]
    Therefore,
    \begin{align*}
       \|(\rM_1,\rS_1^Y)-(\rM_1,\Player_1(\rM_1,\rS_0^Y,\rU))\|_\tvd &\le \|(\rM_1,\rS_1^Y)-(\rM_1,\Player_1(\rM_1,\rS_0^Y,\rU))\|_{\tvd,\cE_1} + \Pr[\overline{\cE_1}]\\
       & \le \delta/(2T) + \Pr[\overline{\cE_1}] \le \delta/T \, , 
    \end{align*}
    which completes the base case.

    For every $t=2,\dots,T$, we have 
	\begin{multline*}
		\| (\rM_{1:t},\rS^Y_{1:t})-(\rM_{1:t},\rS^N_{1:t})\|_\tvd \\
		= \|(\rM_{1:t},\rS^Y_{1:t-1},\Player_t(\rM_{1:t},\rS^Y_{t-1},\rZ_t))-(\rM_{1:t},\rS^N_{1:t-1},\Player_t(\rM_{1:t},\rS^N_{t-1},\rU))\|_\tvd\, .
	\end{multline*}
	Let us define $\rQ_{t-1}^Y = (\rM_{1:t-1},\rS^Y_{1:t-1})$  and $\rQ_{t-1}^N = (\rM_{1:t-1},\rS^N_{1:t-1})$. 
	Then, we can rewrite the above expression for total variation distance in terms of the new notation as follows:
	\begin{multline}\label{eqn:tvd_substitution}
		\|(\rM_{1:t},\rS^Y_{1:t-1},\Player_t(\rM_{1:t},\rS^Y_{t-1},\rZ_t))-(\rM_{1:t},S^N_{1:t-1},\Player_t(\rM_{1:t},\rS^N_{t-1},\rU))\|_\tvd \\
		= \|(\rQ_{t-1}^Y,\rM_t,\Player_t(\rQ_{t-1}^Y,\rM_t,\rZ_t))-(\rQ_{t-1}^N,\rM_t,\Player_t(\rQ_{t-1}^N,\rM_t,\rU))\|_\tvd \, .
	\end{multline}
	We now apply \Cref{lem:KKsubstitutionlemma} to \Cref{eqn:tvd_substitution}.
    Applying this lemma with $\rX^1 = \rQ_{t-1}^Y$, $\rX^2 = \rQ_{t-1}^N$, $\rZ^1=(\rM_t,\rZ_t)$, $\rZ^2=(\rM_t,\rU)$, and $f$ as the function that maps the tuple $(X,(B,C))$ to $(B,\Player_t(X,B,C))$, we get
	\begin{align}\label{eqn:conditionalTVD}
		&\|(\rQ_{t-1}^Y,\rM_t,\Player_t(\rQ_{t-1}^Y,\rM_t,\rZ_t))-(\rQ_{t-1}^N,\rM_t,\Player_t(\rQ_{t-1}^N,\rM_t,\rU))\|_\tvd \nonumber \\
		\le\ &\|\rQ_{t-1}^Y- \rQ_{t-1}^N\|_\tvd + \| (\rQ_{t-1}^Y,\rM_t,\Player_t(\rQ_{t-1}^Y,\rM_t,\rZ_t))-(\rQ_{t-1}^Y,\rM_t,\Player_t(\rQ_{t-1}^Y,\rM_t,\rU))\|_\tvd \, .
	\end{align}

	Now, by applying the induction hypothesis, we have that 
    \begin{equation}\label{eqn:induction}
    \|\rQ_{t-1}^Y- \rQ_{t-1}^N\|_\tvd \le (t-1)\delta/T.
	\end{equation}
	Next, we bound the second term on the right hand side of \eqref{eqn:conditionalTVD}, i.e.,
	\[\| (\rQ_{t-1}^Y,\rM_t,\Player_t(\rQ_{t-1}^Y,\rM_t,\rZ_t))-(\rQ_{t-1}^Y,\rM_t,\Player_t(\rQ_{t-1}^Y,\rM_t,\rU))\|_\tvd,\] 
	by applying condition (iii) from \cref{lem:hybrid}. According to this condition, for every \emph{fixed} $(M_{1:t})$ and $S^Y_{1:t-1}$ satisfying $\cE_t$, we have
    \begin{equation*}
		\|\Player_t(M_{1:t},S^Y_{1:t-1},\rZ_t)-\Player_t(M_{1:t},S^Y_{1:t-1},\rU)\|_\tvd\leq \delta/(2T)
	\end{equation*}
	where $U\sim\Unif(\Z_q^{(k-1)\alpha n})$. By \cref{lem:statistical_test}, it follows that 
    \begin{equation}\label{eqn:tvd}\| (\rQ_{t-1}^Y,\rM_t,\Player_t(\rQ_{t-1}^Y,\rM_t,\rZ_t))-(\rQ_{t-1}^Y,\rM_t,\Player_t(\rQ_{t-1}^Y,\rM_t,\rU))\|_{\tvd,\cE_t} \le \delta/(2T)\, .\end{equation}
	Combining \cref{eqn:tvd_substitution,eqn:conditionalTVD,eqn:induction,eqn:tvd}, we have  \[
	\|(\rM_{1:t},\rS^Y_{1:t})-(\rM_{1:t},\rS^N_{1:t})\|_\tvd\le
    (t-1)\delta/T + \delta/(2T) + \Pr[\overline{\calE_t} \mid \calE_{t-1}] \le t\delta/T,
	\]
	which completes the induction.
    
    Thus,
    \[ \|(\rM_{1:T},\rS^Y_{1:T})-(\rM_{1:T},\rS^N_{1:T})\|_\tvd\le \delta\, .\]
    This implies that $\Pi$ cannot have advantage more than $\delta$, 
    which contradicts the assumptions of the theorem statement. Therefore, we conclude that any protocol for DIHP with advantage $\delta$ requires $\tau n$ bits of communication, as desired.

\section{Proof of \Cref{lemma:inductive step:casework}}\label{sec:casework}

\newcommand{\CLHS}{C_{\mathrm{LHS}}}
\newcommand{\CRHS}{C_{\mathrm{RHS}}}
\newcommand{\CW}{C_{\mathrm{W}}}

In this appendix, we prove \Cref{lemma:inductive step:casework}, restated as follows:

\casework*

We recall the definitions:
\begin{align}
    \U{C_1}{s}{u}{n} &= \begin{cases}
        \parens*{ \frac{C_1 \sqrt{sn}}{u}}^{u/2}  & \text{if }0 \le u \le s, \\
        \parens*{ \min\braces*{
            \frac{C_1\sqrt{n}}{\sqrt{u}},
            \frac{eq^2 n}{u}
        }}^{u/2} &\text{if }u>s,
    \end{cases}, \label{eq:casework:U} \\
    \U{C_2}{s}{h}{n} &=
 \parens*{ \frac{C_2 \sqrt{sn}}{h}}^{h/2}, \label{eq:casework:W} \\
    p_{C,\alpha}(n,u,\kappa,\eta) 
    &= C^u \cdot \alpha^{\kappa} \cdot \parens*{ \frac{u}n }^{\eta/2}, \label{eq:casework:p}
\end{align}
where for \Cref{eq:casework:W} we used the assumption that $h \le s$.

Our first step is showing that the following lemma implies \Cref{lemma:inductive step:casework}:

\begin{lemma}\label{lemma:induction step:bound reduced}
    There exists $\alpha_0 \in (0,1/k)$ such that
    for every $C_1,\CLHS < \infty$,
    there exist $\epsilon_0 > 0$ and $\CRHS < \infty$ such that the following holds.
    
    For every $\alpha \in (0,\alpha_0)$,
    and $s,n,u,h,\kappa,\eta \in \N$ with $h \le s \le \epsilon_0 n$ and
    $\kappa+\eta \le u \le \min\{n,h+\eta\}$, \[
        \U{C_1}{s}{u}{n} \cdot \CLHS^u \cdot \alpha^{\eta} \cdot \frac{  u^{\eta/2} \cdot h^{u/2-\eta/2} }{n^{u/4+\eta} \cdot s^{u/4-\eta} } \le \CRHS^h, \]
    where $\U{C_1}{s}{u}{n}$ is defined as in \Cref{eq:casework:p} above.
\end{lemma}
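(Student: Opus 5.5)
The plan is to take the two branches of the definition of $\U{C_1}{s}{u}{n}$ separately, $u\le s$ and $u>s$. In each branch I expand every factor as a monomial in $u,h,s,n$ and collect exponents. The two tools are the constraint $u\le h+\eta$, which gives $\CLHS^u\le \CLHS^h\cdot\CLHS^\eta$ (and similarly for the $C_1^{u/2}$ factor), and the smallness $s\le\epsilon_0 n$. The factor $\alpha^\eta\le 1$ is simply discarded. Any constant raised to the power $\eta$ is absorbed by a factor $(s/n)^{c\eta}\le\epsilon_0^{c\eta}$, for a fixed $c>0$, that survives the bookkeeping. This is why $\epsilon_0$ has to be chosen after $C_1,\CLHS$, while $\alpha_0$ can be fixed at the start.

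\emph{Case $u\le s$.} Here $\U{C_1}{s}{u}{n}=C_1^{u/2}s^{u/4}n^{u/4}u^{-u/2}$. The factors $s^{u/4}n^{u/4}$ cancel against the denominator, and what remains is
\[
C_1^{u/2}\CLHS^u\,\alpha^\eta\,(h/u)^{(u-\eta)/2}\,(s/n)^{\eta}.
\]
If $u\ge h$, then $(h/u)^{(u-\eta)/2}\le 1$. If $u<h$, then $(h/u)^{(u-\eta)/2}\le (h/u)^{u/2}\le e^{h/2}$ by \Cref{prop:estimate:(x/y)^y}. Using $u\le h+\eta$ I bound $(C_1^{1/2}\CLHS)^u\le (C_1^{1/2}\CLHS)^h(C_1^{1/2}\CLHS)^\eta$. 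The $\eta$-part is killed by $(s/n)^\eta\le\epsilon_0^\eta$ once $\epsilon_0\le (C_1^{1/2}\CLHS)^{-1}$, leaving a bound of the form $(e^{1/2}C_1^{1/2}\CLHS)^h$. (The degenerate case $u=0$ forces $\eta=0$ and is trivial.)

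\emph{Case $u>s$.} Here I use only the first term of the minimum, $\U{C_1}{s}{u}{n}\le C_1^{u/2}n^{u/4}u^{-u/4}$. Since $h\le s<u$, I bound $h^{(u-\eta)/2}\le (su)^{(u-\eta)/4}$. Collecting exponents then gives $u^{\eta/4}$, $s^{3\eta/4}$ and $n^{-\eta}$, so the left-hand side is at most
\[
C_1^{u/2}\CLHS^u\,(u/n)^{\eta/4}(s/n)^{3\eta/4}\le (C_1^{1/2}\CLHS)^u\,\epsilon_0^{3\eta/4},
\]
using $u\le n$. Again $u\le h+\eta$ splits off a constant to the power $\eta$, which $\epsilon_0^{3\eta/4}$ absorbs. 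Taking $\CRHS$ to be the maximum of the constants from the two cases finishes the proof.

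I expect the main obstacle to be the exponent bookkeeping: confirming that the powers of $n$ and $s$ cancel exactly and that a strictly positive power of $s/n$ multiplies $\eta$ in both branches. The stated form of the lemma may contain typos, so the exponents are the first thing I would double-check. The subcase $u<h$ in the first branch, where $(h/u)$ is large, also needs care: the fix is to raise to the full $u/2$ and apply \Cref{prop:estimate:(x/y)^y} rather than trying to use $(u-\eta)/2$ directly.
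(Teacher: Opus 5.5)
You prove the lemma \emph{as printed} correctly. For $u\le s$ the left side equals $C_1^{u/2}C_{\mathrm{LHS}}^u\alpha^\eta(s/n)^{\eta}(h/u)^{(u-\eta)/2}$. For $u>s$, the first branch of the minimum together with $h^2\le su$ bounds it by $(C_1^{1/2}C_{\mathrm{LHS}})^u(u/n)^{\eta/4}(s/n)^{3\eta/4}$. In both cases $u\le h+\eta$ and a factor $\epsilon_0^{c\eta}$ finish the job (take the base to be $\max\{1,C_1^{1/2}C_{\mathrm{LHS}}\}$).

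However, this is a different and much easier inequality than the one the paper's proof establishes. The paper bounds
\[
Q=\mathrm{U}_{C_1,s,n}(u)\cdot C_{\mathrm{LHS}}^u\cdot\alpha^{\eta/2}\cdot\frac{u^{\eta/2}\,h^{(u-\eta)/2}}{n^{(u+\eta)/4}\,s^{(u-\eta)/4}},
\]
and the printed left side equals $Q\cdot\alpha^{\eta/2}(s/n)^{3\eta/4}$. So the paper's bound implies yours, but not conversely. The exponents of $n$ and $s$ in $Q$ are the ones the reduction in the proof of \Cref{lemma:inductive step:casework} actually needs: substituting $h-h'=u-\eta$ there gives $n^{\eta/2+(u-\eta)/4}=n^{(u+\eta)/4}$ and $s^{(u-\eta)/4}$. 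Your suspicion about typos was right. The spare factor $(s/n)^{3\eta/4}$ is exactly what makes your $u>s$ case a one-liner.

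For $Q$ itself, your case $u\le s$ survives, yielding $(s/n)^{\eta/4}$ instead of $(s/n)^{\eta}$. This is the paper's Cases 1a/1b: your $(h/u)^{u/2}\le e^{h/2}$ is 1a, and absorbing $C^{\eta}$ into $\epsilon_0^{c\eta}$ is 1b.

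Your case $u>s$ does not survive. The same computation leaves $(C_1^{1/2}C_{\mathrm{LHS}})^u(u/n)^{\eta/4}$ with no power of $s/n$, which is useless once $u=\Theta(n)$. In fact no argument that discards $\alpha$ can work. At $\kappa=0$, $u=\eta=n$, $h=s=1$, one has $Q=(\min\{C_1,eq^2\}^{1/2}C_{\mathrm{LHS}}\,\alpha^{1/2})^n$, while the right side is just $C_{\mathrm{RHS}}$.

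This is why the paper splits $u>s$ at $16s$ and at $\sqrt{\epsilon}\,n$ (with $\epsilon=s/n$). For $u\ge\sqrt{\epsilon}\,n$ it:
\begin{itemize}
\item switches to the other branch $(eq^2n/u)^{u/2}$ of $\mathrm{U}$;
\item checks that the leftover monomial is $(h/(\sqrt{\epsilon}\,u))^{(u-\eta)/2}\le 1$;
\item absorbs $C^{u-h}$ with $\alpha_0^{(u-h)/2}$, which comes from $\alpha^{\eta/2}$ and $\eta\ge u-h$.
\end{itemize}
That last step forces $\alpha_0$ to be small in terms of $C_{\mathrm{LHS}}$. This is harmless in the application, where $C_{\mathrm{LHS}}=C_2C_3$ depends only on $q,k$. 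But it means your remark that $\alpha_0$ can be fixed before $C_1, C_{\mathrm{LHS}}$ holds only for the printed exponents.
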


\begin{proof}[Proof of \cref{lemma:inductive step:casework} using \Cref{lemma:induction step:bound reduced}]
    Let $h' \coloneqq h + \eta - u$.
    We invoke \Cref{lemma:induction step:bound reduced} with $\CLHS \coloneqq C_2C_3$.

    To prove \Cref{lemma:inductive step:casework}, it suffices to pick $C_5 < \infty$ such that: \[
        \U{C_1}{s}{u}{n} \cdot C_2^u\cdot
        \parens*{ \frac{C_3^u \cdot \alpha^{\kappa} \cdot u^{\eta/2}}{n^{\eta/2}} } \cdot \parens*{ \frac{C_4^{h'/2} \cdot s^{h'/4} \cdot n^{h'/4} }{(h')^{h'/2}} }
        \le \parens*{ \frac{C_5^{h/2} \cdot s^{h/4} \cdot n^{h/4} }{h^{h/2}} }. \]
    Simplifying and moving factors between sides gives the equivalent desideratum: \[
    \U{C_1}{s}{u}n \cdot \CLHS^u \cdot \alpha^{\kappa} \cdot \frac{ u^{\eta/2}  }{n^{\eta/2+(h-h')/4} \cdot s^{(h-h')/4} } \cdot \frac{h^{h/2}}{(h')^{h'/2}}
    \le \frac{C_5^{h/2}}{C_4^{h'/2}}. \] 
    
    Now, we apply several simple inequalities.
    We observe $h^{h/2}/(h')^{h'/2} = (h/(h'/2))^{h'/2} \cdot 2^{-h'/2} \cdot h^{(h-h')/2} \le (e/\sqrt{2})^{h'} \cdot h^{(h-h')/2}$ by \Cref{prop:estimate:(x/y)^y}.
    This, together with the fact that $h' \le h$ and (WLOG) $C_4 > 1$, means that the following inequality implies our desideratum: \[
    \U{C_1}{s}{u}n \cdot \CLHS^u \cdot \alpha^{\kappa} \cdot \frac{ u^{\eta/2}  }{n^{\eta/2+(h-h')/4} \cdot s^{(h-h')/4} } \cdot h^{(h-h')/2}
    \le \frac{C_5^{h/2}}{C_4^{h/2}} \cdot (2/\sqrt{e})^{h/2}. \]
    Substituting in $h-h' = u-\eta$, this is exactly the hypothesis of \Cref{lemma:induction step:bound reduced}. 
    Letting $\CRHS$ denote the resulting constant from \Cref{lemma:induction step:bound reduced},
    we can pick $C_5 \coloneqq \CRHS^2 \cdot C_4 \cdot 2/\sqrt{e}$.
\end{proof}

\begin{proof}[Proof of \Cref{lemma:induction step:bound reduced}]
Let $\epsilon \coloneqq s/n$ (so that $s = \epsilon n$).

We split into five cases based on $u$:
(1a) $1 \le u \le h$,
(1b) $h < u \le s$,
(2a) $s < u \le 16s$,
(2b) $16s < u \le \sqrt{\epsilon} n$,
(3) $\sqrt{\epsilon} n < u \le n$.
Let \[
Q \coloneqq \U{C_1}{s}{u}{n} \cdot \CLHS^u \cdot \alpha^{\eta/2} \cdot \frac{  u^{\eta/2} \cdot h^{u/2-\eta/2} }{n^{u/4+\eta/4} \cdot s^{u/4-\eta/4} } \]
denote the quantity we want to bound.
We determine $\CRHS$ later.

\paragraph{Case 1: $1 \le u \le s$.}
In this regime, \[
\U{C_1}{s}{n}u = C_1^{u/2} \cdot \frac{s^{u/4} \cdot n^{u/4}}{u^{u/2}}. \]
Using the trivial bound $\alpha \le 1$, we get:
\begin{equation}\label{eq:case 1}
    Q \le C_1^{u/2} \cdot \frac{s^{u/4} \cdot n^{u/4}}{u^{u/2}} \cdot \CLHS^u \cdot \frac{u^{\eta/2} \cdot h^{u/2-\eta/2} }{n^{u/4+\eta/4} \cdot s^{u/4-\eta/4} }
    = \underbrace{(C_1 \CLHS)^u \cdot \frac{h^{u/2-\eta/2}}{u^{u/2-\eta/2}} \cdot \frac{s^{\eta/4}}{n^{\eta/4}}}_{\eqqcolon Q_1}.
\end{equation}
We continue bounding $Q_1$ in subcases.

\paragraph{Subcase 1a: $1 \le u \le h$.}
We have: \[
Q_1 \le (C_1\CLHS)^u \cdot \frac{h^{u/2-\eta/2}}{u^{u/2-\eta/2}} \le (C_1\CLHS)^h \cdot \frac{h^u}{u^u} \le (eC_1\CLHS)^h. \]
where the first equality uses the trivial bound $s \le n$, the second uses the assumption $h \ge u$ (and $\eta/2 \ge 0$),
and the third is \Cref{prop:estimate:(x/y)^y}.
This suffices as long as $\CRHS \ge eC_1\CLHS$.

\paragraph{Case 1b: $h < u \le s$.} We now have:
\begin{multline*}
    Q_1 = (C_1\CLHS)^u \cdot \frac{h^{u/2-\eta/2}}{u^{u/2-\eta/2}} \cdot \epsilon^{\eta/4}
    \le (C_1\CLHS)^u \cdot \frac{h^{u/2-\eta/2}}{u^{u/2-\eta/2}} \cdot \epsilon_0^{\eta/4}
    \le (C_1\CLHS)^u \cdot \frac{h^{u/2-\eta/2}}{u^{u/2-\eta/2}} \cdot \epsilon_0^{(u-h)/4} \\
    = (C_1\CLHS \epsilon_0^{1/4})^u \cdot (1/\epsilon_0^{1/4})^h \cdot \frac{h^{u/2-\eta/2}}{u^{u/2-\eta/2}}
    \le (C_1\CLHS \epsilon_0^{1/4})^u \cdot (1/\epsilon_0^{1/4})^h,
\end{multline*}
using, respectively, the definition of $s$,
the assumed bound on $\epsilon_0$,
the assumption that $u-h \le 2\eta/2$,
rearranging,
and finally, that $h \le u$ and $u \ge 2\eta/2$.
This suffices as long as $\epsilon_0 \le 1/(C_1\CLHS)^4$
and $\CRHS \ge 1/\epsilon_0^{1/4}$.
(We note that the inequality on $Q_1$ we showed here did not need $u \le s$, only that $h < u$;
we will use this fact in a subsequent subcase.)

\paragraph{Case 2: $s < u \le \sqrt{\epsilon} n$.}
In this regime, we use: \[
    \U{C_1}{s}{n}u \le \frac{C_1^{u/2} \cdot u^{u/4} \cdot n^{u/4}}{u^{u/2}}
    = \frac{C_1^{u/2} \cdot n^{u/4}}{u^{u/4}} \]
This gives \[
    Q \le C_1^{u/2} \cdot \frac{n^{u/4}}{u^{u/4}} \cdot \CLHS^u \cdot \frac{ u^{\eta/2} \cdot h^{u/2-\eta/2} }{n^{u/4+\eta/4} \cdot s^{u/4-\eta/4} }
    = \underbrace{(\sqrt{C_1} \CLHS)^u \cdot \frac{ u^{u/4} \cdot s^{\eta/4}}{n^{\eta/4} \cdot s^{u/4}} \cdot \frac{h^{u/2-\eta/2}}{u^{u/2-\eta/2}}}_{\eqqcolon Q_2}, \]
and we proceed to bound $Q_2$.
Note that we artificially split $\eta/2-u/4 = u/4 - (u/2 - \eta/2)$
(this will be useful for subsequent calculations).

\paragraph{Case 2a: $s < u \le 16s$.}
We have \[
    Q_2
    \le (\sqrt{C_1} \CLHS)^u \cdot \frac{ (16s)^{u/4} \cdot s^{\eta/4}}{n^{\eta/4} \cdot s^{u/4}} \cdot \frac{h^{u/2-\eta/2}}{u^{u/2-\eta/2}} \\
    = (2\sqrt{C_1} \CLHS)^u \cdot \frac{ s^{\eta/4}}{n^{\eta/4} } \cdot \frac{h^{u/2-\eta/2}}{u^{u/2-\eta/2}}
    = 2^u Q_1, \]
where the equality uses the assumption $u \le 16s$.
In the final RHS, $Q_1$ is the quantity considered in case (1b),
wherein we showed --- without assuming $u < s$ --- that $Q_1 \le (C_1\CLHS \epsilon_0^{1/4})^u \cdot (1/\epsilon_0^{1/4})^h$.
Hence $2^u Q_1 \le (2C_1\CLHS \epsilon_0^{1/4})^u \cdot (1/\epsilon_0^{1/4})^h$,
which is sufficiently bounded if $\epsilon_0 \le 1/(2C_1 \CLHS)^4$.

\paragraph{Case 2b: $16s \le u < \sqrt{\epsilon} n$.}
In this case, \[
    Q_2 \le (\sqrt{C_1} \CLHS)^u \cdot \frac{ (\sqrt{\epsilon} n)^{u/4} \cdot (\epsilon n)^{\eta/4}}{n^{\eta/4} \cdot (\epsilon n)^{u/4}} \cdot \frac{h^{u/2-\eta/2}}{u^{u/2-\eta/2}}
    \le (\sqrt{C_1} \CLHS)^u \cdot \epsilon^{\eta/4-u/8}. \]
Here, the first inequality uses the assumption on $u$ (and rewriting the definition of $s$),
and the second uses that $h \le u$ (and $u \ge 2\eta/2$, and consolidates powers of $\epsilon$).
Now, $\eta/4 - u/8 \ge (u-h)/4 - u/8 = u/8 - h/4 \ge 2s - h/4 \ge 2h - h/4 \ge 0$,
and therefore $\epsilon^{\eta/4-u/8} \le \epsilon^{u/8-h/4} \le \epsilon_0^{u/8-h/4}$, and so \[
Q_2 \le (\sqrt{C_1} \CLHS \epsilon_0^{1/8})^u \cdot (1/\epsilon_0^{1/4})^{h}. \]
Thus, we get the required bound as long as $\epsilon_0 \le 1/(\sqrt{C_1} \CLHS)^8$
and $\CRHS \ge 1/\epsilon_0^{1/4}$.

\paragraph{Case 3: $\sqrt{\epsilon} n \le u \le n$.}
Here, we use the final bound \[
\U{C_1}{s}{n}u \le \parens*{ \frac{8e n}{u} }^{u/2} = \CW^{u/2} \cdot \frac{n^{u/2}}{u^{u/2}}. \]
Thus, we have: \[
    Q \le \CW^{u/2} \cdot \frac{n^{u/2}}{u^{u/2}} \cdot \CLHS^u \cdot \alpha^{b+\eta/2} \cdot \frac{u^{\eta/2} \cdot h^{u/2-\eta/2} }{n^{u/4+\eta/4} \cdot s^{u/4-\eta/4} }
    \le (\sqrt{\CW} \CLHS)^u \cdot \alpha_0^{(u-h)/2} \cdot \underbrace{\frac{n^{u/4-\eta/4}}{ s^{u/4-\eta/4} } \cdot \frac{h^{u/2-\eta/2} }{u^{u/2-\eta/2} }}_{\eqqcolon R_3}, \]
where the equality uses $\alpha \le \alpha_0 \le 1$ and $b \ge 0$, $\eta/2 \ge (u-h)/2 \ge 0$.
We will verify that $R_3 \le 1$.
Assuming this, we get \[
Q \le (\sqrt{\CW} \CLHS)^u \cdot \alpha_0^{(u-h)/2} = (\sqrt{\CW} \CLHS \alpha_0^{1/2})^u \cdot (1/\alpha_0^{1/2})^h, \]
which is sufficiently small as long as $\alpha_0 \le 1/(\sqrt{\CW} \CLHS)^2$
and $\CRHS \ge 1/\alpha_0^{1/2}$.

Finally, we bound \[
    R_3 = \frac{n^{u/4-\eta/4}}{ (\epsilon n)^{u/4-\eta/4} } \cdot \frac{h^{u/2-\eta/2} }{u^{u/2-\eta/2} }
    = \parens*{ \frac{h}{\sqrt{\epsilon} u} }^{u/2-\eta/2}, \]
and we use $h \le s = \epsilon n$ and $u \ge\sqrt{\epsilon} n$ to get $(h/(\sqrt{\epsilon} u)) \le 1$.
\end{proof}

\end{document}